\newtheorem{theorem}{\bf \emph{Theorem}}
\newtheorem{fact}{Fact}
\newcommand{\beq}{\begin{equation}}
\newcommand{\enq}{\end{equation}}
\newcommand{\bel}{\begin{lemma}}
\newcommand{\enl}{\end{lemma}}
\newcommand{\bet}{\begin{theorem}}
\newcommand{\ent}{\end{theorem}}
\newcommand{\tr}{\mathrm{Tr}}
\newcommand{\nn}{\nonumber}
\newcommand{\Tr}{\mathsf{Tr\,}}
\newcommand{\ketbra}[1]{|#1\rangle\langle#1|}
\DeclareMathOperator*{\bigplus}{\text{\Large $+$}}
\newcommand{\RNum}[1]{\uppercase\expandafter{\romannumeral #1\relax}}
\newcommand{\eps}{\varepsilon}
\newcommand*{\bbR}{\mathbb{R}}
\newcommand*{\bbI}{\mathbb{I}}
\newcommand*{\bbN}{\mathbb{N}}
\newcommand*{\bbT}{\mathbb{T}}
\newcommand*{\bbE}{\mathbb{E}}
\newcommand*{\cP}{\mathcal{P}}
\newcommand*{\cC}{\mathcal{C}_{n}}
\newcommand*{\cA}{\mathcal{A}}
\newcommand*{\cR}{\mathcal{R}}
\newcommand*{\cH}{\mathcal{H}}
\newcommand*{\cM}{\mathcal{M}}
\newcommand*{\cB}{\mathcal{B}}
\newcommand*{\cD}{\mathcal{D}}
\newcommand*{\cI}{\mathcal{I}}
\newcommand*{\cL}{\mathcal{L}}
\newcommand*{\cN}{\mathcal{N}}
\newcommand*{\cO}{\mathcal{O}}
\newcommand*{\cS}{\mathcal{S}}
\newcommand*{\cT}{\mathcal{T}}
\newcommand*{\cX}{\mathcal{X}}
\newcommand*{\cW}{\mathcal{W}}
\newcommand*{\cZ}{\mathcal{Z}}
\newcommand*{\cE}{\mathcal{E}}
\newcommand*{\cU}{\mathcal{U}}
\newcommand*{\cV}{\mathcal{V}}
\newcommand*{\cY}{\mathcal{Y}}
\newcommand{\bra}[1]{\langle #1|}
\newcommand{\ket}[1]{|#1 \rangle}
\mathchardef\mhyphen="2D
\newcommand{\ketbrasys}[2]{\overset{#2}{|#1\rangle\langle#1|}}
\newcommand{\norm}[2]{{\left\lVert #1 \right\rVert}_{#2}}
\newcommand{\abs}[1]{\left\vert#1\right\vert}
\newcommand*{\rom}[1]{\expandafter\@slowromancap\romannumeral #1@}
\mathchardef\mhyphen="2D
\newlist{steps}{enumerate}{1}
\setlist[steps, 1]{leftmargin = 1.1cm, label = Step \arabic*}
\newtheorem{remark}{Remark}
\newtheorem{definition}{Definition}
\newtheorem{lemma}{Lemma}
\newtheorem{corollary}{Corollary}
\newtheorem{proposition}{Proposition}
\newtheorem{assumption}{Assumption}
\def\QED{\mbox{\rule[0pt]{1.5ex}{1.5ex}}}
\def\endproof{\hspace*{\fill}~\QED\par\endtrivlist\unskip}
\begin{document}
\title{Universal tester for multiple independence testing and
classical-quantum arbitrarily varying multiple access channel} 

 \author{Ayanava Dasgupta, Naqueeb Ahmad Warsi,
and Masahito Hayashi \IEEEmembership{Fellow, IEEE}
\thanks{The work of MH was supported 
in part by the National Natural Science Foundation of China (Grant No. 62171212).}
\thanks{
Ayanava Dasgupta and Naqueeb Ahmad Warsi are with
Electronics and Communication Sciences Unit,
Indian Statistical Institute, Kolkata, Kolkata 700108, India.
(e-mail: [ayanavadasgupta\_r, naqueebwarsi]@isical.ac.in).
Masahito Hayashi is with 
School of Data Science, The Chinese University of Hong Kong, Shenzhen, Longgang District, Shenzhen, 518172, China,
International Quantum Academy (IQA), Futian District, Shenzhen 518048, China,
and
Graduate School of Mathematics, Nagoya University, Chikusa-ku, Nagoya 464-8602, Japan.
(e-mail: hmasahito@cuhk.edu.cn).}}
                    
\markboth{A. Dasgupta, N. Warsi, and M. Hayashi:
Universal tester for multiple independence testing \&
CQ-VAMAC}{}

\maketitle

\begin{abstract}
We study two kinds of different problems.
One is the multiple independence testing, which can be considered as a kind of generalization of quantum Stein's lemma.
We test whether the quantum system is correlated to the classical system or is independent of it. 
Here, the null hypothesis is composed of 
states having the quantum system is correlated to the classical system in an arbitrarily varying form.
The second problem is 
the problem of reliable communication over classical-quantum arbitrarily varying multiple access channels (CQ-AVMAC) and establishing its capacity region by giving multiple achievability techniques. 
We prove that each of these techniques is optimal by proving a converse. Further, for both these techniques, the decoder designed is a \emph{universal} decoder and can achieve any rate pair in the capacity region without time sharing and also these decoders do not depend on the channel and therefore they are universal. 
Our result covers the case when the channel parameter is continuous, which has not been studied even in the classical case.
Further, both these techniques can be easily generalized to the case when there are $T (T>2)$ senders.
The design of each of these decoders is based on the study of multiple independence testing.
This approach allows us to study the problem of reliable communication over CQ-AVMAC from the point of view of hypothesis testing. Further, we also give a necessary and sufficient condition for the deterministic code capacity region of CQ-AVMAC to be non-empty.  
\end{abstract}

\begin{IEEEkeywords} 
multiple independence testing,
classical-quantum channel,
arbitrarily varying multiple access channel,
universal tester,
derandomization
\end{IEEEkeywords}

\section{Introduction}
\subsection{Arbitrarily varying channel and its quantum extension}
Most of the results in traditional information theory are based on the assumption that the source and the channel are independent and identically distributed. However, in the real-world scenario, these assumptions may not be true. To relax these assumptions \cite{Blackwell1960,Berger1975} introduced the notion of arbitrarily varying source and channel and studied the two fundamental problems of information theory i.e. the problem of source compression for arbitrarily varying source (AVS) and the problem of reliable communication over an arbitrarily varying channel (AVC). 
In arbitrarily varying channel settings, 
the channel changes arbitrarily and is 
chosen randomly from a finite collection of channels
$\{W_s\}_{s \in \cS}$, where $s \in \cS$ is the channel parameter. 
In this setting, the sender and the receiver do not know 
the value of the parameter $s$ nor 
the probability distribution for the parameter $s$.
When we have a transmission with length $n$,
the parameter is chosen differently as
$s_1, \ldots, s_n$.
Therefore, we focus on the worst value of 
the decoding error probability under the choice of $n$ parameters
$s_1, \ldots, s_n$.
In contrast, in the compound channel model \cite{Blackwell1959,Wolfowitz1959,Dobrusin1959},
the sender and the receiver do not know the channel parameter
$s$, but the parameter $s$ is fixed during the communication with length $n$.
This kind of indeterminateness can appear even in the quantum channel setting
because it is extensively difficult to control quantum devices.
Therefore, the channel coding over AVC with the quantum setting
is a more challenging problem.

For the classical setting,
in \cite{Ahlswede1970} it was shown that the capacity of an AVC may be zero under a certain necessary and sufficient condition, termed as ``Symmetrizability''. 
However, when the sender and the decoder share randomness to decide 
the choice of code, the capacity is always non-zero (positive), which is called the randomized code capacity. 
The problem of reliable communication over AVC with the classical setting
has been studied in great detail, see for example, \cite{Kiefer1962,Stiglitz66,AHLSWEDE1969457,Ahlswede1973,Ahlswede1978,Ahlswede1980,Csiszar1981,Ericcon1985,Csiszar1988}. 
When the channel has a classical input and a quantum output,
the channel is called a classical-quantum channel.
In this setting,
Ahlswede \cite{Ahlswede07} studied the problem of reliable communication over classical-quantum arbitrarily varying channels (CQ-AVC) both for fixed codes and random codes and established its capacity.
The paper \cite{Bjelakovic2013} made further progress.
However, these papers assume the finiteness of the set $\cS$ of 
channel parameters.

\subsection{Arbitrarily varying multiple-access channel and its quantum extension}
Establishing quantum network communication is another key topic 
in the area of quantum information.
As one of its key approaches,
we focus on multiple-access classical-quantum channel (CQ-MAC),
where we have multiple senders and one receiver.
Since CQ-MAC has a more complicated structure than 
the conventional channel setting, it is very difficult to fix the channel parameters.
Therefore, classical-quantum arbitrarily varying multiple-access channel (CQ-AVMAC) is a more important problem setting.
This paper focuses on CQ-AVMAC, and 
design two different kinds of \textit{universal decoders} for CQ-AVMAC
whose method is explained later.
We use these decoders to first establish the randomized code capacity region of CQ-AVMAC, where
each sender is allowed to have shared randomness with the receiver.
We then, derandomize our code to establish the deterministic code capacity region. 
We show that this derandomization is not possible under certain necessary and sufficient conditions which we term as ``$\cX$-symmetrizable'', ``$\cY$-symmerizable'' and ``$\cX\cY$-symmetrizable''. 
These conditions are similar in spirit to their classical counterparts which were used in studying the problem of reliable communication over classical arbitrarily varying multiple-access channel (AVMAC)\cite{Jahn1981,gubner1990deterministic,Ahlswede1999}. 
We also prove a matching \textit{converse} for this problem. To the best of our knowledge,  the capacity region for CQ-AVMAC has not been established in the literature prior to this work.

\subsection{Approach of asymmetric hypothesis testing}
To accomplish the above achievements,
we focus on an approach of asymmetric hypothesis testing.
In \cite{HN2003,AJW2019,wang-renner-prl}, it has been observed that the problem of reliable communication over a noisy channel is closely related to asymmetric hypothesis testing. 
Recently, quantum Stein's lemma \cite{HP1991,ON2000,Hayashi2002} and its various generalizations \cite{BP2010,Berta2023gapinproofof,HH2024,Lami2024,HT,TH2018,BD2005,Notzel_2014,BBH2021} attracts much attention in the area of quantum information.
Our approach is more related to the independence test formulated in Hayashi and Tomamichel \cite{HT}, which imposes a constant constraint for the error probability of incorrectly accepting the alternative hypothesis $H_1$ in a way similar to quantum Stein's lemma.
To see this relation, we focus on compound channels in the CQ channels.


In the classical-quantum setting, Datta and Dorlas in \cite{Datta_2007}, came up with a construction of 
code for a compound channel.
However, this code only works for a finite collection of possible classical-quantum channels. Later, \cite{Hayashi2009,BB09} proposed a universal coding strategy for classical-quantum channels, which works 
for compound channels composed of an infinite collection of possible classical-quantum channels. 
In particular, in \cite{Hayashi2009}, the construction of the decoder for this universal code uses Schur-Weyl duality \cite{fulton1991representation,Group2} and the method of types and the encoder does not depend on the dimension of the output quantum system. 
The independence test \cite{HT} also employs the same structure.
Later, the approach \cite{Hayashi2009}
has been extended to CQMAC in \cite{HC}.

To extend this approach to CQ-AVC and CQ-AVMAC,
we extend the result in \cite{HT} to 
various settings of asymmetric hypothesis testing between arbitrarily varying sources. We design two kinds of tests to study this problem. In particular, the first test consists of a collection of \textit{non-simultaneous} tests, where each test identifies a certain type of independence between the classical and quantum subsystems. However, the second test which we design is a \textit{simultaneous} test (based on the ideas of \cite{Sen2021}) and unlike the first test, it consists of only one test, which simultaneously identifies whether the quantum subsystem is correlated with each of the classical subsystems with high probability.

Then, employing the standard relation between 
cq channel coding and asymmetric hypothesis testing,
we design \textit{universal decoders} 
for CQ-AVC and for CQ-AVMAC.
Furthermore, our method for CQ-AVC covers the case when
the set of channel parameters $\cS$ is not a finite set.
This case was not covered by \cite{Ahlswede07,Bjelakovic2013}.
In addition,
our analysis covers the case when the set $\cS$ forms a continuous set.
Table \ref{tab} summarizes
the relation between our target problems and existing studies. 

\begin{table}[ht]
\caption{Relationship between our channel model and related studies}\label{tab}
\begin{center}
\begin{tabular}{|c|c|c|c|c|c|}
\hline
\multicolumn{2}{|c|}{} & \multicolumn{2}{|c|}{Compound channel} & \multicolumn{2}{|c|}{Arbitrary varying channel} \\
\hline
\multicolumn{2}{|c|}{Behavior of unknown channel parameter}& 
\multicolumn{2}{|c|}{fixed} & \multicolumn{2}{|c|}{changing} \\
\hline
\multicolumn{2}{|c|}{Parameter set}& 
finite & general & finite & general \\ 
\hline
\multirow{2}{*}{point-to-point channel} 
& classical & & \cite{Blackwell1959,Wolfowitz1959,Dobrusin1959,csiszar-korner-book} & \cite{Blackwell1960,Kiefer1962,Stiglitz66,AHLSWEDE1969457,Ahlswede1973,Ahlswede1978,Ahlswede1980,Csiszar1981,Ericcon1985,Csiszar1988}&\\
\cline{2-6}
& classical-quantum & \cite{Datta_2007,Bjelakovic2013} &
\cite{Hayashi2009,BB09}
  & \cite{Ahlswede07,Bjelakovic2013} 
  &  Theorem \ref{lemma_rand_cap_cqavc} of this paper\\
\hline
\multirow{2}{*}{multiple access channel} 
& classical && \cite{Ahlswede1974,PW85,YH96} &\cite{Jahn1981,gubner1990deterministic,Ahlswede1999}& \\
\cline{2-6}
& classical-quantum & & \cite{HMW16,BJS19,HC}
&& Theorems \ref{lemma_rand_capacity_avmac} and \ref{lemma_converse_det} of
this paper \\
\hline
\end{tabular}
\end{center}
\end{table}

The rest of the paper is organized as follows. 
Section \ref{SIT} formulates our key approach, 
the independence test, and briefly summarizes our approach
for CQ-AVC and for CQ-AVMAC.
In Section \ref{sec:bg} we mention the notations and facts used in this manuscript, in Section \ref{sec:IT} we give a proof for Theorem \ref{theorem_generalised_independence_test_arbitrary_varying}. 
In Section \ref{sec:section_CQAVC}, we use Theorem \ref{theorem_generalised_independence_test_arbitrary_varying} to give an achievability proof for the problem of reliable communication over CQ-AVC. 
In Section \ref{sec:GIT}, 
we generalize the independence test to the case with three subsystems, which is formulated as 
the problem of ``multiple independence testing''. 
We study the ``universal test'' to solve this problem.
In Section \ref{section_CQAVMAC} we study CQ-AVMAC by giving an achievability and converse. 
Thus, deriving its randomized code capacity region and in section \ref{sec:determinsitic_cap_section} we derandomize the code obtained in section \ref{sec:section_CQAVC} and give a necessary and sufficient condition for it to be possible.

\section{Independence test and brief summary of our approach}\label{SIT}
We build on the ideas developed in \cite{Hayashi2009} to study the problem of independence testing defined below. \textit{Throughout the manuscript we use $\cB^n$ to denote the quantum system that contains the Hilbert space $\cH_\cB^{\otimes n}$.}

\begin{definition}(Independence Testing)\label{def_cq_avht_null}
We consider a classical subsystem $X$ of dimension 
$\abs{\cX} < \infty$, spanned by an orthonormal basis $\left\{\ket{x}^{X}\right\}_{x \in \cX}$.
We choose a probability distribution $P$ defined over $\cX$, which introduces 
the state ${\rho^{X}_{P}} := \sum_{x \in \cX}P(x)\ketbrasys{x}{X}$
and the distribution $P^{n}(x^n) := \prod_{i=1}^{n}P(x_i)$.
Consider a collection of density operators $\left\{\rho_{x,s}^{\cB}\right\}_{\substack{ x \in \cX, s \in \cS}}  \subset \cD(\cH_\cB)$, where 
$\cS$ is a non-empty set and $\cH_\cB$ is a 
finite-dimensional
Hilbert space. Now consider two collections 
$H_0:=\left\{\rho^{X^n\cB^n}_{P,s^n}\right\}_{s^n \in \cS^n}$ and $H_1:=\left\{{\rho^{X}_{P}}^{\otimes n} \otimes \sigma^{\cB^n}\right\}_{\sigma^{\cB^n} \in \cD(\cH_\cB^{\otimes n})}$ of classical-quantum states, such that $\forall s^n \in \cS^n$,
    \begin{align}
        \rho^{X^n\cB^n}_{P,s^n} &:= \sum_{x^n \in \cX^n}P^{n}(x^n)\ketbrasys{x^n}{X^n}\otimes \rho_{x^n,s^n}^{\cB^n},
\label{CMZP}
    \end{align}
where the state $\rho_{x^n,s^n}^{\cB^n}$ is defined to be $\bigotimes_{i=1}^{n}\rho_{x_i,s_i}^{\cB}$ for $x^n:=(x_1,\cdots,x_n)$. 
The aim here is to design a hypothesis test (which is a measurement in the quantum setting and for the sake of simplicity we assume it to be projective) that accepts $H_0$ (Null Hypothesis) with a very large probability and accepts $H_1$ (Alternative Hypothesis) with an arbitrarily small probability. 
\end{definition}

To handle a general case for the set $\cS$ of parameters,
we introduce the following assumption.
\begin{assumption}\label{NM9}
Given such a real number $\eps  > 0$, 
there exists a finite subset $\cS_{\eps} \subset \cS$ 
to satisfy the following condition.
For any $s \in \cS$, there exists an element $\bar{s} \in \cS_{\eps}$ 
such that any $x \in \cX$ satisfies 
    \begin{equation}
        \rho^{\cB}_{x,s} \preceq (1 
 + \eps)\rho^{\cB}_{x,\bar{s}}.\label{epsilon_net_state_density_eq}
    \end{equation}
\end{assumption}

Subsection \ref{S4-AB} discusses what case satisfies this assumption.
In this problem, similar to quantum Stein's lemma,
we impose a constant constraint for the error probability of incorrectly accepting the alternative hypothesis $H_1$.
Using Schur-Weyl duality and the method of types, we obtain the following theorem:

\begin{theorem}
\label{theorem_generalised_independence_test_arbitrary_varying}
Assume that a collection of density operators $\left\{\rho_{x,s}^{\cB}\right\}_{\substack{ x \in \cX, s \in \cS}}$
satisfies Assumption \ref{NM9}.
Consider the two hypotheses $H_0, H_1$ mentioned in the statement of Definition \ref{def_cq_avht_null} and a non-empty set $\cS$. For $\forall \eps \in (0,1)$, 
we define
    \begin{equation}
\beta_{n}(\eps) := \min_{\substack{0 \preceq \bbT_{n} \preceq \bbI}} \left\{\max_{\sigma^{\cB^n} \in \cD(\cH_\cB^{\otimes n})}{\tr\left[\bbT_{n}
    \left({\rho^{X}_{P}}^{\otimes n} \otimes \sigma^{\cB^n}\right)
    \right]} \quad \bigg| \quad \max_{s^n \in \cS^n} \tr[(\bbI^{X^n\cB^n} - \bbT_{n})\rho^{X^n\cB^n}_{P,s^n}] \leq \eps\right\}.
    \end{equation}
Then, we have
    \begin{equation}
        \lim_{n \to \infty}-\frac{1}{n}\log\beta_{n}(\eps) = 
        \inf_{Q\in \cP(\cS)} I[X;\cB]_{P,Q}, \label{Stein}
    \end{equation}
        where $\cP(\cS)$ is the set of distributions on $\cS$, and 
    see \eqref{sibsons_qmi_convergence_eq} in Fact \ref{sibsons_quantum_mutual_info} for the definition of $I[X;\cB]_{P,Q}$.
        \end{theorem}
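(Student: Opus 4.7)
The plan is to prove matching bounds on $-\frac{1}{n}\log\beta_n(\eps)$ by a convex-averaging converse and a Schur-Weyl-based universal test for achievability, following the approach of \cite{Hayashi2009,HT}. For the converse, fix any $Q\in\cP(\cS)$ and form the averaged state $\bar\rho^{X^n\cB^n}_{P,Q}:=\sum_{s^n}Q^{\otimes n}(s^n)\,\rho^{X^n\cB^n}_{P,s^n}$, which by a direct computation equals $(\rho^{X\cB}_{P,Q})^{\otimes n}$, where $\rho^{X\cB}_{P,Q}:=\sum_{x,s}P(x)Q(s)\ketbra{x}\otimes \rho^{\cB}_{x,s}$. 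Any feasible $\bbT_n$ inherits $\tr[(\bbI-\bbT_n)\bar\rho^{X^n\cB^n}_{P,Q}]\le \eps$ by averaging the worst-case type-I constraint against $Q^{\otimes n}$, so it is admissible for the i.i.d.\ independence test between $(\rho^{X\cB}_{P,Q})^{\otimes n}$ and the family $\{{\rho^{X}_{P}}^{\otimes n}\otimes\sigma^{\cB^n}\}_\sigma$, whose Stein exponent is $I[X;\cB]_{P,Q}$ by standard quantum Stein's lemma for mutual information. Hence $\limsup_n -\frac{1}{n}\log\beta_n(\eps)\le I[X;\cB]_{P,Q}$, and minimising over $Q\in\cP(\cS)$ closes this direction.

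For achievability, I would first invoke Assumption \ref{NM9} with $\eps_n:=1/n$ to dominate every $s^n\in\cS^n$ by some ``net sequence'' $\bar s^n\in \cS^n_{\eps_n}$ within a factor $(1+\eps_n)^n\le e$, so uniform type-I control reduces to the finite set $\cS^n_{\eps_n}$. I would then build $\bbT_n$ universally as follows: conditioned on the joint type of $x^n$, for each $x\in\cX$ and the $N_x:=|\{i:x_i=x\}|$ positions where $x_i=x$, apply the Schur-Weyl decomposition $\cH_\cB^{\otimes N_x}=\bigoplus_\lambda U_\lambda\otimes V_\lambda$ and accept those Young diagrams $\lambda$ whose normalised spectrum $\lambda/N_x$ lies within a vanishing neighbourhood of the spectrum of $\rho^{\cB}_{x,Q}=\sum_s Q(s)\rho^{\cB}_{x,s}$ for some empirical $Q$ on $\cS_{\eps_n}$. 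The projectors depend only on $x^n$ and the polynomially quantised set of candidate $Q$'s, so the test is oblivious to both $s^n$ and $\sigma^{\cB^n}$. Schur-Weyl concentration of the i.i.d.\ state $\bigotimes_{i:x_i=x}\rho^{\cB}_{x,\bar s_i}$ on the spectrum of the empirical post-average $\rho^{\cB}_{x,\hat Q^{(x)}}$ bounds the type-I error uniformly over $\cS^n_{\eps_n}$, while the dimension bound $\dim(U_\lambda\otimes V_\lambda)\le \exp(N_x H(\lambda/N_x)+o(N_x))$ together with a union bound over the empirical types over $\cS_{\eps_n}$ yields $\beta_n(\eps)\le \exp(-n\inf_{Q\in\cP(\cS_{\eps_n})}I[X;\cB]_{P,Q}+o(n))$. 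Continuity of $Q\mapsto I[X;\cB]_{P,Q}$ and denseness of $\bigcup_n\cP(\cS_{\eps_n})$ in $\cP(\cS)$ then pass the infimum to $\cP(\cS)$.

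The main obstacle is balancing two opposing pressures in the achievability: uniform type-I control over the uncountable $\cS$ demands a fine net (small $\eps_n$), but finer nets enlarge the family of empirical distributions $\hat Q^{(x)}$ that the test must accept, which degrades the type-II bound. The choice $\eps_n=1/n$ keeps the dominance factor $(1+\eps_n)^n$ bounded, and provided $|\cS_{\eps_n}|$ grows sub-exponentially in $n$ (a structural property of $\cS$ such as compactness must accompany Assumption \ref{NM9}), the number of empirical types is sub-exponential and contributes $o(n)$ to the exponent. The remaining technical hurdle is uniform Schur-Weyl concentration: for each $\bar s^n\in\cS^n_{\eps_n}$ the spectrum of $\bigotimes_{i:x_i=x}\rho^{\cB}_{x,\bar s_i}$ fluctuates by $O(1/\sqrt{N_x})$ around the post-average spectrum, and one must verify this is tight enough to survive the union bound over all $\hat Q^{(x)}$ simultaneously, together with the continuity/density argument that transfers the infimum from the nets back to $\cP(\cS)$.
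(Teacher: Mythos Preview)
Your converse is correct and matches the paper's: averaging the worst-case type-I constraint against $Q^{\otimes n}$ reduces to the i.i.d.\ independence test of \cite{HT} with exponent $I[X;\cB]_{P,Q}$, then minimise over $Q$.

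Your achievability has two genuine gaps. First, the product $\bigotimes_{i:x_i=x}\rho^{\cB}_{x,\bar s_i}$ is \emph{not} i.i.d.\ (the $\bar s_i$ vary with $i$), so Schur--Weyl spectral concentration does not apply to it as stated; the key manoeuvre you are missing is to exploit permutation invariance of the test to average over the type class of $s^n$ and use $|T_{Q_{s^n}}|^{-1}\le (n+1)^{|\cS|-1}Q_{s^n}^n(\cdot)$, giving
$\tr[(\bbI-\bbT_n)\rho^{X^n\cB^n}_{P,s^n}]\le(n+1)^{|\cS|-1}\tr\bigl[(\bbI-\bbT_n)(\rho^{X\cB}_{P,Q_{s^n}})^{\otimes n}\bigr]$,
which \emph{is} an i.i.d.\ state and hence amenable to the standard R\'enyi analysis. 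Second, your choice $\eps_n=1/n$ forces the extra hypothesis that $|\cS_{1/n}|$ grows sub-exponentially in $n$, which Assumption~\ref{NM9} does not supply (it guarantees only finiteness of $\cS_\eps$ for each fixed $\eps$, with no control on the growth as $\eps\to 0$). The paper instead fixes a single net $\cS_\eps$, accepts the factor $(1+\eps)^n$ in the type-I bound, and absorbs it by choosing the R\'enyi parameter $t\in(0,1)$ so that $t\bigl(\widehat R-\min_{Q} I_{1-t}[X;\cB]_{P,Q}\bigr)<-\log(1+\eps)$, which is possible whenever $\widehat R<\inf_Q I[X;\cB]_{P,Q}$.

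A further difference, not a correctness issue for Theorem~\ref{theorem_generalised_independence_test_arbitrary_varying} itself but relevant downstream: the paper's test is the ratio test $\bbT_n=\sum_{x^n}\ketbra{x^n}\otimes\{\widehat\rho^{\cB^n}_{x^n}\succeq 2^{n\widehat R}\rho^{\cB^n}_{U,n}\}$, which depends on neither $P$ nor $\cS$; your spectrum-acceptance test is tied to $\cS_{\eps_n}$ and so would not deliver the channel-independent universal decoder used in the subsequent coding applications.
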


In this paper, we use only the achievability of Theorem \ref{theorem_generalised_independence_test_arbitrary_varying}, i.e., 
$\ge$ in \eqref{Stein},
to design a universal decoder for CQ-AVC and establish its randomized code capacity. 
We show its achievability in Section \ref{sec:IT}, and show its converse part in Appendix \ref{NMA8}.

\if0
By modifying the proof, we can show the following corollary.
\begin{corollary}
\label{cor1-1}
Assume that a collection of density operators $\left\{\rho_{x,s}^{\cB}\right\}_{\substack{ x \in \cX, s \in \cS}}$ satisfies Assumption \ref{NM9}.
Consider the following two hypotheses $H_0, H_1$ 
  \begin{align}
     H_0 := \left\{\rho^{X^n\cB^n}_{P,f_n(s)}\right\}_{s \in \cS}, 
     & & H_1 := 
     \left\{{\rho^{X}_{P}}^{\otimes n} \otimes \sigma^{\cB^n}\right\}_{\sigma^{\cB^n} \in \cD(\cH_\cB^{\otimes n})},\label{BBH}
 \end{align}
with $f_n(s):= (\underbrace{s, \ldots, s}_{n})\in \cS^n$
and a non-empty set $\cS$. 
For $\forall \eps \in (0,1)$, we define
  \begin{align}
\tilde{\beta}_{n}(\eps) := \min_{\substack{0 \preceq \bbT_{n} \preceq \bbI}} \left\{\max_{\sigma^{\cB^n} \in \cD(\cH_\cB^{\otimes n})}{\tr\left[\bbT_{n}
    \left({\rho^{X}_{P}}^{\otimes n} \otimes \sigma^{\cB^n}\right)
    \right]} \quad \bigg| \quad \max_{s \in \cS} \tr[(\bbI^{X^n\cB^n} - \bbT_{n})\rho^{X^n\cB^n}_{P,f_n(s)}] \leq \eps\right\}.
 \end{align}
Then, we have
    \begin{equation}
        \lim_{n \to \infty}-\frac{1}{n}\log\tilde{\beta}_{n}(\eps) = \inf_{s \in \cS} I[X;\cB]_{P,\delta_{s}}, \label{Stein2}
    \end{equation}
where $\delta_{s}$ is the delta distribution on $s$ (please see Definition \ref{deltadis} for the definition of delta distribution). 
\end{corollary}
The achievability of Corollary \ref{cor1-1} is shown
in Section \ref{sec:IT}.
We show its converse part in Appendix \ref{NMA8}.
\fi

Our technique to CQ-AVC
differs from the approach used in \cite{Ahlswede07} in two ways. Firstly, our technique is based on hypothesis testing and secondly, our decoder is a universal decoder which was not the case in \cite{Ahlswede07}.
Here, the projective measurement, which is used as a hypothesis test, does not depend on $\rho^{\cA\cB}$ and needs to work 
for any arbitrary bipartite correlated state in $H_0$ over $\cA\otimes \cB$. 
Hence, our difficulty relies on 
the universal choice of our test.
That's why the authors term it as ``independence test''.
We note here that in Definition \ref{def_cq_avht_null}, both $H_0$ and $H_1$ are composite. 

Here, we overview the relation with related problem settings.
Hayashi and Tomamichel studied a similar problem in \cite{HT} wherein, 
only the hypothesis $H_1$ is composite. 
In particular, they considered an arbitrary bipartite correlated state $\rho^{\cA\cB}$ and studied an asymmetric hypothesis testing between the following two hypotheses:
 \begin{align}
     H_0 := {\rho^{\cA\cB}}^{\otimes n}, & & H_1 := \left\{{\rho^{\cA}}^{\otimes n} \otimes \sigma^{\cB^n}\right\}_{\sigma^{\cB^n} \in \cD(\cH_\cB^{\otimes n})},\label{MT}
 \end{align}
 where $\rho^{\cA} = \tr_{\cB}\left[\rho^{\cA\cB}\right]$. 
That is, $H_0$ is the $n$-copy state ${\rho^{\cA\cB}}^{\otimes n}
$ of $\rho^{\cA\cB}$.
The aim here is to design a measurement-based test that accepts $H_0$ with very high probability and accepts $H_1$ with arbitrarily small probability. 
The paper \cite{TH2018} generalized these results under the classical setting by relaxing the condition for $H_1$, 
which includes the testing of Markovian property.
The quantum generalized Stein's lemma
 \cite{BP2010,Berta2023gapinproofof,HH2024,Lami2024}
 considers the more general case when 
 $H_0$ is composed of a single $n$-copy state and 
 $H_1$ satisfies several conditions like convexity etc.
In contrast, quantum Sanov theorem \cite{BD2005,Notzel_2014} 
discusses the case 
when $H_0$ is composed of various $n$-copy state and
 $H_1$ is composed of a single $n$-copy state.
The paper \cite{BBH2021} extended the above setting to 
the case when $H_0$ and $H_1$ are composed of 
multiple $n$-copy states.
However, in our problem, 
$H_0$ and $H_1$ are composed of more complicated states, which reflects 
the arbitrarily varying channel parameter. 
Both hypotheses $H_0$ and $H_1$ contain non $n$-copy states.
In particular, the main difference from \cite{HT} is the richer structure 
of $H_1$ as \eqref{CMZP}.

Also, in our case $\abs{\cS}$ may not be finite.
To address CQ-AVMAC, we extend the above setting to the case when 
the classical system $X$ is composed of two subsystems.
We study testing for whether the first or second classical system
is independent of other systems, which is formulated as
the problem of ``Multiple independence testing'' and is discussed in detail in Section \ref{sec:GIT}. 
Table \ref{tab2} summarizes
the relation between two target hypothesis testing problems 
of ours and existing studies, which include 
Theorem \ref{theorem_generalised_independence_test_arbitrary_varying}
and ``Multiple independence testing''.
In fact, we give two different techniques to study this problem. 
Both these techniques yield a universal POVM. In particular, in the first technique, we use Schur-Weyl duality \cite{fulton1991representation,Group2}, the method of types and sandwiching of projectors to come up with a universal \textit{non-simultaneous} test. 
On the other hand in the second approach, we first use the test obtained in Theorem \ref{theorem_generalised_independence_test_arbitrary_varying} and use techniques like ``tilting'' and ``augmentation'' (developed by Sen in \cite{Sen2021}) to arrive at a universal \textit{simultaneous} tester. We make a comparison between both these techniques in Table \ref{tab:table1}.

\begin{table}[ht]
\caption{Hypothesis testing problem settings of 
this paper and related studies}\label{tab2}
\begin{center}
\begin{tabular}{|c|c|c|c|}
\hline
& \multirow{2}{*}{$H_0$} & \multirow{2}{*}{$H_1$} & form of \\
&&&optimal rate \\
\hline
Independence test in  & 
State on the second subsystem is arbitrary 
& \multirow{2}{*}{Independent states for two systems} & 
\multirow{3}{*}{single-letterized}\\
Theorem \ref{theorem_generalised_independence_test_arbitrary_varying} (Sections \ref{SIT} and \ref{sec:IT})
&
chosen depending on the first subsystem &&\\
\cline{2-3}
of this paper
& \multicolumn{2}{|c|}{State on the first subsystem is fixed} &\\
\hline
\multirow{2}{*}{Multiple independence test in}  & 
State on the third subsystem is arbitrarily chosen
& 
States where first or second subsystem &
single-letterized\\
\multirow{2}{*}{Section \ref{sec:GIT} of this paper} &
depending on the first and second subsystems &
is independent of other systems
&
achievable rate\\
\cline{2-3}
&\multicolumn{2}{|c|}{Joint state on the first and second subsystems is fixed}
&  is obtained\\
\hline
\multirow{2}{*}{Independence test in \cite{HT}} &
\multirow{2}{*}{Single $n$-copy state} & Independent states for two systems 
&\multirow{2}{*}{single-letterized}\\
&&State on the first subsystem is fixed & \\
\hline
\multirow{4}{*}{Reference \cite{TH2018}} &
\multirow{4}{*}{Single $n$-copy state} & Several general conditions
&\multirow{4}{*}{single-letterized}\\
&& except for convexity &\\
&& Independent case or &\\
&& Markovian case are included &\\
\hline
Quantum generalized 
&
\multirow{2}{*}{Single $n$-copy state}& 
Convexity and 
&\multirow{2}{*}{limiting expression}\\
Stein's lemma \cite{BP2010,Berta2023gapinproofof,HH2024,Lami2024} 
&&other general conditions&\\
\hline
Quantum Sanov theorem \cite{BD2005,Notzel_2014} 
&Multiple $n$-copy states& Single $n$-copy state
&{single-letterized}\\
\hline
Reference \cite{BBH2021} &Multiple $n$-copy states& Multiple $n$-copy states
&{single-letterized}\\
\hline
\end{tabular}
\end{center}
\end{table}

The above two techniques allow us to construct a \textit{universal simultaneous decoder} for CQ-AVMAC. Even though we study the problem of CQ-AVMAC for the case of two senders in this manuscript, we show that both the above techniques can easily be generalized to the case when there are $T$-senders (where $T>2$). We note here that the first technique does not require the Hilbert space to be enlarged to obtain a simultaneous decoder unlike the second technique (based on the techniques developed in \cite{Sen2021}). However, the first technique requires sandwiching of $O(2^{T-1})$ projectors to construct the decoding POVM for the case of $T (T>2)$ senders. Further, the error analysis for this decoder is arguably simpler in comparison to the decoder obtained using the second technique.

\section{Notations and Facts}\label{sec:bg}
We use $\cH$ to denote a finite-dimensional Hilbert space, $\mathcal{D}(\cH)$ to represent the set of all state density operators acting on $\cH$ and $\cL(\cH)$ represents the set of all linear operators over $\cH$ ($\cD(\cH)\subset\cL(\cH)$). For any quantum system $A$ over the Hilbert space $\cH_{A}$ $\abs{\cA}$ represents the dimension of the Hilbert space $\cH_{A}$. Let $\cX$ and $\cS$ be two finite sets denoting the source and state alphabets. For any integer $b>0$, $[b]:=[1,2,\cdots, b]$. $\cP(\cX)$ is the set of probability distribution over $\cX$. $\cT_{n}(\cX)$ is the set of empirical distributions with denominator $n$.  For any quantum state $\rho$, we define $\mbox{supp}(\rho):=\mbox{Span}\{\ket{i}: \lambda_{i}> 0\},$ where $\left\{\lambda_i\right\}$ represent non-zero eigenvalues of $\rho$. Similarly, for any distribution $P$ defined over a set $\mathcal{X}$, $\mbox{supp}(P):=\{x\in \mathcal{X}: P(x)>0\}.$ For any sequence $x^n : (x_1,\cdots,x_n) \in \cX^n$, a permutation $\pi_n:\cX^n\to\cX^n,$ of length $n$ is a map such that $\pi_n\left(x^n\right):=(x_{\pi_{n}^{-1}(1)},\cdots,x_{\pi_{n}^{-1}(n)}).$ Let $S_{n}$ be the permutation group containing all permutations of length $n$ and it is also known as $n$-th symmetric group. For a sequence $x^n \in \cX^n$, we define the subgroup $S_{n,x^n} := \left\{\pi_n \in S_n : \pi(x^n) = x^n\right\}$
Given two distribution $P \text{  and } Q$ over the set $\mathcal{X}$ such that $\mbox{supp}(P) \subseteq \mbox{supp}(Q)$, the relative entropy between $P$ and $Q$ is defined as,
\begin{equation*}
    D(P\|Q) := \sum_{x \in \mathcal{X}}P(x)\log\left(\frac{P(x)}{Q(x)}\right).
\end{equation*}
Given two quantum states $\rho, \sigma \in \mathcal{D}(\cH)$ such that $\mbox{supp}(\rho) \subseteq \mbox{supp}(\sigma)$, the quantum relative entropy (also known to be \emph{quantum KL-divergence}) between $\rho$ and $\sigma$ is defined as,
\begin{equation*}
    D(\rho\| \sigma):= \tr[\rho(\log(\rho)-\log(\sigma))].
\end{equation*}
 A distance measure on $\cP(\cX)$ is defined by $\left\lVert{ P-Q }\right\rVert =\sum_{x\in \cX}|P(x)-Q(x)|$ where $P, Q \in \cP(\cX)$. For an operator $O \in\cL(H)$, $\norm{O}{p}$ is the operator Schatten-$p$ norm \cite[Definition $2.6$]{khatri2024principlesquantumcommunicationtheory} of $O$. 
 The entropy of $P \in \cP(\cX)$ defined as $H(P)=-\sum_{x \in \cX}P(x)\log P(x)$. For a set $A$, $conv(A)$ denotes the closed convex hull of the set $A$. For two normal operators $A$ and $B$ we will use the notation $\Pi :=\{A \succeq B\}$ to denote a projection operator on the positive eigenspace of the operator $(A-B).$
\begin{definition}(Dirac delta function)\label{deltadis}
    Given a non-empty set $\cS$, for any $s \in \cS$, we define the Dirac delta function or the delta distribution $\delta_{s}$ on $s$ as follows:
    \begin{equation*}
       \forall s' \in \cS, \delta_{s}(s') := \begin{cases}
            1 \quad \textnormal{if} \quad s' = s,\\
            0  \quad \textnormal{otherwise}.
        \end{cases}
    \end{equation*}
\end{definition} \begin{definition}\label{smallo}
 Given two functions $f,g : \bbN \to \bbR$, for any $n \in \bbN$, $f(n)$ is said to be $o(g(n))$ if,
\begin{equation*}
    \lim_{n \to \infty}\frac{f(n)}{g(n)} = 0
\end{equation*}
 \end{definition}
 \begin{definition}\label{smallomega}
 Given two functions $f,g : \bbN \to \bbR$, for any $n \in \bbN$, $f(n)$ is said to be $\omega(g(n))$ if,
\begin{equation*}
    \lim_{n \to \infty}\frac{g(n)}{f(n)} = 0
\end{equation*}
 \end{definition}
 \begin{definition}[Topological interior]\label{def_top_interior}
     The topological interior $\mathbf{Int}\{A\}$ of a subset $A$ of a topological space $X$ is defined as the largest open subset of $X$ contained in $A$.
 \end{definition}
 \begin{definition}(Type and Set of all types)\label{set_of_types}
     A vector of integers $n^d := \left(n_1,n_2,\cdots,n_d\right)$ (where $d > 0$) is called \textbf{a type of size $n$ and length $d$} if $\forall i \in [d], n_i \geq 0$ and $\sum_{i} n_i = n.$ $T^{n}_{d}$ is denoted as the set of all distinct types of size $n$ and length $d$.
 \end{definition}
 \begin{definition}(Young diagram and Set of all Young diagrams)\label{set_of_tableux}
     A vector of integers $n^d := \left(n_1,n_2,\cdots,n_d\right)$ (where $d > 0$) is called \textbf{a Young diagram of size $n$ and depth $d$ }if $n_1\geq n_2\geq\cdots\geq n_d \geq 0$ and $\sum_{i} n_i = n.$ $\Lambda^{n}_{d}$ is denoted as the set of all distinct Young diagrams of size $n$ and depth $d$.
 \end{definition}
 \begin{definition}[Frequency type classes]\label{def_freq_type}
    For a non-empty finite set $\cX$ and an integer $n > 0$, given a frequency (or, type) function $f : \cX \to \mathbb{N} $ ( such that $\sum_{i \in \cX}f(i)=n$), we define a set $T_{\bar{f}} \subset \cX^n$ to be the frequency type class or type set corresponding to $\bar{f}$ where $\forall i \in \cX,  \bar{f}(i) = \frac{f(i)}{n}$ as follows,
    \begin{equation*}
        T_{\bar{f}} := \left\{(x_1,\cdots,x_n) \in \cX^n : \frac{\left\lvert\left\{k : x_k = i\right\}\right\rvert}{n} = \bar{f}(i) , \forall i \in \cX \right\}.
    \end{equation*}
\end{definition}
The above set $ T_{\bar{f}}$ is a permutation invariant set i.e. for every sequence $x^n : (x_1,\cdots,x_n) \in \cX^n$, under any permutation $\pi_n \in S_n,$ $\pi_n\left(x^n\right) \in  T_{\bar{f}}$ where $\pi_n\left(x^n\right):=(x_{\pi_{n}^{-1}(1)},\cdots,x_{\pi_{n}^{-1}(n)}).$

\begin{definition}[Permutation operators] \cite[Definition 8]{Mele2024}\label{fact_perm_op}
    Given a permutation $\pi \in S_n$ and a Hilbert space $\cB$, the permutation operator corresponding to $\pi$, denoted as $V^{\cB^n}(\pi)$, is an unitary matrix acting on $\cH_\cB^{\otimes n}$ which satisfies the following :
    \begin{equation}
        V^{\cB^n}(\pi) \ket{\psi_1} \otimes \ket{\psi_2} \otimes \cdots \otimes \ket{\psi_n} = \ket{\psi_{\pi^{-1}(1)}} \otimes \ket{\psi_{\pi^{-1}(2)}} \otimes \cdots \otimes \ket{\psi_{\pi^{-1}(n)}},\label{perm_op_closed_form}
    \end{equation}
    for all $\ket{\psi_1},\ket{\psi_2},\cdots,\ket{\psi_n} \in \cB$.
    Equivalently, for any orthonormal eigenbasis $\left\{\ket{j}\right\}_{j \in [\abs{\cB}]}$ of the Hilbert space $\cB$, $V^{\cB^n}(\pi)$ can be given as following:
    \begin{equation*}
        V^{\cB^n}(\pi) := \sum_{i_1,i_2,\cdots,i_n \in [\abs{\cB}]^{n}} \ket{i_{\pi^{-1}(1)},i_{\pi^{-1
        }(2)}, \cdots, i_{\pi^{-1}(n)}}\bra{i_1,i_2,\cdots,i_n},
    \end{equation*}
    Thus, $V^{\cB^n}(\pi)$ satisfies the following property:
    \begin{equation}
        V^{\cB^n}(\pi) (\cO_1 \otimes \cO_2 \otimes \cdots \otimes \cO_n){V^{\cB^n}}^{\dagger}(\pi) = \cO_{\pi^{-1}(1)} \otimes \cO_{\pi^{-1}(2)} \otimes \cdots\otimes\cO_{\pi^{-1}(n)}\label{perm_op_eq},   \end{equation}
    for all $\cO_1, \cO_2, \cdots, \cO_n \in \cL(\cB)$. From \eqref{perm_op_closed_form}, it directly follows that
    \begin{equation}
        V^{\cB^n}(\pi) {V^{\cB^n}}^{\dagger}(\pi) = {V^{\cB^n}}^{\dagger}(\pi)V^{\cB^n}(\pi) = \bbI^{\cB^n}.\label{perm_op_prop_1}
    \end{equation}
\end{definition}
\begin{definition}[Iterative Matrix Multiplication]\label{def_iter_mul}
    Given $n$ matrices ( for the sake of simplicity we assume them to be square matrices) $M_1,M_2,\cdots,M_n$, $\bullet_{i \in [n]}M_i$ can be defined as follows,
    \begin{equation*}
        \bullet_{i \in [n]}M_i := M_1\cdot M_2\cdot M_3 \cdots M_n.
    \end{equation*}
\end{definition}
\begin{fact}[Direct Sum of vector spaces]\label{fact_direct_sum}
    Let $V_1, V_2,\dots, V_n$ be $n$  orthogonal subspaces of a vector space $V$. Then, the n-fold direct sum (or direct sum) of the vector spaces $V_1, V_2,\dots, V_n$ is defined as
    \begin{equation*}
        \bigoplus_{i=1}^{n} V_i  := \left\{(v_1, v_2,\dots, v_n) \hspace{5pt} | v_i \in V_i (1\leq i \leq n) \right\},
    \end{equation*}
    and, 
    \begin{equation*}
        \dim{\left\{\bigoplus_{i=1}^{n} V_i \right\}} = \sum_{i=1}^{n} \dim\{V_i\}.
    \end{equation*}
  
   For example, consider two vector spaces $V_1 = \mathbb{R}^2$ and $V_2 = \mathbb{R}^2$ over $\mathbb{R}$. Then, $V_1 \oplus V_2$ is a 4-dimensional vector space over $R$. If we consider a vector $\ket{a} := \begin{bmatrix}
       a_1 \\
       a_2 \\
   \end{bmatrix}
   $ in $V_1$ and another vector $\ket{b} := \begin{bmatrix}
       b_1 \\
       b_2 \\
   \end{bmatrix}
   $ in $V_2$, then $\ket{a}$ in the 4-dimensional vector space $V_1 \oplus V_2$ can be defined with padding $|\dim\{V_2\}|$ zeros below vector $\ket{a}$, making $\ket{a} = \begin{bmatrix}
       a_1 \\
       a_2 \\
       0 \\
       0 \\
   \end{bmatrix}$ and similarly $\ket{b}$ in $V_1 \oplus V_2$ would be $\begin{bmatrix}
       0 \\
       0 \\
       b_1 \\
       b_2 \\
   \end{bmatrix}$. Thus, $\ket{a} + \ket{b} = \begin{bmatrix}
       a_1 \\
       a_2 \\
       b_1 \\
       b_2 \\
   \end{bmatrix}$.
\end{fact}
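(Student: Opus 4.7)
The plan is to establish the dimension formula by exhibiting an explicit basis of $\bigoplus_{i=1}^{n} V_i$ whose cardinality equals $\sum_{i=1}^{n}\dim\{V_i\}$. Since the definition already presents $\bigoplus_{i=1}^{n} V_i$ as the set of $n$-tuples $(v_1,\ldots,v_n)$ with $v_i \in V_i$, the plan will first verify that this set inherits a well-defined vector space structure under coordinate-wise addition and scalar multiplication (this is routine, so I would state it without elaborating on the axiom checks).

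Next, for each $i \in [n]$ I would fix a basis $\mathcal{B}_i := \{e^{(i)}_{1}, \ldots, e^{(i)}_{d_i}\}$ of $V_i$, where $d_i := \dim\{V_i\}$. For every basis vector $e^{(i)}_{j}$ I embed it into $\bigoplus_{i=1}^{n} V_i$ as the tuple $\tilde{e}^{(i)}_{j} := (0,\ldots,0,e^{(i)}_{j},0,\ldots,0)$ with $e^{(i)}_{j}$ placed in the $i$-th slot. The candidate basis for the direct sum is then the union $\mathcal{B} := \bigcup_{i=1}^{n} \{\tilde{e}^{(i)}_{j}\}_{j=1}^{d_i}$, which has cardinality $\sum_{i=1}^{n} d_i$.

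The two substantive steps are spanning and linear independence. For spanning, any $(v_1,\ldots,v_n) \in \bigoplus_{i=1}^{n} V_i$ decomposes as $\sum_{i=1}^{n} (0,\ldots,0,v_i,0,\ldots,0)$, and each summand expands in the chosen basis $\mathcal{B}_i$ as a linear combination of the $\tilde{e}^{(i)}_{j}$'s. For linear independence, suppose $\sum_{i=1}^{n}\sum_{j=1}^{d_i} \alpha_{i,j}\, \tilde{e}^{(i)}_{j} = 0$. Reading off the $i$-th coordinate gives $\sum_{j=1}^{d_i} \alpha_{i,j}\, e^{(i)}_{j} = 0$ in $V_i$ for each $i$, and since $\mathcal{B}_i$ is a basis of $V_i$ we conclude $\alpha_{i,j}=0$ for all $i,j$. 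Here the orthogonality of the $V_i$'s inside $V$ is not actually needed to run the argument inside the external direct sum, but it ensures that the natural map $\bigoplus_{i=1}^{n} V_i \to V_1 + \cdots + V_n \subseteq V$ sending $(v_1,\ldots,v_n) \mapsto \sum_i v_i$ is an isomorphism, so the ``internal'' and ``external'' viewpoints of the direct sum coincide.

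The only potentially delicate point is reconciling these two viewpoints: the statement is phrased using external tuples, while the word ``orthogonal subspaces of $V$'' suggests the internal perspective. I would close the proof by noting that orthogonality gives $V_i \cap \sum_{k\neq i} V_k = \{0\}$ (if $w \in V_i$ is orthogonal to every summand, it is orthogonal to itself, hence zero), which upgrades the sum $V_1+\cdots+V_n$ to an internal direct sum canonically isomorphic to the external one, so the dimension count $\dim\{\bigoplus_{i=1}^{n} V_i\} = \sum_{i=1}^{n}\dim\{V_i\}$ holds in both interpretations.
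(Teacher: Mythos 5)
Your proof is correct. Note that the paper offers no proof of this statement at all: it is recorded as a standard background Fact, with only the zero-padding example in $\mathbb{R}^2\oplus\mathbb{R}^2$ serving as illustration, so there is no argument of the paper's to compare against. Your route -- embedding bases $\{e^{(i)}_j\}$ of each $V_i$ as tuples supported in the $i$-th slot, checking spanning and linear independence coordinate-wise, and then using orthogonality only to show $V_i\cap\sum_{k\neq i}V_k=\{0\}$ so that the external tuple construction agrees with the internal sum inside $V$ -- is the standard argument and is fully consistent with the paper's zero-padding picture; your explicit separation of the external and internal viewpoints is in fact a point the paper glosses over, and your orthogonality argument (a vector of $V_i$ lying in $\sum_{k\neq i}V_k$ is orthogonal to itself, hence zero) is valid in the finite-dimensional inner-product setting the paper works in.
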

\begin{fact}\cite{Group2}\label{fact_type_size_ub}
    $T^{n}_{d}$ (See Definition \ref{set_of_types}) and $\Lambda^{n}_{d}$ (See Definition \ref{set_of_tableux}) satisfies the following upper-bound :
    \begin{equation}
        \abs{\Lambda^{n}_{d}} \leq \abs{T^{n}_{d}} \leq (n+1)^{d-1}.\label{set_Youg_Type_UB}
    \end{equation}
\end{fact}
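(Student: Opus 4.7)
The plan is to prove each inequality separately, both by elementary combinatorial arguments.

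For the first inequality $|\Lambda^n_d| \leq |T^n_d|$, I would simply observe that every Young diagram is a type. Indeed, comparing Definitions \ref{set_of_types} and \ref{set_of_tableux}, a Young diagram of size $n$ and depth $d$ is a vector $(n_1,\ldots,n_d)$ of non-negative integers summing to $n$ that additionally satisfies the monotonicity constraint $n_1 \geq n_2 \geq \cdots \geq n_d \geq 0$. Dropping this extra constraint, the vector still satisfies the defining conditions of a type of size $n$ and length $d$, so $\Lambda^n_d \subseteq T^n_d$, and the first inequality follows immediately.

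For the second inequality $|T^n_d| \leq (n+1)^{d-1}$, I would argue by counting free coordinates. Any type $(n_1,\ldots,n_d) \in T^n_d$ is uniquely determined by its first $d-1$ entries, since the sum constraint forces
\begin{equation*}
n_d \;=\; n - \sum_{i=1}^{d-1} n_i.
\end{equation*}
Each of the first $d-1$ entries $n_i$ is a non-negative integer bounded above by $n$ (since all entries are non-negative and sum to $n$), so $n_i \in \{0,1,\ldots,n\}$, a set of cardinality $n+1$. Therefore, the number of admissible $(d-1)$-tuples, and hence the number of types, is at most $(n+1)^{d-1}$.

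There is no real obstacle here; the bound is deliberately loose (one could tighten it to $\binom{n+d-1}{d-1}$ via stars-and-bars) but the polynomial-in-$n$ form $(n+1)^{d-1}$ is precisely what will be useful later when the fact is combined with exponential bounds in $n$, since polynomial factors can be absorbed into the $o(n)$ terms in the exponent. The two inequalities together then give the chain of bounds stated in Fact \ref{fact_type_size_ub}.
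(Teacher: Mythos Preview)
Your proof is correct. The paper does not give its own proof of this statement; it simply records it as a fact with a citation to \cite{Group2}, so your elementary argument (inclusion $\Lambda^n_d \subseteq T^n_d$ for the first inequality, and the free-coordinate count for the second) is exactly the standard justification and fully suffices.
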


\begin{fact}[Petz Quantum Re\`nyi Divergence \cite{marcobook}]
    Consider $\rho,\sigma \in \cD(\cH)$  and $\alpha \in (0,1) \cup (1,+\infty)$. Then, Petz quantum Re\`nyi divergence between $\rho$ and $\sigma$ is defined as follows,
    \begin{align*}
        D_{\alpha} (\rho \| \sigma) := \begin{cases}
            \frac{1}{\alpha - 1}\log\tr\left[\rho^{\alpha}\sigma^{1 - \alpha}\right] & {\mbox{if }} (\alpha < 1 \cap \rho \not\perp \sigma) \cup (\rho \ll \sigma),\\
            +\infty & \mbox{else}.
        \end{cases}
    \end{align*}
\end{fact}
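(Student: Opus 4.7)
The final statement labeled as a \emph{Fact} is not a theorem but rather a definition: it introduces the Petz quantum Rényi divergence $D_\alpha(\rho\|\sigma)$ for $\alpha\in(0,1)\cup(1,+\infty)$ by specifying its value as $\frac{1}{\alpha-1}\log\tr[\rho^{\alpha}\sigma^{1-\alpha}]$ on the appropriate support conditions and $+\infty$ otherwise. Consequently there is no mathematical claim to establish; the content is purely notational, inherited from the cited reference \cite{marcobook}. Any ``proof'' would therefore reduce to verifying that the symbols on the right-hand side are well-defined, which is immediate once one recalls that $\rho^{\alpha}$ and $\sigma^{1-\alpha}$ are obtained from the spectral calculus on the positive parts of $\rho$ and $\sigma$, and that the support hypotheses eliminate the ambiguities $0^{0}$ and $0\cdot\infty$.

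If an author wished to justify the definition beyond stating it, the standard route would be to record the following sanity checks, none of which belong to the main thread of the paper. First, one would verify that $\tr[\rho^{\alpha}\sigma^{1-\alpha}]>0$ under either hypothesis, so that the logarithm is finite, by choosing a common eigenbasis refinement and inspecting the support conditions. Second, one would note the data-processing inequality in the regime $\alpha\in[0,2]$ (due to Petz), which justifies the name ``divergence.'' Third, one would observe via L'Hôpital or a direct spectral computation that $\lim_{\alpha\to 1} D_\alpha(\rho\|\sigma)=D(\rho\|\sigma)=\tr[\rho(\log\rho-\log\sigma)]$, reconciling the Rényi family with the Umegaki relative entropy used elsewhere in the manuscript.

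Since these verifications are textbook material covered in \cite{marcobook} and are cited rather than re-derived here, the appropriate response is to flag that the box is a definition, not a proposition, and to leave it unproven. If the authors intend to invoke specific analytic properties of $D_\alpha$ later (for instance, monotonicity in $\alpha$ or the connection to the hypothesis-testing divergence used in Theorem~\ref{theorem_generalised_independence_test_arbitrary_varying}), those should be introduced as separate facts with their own citations at the point of first use, rather than implicitly bundled with the definition above.
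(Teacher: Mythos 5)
You are right that this boxed ``Fact'' is a definition of the Petz quantum R\'{e}nyi divergence rather than a claim to be proved, and the paper treats it exactly the same way: it is stated with a citation to the textbook and no proof is given. Your observation that the only content to check is well-definedness under the stated support conditions (and that analytic properties such as the $\alpha\to 1$ limit are invoked separately, e.g.\ in Fact~\ref{sibsons_quantum_mutual_info}) matches the paper's treatment, so nothing further is needed.
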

\begin{fact}[$\alpha$-Quantum Mutual Information \cite{Sharma_Warsi2013,Hayashi2015}]\label{sibsons_quantum_mutual_info}
   Assuming the problem setup mentioned in Definition \ref{def_cq_avht_null} and given a probability distribution $Q$ over $\cS$, we consider the following states:
   \begin{align}
       \rho^{\cB}_{Q,x} &:= \sum_{s \in \cS}Q(s)\rho^{\cB}_{x,s}, \quad \forall x \in \cX,\nn\\
       \rho^{X\cB}_{P,Q} &:= \sum_{s \in \cS}Q(s)\sum_{x \in \cX}P(x)\ketbrasys{x}{X}\otimes\rho^{\cB}_{x,s} = \sum_{x \in \cX}P(x)\ketbrasys{x}{X}\otimes \rho^{\cB}_{Q,x},\label{sibson_QMI_eq1}\\
       \rho^{\cB}_{P,Q} &:= \tr_{X}\left[\rho^{X\cB}_{P,Q}\right] = \sum_{x \in \cX}P(x)\rho^{\cB}_{Q,x}.\nn
   \end{align}
   
   Then, for any $\alpha \in (0,1)$ we define $I_{\alpha}[X;\cB]_{P,Q}$ as follows,
   \begin{align}
       I_{\alpha}[X;\cB]_{P,Q} &:= \min_{\mu^{\cB} \in \cD(\cB)}D_{\alpha}(\rho^{X\cB}_{P,Q} || \rho^{X}_{P} \otimes \mu^{\cB}) \overset{a}{=} -\frac{\alpha}{1 - \alpha}\log\tr\left[\left(\sum_{x \in \cX}P(x)(\rho^{\cB}_{Q,x})^{\alpha}\right)^{\frac{1}{\alpha}}\right],\label{fact_sibson_qmi_expression}
   \end{align}
   where $ \rho^{X}_{P} := \sum_{x \in \cX}P(x) \ketbrasys{x}{\cX}$ and $a$ follows from follows from Sibson's identity.
See \cite[Lemma $3$ in the supplementary material]{Sharma_Warsi2013} or
\cite[Lemma $2$]{Hayashi2015}. Further, $I_{\alpha}[X;\cB]_{P,Q}$ satisfies the following:
\begin{align}
    \lim_{\alpha \to 1} I_{\alpha}[X;\cB]_{P,Q} = I[X;\cB]_{P,Q} &:= D(\rho^{X\cB}_{P,Q} || \rho^{X}_{P} \otimes \rho^{\cB}_{P,Q}).\label{sibsons_qmi_convergence_eq}
\end{align}
\end{fact}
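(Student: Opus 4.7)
The plan is to establish the two assertions of the fact separately: (i) the Sibson closed-form identity giving the minimizer of $D_\alpha(\rho^{X\cB}_{P,Q}\,\|\,\rho^X_P \otimes \mu^{\cB})$ over $\mu^{\cB}$, and (ii) the convergence $I_\alpha \to I$ as $\alpha \to 1$. Both are classical in spirit; the hard part is simply being careful about the classical-quantum block structure and the direction of the minimization.

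First, for the Sibson step, I would exploit the fact that $\rho^{X\cB}_{P,Q}$ is block-diagonal in the orthonormal basis $\{\ket{x}^X\}$ and that $\rho^X_P \otimes \mu^{\cB}$ is also block-diagonal in the same $X$-basis (and the two blocks commute with the tensor factor on $\cB$). Writing $\rho^{X\cB}_{P,Q} = \sum_{x \in \cX} P(x)\ketbrasys{x}{X}\otimes \rho^{\cB}_{Q,x}$, this gives
\begin{equation*}
\tr\!\left[(\rho^{X\cB}_{P,Q})^{\alpha}(\rho^X_P \otimes \mu^{\cB})^{1-\alpha}\right]
= \sum_{x \in \cX} P(x)\,\tr\!\left[(\rho^{\cB}_{Q,x})^{\alpha}(\mu^{\cB})^{1-\alpha}\right]
= \tr\!\left[\sigma^{\cB}\,(\mu^{\cB})^{1-\alpha}\right],
\end{equation*}
with $\sigma^{\cB} := \sum_{x} P(x)(\rho^{\cB}_{Q,x})^{\alpha}$. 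Since $\alpha \in (0,1)$ implies $\tfrac{1}{\alpha-1} < 0$, minimizing $D_{\alpha}$ over $\mu^{\cB}$ is equivalent to maximizing $\tr[\sigma^{\cB}(\mu^{\cB})^{1-\alpha}]$ over $\mu^{\cB}\in\cD(\cH_\cB)$.

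Next, I would apply the matrix Hölder inequality with conjugate exponents $1/\alpha$ and $1/(1-\alpha)$:
\begin{equation*}
\tr\!\left[\sigma^{\cB}(\mu^{\cB})^{1-\alpha}\right] \;\le\; \left(\tr\!\left[(\sigma^{\cB})^{1/\alpha}\right]\right)^{\alpha}\!\left(\tr[\mu^{\cB}]\right)^{1-\alpha} = \left(\tr\!\left[(\sigma^{\cB})^{1/\alpha}\right]\right)^{\alpha},
\end{equation*}
with equality iff $\mu^{\cB}$ is proportional to $(\sigma^{\cB})^{1/\alpha}$; the normalized maximizer is $\mu^{\cB}_\star = (\sigma^{\cB})^{1/\alpha}/\tr[(\sigma^{\cB})^{1/\alpha}]$. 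Plugging this back into $D_{\alpha}$ and simplifying gives
\begin{equation*}
\min_{\mu^{\cB}} D_{\alpha}(\rho^{X\cB}_{P,Q}\,\|\,\rho^X_P \otimes \mu^{\cB}) = \frac{\alpha}{\alpha-1}\log\tr\!\left[(\sigma^{\cB})^{1/\alpha}\right] = -\frac{\alpha}{1-\alpha}\log\tr\!\left[\Big(\sum_{x \in \cX} P(x)(\rho^{\cB}_{Q,x})^{\alpha}\Big)^{1/\alpha}\right],
\end{equation*}
which is precisely equation $(a)$. For the minimization to be genuinely attained one checks that $\mu^{\cB}_\star$ is supported on $\bigcup_x \Supp(\rho^{\cB}_{Q,x})$, which contains $\Supp(\rho^{\cB}_{P,Q})$, so no support issue arises.

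Finally, for the limit $\alpha \to 1$, I would use the standard fact that $\lim_{\alpha\to 1} D_{\alpha}(\rho\|\sigma) = D(\rho\|\sigma)$ for any pair with $\Supp(\rho)\subseteq \Supp(\sigma)$, obtained by an L'H\^opital expansion of $\tfrac{1}{\alpha-1}\log\tr[\rho^{\alpha}\sigma^{1-\alpha}]$ around $\alpha=1$. In parallel, the maximizer $\mu^{\cB}_\star(\alpha) = (\sigma^{\cB})^{1/\alpha}/\tr[(\sigma^{\cB})^{1/\alpha}]$ is jointly continuous in $\alpha$ on $(0,1]$ and satisfies $\lim_{\alpha\to 1}\mu^{\cB}_\star(\alpha) = \rho^{\cB}_{P,Q}$. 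The only real subtlety is justifying the interchange of the limit in $\alpha$ with the minimization over $\mu^{\cB}$; this I would handle by a two-sided squeeze: for any fixed $\mu^{\cB}$, $\limsup_{\alpha\to 1} I_{\alpha} \le D(\rho^{X\cB}_{P,Q}\|\rho^X_P\otimes\mu^{\cB})$, and taking $\mu^{\cB}= \rho^{\cB}_{P,Q}$ on the right gives the upper bound; for the lower bound, evaluating along the optimal trajectory $\mu^{\cB}_\star(\alpha)$ and using continuity of $(\alpha,\mu)\mapsto D_{\alpha}(\rho^{X\cB}_{P,Q}\|\rho^X_P \otimes \mu)$ in a neighborhood of $(1,\rho^{\cB}_{P,Q})$ yields $\liminf_{\alpha\to 1} I_{\alpha} \ge D(\rho^{X\cB}_{P,Q}\|\rho^X_P \otimes \rho^{\cB}_{P,Q})$, completing the proof.
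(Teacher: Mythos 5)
Your proof is correct, and it is worth noting that the paper itself does not prove this statement: it is imported as a Fact with citations to \cite{Sharma_Warsi2013,Hayashi2015}, so there is no in-paper argument to compare against. Your derivation of identity $(a)$ is exactly the standard Sibson-identity argument that those references use: the block-diagonal structure in the $X$-basis reduces $\tr[(\rho^{X\cB}_{P,Q})^{\alpha}(\rho^X_P\otimes\mu^{\cB})^{1-\alpha}]$ to $\tr[\sigma^{\cB}(\mu^{\cB})^{1-\alpha}]$ with $\sigma^{\cB}=\sum_x P(x)(\rho^{\cB}_{Q,x})^{\alpha}$, the sign of $\tfrac{1}{\alpha-1}$ for $\alpha\in(0,1)$ turns the minimization into a maximization, and the trace H\"older inequality with exponents $1/\alpha$ and $1/(1-\alpha)$ (the paper's Fact \ref{trace_ineq}) gives the bound $(\tr[(\sigma^{\cB})^{1/\alpha}])^{\alpha}$, which your explicit choice $\mu^{\cB}_\star\propto(\sigma^{\cB})^{1/\alpha}$ attains by direct computation — so even if one does not invoke the equality condition of H\"older, achievability is verified and the identity follows.

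Two small remarks on the limit $\alpha\to 1$. Your upper bound (fix $\mu^{\cB}=\rho^{\cB}_{P,Q}$ and use $D_{\alpha}\to D$) is clean. For the lower bound you assert joint continuity of $(\alpha,\mu)\mapsto D_{\alpha}(\rho^{X\cB}_{P,Q}\|\rho^X_P\otimes\mu)$ near $(1,\rho^{\cB}_{P,Q})$; this is true but is the one step you leave unproved — the point that makes it work is that $\Supp(\mu^{\cB}_\star(\alpha))=\Supp(\rho^{\cB}_{P,Q})$ for every $\alpha$, so along the trajectory the support is fixed and the $\tfrac{1}{\alpha-1}$ singularity can be handled, e.g., by a mean-value/uniform-derivative argument on $\alpha\mapsto\log\tr[(\rho^{X\cB}_{P,Q})^{\alpha}(\rho^X_P\otimes\mu)^{1-\alpha}]$. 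Alternatively, once $(a)$ is established you can bypass the limit/minimum interchange entirely: $I_{\alpha}[X;\cB]_{P,Q}$ is an explicit function of $\alpha$, and evaluating $\lim_{\alpha\to1}-\frac{\alpha}{1-\alpha}\log\tr[(\sum_x P(x)(\rho^{\cB}_{Q,x})^{\alpha})^{1/\alpha}]$ by differentiating the trace functional at $\alpha=1$ yields $D(\rho^{X\cB}_{P,Q}\|\rho^X_P\otimes\rho^{\cB}_{P,Q})$ directly, which is how the cited references proceed.
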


\begin{fact}[Schur Weyl Duality \cite{fulton1991representation,Group2}]\label{fact_schur}
    Given a Young diagram $\lambda \in \Lambda^{n}_{d}$, we denote $\cU_{\lambda}$ as the irreducible representation of $SU(d)$ and $\cV_{\lambda}$ as the irreducible representation of $S_n$ characterized with respect to $\lambda$. Then, for an Hilbert space $\cH$ of dimension $d$, Schur duality can be given as
    \begin{equation}
        \cH^{\otimes n} = \bigoplus_{\lambda \in \Lambda^{n}_{d}} \cU_{\lambda} \otimes \cV_{\lambda},\label{schur_dulaity_eq}
    \end{equation}
    where $\forall \lambda \in \Lambda^{n}_{d} : \abs{U_\lambda} \leq (n+1)^{\frac{r(r-1)}{2}}$.
\end{fact}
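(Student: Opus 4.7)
The plan is to prove the decomposition via the double commutant theorem applied to two natural commuting actions on $\cH^{\otimes n}$, and to derive the dimension bound from the Weyl dimension formula.

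First, I would set up the two commuting actions. The symmetric group $S_n$ acts on $\cH^{\otimes n}$ by permuting tensor factors through the operators of Definition \ref{fact_perm_op}, and $SU(d)$ acts diagonally via $g \mapsto g^{\otimes n}$. Let $\cA_{S_n}$ and $\cA_{GL}$ denote the associative subalgebras of $\mathrm{End}(\cH^{\otimes n})$ generated by these two actions. Since the actions commute, each algebra sits inside the commutant of the other. The core of Schur--Weyl duality is the converse statement that they are mutual commutants. I would argue this by polarization: any operator commuting with every $g^{\otimes n}$ lies in the symmetric tensor power of $\mathrm{End}(\cH)$, because the set $\{g^{\otimes n} : g \in GL(\cH,\mathbb{C})\}$ is Zariski-dense in that symmetric subspace. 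Under the canonical isomorphism $\mathrm{End}(\cH^{\otimes n}) \cong \mathrm{End}(\cH)^{\otimes n}$, that symmetric subspace coincides with the image of the symmetrization map, which is exactly $\cA_{S_n}$.

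With mutual commutant-ness in hand, the double commutant theorem (equivalently, Artin--Wedderburn together with Schur's lemma applied to these semisimple algebras) yields a decomposition $\cH^{\otimes n} = \bigoplus_{\lambda \in \Omega} \cU_\lambda \otimes \cV_\lambda$, in which $\Omega$ indexes the common isotypic components and $\cU_\lambda$, $\cV_\lambda$ are the paired irreducibles of $\cA_{GL}$ and $\cA_{S_n}$. To identify $\Omega$ with $\Lambda^n_d$, I would use the classical classifications: irreducibles of $S_n$ are indexed by partitions of $n$ (Specht modules), and the $SU(d)$-irreducibles appearing in $\cH^{\otimes n}$ are exactly those whose highest weight is a partition with at most $d$ rows, since antisymmetrizing more than $d$ vectors in a $d$-dimensional space vanishes.

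For the dimension bound I would invoke the Weyl dimension formula,
\begin{equation*}
\dim \cU_\lambda \;=\; \prod_{1 \le i < j \le d} \frac{\lambda_i - \lambda_j + j - i}{j - i},
\end{equation*}
and apply elementary estimates: the product runs over $d(d-1)/2$ pairs, each denominator is at least $1$, and each numerator is bounded by a quantity of order $n$, yielding the claimed estimate $\dim \cU_\lambda \le (n+1)^{d(d-1)/2}$ (with $r = d$ in the statement). The main obstacle is the Zariski-density/polarization step underpinning mutual commutant-ness; once that is secured, the remainder of the argument is essentially bookkeeping via Artin--Wedderburn, the Specht-module classification, and the Weyl formula.
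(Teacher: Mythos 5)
The paper itself does not prove Fact \ref{fact_schur}; it imports it from \cite{fulton1991representation,Group2}, so the only question is whether your argument is sound. Your overall route --- the commuting $S_n$ and $SU(d)$ actions, mutual commutants via the double commutant theorem, identification of the index set with partitions of $n$ into at most $d$ parts, and the Weyl dimension formula for the bound --- is the standard textbook proof and, written correctly, does establish the Fact.

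However, the crucial mutual-commutant step is mis-stated in two places. First, the claim ``any operator commuting with every $g^{\otimes n}$ lies in the symmetric tensor power of $\mathrm{End}(\cH)$'' is false for $n\ge 3$: every permutation operator $V^{\cB^n}(\pi)$ commutes with all $g^{\otimes n}$, yet under $\mathrm{End}(\cH^{\otimes n})\cong \mathrm{End}(\cH)^{\otimes n}$ the permutation-symmetric elements are exactly the operators fixed under conjugation by all $V^{\cB^n}(\sigma)$, and a non-central $V^{\cB^n}(\pi)$ is not such an operator. Second, that symmetric subspace is not $\cA_{S_n}$ (the span of the permutation unitaries); it is the commutant $\cA_{S_n}'$ --- you have conflated the group algebra with its commutant. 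The correct deployment of polarization/Zariski density is in the opposite direction: $\mathrm{span}\{g^{\otimes n}\}=\mathrm{span}\{A^{\otimes n}: A\in \mathrm{End}(\cH)\}$ equals the symmetric subspace of $\mathrm{End}(\cH)^{\otimes n}$, which equals $\cA_{S_n}'$; then, since $\cA_{S_n}$ is a finite-dimensional unital $*$-algebra, the double commutant theorem gives the reverse identity, namely that the commutant of the $SU(d)$ (or $GL(d)$) algebra is $\cA_{S_n}''=\cA_{S_n}$. With that repair, the Artin--Wedderburn decomposition and the identification of the multiplicity space labels with Young diagrams of depth at most $d$ go through as you describe.

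A smaller point on the dimension bound: bounding each numerator of the Weyl formula merely by ``a quantity of order $n$'' and each denominator below by $1$ only yields $(n+d-1)^{d(d-1)/2}$. To obtain the stated $(n+1)^{d(d-1)/2}$ (and you are right that the exponent in the paper should read $d$, not $r$), bound each factor as $1+\frac{\lambda_i-\lambda_j}{j-i}\le 1+(\lambda_i-\lambda_j)\le n+1$ before taking the product over the $d(d-1)/2$ pairs.
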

\begin{fact}[H\"older's Inequality for operators\cite{khatri2024principlesquantumcommunicationtheory}]\label{trace_ineq}
     For two operators $A,B \in \cL(H)$ we have the following:
 \begin{equation*}
     \tr[AB] \leq \norm{AB}{1} \leq \min\{\norm{A}{\infty}\norm{B}{1},\norm{A}{1}\norm{B}{\infty}\}.
\end{equation*}
     Further, we state a more general statement of H\"older's Inequality for operators. Given two positive semidefinite operator $A,B \in \cL(\cH)$ and two real numbers $p,q \geq 1$ such that $\frac{1}{p}+\frac{1}{q} = 1$, we have the following:
    \begin{equation}
        \abs{\tr[AB]} \leq \left( \tr[A^p]\right)^{\frac{1}{p}}\left( \tr[B^q]\right)^{\frac{1}{q}}.\label{Holder_fact}
    \end{equation}
 \end{fact}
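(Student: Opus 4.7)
The plan is to obtain both inequalities by reducing everything to scalar Hölder applied to sequences of singular values, with the bridge provided by the von Neumann (Horn--Weyl) trace inequality
\begin{equation*}
|\tr[AB]|\;\le\;\sum_i \sigma_i(A)\,\sigma_i(B),
\end{equation*}
where $\sigma_i(\cdot)$ denotes the singular values of an operator arranged in non-increasing order. This identity is the essential non-commutative content; once it is in hand, the rest of the argument mirrors the classical proof of Hölder for sequences.

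First I would handle the chain $\tr[AB]\le\norm{AB}{1}\le\min\{\norm{A}{\infty}\norm{B}{1},\,\norm{A}{1}\norm{B}{\infty}\}$. For the left inequality, the polar decomposition $AB=U|AB|$ gives $\tr[AB]=\tr[U|AB|]$; since $U$ is a partial isometry with $\norm{U}{\infty}\le 1$, trace/operator-norm duality yields $|\tr[U|AB|]|\le \norm{U}{\infty}\norm{AB}{1}\le \norm{AB}{1}$. For the right inequality, I would invoke Horn's singular-value bound $\sigma_i(AB)\le \norm{A}{\infty}\,\sigma_i(B)$, an immediate consequence of the Courant--Fischer variational characterization of $\sigma_i$, and then sum over $i$ to obtain $\norm{AB}{1}\le\norm{A}{\infty}\norm{B}{1}$. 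The symmetric bound follows from $\norm{AB}{1}=\norm{(AB)^*}{1}=\norm{B^*A^*}{1}$, together with $\norm{C^*}{1}=\norm{C}{1}$ and $\norm{C^*}{\infty}=\norm{C}{\infty}$.

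For the general statement with $A,B\succeq 0$ and $1/p+1/q=1$, I would combine the von Neumann trace inequality with scalar Hölder applied to the singular-value sequences:
\begin{equation*}
|\tr[AB]|\;\le\;\sum_i \sigma_i(A)\,\sigma_i(B)\;\le\;\Bigl(\sum_i \sigma_i(A)^p\Bigr)^{1/p}\Bigl(\sum_i \sigma_i(B)^q\Bigr)^{1/q}.
\end{equation*}
Positivity of $A$ and $B$ makes the singular values coincide with the non-negative eigenvalues, so $\sum_i\sigma_i(A)^p=\tr[A^p]$ and $\sum_i\sigma_i(B)^q=\tr[B^q]$, which yields \eqref{Holder_fact}. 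Note that positivity is used only to identify $\sum_i\sigma_i(A)^p$ with $\tr[A^p]$; without it, one would write $\tr[|A|^p]$ on the right-hand side, recovering the standard Schatten Hölder inequality.

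The main obstacle is proving the von Neumann--Horn inequality $|\tr[AB]|\le\sum_i \sigma_i(A)\sigma_i(B)$ in a self-contained way. My preferred route is via the Ky Fan maximum principle: one first establishes the weak majorization $\sum_{i=1}^{k}\sigma_i(AB)\le\sum_{i=1}^{k}\sigma_i(A)\sigma_i(B)$ for every $k$ by writing singular-value decompositions $A=U_A\Sigma_A V_A^*$, $B=U_B\Sigma_B V_B^*$ and bounding the $k$-th partial sum using a doubly stochastic matrix formed from the overlaps of the singular vectors; one then applies Abel summation to pass from weak majorization of partial sums to the full inner product of the singular-value sequences, and closes with $|\tr[X]|\le\sum_i\sigma_i(X)$ applied to $X=AB$. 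A slicker but heavier alternative would be Stein--Hirschman complex interpolation along the Schatten scale using the endpoints $(p,q)=(1,\infty)$ and $(\infty,1)$ already established in the first part, but the Ky Fan approach keeps the proof elementary.
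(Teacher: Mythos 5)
Your proof is sound, but note that the paper does not actually prove this statement at all: it is imported as a Fact with a citation to \cite{khatri2024principlesquantumcommunicationtheory}, and only the subsequent Fact \ref{Holder_corllary} is derived from it in the text. So there is no in-paper argument to compare against; what you have written is a self-contained derivation of the cited result, and it follows the standard singular-value route. The first chain is handled correctly: polar decomposition plus $|\tr[UC]|\le\norm{U}{\infty}\tr[C]$ for $C\succeq 0$ gives $\tr[AB]\le\norm{AB}{1}$ (strictly speaking $\tr[AB]$ may be complex for arbitrary $A,B$, so the honest statement is $|\tr[AB]|\le\norm{AB}{1}$, which is also what the paper actually uses), and $\sigma_i(AB)\le\norm{A}{\infty}\sigma_i(B)$ summed over $i$, together with the adjoint symmetry $\norm{AB}{1}=\norm{B^*A^*}{1}$, gives the $\min$ bound. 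For \eqref{Holder_fact}, reducing to scalar H\"older on singular values via von Neumann's inequality and then using positivity to identify $\sum_i\sigma_i(A)^p=\tr[A^p]$ is exactly the textbook argument. The only piece left at sketch level is von Neumann's inequality itself; your doubly-stochastic-overlap idea is the right one (bound $|\tr[AB]|$ by $\sum_{i,j}\sigma_i(A)\sigma_j(B)D_{ij}$ with $D$ doubly stochastic and conclude by Birkhoff plus the rearrangement inequality), though as written it is slightly overcomplicated: once you have the weak majorization $\sum_{i\le k}\sigma_i(AB)\le\sum_{i\le k}\sigma_i(A)\sigma_i(B)$, the full-sum case $k=\dim\cH$ combined with $|\tr[AB]|\le\norm{AB}{1}$ already finishes the argument, so no Abel summation is needed. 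In short: correct, standard, and more informative than the paper, which simply cites the result.
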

\begin{fact}\label{Holder_corllary}
    Given a positive semi-definite operator $A$ over a Hilbert space $\cH$ and a density matrix $\rho \in \cD(\cH)$, we have,
    \begin{equation}
        \tr[A \rho^t] \leq \left(\tr[A^{\frac{1}{1-t}}]\right)^{1-t},\label{Holder_corllary_eq}
    \end{equation}
    
    for all $t \in [0,1]$.
\end{fact}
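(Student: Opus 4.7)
The plan is to derive Fact \ref{Holder_corllary} as a direct corollary of the operator H\"older inequality stated in Fact \ref{trace_ineq}, by choosing the conjugate exponents in a way that exploits $\tr[\rho] = 1$. Specifically, I would apply \eqref{Holder_fact} to the positive semidefinite operators $A$ and $\rho^{t}$ with the conjugate pair
\[
p = \frac{1}{1-t}, \qquad q = \frac{1}{t},
\]
which indeed satisfies $\tfrac{1}{p}+\tfrac{1}{q} = (1-t)+t = 1$ for every $t \in (0,1)$.

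With this choice, H\"older's inequality gives
\[
\bigl|\tr[A\rho^{t}]\bigr| \le \bigl(\tr[A^{p}]\bigr)^{1/p}\bigl(\tr[(\rho^{t})^{q}]\bigr)^{1/q}
= \bigl(\tr[A^{1/(1-t)}]\bigr)^{1-t}\bigl(\tr[\rho]\bigr)^{t}.
\]
Since $\rho\in\cD(\cH)$ satisfies $\tr[\rho]=1$, the second factor equals $1$. Moreover, because both $A$ and $\rho^{t}$ are positive semidefinite, the quantity $\tr[A\rho^{t}] = \tr[A^{1/2}\rho^{t}A^{1/2}]$ is real and non-negative, so the absolute value can be dropped. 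This yields \eqref{Holder_corllary_eq} on the open interval $t\in(0,1)$.

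It remains only to check the boundary points. For $t=0$ the inequality reduces to the trivial equality $\tr[A] \le \tr[A]$. For $t=1$, the exponent $1/(1-t)$ diverges, and the right-hand side must be interpreted as the limit $\lim_{p\to\infty}(\tr[A^{p}])^{1/p} = \norm{A}{\infty}$; the inequality $\tr[A\rho] \le \norm{A}{\infty}\tr[\rho] = \norm{A}{\infty}$ then follows from the first part of Fact \ref{trace_ineq}. No step here is truly an obstacle, since the entire argument is a careful bookkeeping of exponents; the only subtle point is the identification of the correct conjugate pair so that $\rho^{t}$ is raised to $1/t$ and $\tr[\rho]=1$ eliminates the second factor.
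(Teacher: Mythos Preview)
Your proposal is correct and follows essentially the same argument as the paper: both apply H\"older's inequality \eqref{Holder_fact} with the conjugate pair $p=\frac{1}{1-t}$, $q=\frac{1}{t}$ and use $\tr[\rho]=1$ to eliminate the second factor. You even add a careful treatment of the boundary cases $t=0,1$, which the paper's proof leaves implicit.
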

\begin{proof}
    Given a $t \in [0,1]$ assume $r = \frac{1}{1-t}$ and $s = \frac{1}{t}$ then from  \eqref{Holder_fact}, we have the following:
    \begin{align*}
        \tr[A \rho^t] &\leq \abs{\tr[A \rho^t]}\\
        &\leq \left(\tr[A^{r}]\right)^{\frac{1}{r}}\left(\tr[\rho^{t s}]\right)^{\frac{1}{s}}\\
        &\leq \left(\tr[A^{\frac{1}{1-t}}]\right)^{1-t} (\tr[\rho])^{t}\\
        &= \left(\tr[A^{\frac{1}{1-t}}]\right)^{1-t}.
    \end{align*}
\end{proof}
 
\begin{fact}[Gao's union bound\cite{Gao_2015,RyanRamgopal2021}]
\label{Gao}
Let $\Pi_1,\Pi_2, \cdots, \Pi_n$ be projectors over $\cH$ and $\rho \in \cD(\cH)$. Then,  
\begin{equation*}
\tr(\Pi_n \cdots \Pi_2\Pi_1 \rho \Pi_1\Pi_2 \cdots \Pi_n) \geq 1-4 \sum_{i=1}^n \tr[\Pi^c_i\rho],
\end{equation*}
where $\Pi^c_i= \mathbb{I} - \Pi_i$.
\end{fact}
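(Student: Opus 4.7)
Plan: I would split the proof into two regimes based on the size of $S := \sum_{i=1}^n \tr[\Pi_i^c \rho]$. In the trivial regime $S \ge 1/4$, the right-hand side $1-4S$ is non-positive while the left-hand side is the trace of a positive operator and hence non-negative, so the inequality is immediate. So the real work is in the regime $S < 1/4$.

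For the non-trivial regime, I would introduce the partial products $A_k := \Pi_k \Pi_{k-1}\cdots \Pi_1$ (with $A_0 := \bbI$), so the target quantity is $\tr[A_n \rho A_n^*]$. The first step is the telescoping identity
\begin{equation*}
1 - \tr[A_n \rho A_n^*] = \sum_{k=1}^n \left(\tr[A_{k-1}\rho A_{k-1}^*] - \tr[A_k \rho A_k^*]\right) = \sum_{k=1}^n \tr[\Pi_k^c A_{k-1}\rho A_{k-1}^*],
\end{equation*}
obtained from $A_k = \Pi_k A_{k-1}$ together with $\tr[X] - \tr[\Pi X \Pi] = \tr[\Pi^c X]$ for any positive $X$ and projection $\Pi$. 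Writing $\epsilon_k := \tr[\Pi_k^c A_{k-1}\rho A_{k-1}^*]$ and $\delta_k := \tr[\Pi_k^c \rho]$, it then suffices to show $\sum_k \epsilon_k \le 4\sum_k \delta_k$.

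Next I would relate $\epsilon_k$ to $\delta_k$ term-by-term. Decomposing $\Pi_k^c A_{k-1}\sqrt\rho = \Pi_k^c\sqrt\rho - \Pi_k^c(\bbI - A_{k-1})\sqrt\rho$ and applying $(a+b)^2 \le 2a^2 + 2b^2$ in Hilbert--Schmidt norm (using $\|\Pi_k^c X\|_2 \le \|X\|_2$) yields
\begin{equation*}
\epsilon_k \;\le\; 2\delta_k + 2\|(\bbI - A_{k-1})\sqrt\rho\|_2^2.
\end{equation*}
The drift term can be unpacked via $\bbI - A_{k-1} = \sum_{j<k}\Pi_j^c A_{j-1}$ and the identity
\begin{equation*}
\|(\bbI - A_{k-1})\sqrt\rho\|_2^2 \;=\; 2\Re\tr[(\bbI - A_{k-1})\rho] - \sum_{j<k}\epsilon_j,
\end{equation*}
which ties the drift back to the telescoping sum itself. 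Summing over $k$, the inner real parts $\Re\tr[\Pi_j^c A_{j-1}\rho]$ can be controlled through Cauchy--Schwarz applied to the Hilbert--Schmidt inner product: $|\Re\tr[\Pi_j^c A_{j-1}\rho]| \le \sqrt{\delta_j}\,\sqrt{\tr[A_{j-1}^* A_{j-1}\rho]} \le \sqrt{\delta_j}$, since $A_{j-1}^* A_{j-1} \preceq \bbI$.

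The main obstacle is preventing the resulting recursion from accumulating an $n$-dependent prefactor on $\sum_k \delta_k$. A naive triangle-and-Cauchy--Schwarz bound on $\|(\bbI - A_{k-1})\sqrt\rho\|_2^2$ produces coefficients that grow with $n$. The crucial observation is that under the smallness hypothesis $S < 1/4$, the linking identity above couples the cross-terms back into $\sum_j \epsilon_j$ on the left-hand side, so after summing in $k$ and rearranging one obtains a closed inequality of the form $(1 - cS)\sum_k \epsilon_k \le C \sum_k \delta_k$ for explicit absolute constants $c, C$; choosing the bookkeeping so that $C/(1-cS) \le 4$ uses exactly the margin $S < 1/4$ and delivers the factor $4$ stated in the lemma.
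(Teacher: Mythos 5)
Your preliminary reductions are fine and most of your intermediate identities check out: with $A_k:=\Pi_k\cdots\Pi_1$, $\epsilon_k:=\tr[\Pi_k^cA_{k-1}\rho A_{k-1}^*]$, $\delta_k:=\tr[\Pi_k^c\rho]$, the telescoping identity $1-\tr[A_n\rho A_n^*]=\sum_k\epsilon_k$, the bound $\epsilon_k\le 2\delta_k+2\norm{(\bbI-A_{k-1})\sqrt\rho}{2}^2$, and the linking identity $\norm{(\bbI-A_{k-1})\sqrt\rho}{2}^2=2\Re\tr[(\bbI-A_{k-1})\rho]-\sum_{j<k}\epsilon_j$ are all correct. (Note the paper itself does not prove this Fact; it is imported from \cite{Gao_2015,RyanRamgopal2021}, so your argument has to stand on its own.) The genuine gap is the final step, which is asserted rather than derived, and with the bounds exactly as you state them it does not close. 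You control each cross term by $|\Re\tr[\Pi_j^cA_{j-1}\rho]|\le\sqrt{\delta_j}$ and then sum the per-$k$ inequality over $k$; but then every cross term and every $-\epsilon_j$ is counted with multiplicity $(n-j)$, so what you actually obtain is
\begin{equation*}
\sum_{k}\epsilon_k+2\sum_{j}(n-j)\epsilon_j\;\le\;2\sum_k\delta_k+4\sum_{j}(n-j)\sqrt{\delta_j},
\end{equation*}
whose right-hand side is not $O(\sum_k\delta_k)$ uniformly in $n$ (take $\delta_j=S/n$ for all $j$; then $\sum_j(n-j)\sqrt{\delta_j}\sim n^{3/2}\sqrt{S}$), and the negative $\epsilon$-terms sit with the wrong weights to manufacture a factor $(1-cS)$ multiplying $\sum_k\epsilon_k$. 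Nothing about the hypothesis $S<1/4$ repairs this; the claimed closed inequality $(1-cS)\sum_k\epsilon_k\le C\sum_k\delta_k$ simply does not follow from the displayed steps.

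The good news is that your own ingredients do yield the constant $4$ if deployed differently, and you do not need the case split on $S$ or the bound $\epsilon_k\le 2\delta_k+2\norm{(\bbI-A_{k-1})\sqrt\rho}{2}^2$ at all. First, do not weaken the Cauchy--Schwarz step: since $\Pi_j^c$ is idempotent, $\tr[\Pi_j^cA_{j-1}\rho]=\tr[(\Pi_j^c\sqrt\rho)^*(\Pi_j^cA_{j-1}\sqrt\rho)]$, so $|\tr[\Pi_j^cA_{j-1}\rho]|\le\sqrt{\delta_j}\,\sqrt{\epsilon_j}$ (keeping the $\sqrt{\epsilon_j}$ factor is exactly what lets the cross terms couple back to the quantity you are bounding). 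Second, do not sum the per-$k$ inequality; instead use your linking identity only at the last step, i.e.\ the nonnegativity $0\le\norm{(\bbI-A_n)\sqrt\rho}{2}^2=2\Re\tr[(\bbI-A_n)\rho]-\sum_{j}\epsilon_j$, together with $\bbI-A_n=\sum_j\Pi_j^cA_{j-1}$. This gives $E:=\sum_j\epsilon_j\le 2\sum_j\sqrt{\delta_j\epsilon_j}\le 2\sqrt{S\,E}$ by Cauchy--Schwarz over $j$, hence $E\le 4S$, which is the statement with the constant $4$ exactly and with no $n$-dependence. (For comparison, the cited proofs proceed differently again: Gao's original argument and the elementary vector-evolution argument of O'Donnell--Venkateswaran; the route just described is closest in spirit to Gao's.)
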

\begin{fact}[Gentle Measurement lemma under Ensemble of States\cite{Winter2001}]\label{gent_measurement}
    Let $\{p(x),\rho_x\}$ be an ensemble and let $\bar{\rho}:= \sum_{x}p(x)\rho_x$. If an operator $E$, where $0\preceq E \preceq I$, has high overlap with the expected state $\bar{\rho}$ i.e. $\tr[E\bar{\rho}] \geq 1 - \eps$, where $\eps \in (0,1)$. Then,
    \begin{equation*}
        \mathbb{E}_{X}\left[\norm{\sqrt{E}\rho_{X}\sqrt{E} - \rho_{X}}{1}\right]\leq 2\eps^{1/2}.
    \end{equation*}
\end{fact}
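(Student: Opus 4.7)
The plan is to reduce the ensemble statement to the classical single-state Gentle Measurement inequality and then close the argument with Jensen's inequality via the concavity of the square root.

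First, I would establish the single-state version: if $\rho \in \cD(\cH)$ and $0 \preceq E \preceq \bbI$ with $\tr[E\rho] \geq 1 - \delta$, then $\norm{\sqrt{E}\rho\sqrt{E} - \rho}{1} \leq 2\sqrt{\delta}$. To obtain this, I would apply the triangle inequality via the splitting
\begin{equation*}
\norm{\sqrt{E}\rho\sqrt{E} - \rho}{1} \leq \norm{\sqrt{E}\rho(\sqrt{E} - \bbI)}{1} + \norm{(\sqrt{E} - \bbI)\rho}{1}.
\end{equation*}
Using $\rho = \sqrt{\rho}\cdot\sqrt{\rho}$ and the Cauchy-Schwarz form of H\"older's inequality from Fact \ref{trace_ineq} (i.e.\ $\norm{AB}{1} \leq \norm{A}{2}\norm{B}{2}$), each of these two terms is bounded above by $\sqrt{\tr[(\bbI - \sqrt{E})^2 \rho]}\cdot\sqrt{\tr[E\rho]}$ and $\sqrt{\tr[(\bbI - \sqrt{E})^2 \rho]}\cdot\sqrt{\tr[\rho]}$ respectively, both of which are at most $\sqrt{\tr[(\bbI - \sqrt{E})^2 \rho]}$. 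The crucial operator inequality is $(\bbI - \sqrt{E})^2 \preceq \bbI - E$, which follows by functional calculus from the scalar bound $(1 - \sqrt{t})^2 \leq 1 - t$ for $t \in [0,1]$. This yields $\tr[(\bbI - \sqrt{E})^2 \rho] \leq 1 - \tr[E\rho] \leq \delta$, and summing the two pieces gives the desired $2\sqrt{\delta}$.

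Next, I would apply the single-state inequality pointwise. For each $x$ in the ensemble, set $\delta_x := 1 - \tr[E\rho_x] \geq 0$; the single-state bound gives
\begin{equation*}
\norm{\sqrt{E}\rho_x\sqrt{E} - \rho_x}{1} \leq 2\sqrt{\delta_x}.
\end{equation*}
Taking expectations over $X \sim p$ and then invoking Jensen's inequality on the concave function $\sqrt{\cdot}$,
\begin{equation*}
\mathbb{E}_X\!\left[\norm{\sqrt{E}\rho_X\sqrt{E} - \rho_X}{1}\right] \leq 2\,\mathbb{E}_X[\sqrt{\delta_X}] \leq 2\sqrt{\mathbb{E}_X[\delta_X]} = 2\sqrt{1 - \tr[E\bar{\rho}]} \leq 2\sqrt{\eps},
\end{equation*}
where in the penultimate equality I used linearity to pass the expectation inside the trace: $\mathbb{E}_X[\delta_X] = 1 - \sum_x p(x)\tr[E\rho_x] = 1 - \tr[E\bar{\rho}]$, and the last inequality is the hypothesis $\tr[E\bar{\rho}] \geq 1 - \eps$.

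The argument is essentially standard and presents no serious obstacle. The only points that require a little care are (i) justifying the operator inequality $(\bbI - \sqrt{E})^2 \preceq \bbI - E$ by functional calculus, and (ii) handling the non-Hermiticity of $\sqrt{E}\rho\sqrt{E} - \rho$ when invoking H\"older; both are resolved by the $\pm \sqrt{E}\rho$ splitting above. The $\sqrt{\cdot}$ concavity step is what makes the ensemble bound match the single-state bound exactly, without any ensemble-dependent loss.
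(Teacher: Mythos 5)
Your proof is correct: the single-state bound via the splitting $\sqrt{E}\rho\sqrt{E}-\rho=\sqrt{E}\rho(\sqrt{E}-\bbI)+(\sqrt{E}-\bbI)\rho$, the Schatten Cauchy--Schwarz step, the operator inequality $(\bbI-\sqrt{E})^2\preceq \bbI-E$, and the final Jensen/concavity argument together give exactly the stated constant $2\sqrt{\eps}$. The paper states this result as a cited Fact (Winter's gentle measurement lemma for ensembles) without reproducing a proof, and your derivation is essentially the standard argument behind that citation, so there is nothing further to compare.
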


\begin{fact}\cite[Theorem 6]{Ahlswede_Hyp_1980}\label{fact_robust}
    If $\alpha = (\alpha_1,\cdots,\alpha_L)$ and $\beta = (\beta_1,\cdots,\beta_L)$ are two $L$-length sequences where $\forall i \in [L], \alpha_i,\beta_i \in [0, 1]$ such that for some $\eps \in (0,1)$
    \begin{equation*}
        \frac{1}{L}\sum_{i=1}^{L}\alpha_i \geq 1 - \eps, \frac{1}{L}\sum_{i=1}^{L}\beta_i \geq 1 - \eps,
        \end{equation*}
        then
        \begin{equation*}
            \frac{1}{L}\sum_{i=1}^{L}\alpha_i\beta_i \geq 1 - 2\eps.
        \end{equation*}
\end{fact}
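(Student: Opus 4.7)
The plan is to reduce the averaged inequality to a pointwise elementary inequality and then sum over $i$. Specifically, I would first establish that for any two numbers $\alpha, \beta \in [0,1]$, one has
\[
1 - \alpha\beta \;\leq\; (1-\alpha) + (1-\beta).
\]
This follows from the identity $1 - \alpha\beta = (1-\alpha) + \alpha(1-\beta)$ together with the bound $\alpha(1-\beta) \leq 1 - \beta$, which uses $\alpha \leq 1$ and $1 - \beta \geq 0$.

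Next, I would apply this pointwise bound with $\alpha = \alpha_i$ and $\beta = \beta_i$ for each $i \in [L]$, then sum over $i$ and divide by $L$ to obtain
\[
1 - \frac{1}{L}\sum_{i=1}^{L} \alpha_i \beta_i \;\leq\; \frac{1}{L}\sum_{i=1}^{L} (1-\alpha_i) + \frac{1}{L}\sum_{i=1}^{L} (1-\beta_i).
\]
The two averages on the right are each at most $\eps$ by the hypotheses $\frac{1}{L}\sum_i \alpha_i \geq 1-\eps$ and $\frac{1}{L}\sum_i \beta_i \geq 1-\eps$, so the right-hand side is at most $2\eps$. Rearranging yields $\frac{1}{L}\sum_i \alpha_i\beta_i \geq 1 - 2\eps$, as required.

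There is no substantive obstacle in this argument; the statement is essentially a union-bound at the level of individual coordinates. The only subtlety worth flagging is the essential use of the upper bound $\alpha_i, \beta_i \leq 1$ (not merely non-negativity), without which the pointwise inequality $1-\alpha\beta \leq (1-\alpha)+(1-\beta)$ can fail. The constant $2$ is moreover tight: take $\alpha_i = 0$ on an $\eps$-fraction of indices and $\beta_i = 0$ on a disjoint $\eps$-fraction (both equal to $1$ elsewhere); then $\alpha_i\beta_i$ vanishes on exactly a $2\eps$-fraction of indices, saturating the bound.
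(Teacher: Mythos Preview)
Your proof is correct. The paper does not give its own proof of this statement; it simply records it as a fact with a citation to \cite[Theorem 6]{Ahlswede_Hyp_1980}, so there is no in-paper argument to compare against.
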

\begin{fact}\label{KL-prop}
     For the state $\rho^{\cA\cB} \in \cD(\cA \otimes \cB)$, we define  $I[\cA;\cB]_{\rho}$ as follows,
\begin{equation}
    I[\cA;\cB]_{\rho} = \min_{\tau^{\cB} \in \cD(\cB)} D(\rho^{\cA\cB}||\rho^{\cA}\otimes\tau^{\cB}),\label{KL_div_prop}
\end{equation}
where $ \rho^{\cA} := \tr_{\cB}[\rho^{\cA\cB}]$ and $I[\cA;\cB]_{\rho}$ is minimized when $\tau^{\cB} = \tr_{\cA}[\rho^{\cA\cB}]$.
\end{fact}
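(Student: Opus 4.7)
The plan is to reduce Fact \ref{KL-prop} to the non-negativity of quantum relative entropy via a one-line algebraic splitting identity. Writing $\rho^{\cB} := \tr_{\cA}[\rho^{\cA\cB}]$, I would first establish that for every $\tau^{\cB}\in\cD(\cB)$,
\begin{equation}
D(\rho^{\cA\cB}\|\rho^{\cA}\otimes\tau^{\cB})
= D(\rho^{\cA\cB}\|\rho^{\cA}\otimes\rho^{\cB}) + D(\rho^{\cB}\|\tau^{\cB}).
\end{equation}
Given this identity, Klein's inequality yields $D(\rho^{\cB}\|\tau^{\cB})\geq 0$ with equality iff $\tau^{\cB}=\rho^{\cB}$, so the $\tau^{\cB}$-independent first term lower-bounds the whole expression, and the bound is attained at $\tau^{\cB}=\rho^{\cB}$. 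Hence the minimizer over $\tau^{\cB}$ is the marginal $\rho^{\cB}=\tr_{\cA}[\rho^{\cA\cB}]$, and the minimum value equals $D(\rho^{\cA\cB}\|\rho^{\cA}\otimes\rho^{\cB})$, which is the standard quantum mutual information that Fact \ref{KL-prop} is defining as $I[\cA;\cB]_{\rho}$.

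To prove the splitting identity, I would expand the left-hand side using the definition of quantum relative entropy,
\begin{equation}
D(\rho^{\cA\cB}\|\rho^{\cA}\otimes\tau^{\cB})
= \tr[\rho^{\cA\cB}\log \rho^{\cA\cB}]
- \tr[\rho^{\cA\cB}\log(\rho^{\cA}\otimes\tau^{\cB})],
\end{equation}
and exploit that $\rho^{\cA}\otimes\bbI^{\cB}$ and $\bbI^{\cA}\otimes\tau^{\cB}$ commute, so on the joint support $\log(\rho^{\cA}\otimes\tau^{\cB})=\log\rho^{\cA}\otimes\bbI^{\cB}+\bbI^{\cA}\otimes\log\tau^{\cB}$. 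Tracing against $\rho^{\cA\cB}$ produces its marginals, giving $\tr[\rho^{\cA\cB}\log(\rho^{\cA}\otimes\tau^{\cB})]=\tr[\rho^{\cA}\log\rho^{\cA}]+\tr[\rho^{\cB}\log\tau^{\cB}]$. Adding and subtracting the scalar $\tr[\rho^{\cB}\log\rho^{\cB}]$ and regrouping terms yields the claimed splitting: the $\cA\cB$-entropy together with the two marginal entropies assembles the mutual-information term $D(\rho^{\cA\cB}\|\rho^{\cA}\otimes\rho^{\cB})$, while the residual $\tr[\rho^{\cB}(\log\rho^{\cB}-\log\tau^{\cB})]$ is precisely $D(\rho^{\cB}\|\tau^{\cB})$.

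The only mildly delicate point is the support condition. If $\Supp(\rho^{\cB})\not\subseteq\Supp(\tau^{\cB})$, then $D(\rho^{\cB}\|\tau^{\cB})=+\infty$; but in that case $\Supp(\rho^{\cA\cB})\not\subseteq\Supp(\rho^{\cA}\otimes\tau^{\cB})$ as well, so the left-hand side is also $+\infty$ by convention, and the identity is preserved. Thus the minimization is well defined, no real obstacle arises, and the whole fact follows from additivity of $\log$ on commuting tensor factors together with Klein's inequality.
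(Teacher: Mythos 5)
Your proof is correct: the splitting identity $D(\rho^{\cA\cB}\|\rho^{\cA}\otimes\tau^{\cB}) = D(\rho^{\cA\cB}\|\rho^{\cA}\otimes\rho^{\cB}) + D(\rho^{\cB}\|\tau^{\cB})$ follows exactly as you compute, Klein's inequality pins the minimizer at $\tau^{\cB}=\rho^{\cB}$, and your handling of the degenerate support case is sound (if $\Supp(\rho^{\cB})\not\subseteq\Supp(\tau^{\cB})$ then tracing out $\cA$ shows $\Supp(\rho^{\cA\cB})\not\subseteq\Supp(\rho^{\cA}\otimes\tau^{\cB})$, so both sides are $+\infty$). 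The paper states this as a Fact without proof, since it is the standard variational characterization of quantum mutual information; your argument is the usual textbook derivation and correctly supplies the missing justification.
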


\begin{fact}[Hayashi Nagaoka inequality \cite{HN2003}]\label{Hayashi_nagaoka}
    If $A,B \in \cL(\cH)$ are two operators such that $B \succeq 0$ and $0 \preceq A \preceq \bbI^{\cH}$, then we have,
    \begin{equation*}
        \bbI^{\cH} - (A + B)^{-\frac{1}{2}} A (A + B)^{-\frac{1}{2}} \preceq 2(\bbI^{\cH} - A) + 4B.
    \end{equation*}
\end{fact}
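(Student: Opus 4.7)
The plan is to prove this operator inequality by reducing it, via the key identity
$\bbI - (A+B)^{-1/2}A(A+B)^{-1/2} = (A+B)^{-1/2}B(A+B)^{-1/2}$,
to a statement that can be checked using the hypotheses $0 \preceq A \preceq \bbI$ and $B \succeq 0$ together with a Schwarz-type operator inequality.

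First, I would handle the regularization issue. The expression $(A+B)^{-1/2}$ is interpreted via the Moore-Penrose pseudoinverse on the support of $A+B$; on its kernel both $A$ and $B$ vanish (since $A,B \succeq 0$ and $A+B=0$ there), so the inequality collapses to the trivial $\bbI \preceq 2\bbI$ on that subspace. Alternatively, one may replace $B$ by $B+\eps\bbI$, prove the inequality for the invertible operator $A+B+\eps\bbI$, and then take $\eps \downarrow 0$ using continuity. So without loss of generality we assume $A+B \succ 0$.

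Next, I would use the identity $(A+B) - A = B$ to write
$\bbI - (A+B)^{-1/2}A(A+B)^{-1/2} = (A+B)^{-1/2}B(A+B)^{-1/2},$
so that the desired inequality becomes
$(A+B)^{-1/2}B(A+B)^{-1/2} \preceq 2(\bbI - A) + 4B.$
Conjugating both sides by $(A+B)^{1/2}$ and using the identities $(A+B)^{1/2}(\bbI-A)(A+B)^{1/2} = (A+B) - (A+B)^{1/2}A(A+B)^{1/2}$ and $(A+B)^{1/2}B(A+B)^{1/2} = (A+B)^2 - (A+B)^{1/2}A(A+B)^{1/2}$, the inequality reduces to the equivalent operator inequality
$4(A+B)^2 + 2A + B \succeq 6\,(A+B)^{1/2}A(A+B)^{1/2}.$

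The final step is to verify this reduced inequality; this is where the hypotheses $A \preceq \bbI$ and $B \succeq 0$ enter non-trivially. My plan is to establish the one-parameter family
$\bbI - (A+B)^{-1/2}A(A+B)^{-1/2} \preceq (1+c)(\bbI - A) + (2+c+c^{-1})B \qquad \text{for every } c > 0,$
from which setting $c = 1$ recovers the Hayashi-Nagaoka inequality. After the same sandwiching, this parameterized form reduces to an inequality that can be recognized as a completed square of the form $(c^{1/2}X - c^{-1/2}Y)^*(c^{1/2}X - c^{-1/2}Y) \succeq 0$ for a carefully chosen pair of operators $X,Y$ built from $(A+B)^{1/2}$, $A^{1/2}$, and $B^{1/2}$. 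The commutative (spectral) case is easy to verify as the scalar polynomial $2a(1-a) + 2ab + 4b^2 + b \geq 0$ for $a \in [0,1]$, $b \geq 0$.

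The main obstacle is the last step: turning the commutative scalar verification into an operator inequality, since $A$, $B$, and $(A+B)^{1/2}$ do not commute in general. The chief technical content is selecting the right quadratic form (a Schwarz-type inequality of the form $(\alpha U - \beta V)^*(\alpha U - \beta V) \succeq 0$) whose expansion, after using $A \preceq \bbI$ (which implies $(A+B)^{1/2}A(A+B)^{1/2} \preceq (A+B)^{1/2}\bbI(A+B)^{1/2} = A+B$ only crudely), produces exactly the combination of terms appearing in the parameterized inequality. Once such a square representation is found, the proof is a direct expansion.
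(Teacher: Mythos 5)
First, a point of comparison: the paper does not prove this statement at all — it is quoted as Fact~\ref{Hayashi_nagaoka} with a citation to \cite{HN2003} — so your proposal has to stand entirely on its own.

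Your preliminary steps are sound: the pseudoinverse/$\eps$-regularization discussion is correct (on $\ker(A+B)$ both $A$ and $B$ vanish and the blocks decouple), the identity $\bbI - (A+B)^{-1/2}A(A+B)^{-1/2} = (A+B)^{-1/2}B(A+B)^{-1/2}$ is the right starting point, and your sandwiched reformulation $4(A+B)^2 + 2A + B \succeq 6\,(A+B)^{1/2}A(A+B)^{1/2}$ is algebraically equivalent in the invertible case (your scalar check $2a(1-a)+2ab+4b^2+b\ge 0$ is also right). The genuine gap is that the proof of this reduced inequality is never given: you explicitly defer ``selecting the right quadratic form'' and assert that once it is found the rest is a direct expansion. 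That deferred step \emph{is} the non-commutative content of the lemma, and you give no evidence that a single completed square in operators built from $(A+B)^{1/2}$, $A^{1/2}$, $B^{1/2}$ suffices. The known argument (Hayashi--Nagaoka \cite{HN2003}; see also Wilde's Lemma 16.4.1) needs two distinct ingredients, only one of which is a square. With $M:=(A+B)^{1/2}$ one expands $M^{-1}BM^{-1}=\bigl((M^{-1}-\bbI)+\bbI\bigr)B\bigl((M^{-1}-\bbI)+\bbI\bigr)$, bounds the cross terms via the Schwarz-type square with $X=(M^{-1}-\bbI)B^{1/2}$, $Y=B^{1/2}$ to get $c\,(M^{-1}-\bbI)B(M^{-1}-\bbI)+c^{-1}B$; then uses $B\preceq M^2$ to obtain $(M^{-1}-\bbI)B(M^{-1}-\bbI)\preceq(\bbI-M)^2$; and finally shows $(\bbI-M)^2\preceq(\bbI-A)+B$, which is equivalent to $A\preceq M$ and follows from $A^2\preceq A\preceq A+B=M^2$ together with the operator monotonicity of the square root (L\"owner--Heinz). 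That last step is exactly where $0\preceq A\preceq\bbI$ enters, and it is not the expansion of any quadratic form; your plan does not account for it. Until you either exhibit the claimed square representation or import the monotonicity argument (or an equivalent), the proposal is a correct reduction plus a hope, not a proof.
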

\begin{fact}(\cite[Proof of Lemma 2.3]{csiszar-korner-book})\label{type_prob_lb}
    For any type set $T_{P}$ of sequences in $\cX^n$, assuming \eqref{set_Youg_Type_UB} Fact \ref{fact_type_size_ub}, we have the following inequality:
    \begin{equation*}
        P^{n}(T_{P}) \geq (n+1)^{-(\abs{\cX} - 1)} \geq (2n)^{-\abs{\cX}}.
    \end{equation*}
\end{fact}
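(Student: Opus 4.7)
The plan is to use the standard method-of-types argument: prove that among all types $Q \in \cT_n(\cX)$, the probability $P^n(T_Q)$ under the distribution $P^n$ is maximized at $Q=P$, and then combine this with the polynomial upper bound on the number of types provided by Fact \ref{fact_type_size_ub}.

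First I would establish the key combinatorial fact that $P^n(T_P) \geq P^n(T_Q)$ for every type $Q \in \cT_n(\cX)$. Writing $|T_P| = \frac{n!}{\prod_{x \in \cX}(nP(x))!}$ and similarly for $Q$, and noting that for any $x^n \in T_Q$ one has $P^n(x^n) = \prod_{x \in \cX} P(x)^{nQ(x)}$, the ratio becomes
\begin{equation*}
\frac{P^n(T_P)}{P^n(T_Q)} \;=\; \frac{|T_P|}{|T_Q|} \cdot \prod_{x \in \cX} P(x)^{n(P(x)-Q(x))} \;=\; \prod_{x \in \cX} \frac{(nQ(x))!}{(nP(x))!}\, P(x)^{n(P(x)-Q(x))}.
\end{equation*}
The elementary inequality $\frac{m!}{k!} \geq k^{m-k}$ (valid for all nonnegative integers $m,k$, checked by separating $m \geq k$ and $m < k$) applied with $m = nQ(x)$ and $k = nP(x)$ gives $(nQ(x))!/(nP(x))! \geq (nP(x))^{n(Q(x)-P(x))}$. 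Substituting and cancelling $P(x)$ factors leaves $\prod_{x} n^{n(Q(x)-P(x))} = n^{n(\sum_x Q(x) - \sum_x P(x))} = n^0 = 1$, so $P^n(T_P) \geq P^n(T_Q)$.

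Next I would combine this with normalization: since $\sum_{Q \in \cT_n(\cX)} P^n(T_Q) = 1$ and each term is bounded above by $P^n(T_P)$, we get $|\cT_n(\cX)| \cdot P^n(T_P) \geq 1$. Applying the bound $|\cT_n(\cX)| = |T^n_{|\cX|}| \leq (n+1)^{|\cX|-1}$ from Fact \ref{fact_type_size_ub} yields the first inequality $P^n(T_P) \geq (n+1)^{-(|\cX|-1)}$. The second inequality $(n+1)^{-(|\cX|-1)} \geq (2n)^{-|\cX|}$ follows immediately from $(n+1) \leq 2n$ for $n \geq 1$ and $|\cX| - 1 \leq |\cX|$.

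The main obstacle is the combinatorial inequality in the first step — in particular verifying $m!/k! \geq k^{m-k}$ uniformly in the two cases $m \geq k$ and $m < k$, and then bookkeeping the exponents so that the $P(x)$-factors cancel cleanly against the factorial bound to leave exactly $n^0$. Everything after that is a one-line counting argument.
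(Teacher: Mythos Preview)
Your proof is correct and is exactly the standard method-of-types argument from the cited reference (Csisz\'ar--K\"orner, Lemma~2.3). The paper itself does not supply a proof for this Fact, it merely quotes the result, so there is nothing further to compare.
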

\begin{fact}\cite{khatri2024principlesquantumcommunicationtheory}\label{trace_norm2}
    Consider a positive operator $0 \preceq \Pi \preceq \bbI$ in $\cH$ and two states density operators $\rho_1,\rho_2 \in \mathcal{D}(\mathcal{H})$. Then,
    \begin{equation*}
        \tr[\Pi(\rho_1 - \rho_2)] \leq \frac{1}{2}\norm{\rho_1 - \rho_2}{1}.
    \end{equation*}
\end{fact}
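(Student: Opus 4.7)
The plan is to reduce the inequality to the standard Jordan decomposition of the Hermitian operator $\rho_1-\rho_2$ and then exploit the operator inequality $0\preceq \Pi \preceq \bbI$ termwise. Since $\rho_1,\rho_2 \in \cD(\cH)$, the difference $\Delta := \rho_1 - \rho_2$ is Hermitian and traceless. By the spectral theorem I would write $\Delta = P - N$ where $P, N \succeq 0$ have mutually orthogonal supports, so that $|\Delta| = P+N$ and therefore $\norm{\Delta}{1} = \tr[P] + \tr[N]$. The traceless condition $\tr[\Delta] = 0$ forces $\tr[P] = \tr[N]$, which immediately gives the key identity $\tr[P] = \tr[N] = \tfrac{1}{2}\norm{\rho_1-\rho_2}{1}$.

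With this decomposition in hand, I would bound $\tr[\Pi \Delta] = \tr[\Pi P] - \tr[\Pi N]$ by dropping the nonnegative term $\tr[\Pi N]$ (nonnegative because both $\Pi$ and $N$ are positive semidefinite, so their product has nonnegative trace) and then bounding $\tr[\Pi P] \leq \tr[P]$ using $\Pi \preceq \bbI$, which yields $\tr[(\bbI - \Pi) P] \geq 0$ since $\bbI - \Pi \succeq 0$ and $P \succeq 0$. Chaining these two steps gives $\tr[\Pi \Delta] \leq \tr[P] = \tfrac{1}{2}\norm{\rho_1 - \rho_2}{1}$, which is the desired bound.

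There is no real obstacle here; this is the standard variational characterization of trace distance (the Helstrom bound), and the only subtlety is making sure the two operator-positivity facts $\tr[\Pi N] \geq 0$ and $\tr[(\bbI - \Pi)P] \geq 0$ are invoked correctly. If a cleaner presentation is preferred, I can alternatively observe that the maximum of $\tr[\Pi \Delta]$ over $0 \preceq \Pi \preceq \bbI$ is attained at $\Pi = \{\Delta \succeq 0\}$, giving exactly $\tr[P]$, and then apply the identity $\tr[P] = \tfrac{1}{2}\norm{\Delta}{1}$ derived above.
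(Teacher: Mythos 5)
Your proof is correct: the paper states this as Fact \ref{trace_norm2} with a citation and gives no proof of its own, and your argument (Jordan decomposition $\rho_1-\rho_2 = P-N$ with orthogonal supports, tracelessness giving $\tr[P]=\tr[N]=\tfrac{1}{2}\norm{\rho_1-\rho_2}{1}$, then dropping $\tr[\Pi N]\geq 0$ and using $\tr[(\bbI-\Pi)P]\geq 0$) is exactly the standard textbook derivation of the Helstrom/trace-distance bound that the cited reference uses. No gaps; both positivity steps are invoked correctly.
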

\begin{fact}[Fano's inequality]\cite{covertom}[Theorem 2.10.1]\label{fano}
    Let $\hat{W}$ and $W$ be two random variables taking values over the set $\cW$ ($0<\abs{\cW}<\infty$) such that $\Pr\{W \neq \hat{W}\} \leq \eps.$ Then,
    \begin{equation*}
        H_{b}(\eps) + \eps\log\abs{\cW} \geq H(W|\hat{W}),
    \end{equation*}
    where $H_{b}(\eps) = -\eps \log \eps - (1 - \eps)\log (1 - \eps)$ and $\eps \in (0,1)$.
\end{fact}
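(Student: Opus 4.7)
The plan is to follow the classical chain-rule argument. I will introduce the Bernoulli error indicator
\[
E := \mathbf{1}\{W \neq \hat{W}\}, \qquad p := \Pr\{E = 1\} = \Pr\{W \neq \hat{W}\} \leq \eps,
\]
and expand the joint conditional entropy $H(W, E \mid \hat{W})$ in two different ways using the chain rule:
\[
H(W, E \mid \hat{W}) = H(W \mid \hat{W}) + H(E \mid W, \hat{W}) = H(E \mid \hat{W}) + H(W \mid E, \hat{W}).
\]
Since $E$ is a deterministic function of the pair $(W, \hat{W})$, the term $H(E \mid W, \hat{W})$ vanishes, leaving $H(W \mid \hat{W}) = H(E \mid \hat{W}) + H(W \mid E, \hat{W})$.

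Next I would bound each of the two surviving terms. For the first, conditioning reduces entropy, so $H(E \mid \hat{W}) \leq H(E) = H_b(p)$. For the second, I split according to the two values of $E$:
\[
H(W \mid E, \hat{W}) = (1-p)\, H(W \mid \hat{W}, E = 0) + p\, H(W \mid \hat{W}, E = 1).
\]
On the event $\{E = 0\}$ we have $W = \hat{W}$, so $H(W \mid \hat{W}, E = 0) = 0$; on $\{E = 1\}$ the variable $W$ takes at most $|\cW|$ values, so $H(W \mid \hat{W}, E = 1) \leq \log |\cW|$. Combining gives $H(W \mid \hat{W}) \leq H_b(p) + p \log |\cW|$.

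Finally I would replace $p$ by $\eps$ on the right-hand side. The function $f(p) := H_b(p) + p \log |\cW|$ is increasing on the interval $[0, \tfrac{|\cW|}{|\cW|+1}]$, since $f'(p) = \log\frac{1-p}{p} + \log |\cW| \geq 0$ precisely there; in the parameter regime where Fano's inequality is used (small $\eps$, and certainly for any $\eps$ with $f(\eps) \leq \log |\cW|$) we have $p \leq \eps$ in this monotonic range, yielding $f(p) \leq f(\eps) = H_b(\eps) + \eps \log |\cW|$, which is the desired bound. The only mild subtlety, and the one place requiring care, is this monotonicity step: for $\eps$ close to $1$ one may invoke the trivial bound $H(W \mid \hat{W}) \leq \log |\cW|$ and check directly that $H_b(\eps) + \eps \log |\cW|$ exceeds it in that regime, so the inequality holds throughout $\eps \in (0,1)$.
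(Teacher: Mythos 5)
Your argument is correct and is essentially the standard chain-rule proof of Fano's inequality from the cited reference (Cover--Thomas, Theorem 2.10.1); the paper itself states this only as a fact with that citation and gives no proof of its own. Your extra care in passing from $p=\Pr\{W\neq\hat{W}\}$ to the upper bound $\eps$ — monotonicity of $H_b(p)+p\log\abs{\cW}$ up to $p=\abs{\cW}/(\abs{\cW}+1)$, and the trivial bound $H(W|\hat{W})\le\log\abs{\cW}$ beyond it — correctly closes the one step that is usually glossed over, so nothing is missing.
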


\begin{fact}\cite[Corllary $1$]{Sen2021}\label{corol_sen2}
    Let \(|h\rangle \in \) \(\cH\) be a unit vector. Let \( W_0, W_1, \cdots,  W_l\) be subspaces of \(\cH\). \(\forall j \in [l]\cup 0 \), define $\epsilon_j := \norm{\Pi_{W_j}|h\rangle}{2}^2$, where $\Pi_{W_j}$ is the orthogonal projector on $W_j$ and let $\alpha \in (0,1)$. we define two subspaces $\mathbf{W}$ and $\mathbf{W_\alpha}$ on $\cH\oplus\bigoplus_{j\in[l]}\cH_j$,  as follows,
    \begin{equation*}
        \mathbf{W}:= W_0 + {\bigplus_{j=1}^l \mathcal{E}_{j,\alpha}(W_j)},
    \end{equation*}
    where $\forall j \in [l], \cE_{j,\alpha}:\cH \to \cH\oplus\cH_j$ (where $\forall j \in [l]$, $\cH_j $ is isomorphic to $\cH$ and $\abs{\cH_j} = \abs{\cH}$) is an isometric map defined as follows,
    \begin{equation*}
        \mathcal{E}_{j,\alpha} := \frac{1}{\sqrt{{1+\alpha^2}}}\mathcal{I} + \frac{\alpha}{\sqrt{{1+\alpha^2}}}\mathcal{E}_j,\label{eq2.1}
    \end{equation*}
    where $\cE_{j}$ is an isometry from $\cH$ to $\cH_j$ and $\cI$ is an identity embedding from $\cH$ to $\cH\oplus\{\bigoplus_{j\in[l]}\cH_j\}$.
   Then,\\
 \begin{align}
     \max\{\epsilon_0,(1-\alpha)(\max_{j:j\in[l]}\epsilon_j)\} &\leq \norm{\Pi_{{\mathbf{W_\alpha}}}|h\rangle}{2}^{2} \leq \frac{3l}{\alpha^2} \sum_{j = 0}^{l} \epsilon_{j}\label{corol_sen2_eq1}.\\
 \end{align}
\end{fact}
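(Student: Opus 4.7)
\textbf{Proof proposal for Fact \ref{corol_sen2}.}
The plan is to treat the lower and upper bounds separately, exploiting the orthogonal direct‐sum structure of the ambient space $\cH \oplus \bigoplus_{j\in[l]} \cH_j$, in which $|h\rangle$ is identified with $\cI(h)$ and thus has \emph{zero} components in every $\cH_j$. Throughout, let $\psi := \Pi_{\mathbf{W}_\alpha}|h\rangle$; the key identity is $\|\psi\|_2^2=\langle h|\psi\rangle$, which is the defining property of an orthogonal projection.

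For the lower bound, I would establish the two estimates
$\|\psi\|_2^2 \ge \epsilon_0$ and $\|\psi\|_2^2 \ge (1-\alpha)\epsilon_j$ for every $j\in[l]$ by exhibiting explicit unit vectors in $\mathbf{W}_\alpha$ with prescribed overlap with $|h\rangle$. First, since $W_0$ embeds into $\mathbf{W}_\alpha$ through $\cI$, the unit vector $\Pi_{W_0}|h\rangle/\sqrt{\epsilon_0}$ lies in $\mathbf{W}_\alpha$ and has overlap $\sqrt{\epsilon_0}$ with $|h\rangle$. Second, for each $j\in[l]$, take the unit vector $\cE_{j,\alpha}(\Pi_{W_j}|h\rangle)/\sqrt{\epsilon_j}\in\cE_{j,\alpha}(W_j)\subset \mathbf{W}_\alpha$; because $\cE_j(\cdot)$ lies in $\cH_j\perp \cH$, its overlap with $\cI(h)$ equals $(1+\alpha^2)^{-1/2}\epsilon_j/\sqrt{\epsilon_j}$, giving $\|\psi\|_2^2\ge \epsilon_j/(1+\alpha^2)$. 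A short elementary calculation shows $(1+\alpha^2)^{-1}\ge 1-\alpha$ for $\alpha\in(0,1)$, yielding the claimed bound.

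For the upper bound, I would use the fact that every element of $\mathbf{W}_\alpha$ admits a \emph{unique} decomposition $\psi = w_0 + \sum_{j=1}^{l}\cE_{j,\alpha}(w_j)$ with $w_0\in W_0$, $w_j\in W_j$ (uniqueness follows because the $\cH_j$ components of $\psi$ determine $w_j$ through the isometry $\cE_j$). Reading off components in the orthogonal decomposition $\cH\oplus\bigoplus_j\cH_j$ gives
\begin{equation*}
\|\psi\|_2^2 \;=\; \bigl\|w_0+\tfrac{1}{\sqrt{1+\alpha^2}}\sum_{j}w_j\bigr\|_2^2 + \tfrac{\alpha^2}{1+\alpha^2}\sum_{j=1}^{l}\|w_j\|_2^2,
\end{equation*}
from which one immediately extracts $\sum_j\|w_j\|_2^2 \le \tfrac{1+\alpha^2}{\alpha^2}\|\psi\|_2^2 \le \tfrac{2}{\alpha^2}\|\psi\|_2^2$. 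Combined with the triangle inequality applied to $w_0=\psi_{\cH}-\sum_j w_j/\sqrt{1+\alpha^2}$ and Cauchy--Schwarz on $\sum_j\|w_j\|$, this also yields $\|w_0\|_2^2 = O(l/\alpha^2)\|\psi\|_2^2$.

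Finally, using $\|\psi\|_2^2 = \langle h|\psi\rangle = \langle h|w_0\rangle + \sum_j \tfrac{1}{\sqrt{1+\alpha^2}}\langle h|w_j\rangle$ (note $\langle h|\cE_j(\cdot)\rangle=0$), Cauchy--Schwarz $|\langle h|w_j\rangle|\le\sqrt{\epsilon_j}\|w_j\|_2$, and a further Cauchy--Schwarz across $j\in\{0,1,\ldots,l\}$, I obtain
\begin{equation*}
\|\psi\|_2^2 \;\le\; \sqrt{\sum_{j=0}^{l}\epsilon_j}\,\sqrt{\|w_0\|_2^2+\sum_{j=1}^{l}\|w_j\|_2^2}\;\le\; \sqrt{\sum_{j=0}^{l}\epsilon_j}\cdot\sqrt{\tfrac{Cl}{\alpha^2}}\,\|\psi\|_2,
\end{equation*}
and dividing by $\|\psi\|_2$ and squaring gives the desired $\tfrac{Cl}{\alpha^2}\sum_{j}\epsilon_j$ bound. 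The main obstacle will be tightening the constant to the stated value $C=3$: a naive application of triangle + Cauchy--Schwarz costs a factor $10$ or so, so to recover exactly $3l/\alpha^2$ I would need to optimize the intermediate bounds, possibly by separating the contribution of $w_0$ (absorbing the ``$\epsilon_0$ row'' directly into $\|\psi\|$ via the trivial bound $|\langle h|w_0\rangle|\le \|\Pi_{W_0}h\|\,\|w_0\|\le \sqrt{\epsilon_0}\|\psi_{\cH}\|$) before invoking Cauchy--Schwarz on the remaining $l$ terms.
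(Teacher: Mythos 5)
The paper never proves this statement: it is imported verbatim as a Fact from \cite{Sen2021}, so there is no in-paper argument to compare with, and your proposal has to be judged against Sen's corollary itself. Your overall route --- identify $\ket{h}$ with $\cI(h)$, set $\psi:=\Pi_{\mathbf{W_\alpha}}\ket{h}$, use $\norm{\psi}{2}^2=\langle h|\psi\rangle$, and do all bookkeeping in the orthogonal decomposition $\cH\oplus\bigoplus_{j}\cH_j$ --- is the natural one and is in the same spirit as Sen's tilting argument. The lower bound is complete as written: testing against the unit vectors $\Pi_{W_0}\ket{h}/\sqrt{\epsilon_0}$ and $\cE_{j,\alpha}(\Pi_{W_j}\ket{h})/\sqrt{\epsilon_j}$ gives $\epsilon_0$ and $\epsilon_j/(1+\alpha^2)\geq(1-\alpha)\epsilon_j$. (You also silently corrected the transcription slip in the statement, which defines $\mathbf{W}$ but bounds the projection onto $\mathbf{W_\alpha}$; reading $\mathbf{W_\alpha}$ as the tilted span is the intended meaning.)

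The one genuine shortfall is the constant in the upper bound: the chain you describe (triangle inequality for $\norm{w_0}{2}$ against $\norm{\psi}{2}$, then one global Cauchy--Schwarz over all $l+1$ terms) only yields $\norm{\psi}{2}^2\le C l\alpha^{-2}\sum_j\epsilon_j$ with some $C$ strictly larger than $3$, so as written it does not establish the stated inequality. The refinement you gesture at does close the gap; here is one way to finish. Let $u$ be the $\cH$-component of $\psi$, so $\norm{\psi}{2}^2=\langle h|u\rangle$ and $\sum_{j\ge1}\norm{w_j}{2}^2=\frac{1+\alpha^2}{\alpha^2}\left(\norm{\psi}{2}^2-\norm{u}{2}^2\right)$. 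Writing $w_0=u-\frac{1}{\sqrt{1+\alpha^2}}\sum_{j\ge1}w_j$, using $|\langle h|w_0\rangle|\le\sqrt{\epsilon_0}\,\norm{w_0}{2}$ and Cauchy--Schwarz over $j\ge1$, one gets, with $S:=\sum_{j=0}^l\epsilon_j$, the bound $\norm{\psi}{2}^2\le\sqrt{\epsilon_0}\,\norm{u}{2}+\frac{1}{\alpha}\bigl(\sqrt{l\epsilon_0}+\sqrt{S-\epsilon_0}\bigr)\sqrt{\norm{\psi}{2}^2-\norm{u}{2}^2}$. A single Cauchy--Schwarz in the pair $\bigl(\norm{u}{2},\sqrt{\norm{\psi}{2}^2-\norm{u}{2}^2}\bigr)$, whose Euclidean norm is $\norm{\psi}{2}$, then gives $\norm{\psi}{2}^2\le\epsilon_0+\alpha^{-2}\bigl(\sqrt{l\epsilon_0}+\sqrt{S-\epsilon_0}\bigr)^2\le\epsilon_0+2lS\alpha^{-2}\le 3lS\alpha^{-2}$, using $(a+b)^2\le2a^2+2b^2$, $\epsilon_0\le S$, $l\ge1$ and $\alpha<1$. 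With this bookkeeping (plus the harmless remark that in finite dimensions every element of $\mathbf{W_\alpha}$ indeed decomposes as $w_0+\sum_j\cE_{j,\alpha}(w_j)$), your proof is complete and matches the stated constant.
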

\begin{fact}[Chernoff Bound \cite{Chernoff1952}]\label{chernoff}
    Given a random variable $X$, $p \in \bbR$ and some $\alpha > 0$, we have
    \begin{equation*}
        \Pr\{X \geq p \} = \Pr\{e^{\alpha X} \geq e^{\alpha p}\} \leq e^{-\alpha p} \bbE\left[e^{\alpha X}\right].
    \end{equation*}
\end{fact}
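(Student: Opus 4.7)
The statement is the textbook Chernoff tail bound, which is really just Markov's inequality applied to an exponentially transformed random variable. Accordingly, my plan is to prove it in two short steps and to emphasize where the hypothesis $\alpha > 0$ enters.

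First, I would establish the equality $\Pr\{X \ge p\} = \Pr\{e^{\alpha X} \ge e^{\alpha p}\}$. For this, I would invoke the fact that the function $t \mapsto e^{\alpha t}$ is strictly increasing on $\mathbb{R}$ whenever $\alpha > 0$. Strict monotonicity gives the equivalence of the events $\{X \ge p\}$ and $\{e^{\alpha X} \ge e^{\alpha p\}}$ pointwise on the sample space, so the two probabilities coincide. This is the only place the assumption $\alpha > 0$ is used, and it is essential — if $\alpha$ were negative, the direction of the inequality would reverse.

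Next, I would apply Markov's inequality to the nonnegative random variable $Y := e^{\alpha X}$ with threshold $e^{\alpha p} > 0$. Markov gives
\[
\Pr\{Y \ge e^{\alpha p}\} \le \frac{\mathbb{E}[Y]}{e^{\alpha p}} = e^{-\alpha p}\,\mathbb{E}[e^{\alpha X}],
\]
which, combined with the equality from the first step, is exactly the claimed bound. The main (and only) technical ingredient is Markov's inequality for nonnegative random variables; there is no genuine obstacle here, so no step should require further elaboration beyond noting that $e^{\alpha X} \ge 0$ almost surely, which is what licenses the use of Markov's inequality.
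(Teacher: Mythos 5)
Your proof is correct and is the standard argument: the paper states this as a cited fact (Fact \ref{chernoff}) without proof, and your two steps --- event equivalence via strict monotonicity of $t \mapsto e^{\alpha t}$ for $\alpha > 0$, followed by Markov's inequality applied to the nonnegative variable $e^{\alpha X}$ --- are exactly the classical derivation the citation refers to. Nothing is missing.
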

\section{Construction of independence test}\label{sec:IT}
In this section, we construct independence test. That is,
we prove Theorem \ref{theorem_generalised_independence_test_arbitrary_varying}. 
We first construct it for a finite discrete set $\cS$, 
and later we construct it for a continuous set $\cS$. 
\subsection{Case with finite set $\cS$}
To prove Theorem \ref{theorem_generalised_independence_test_arbitrary_varying} given a discrete finite $\cS$, we show that for any $\widehat{R} > 0$ and $t \in (0,1)$, there exists a projective-measurement based test $\bbT_{n}$ which satisfies the following:
    \begin{align}
     \tr\left[\left( \bbI^{X^n\cB^n} - \bbT_{n}\right)\rho^{X^n\cB^n}_{P,s^n}\right] &\leq 
(n+1)^{\abs{\cS}-1 + \frac{t\abs{\cX}(\abs{\cB}-1)\abs{\cB}}{2} + (\abs{\cB} - 1)\abs{\cX}t}2^{nt\left(\widehat{R} - \min_{Q \in \cP(\cS)}I_{1-t}[X;\cB]_{P,Q}\right)}, \quad \forall s^n \in \cS^n,\label{lemma_achievability_theor_gen_eq1}\\
    \tr\left[\bbT_{n}\left({\rho^{X}_{P}}^{\otimes n} \otimes\sigma^{\cB^n}\right)\right]
&\leq n^{\frac{\abs{\cB}(\abs{\cB}-1)}{2}}\abs{\Lambda^{n}_{\abs{\cB}}} 2^{-n\widehat{R}}, \hspace{140pt}\forall \sigma^{\cB^n} \in \cD(\cH_\cB^{\otimes n}),\label{lemma_achievability_theor_gen_eq2}
    \end{align}
where $\bbT_{n}$ does not depend on the probability distribution $P$. 
Since we can choose $t>0$ such that       
$\widehat{R} - \min_{Q \in \cP(\cS)}I_{1-t}[X;\cB]_{P,Q}<0$,
        the pair of \eqref{lemma_achievability_theor_gen_eq1} 
        and \eqref{lemma_achievability_theor_gen_eq2} implies 
        Theorem \ref{theorem_generalised_independence_test_arbitrary_varying}. 
        
        Here, we employ the same method that has been used in \cite{Hayashi2009}. From Fact \ref{fact_schur} we have,
\begin{equation*}
    \cH_\cB^{\otimes n} = \bigoplus_{\lambda \in \Lambda^{n}_{\abs{\cB}}} \cU_{\lambda} \otimes \cV_{\lambda}.
\end{equation*}

Here, for each $\lambda \in \Lambda^{n}_{d}$, we denote $\Pi^{\cB^n}_{\lambda}$ as the projection onto $ \cU_{\lambda} \otimes \cV_{\lambda}$ and consider the following states:
\begin{align}
    \rho^{\cB^n}_{\lambda} &:= \frac{1}{\abs{\cU_{\lambda} \otimes \cV_{\lambda}}} \Pi^{\cB^n}_{\lambda},
    \nn\\
    \rho^{\cB^n}_{U,n} &:= \sum_{\lambda \in \Lambda^{n}_{\abs{\cB}}}\frac{1}{\abs{\Lambda^{n}_{\abs{\cB}}}}\rho^{\cB^n}_{\lambda}.\label{uniform_n_state}
\end{align}

Note that $\rho^{\cB^n}_{U,n}$ is permutation invariant and it follows from the properties of Schur-Weyl duality (Fact \ref{fact_schur}). Before proceeding to the proof of Theorem \ref{theorem_generalised_independence_test_arbitrary_varying}, we require Lemma \ref{lemma_perm_Inv_universal} mentioned below.
\begin{lemma}\cite[Equation (15)]{Hayashi2009}\label{lemma_perm_Inv_universal}
For any permutation invariant state $\sigma^{\cB^n} \in \cD(\cH_\cB^{\otimes n})$, we have
    \begin{equation*}
        \sigma^{\cB^n} \preceq n^{\frac{\abs{\cB}(\abs{\cB}-1)}{2}}\abs{\Lambda^{n}_{\abs{\cB}}}\rho^{\cB^n}_{U,n}.
    \end{equation*}
\end{lemma}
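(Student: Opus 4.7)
The plan is to use Schur--Weyl duality combined with Schur's lemma to decompose any permutation-invariant state into a specific block-diagonal form, and then compare it sector-by-sector with the universal state $\rho^{\cB^n}_{U,n}$. This is essentially a covariance argument: the universal state is the maximally mixed state on each Schur sector, so an arbitrary permutation-invariant state is bounded by a multiple of it, with the multiplicative cost determined by the largest irreducible $\mathrm{SU}(d)$-dimension.

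First, by Fact \ref{fact_schur} I would identify $\cH_\cB^{\otimes n} = \bigoplus_{\lambda \in \Lambda^n_{|\cB|}} \cU_\lambda \otimes \cV_\lambda$, where the symmetric group $S_n$ acts trivially on the $\cU_\lambda$ factor and through the irreducible representation on the $\cV_\lambda$ factor. Since $\sigma^{\cB^n}$ is permutation invariant, it commutes with every permutation operator $V^{\cB^n}(\pi)$. Applying Schur's lemma to each irreducible $S_n$-isotypic component forces $\sigma^{\cB^n}$ to act as a scalar multiple of the identity on each $\cV_\lambda$ factor, so I can write
\begin{equation*}
\sigma^{\cB^n} = \bigoplus_{\lambda \in \Lambda^n_{|\cB|}} p_\lambda \, \tilde{\sigma}_\lambda \otimes \frac{\bbI_{\cV_\lambda}}{\dim \cV_\lambda},
\end{equation*}
where $\{p_\lambda\}$ is a probability distribution and each $\tilde{\sigma}_\lambda \in \cD(\cU_\lambda)$.

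Next, I would rewrite the universal state by recognizing that $\Pi^{\cB^n}_\lambda = \bbI_{\cU_\lambda}\otimes \bbI_{\cV_\lambda}$ in the Schur basis, so
\begin{equation*}
\rho^{\cB^n}_{U,n} = \sum_{\lambda \in \Lambda^n_{|\cB|}} \frac{1}{|\Lambda^n_{|\cB|}|} \cdot \frac{\bbI_{\cU_\lambda}}{\dim \cU_\lambda} \otimes \frac{\bbI_{\cV_\lambda}}{\dim \cV_\lambda}.
\end{equation*}
Since each $\tilde{\sigma}_\lambda$ is a state on $\cU_\lambda$, we have $\tilde{\sigma}_\lambda \preceq \bbI_{\cU_\lambda} = \dim(\cU_\lambda) \cdot \tfrac{\bbI_{\cU_\lambda}}{\dim \cU_\lambda}$, and therefore
\begin{equation*}
p_\lambda \, \tilde{\sigma}_\lambda \otimes \frac{\bbI_{\cV_\lambda}}{\dim \cV_\lambda} \preceq \dim(\cU_\lambda) \cdot \frac{\bbI_{\cU_\lambda}}{\dim \cU_\lambda} \otimes \frac{\bbI_{\cV_\lambda}}{\dim \cV_\lambda}.
\end{equation*}
Summing across $\lambda$ and invoking the dimension bound $\dim \cU_\lambda \leq (n+1)^{|\cB|(|\cB|-1)/2} \leq n^{|\cB|(|\cB|-1)/2}$ for sufficiently large $n$ from Fact \ref{fact_schur} gives the claimed inequality
\begin{equation*}
\sigma^{\cB^n} \preceq n^{\frac{|\cB|(|\cB|-1)}{2}} \, |\Lambda^n_{|\cB|}| \, \rho^{\cB^n}_{U,n}.
\end{equation*}

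The main obstacle is the first step: one must rigorously justify that permutation invariance forces the block-diagonal form with the identity on each $\cV_\lambda$. The cleanest route is to observe that the commutant of the $S_n$-action on $\cH_\cB^{\otimes n}$ is precisely $\bigoplus_\lambda \cL(\cU_\lambda)\otimes \bbI_{\cV_\lambda}$, which is a standard consequence of the double commutant theorem applied to Schur--Weyl duality. Once this structural fact is in hand, the remaining bounds are immediate from operator monotonicity and the polynomial bound on $\dim \cU_\lambda$.
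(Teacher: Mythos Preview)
The paper does not supply its own proof of this lemma; it simply cites \cite[Equation (15)]{Hayashi2009}. Your argument via Schur--Weyl duality and Schur's lemma is exactly the standard route used in that reference, so the overall strategy is correct and matches what the citation points to.

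There is, however, one genuine slip in your final step. You write $\dim \cU_\lambda \leq (n+1)^{|\cB|(|\cB|-1)/2} \leq n^{|\cB|(|\cB|-1)/2}$, but the second inequality goes the wrong way: $(n+1)^k > n^k$ for any $k>0$. Indeed for $|\cB|=2$ and $\lambda=(n,0)$ one has $\dim \cU_\lambda = n+1 > n$, so the bound with $n$ in place of $n+1$ is simply false in general. The lemma as stated in the paper carries the constant $n^{|\cB|(|\cB|-1)/2}$, which is a slight inaccuracy inherited from the way the bound is often quoted; the rigorous constant is $(n+1)^{|\cB|(|\cB|-1)/2}$ (this is what Fact~\ref{fact_schur} actually gives). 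For every downstream use in the paper this discrepancy is harmless---only the polynomial-in-$n$ growth matters---but you should not claim the inequality $(n+1)^k \le n^k$. Either keep $(n+1)$ throughout or note explicitly that the polynomial prefactor is immaterial for the asymptotic statements that follow.
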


We consider an $\bar{x}^n := (\bar{x}_1,\bar{x}_2,\cdots,\bar{x}_n)\in \cX^n$, which has a form $\bar{x}^n := \left(\underbrace{1,\cdots,1}_{m_1},\underbrace{2,\cdots,2}_{m_2},\cdots,\underbrace{\abs{\cX},\cdots,\abs{\cX}}_{m_{\abs{\cX}}}\right)$, where $\forall i \in [\abs{\cX}], m_i \geq 0$ and $\sum_{i \in [\abs{\cX}]}m_i = n$. Given $\bar{x}^n$, consider the following state:
\begin{equation}
    \widehat{\rho}^{\cB^n}_{\bar{x}^n} := \rho^{\cB^{m_1}}_{U,m_1} \otimes \rho^{\cB^{m_2}}_{U,m_2} \otimes \cdots \otimes \rho^{\cB^{m_{\abs{\cX}}}}
    _{U,m_{\abs{\cX}}}\label{rhohatxbarn}.
\end{equation}

For any general $x^n \in \cX^n$, which is a permutation of $\bar{x}^n$ i.e. $x^n  = \pi(\bar{x}^n) = (\bar{x}_{\pi^{-1}(1)},\bar{x}_{\pi^{-1}(2)},\cdots,\bar{x}_{\pi^{-1}(n)})$, for some $\pi \in S_n$, we define $\hat{\rho}^{\cB^n}_{x^n}$ as follows
\begin{equation}
    \widehat{\rho}^{\cB^n}_{x^n} := V^{\cB^n}(\pi) \left(\widehat{\rho}^{\cB^n}_{\bar{x}^n}\right){V^{\cB^n}}^{\dagger}(\pi),\label{rhohatxn}
\end{equation}
where $V^{\cB^n}(\pi)$ is a permutation operator (see Definition \ref{fact_perm_op} for formal definition of $V^{\cB^n}(\pi)$ ). We observe the Lemma \ref{lemma_commutativity} mentioned below, which will be required in the proof of Theorem \ref{theorem_generalised_independence_test_arbitrary_varying}.

\begin{lemma}\label{lemma_commutativity}
    For any $x^n \in \cX^n$, we have $\rho^{\cB^n}_{U,n}\widehat{\rho}^{\cB^n}_{x^n} = \widehat{\rho}^{\cB^n}_{x^n}\rho^{\cB^n}_{U,n}$, where $\rho^{\cB^n}_{U,n}$ and $\widehat{\rho}^{\cB^n}_{x^n}$ are defined in \cref{uniform_n_state,rhohatxn}.
\end{lemma}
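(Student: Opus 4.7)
The plan is to use Schur--Weyl duality (Fact \ref{fact_schur}) to show that $\widehat{\rho}^{\cB^n}_{\bar{x}^n}$ is block-diagonal in the decomposition $\cH_\cB^{\otimes n}=\bigoplus_\lambda \cU_\lambda\otimes\cV_\lambda$, and then leverage the fact that $\rho^{\cB^n}_{U,n}$ is a linear combination of the central projectors $\Pi^{\cB^n}_\lambda$. I would first handle the canonical sequence $\bar{x}^n$ and then transfer the conclusion to an arbitrary $x^n$ by permutation-covariance together with the permutation invariance of $\rho^{\cB^n}_{U,n}$.

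For the canonical sequence, observe that \eqref{rhohatxbarn} writes $\widehat{\rho}^{\cB^n}_{\bar{x}^n}$ as $\bigotimes_{i\in[\abs{\cX}]}\rho^{\cB^{m_i}}_{U,m_i}$, and each factor is a convex combination of Schur projectors on $\cH_\cB^{\otimes m_i}$. Those projectors are central in the corresponding Schur--Weyl algebra, so $U^{\otimes m_i}\rho^{\cB^{m_i}}_{U,m_i}(U^{\otimes m_i})^\dagger=\rho^{\cB^{m_i}}_{U,m_i}$ for every $U\in SU(\abs{\cB})$. Factorising $U^{\otimes n}=U^{\otimes m_1}\otimes\cdots\otimes U^{\otimes m_{\abs{\cX}}}$ and applying the commutation factor by factor, I get that $\widehat{\rho}^{\cB^n}_{\bar{x}^n}$ commutes with $U^{\otimes n}$ for every $U\in SU(\abs{\cB})$. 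By Schur--Weyl duality, any such $SU(\abs{\cB})$-invariant operator is block-diagonal across $\bigoplus_\lambda \cU_\lambda\otimes\cV_\lambda$, so $[\widehat{\rho}^{\cB^n}_{\bar{x}^n},\Pi^{\cB^n}_\lambda]=0$ for every $\lambda\in\Lambda^{n}_{\abs{\cB}}$. Plugging into \eqref{uniform_n_state} yields $[\widehat{\rho}^{\cB^n}_{\bar{x}^n},\rho^{\cB^n}_{U,n}]=0$.

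For general $x^n=\pi(\bar{x}^n)$ with $\pi\in S_n$, I use \eqref{rhohatxn} together with the permutation invariance $V^{\cB^n}(\pi)\rho^{\cB^n}_{U,n}=\rho^{\cB^n}_{U,n}V^{\cB^n}(\pi)$, which follows from each $\Pi^{\cB^n}_\lambda$ commuting with every permutation operator. Then
\begin{align*}
\widehat{\rho}^{\cB^n}_{x^n}\rho^{\cB^n}_{U,n}
&= V^{\cB^n}(\pi)\widehat{\rho}^{\cB^n}_{\bar{x}^n}{V^{\cB^n}}^\dagger(\pi)\rho^{\cB^n}_{U,n}
= V^{\cB^n}(\pi)\widehat{\rho}^{\cB^n}_{\bar{x}^n}\rho^{\cB^n}_{U,n}{V^{\cB^n}}^\dagger(\pi) \\
&= V^{\cB^n}(\pi)\rho^{\cB^n}_{U,n}\widehat{\rho}^{\cB^n}_{\bar{x}^n}{V^{\cB^n}}^\dagger(\pi)
= \rho^{\cB^n}_{U,n}\widehat{\rho}^{\cB^n}_{x^n},
\end{align*}
where the middle equality uses the previous paragraph. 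The only substantive step is the passage from $U^{\otimes n}$-invariance to block-diagonality across the Schur--Weyl decomposition, which is a standard consequence of Fact \ref{fact_schur}; the rest is bookkeeping with the tensor factorisation of $\widehat{\rho}^{\cB^n}_{\bar{x}^n}$ and the permutation covariance in \eqref{rhohatxn}.
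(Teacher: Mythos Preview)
Your proof is correct. Both your argument and the paper's follow the same overall scaffold: first establish commutation for the canonical sequence $\bar{x}^n$, then transfer to general $x^n$ via the permutation covariance \eqref{rhohatxn} together with the permutation invariance of $\rho^{\cB^n}_{U,n}$. The difference lies in the middle step. The paper expands $\widehat{\rho}^{\cB^n}_{\bar{x}^n}$ as a sum over tuples $(\lambda_1,\ldots,\lambda_{\abs{\cX}})$ of sub-Young-diagrams and argues termwise that each $\bigotimes_i \Pi^{\cB^{m_i}}_{\lambda_i}$ either sits inside the support of $\Pi^{\cB^n}_{\lambda}$ or is orthogonal to it. Your route is more conceptual: you observe that each tensor factor $\rho^{\cB^{m_i}}_{U,m_i}$ is $SU(\abs{\cB})$-invariant, so the whole product commutes with $U^{\otimes n}$, and then Schur--Weyl duality forces block-diagonality across the isotypic components $\cU_\lambda\otimes\cV_\lambda$, hence commutation with every $\Pi^{\cB^n}_\lambda$. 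This avoids tracking how the individual projectors $\bigotimes_i \Pi^{\cB^{m_i}}_{\lambda_i}$ branch under the $SU(\abs{\cB})$ action on $\cH_\cB^{\otimes n}$ (which in general involves Littlewood--Richardson, not a single $\lambda$), and it makes the role of Fact~\ref{fact_schur} more transparent. The paper's computation is more explicit about the projector structure; your argument is shorter and relies only on the standard isotypic-decomposition consequence of Schur--Weyl duality.
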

\begin{proof}
    See Appendix \ref{proof_lemma_commutativity} for the proof.
\end{proof}
 Given a probability distribution $P \in \cP(\cX)$, we define the following state:
\begin{equation*}
    \widehat{\rho}^{X^n\cB^n}_{U,P} := \sum_{x^n \in \cX^n}P^{n}(x^n) \ketbrasys{x^n}{X^n} \otimes \widehat{\rho}^{\cB^n}_{x^n}.
\end{equation*}

For $\widehat{R}>0$, we define a projective measurement-based test $\bbT_{n}$ as follows,
\begin{equation}
    \bbT_{n} := \sum_{x^n \in \cX^n} \ketbrasys{x^n}{X^n} \otimes \left\{\widehat{\rho}^{\cB^n}_{x^n} \succeq 2^{n\widehat{R}}\rho^{\cB^n}_{U,n}\right\} \triangleq \left\{\widehat{\rho}^{X^n\cB^n}_{U,P} \succeq 2^{n\widehat{R}} \left({\rho^{X}_{P}}^{\otimes n} \otimes \rho^{\cB^n}_{U,n}\right)\right\}\label{universal_test},
\end{equation}
where ${\rho^{X}_{P}}^{\otimes n} := \sum_{x^n \in \cX^n}P^{n}(x^n)\ketbrasys{x^n}{X^n}$. It is important to note that in \eqref{universal_test}, the first expression of $\bbT_{n}$ is independent of the probability distribution $P$. Observe that $\widehat{\rho}^{X^n\cB^n}_{U,P}$ and ${\rho^{X}_{P}}^{\otimes n} \otimes \rho^{\cB^n}_{U,n}$ both are permutation invariant. Thus, the test $\bbT_{n}$ is also permutation invariant which means $\forall s^n \in \cS^n, \pi \in S_n$ 
\begin{equation}
    \tr\left[\left( \bbI^{X^n\cB^n} - \bbT_{n}\right)\rho^{X^n\cB^n}_{P,s^n}\right] = \tr\left[\left( \bbI^{X^n\cB^n} - \bbT_{n}\right)\rho_{P,\pi(s^n)}^{X^n\cB^n}\right]\label{universal_test_perm_inv}.
\end{equation}

Given an element $s^n \in \cS^n$, we denote its corresponding empirical distribution (the empirical distribution which $s^n$ follows) by $Q_{s^n}$ and we denote the set $T_{Q_{s^n}}$ as set of all sequences in $\cS^n$ which has the same type as $s^n$ i.e. $T_{Q_{s^n}} = \left\{\pi(s^n)\right\}_{\pi \in S_n}$. For a $\bar{s}^n \in T_{Q_{s^n}}$, from \cite[Equation ($11.15$)]{covertom} we have 
\begin{equation}
    Q^{n}_{s^n}(\bar{s}^n) = 2^{-n H(Q_{s^n})},\label{correct_type_elem_prob}
\end{equation}
where $H(Q_{s^n})$ is the Shannon entropy with respect to the distribution $Q_{s^n}$ and we have the following,
\begin{align}
     Q^{n}_{s^n}(T_{Q_{s^n}}) &= \sum_{t^{n} \in T_{Q_{s^n}}}Q_{s^n}(t^n)
     \overset{a}{=} \abs{T_{Q_{s^n}}}Q^{n}_{s^n}(\Tilde{s}^n),\label{type_prob}
\end{align}
where $\Tilde{s}^n$ is any element in $T_{Q_{s^n}}$ and $a$ follows from \eqref{correct_type_elem_prob}. Thus, for any $\widehat{s}^n \in T_{Q_{s^n}}$, we have
\begin{align}
    \frac{1}{\abs{T_{Q_{s^n}}}} \overset{a}{=} \frac{1}{ Q^{n}_{s^n}(T_{Q_{s^n}})} Q^{n}_{s^n}(\widehat{s}^n)
    \overset{b}{\leq} (n+1)^{\abs{\cS}-1}Q^{n}_{s^n}(\widehat{s}^n),\label{inv_type_set_size_ub}
\end{align}
where $a$ follows from \eqref{type_prob} and $b$ follows from Fact \ref{type_prob_lb}. Thus, for any $s^n \in \cS^n$, we upper-bound the quantity $ \tr[\left( \bbI^{X^n\cB^n} - \bbT_{n}\right)\rho^{X^n\cB^n}_{P,s^n}]$ as follows,

\begin{align}
     &\tr\left[\left( \bbI^{X^n\cB^n} - \bbT_{n}\right)\rho^{X^n\cB^n}_{P,s^n}\right] \overset{a}{=}  \tr\left[\left( \bbI^{X^n\cB^n} - \bbT_{n}\right)\frac{1}{\abs{T_{Q_{s^n}}}} \sum_{\pi \in S_n}\rho_{P,\pi(s^n)}^{X^n\cB^n}\right]\nn\\
     &= \tr\left[\left( \bbI^{X^n\cB^n} - \bbT_{n}\right)\frac{1}{\abs{T_{Q_{s^n}}}}\sum_{\widehat{s}^{n} \in T_{Q_{s^n}}} \rho^{X^n\cB^n}_{P,\widehat{s}^{n}}\right]\nn\\
     &\overset{b}{\leq} (n+1)^{\abs{\cS}-1}\tr\left[\left( \bbI^{X^n\cB^n} - \bbT_{n}\right)\sum_{\widehat{s}^{n} \in T_{Q_{s^n}}} Q^{n}_{s^n}(\widehat{s}^n)\rho^{X^n\cB^n}_{P,\widehat{s}^{n}}\right]\nn\\
     &\leq (n+1)^{\abs{\cS}-1}\tr\left[\left( \bbI^{X^n\cB^n} - \bbT_{n}\right)\sum_{\widehat{s}^{n} \in \cS^n} Q^{n}_{s^n}(\widehat{s}^n)\rho^{X^n\cB^n}_{P,\widehat{s}^{n}}\right]\nn\\
     &\overset{c}{=} (n+1)^{\abs{\cS}-1}\tr\left[\left( \bbI^{X^n\cB^n} - \bbT_{n}\right)\left(\rho^{{X\cB}}_{P,Q_{s^n}}\right)^{\otimes n}
     \right]\nn\\
     &= (n+1)^{\abs{\cS}-1}\tr\left[\left( \bbI^{X^n\cB^n} - \bbT_{n}\right)\sum_{x^n \in \cX^n}P^{n}(x^n)\ketbrasys{x^n}{X^n} \otimes \left(\bigotimes_{i=1}^{n}\rho^{\cB}_{Q_{s^n},x_i}\right)
     \right]\nn\\
     &\overset{d}{=}(n+1)^{\abs{\cS}-1}\sum_{x^n \in \cX^n}P^{n}(x^n)\tr\left[ \left(\bigotimes_{i=1}^{n}\rho^{\cB}_{Q_{s^n},x_i}\right)\left\{\widehat{\rho}^{\cB^n}_{x^n} \prec 2^{n\widehat{R}}\rho^{\cB^n}_{U,n}\right\}
     \right]\nn\\
     &\overset{e}{\leq}(n+1)^{\abs{\cS}-1}2^{nt \widehat{R} }\sum_{x^n \in \cX^n}P^{n}(x^n)\tr\left[ \left(\bigotimes_{i=1}^{n}\rho^{\cB}_{Q_{s^n},x_i}\right)(\widehat{\rho}^{\cB^n}_{x^n})^{-t}(\rho^{\cB^n}_{U,n})^{t}
     \right]\nn\\
     &\overset{f}{\leq}(n+1)^{\abs{\cS}-1}2^{nt \widehat{R} }n^{\frac{t\abs{\cX}(\abs{\cB}-1)\abs{\cB}}{2}}\abs{\Lambda^{n}_{\abs\cB}}^{t \abs{\cX}}\sum_{x^n \in \cX^n}P^{n}(x^n)\tr\left[ \left(\bigotimes_{i=1}^{n}\rho^{\cB}_{Q_{s^n},x_i}\right)^{1-t}(\rho^{\cB^n}_{U,n})^{t}
     \right]\nn\\
     &\leq (n+1)^{\abs{\cS}-1 + \frac{t\abs{\cX}(\abs{\cB}-1)\abs{\cB}}{2}}\abs{\Lambda^{n}_{\abs\cB}}^{t \abs{\cX}}2^{nt \widehat{R} } \tr\left[ \left(\sum_{x^n \in \cX^n}P^{n}(x^n)\left(\bigotimes_{i=1}^{n}\rho^{\cB}_{Q_{s^n},x_i}\right)^{1-t}\right)(\rho^{\cB^n}_{U,n})^{t}
     \right]\nn\\
     &= (n+1)^{\abs{\cS}-1 + \frac{t\abs{\cX}(\abs{\cB}-1)\abs{\cB}}{2}}\abs{\Lambda^{n}_{\abs\cB}}^{t \abs{\cX}}2^{nt \widehat{R} } \tr\left[ \left(\sum_{x \in \cX}P(x)\left(\rho^{\cB}_{Q_{s^n},x}\right)^{1-t}\right)^{\otimes n}(\rho^{\cB^n}_{U,n})^{t}
     \right]\nn\\
     &\overset{g}{\leq} (n+1)^{\abs{\cS}-1 + \frac{t\abs{\cX}(\abs{\cB}-1)\abs{\cB}}{2}}\abs{\Lambda^{n}_{\abs\cB}}^{t \abs{\cX}}2^{nt \widehat{R} } \left(\tr\left[ \left(\left(\sum_{x \in \cX}P(x)\left(\rho^{\cB}_{Q_{s^n},x}\right)^{1-t}\right)^{\otimes n}\right)^{\frac{1}{1 - t}}
     \right]\right)^{1 -t}\nn\\
     &= (n+1)^{\abs{\cS}-1 + \frac{t\abs{\cX}(\abs{\cB}-1)\abs{\cB}}{2}}\abs{\Lambda^{n}_{\abs\cB}}^{t \abs{\cX}}2^{nt \widehat{R} } \left(\tr\left[ \left(\sum_{x \in \cX}P(x)\left(\rho^{\cB}_{Q_{s^n},x}\right)^{1-t}\right)^{\frac{1}{1 - t}}
     \right]\right)^{n(1 -t)}\nn\\
     &\overset{h}{=} (n+1)^{\abs{\cS}-1 + \frac{t\abs{\cX}(\abs{\cB}-1)\abs{\cB}}{2}}\abs{\Lambda^{n}_{\abs\cB}}^{t \abs{\cX}} 2^{nt\left(\widehat{R} - I_{1-t}[X;\cB]_{P,Q_{s^n}}\right)}\nn\\
     &\overset{i}{\leq}(n+1)^{\abs{\cS}-1 + \frac{t\abs{\cX}(\abs{\cB}-1)\abs{\cB}}{2} + (\abs{\cB} - 1)\abs{\cX}t}2^{nt\left(\widehat{R} - \min_{Q \in \cP(\cS)}I_{1-t}[X;\cB]_{P,Q}\right)}\label{BNB1},
\end{align}
where $a$ follows from the fact that $\bbT_{n}$ is permutation invariant, $b$ follows from \eqref{inv_type_set_size_ub}, $c$ follows from \eqref{sibson_QMI_eq1} of Fact \ref{sibsons_quantum_mutual_info}, $d$ follows from the definition of $\bbT_{n}$ in \eqref{universal_test}, $e$ follows from the fact that since $ \widehat{\rho}^{\cB^n}_{x^n}$ commutes with $\rho^{\cB^n}_{U,n}$ (which follows from Lemma \ref{lemma_commutativity}), we have $\left\{\widehat{\rho}^{\cB^n}_{x^n} - 2^{n\widehat{R}}\rho^{\cB^n}_{U,n} \prec 0\right\} \leq 2^{nt\widehat{R}}\left(\widehat{\rho}^{\cB^n}_{x^n}\right)^{-t}\left(\rho^{\cB^n}_{U,n}\right)^{t}$ for all $t \in (0,1)$, $f$ follows from Lemma \ref{claim_f} mentioned below this paragraph, $g$ follows from \eqref{Holder_corllary_eq} of Fact \ref{Holder_corllary}, $h$ follows from \eqref{fact_sibson_qmi_expression} of Fact \ref{sibsons_quantum_mutual_info} and $i$ follows from Fact \ref{fact_type_size_ub}.
Therefore, we obtain \eqref{lemma_achievability_theor_gen_eq1}.

\begin{lemma}\label{claim_f}
For all $t \in (0,1)$, for any $x^n := (x_1,\cdots,x_n) \in \cX^n$, we have the following operator inequality:
\begin{equation*}
    \left(\bigotimes_{i=1}^{n}\rho^{\cB}_{Q_{s^n},x_i}\right)^{t}(\widehat{\rho}^{\cB^n}_{x^n})^{-t} \preceq n^{\frac{t\abs{\cX}(\abs{\cB}-1)\abs{\cB}}{2}}\abs{\Lambda^{n}_{\abs\cB}}^{t \abs{\cX}} \bbI^{\cB^n},
\end{equation*}
where the above notations follow from the above discussions.
\end{lemma}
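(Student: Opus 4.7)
The plan is to prove the inequality first for the canonical sorted sequence
$\bar x^n=(\underbrace{1,\ldots,1}_{m_1},\ldots,\underbrace{\abs{\cX},\ldots,\abs{\cX}}_{m_{\abs{\cX}}})$
and then transport it to an arbitrary $x^n=\pi(\bar x^n)$ by conjugation with the permutation operator $V^{\cB^n}(\pi)$. Write $A_{x^n}:=\bigotimes_{i=1}^n\rho^{\cB}_{Q_{s^n},x_i}$ and $B_{x^n}:=\widehat{\rho}^{\cB^n}_{x^n}$. For the sorted $\bar x^n$ both operators factor across the $\abs{\cX}$ blocks: $A_{\bar x^n}=\bigotimes_k(\rho^{\cB}_{Q_{s^n},k})^{\otimes m_k}$ and $B_{\bar x^n}=\bigotimes_k\rho^{\cB^{m_k}}_{U,m_k}$, which reduces everything to block-wise statements.

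The first block-wise statement is that $(\rho^{\cB}_{Q_{s^n},k})^{\otimes m_k}$ commutes with $\rho^{\cB^{m_k}}_{U,m_k}$. This holds because $(\rho^{\cB}_{Q_{s^n},k})^{\otimes m_k}$ is permutation invariant, so by Schur--Weyl duality (Fact \ref{fact_schur}) it preserves every isotypic component $\cU_\lambda\otimes\cV_\lambda$ and hence commutes with each Schur projector $\Pi^{\cB^{m_k}}_\lambda$; since $\rho^{\cB^{m_k}}_{U,m_k}$ is a positive linear combination of these projectors, commutativity follows. Tensoring over blocks and conjugating by $V^{\cB^n}(\pi)$, which by \eqref{perm_op_eq} and \eqref{rhohatxn} transforms $A_{\bar x^n}$ into $A_{x^n}$ and $B_{\bar x^n}$ into $B_{x^n}$ in exactly the same way, shows $[A_{x^n},B_{x^n}]=0$ for every $x^n\in\cX^n$.

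The second block-wise statement is the operator bound $(\rho^{\cB}_{Q_{s^n},k})^{\otimes m_k}\preceq m_k^{\abs{\cB}(\abs{\cB}-1)/2}\,\abs{\Lambda^{m_k}_{\abs{\cB}}}\,\rho^{\cB^{m_k}}_{U,m_k}$, which is precisely Lemma \ref{lemma_perm_Inv_universal} applied to the permutation invariant state $(\rho^{\cB}_{Q_{s^n},k})^{\otimes m_k}$. Tensoring these $\abs{\cX}$ inequalities and using $m_k\le n$ together with the monotonicity of $\abs{\Lambda^m_{\abs{\cB}}}$ in $m$, I obtain $A_{\bar x^n}\preceq K\,B_{\bar x^n}$ with $K:=n^{\abs{\cX}\abs{\cB}(\abs{\cB}-1)/2}\,\abs{\Lambda^n_{\abs{\cB}}}^{\abs{\cX}}$; conjugation by $V^{\cB^n}(\pi)$ promotes this to $A_{x^n}\preceq K\,B_{x^n}$ for every $x^n$.

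To conclude, $B_{x^n}$ is strictly positive (a positive combination of Schur projectors whose sum equals the identity), hence invertible. Thus $A_{x^n}\preceq K\,B_{x^n}$ is equivalent to $B_{x^n}^{-1/2}A_{x^n}B_{x^n}^{-1/2}\preceq K\,\bbI^{\cB^n}$, which by commutativity equals $A_{x^n}B_{x^n}^{-1}$. Operator monotonicity of $x\mapsto x^t$ on $[0,\infty)$ for $t\in(0,1)$ then gives $(A_{x^n}B_{x^n}^{-1})^t\preceq K^t\,\bbI^{\cB^n}$, and by commutativity once more $A_{x^n}^tB_{x^n}^{-t}=(A_{x^n}B_{x^n}^{-1})^t$, yielding the claim. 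The only delicate step is commutativity, since without it the product $A_{x^n}^tB_{x^n}^{-t}$ need not even be Hermitian and the stated L\"owner inequality would be ill-posed; Schur--Weyl duality is precisely what makes that step clean.
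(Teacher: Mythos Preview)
Your proof is correct and follows essentially the same route as the paper: establish the block-wise bound via Lemma~\ref{lemma_perm_Inv_universal} for the sorted sequence $\bar x^n$, tensor over the $|\cX|$ blocks, and transport to an arbitrary $x^n$ by conjugation with $V^{\cB^n}(\pi)$. Your treatment of the final step is in fact more careful than the paper's: the paper passes from $A_{x^n}^t\preceq K^tB_{x^n}^t$ to $A_{x^n}^tB_{x^n}^{-t}\preceq K^t\,\bbI$ by simply invoking invertibility of $B_{x^n}^t$, whereas you correctly observe that this step (and indeed the very Hermiticity of $A_{x^n}^tB_{x^n}^{-t}$) rests on the commutativity $[A_{x^n},B_{x^n}]=0$, which you justify cleanly via Schur--Weyl duality on each block.
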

\begin{proof}
    See Appendix \ref{proof_claim_f} for the proof.
\end{proof}

For any state $\sigma^{\cB^n} \in \cD(\cH_\cB^{\otimes n})$, we now upper-bound the quantity $\tr\left[\bbT_{n}\left({\rho^{X}_{P}}^{\otimes n} \otimes\sigma^{\cB^n}\right)\right]$ as follows,
\begin{align}
    \tr\left[\bbT_{n}\left({\rho^{X}_{P}}^{\otimes n} \otimes\sigma^{\cB^n}\right)\right] &\overset{a}{=} \tr\left[\bbT_{n}\left({\rho^{X}_{P}}^{\otimes n} \otimes\left(\frac{1}{n!}\sum_{\pi \in S_n}V^{\cB^n}(\pi) \sigma^{\cB^n}{V^{\cB^n}}^{\dagger}(\pi)\right)\right)\right] \nn\\
    &\overset{b}{\leq} n^{\frac{\abs{\cB}(\abs{\cB}-1)}{2}}\abs{\Lambda^{n}_{\abs{\cB}}}\tr\left[\bbT_{n}\left({\rho^{X}_{P}}^{\otimes n} \otimes\rho^{\cB^n}_{U} \right)\right] \nn\\
    &\overset{c}{\leq} n^{\frac{\abs{\cB}(\abs{\cB}-1)}{2}}\abs{\Lambda^{n}_{\abs{\cB}}} 2^{-n\widehat{R}} \tr\left[\bbT_{n}\widehat{\rho}^{X^n\cB^n}_{U,P}\right] \nn\\
    &\leq n^{\frac{\abs{\cB}(\abs{\cB}-1)}{2}}\abs{\Lambda^{n}_{\abs{\cB}}} 2^{-n\widehat{R}},\label{lemma_achievability_theor_gen_eq2_proof}
\end{align}
where $a$ follows from the fact that $\bbT_{n}$ is permutation invariant. Thus, for any $\pi \in S_n$, the following holds,
$$\tr\left[\bbT_{n}\left({\rho^{X}_{P}}^{\otimes n} \otimes\sigma^{\cB^n}\right)\right] {=} \tr\left[\bbT_{n}\left({\rho^{X}_{P}}^{\otimes n} \otimes\left(V^{\cB^n}(\pi) \sigma^{\cB^n}{V^{\cB^n}}^{\dagger}(\pi)\right)\right)\right].
$$
In addition, step $b$ follows from Lemma \ref{lemma_perm_Inv_universal} and the fact that $\frac{1}{n!}\sum_{\pi \in S_n}V^{\cB^n}(\pi)\sigma^{\cB^n}{V^{\cB^n}}^{\dagger}(\pi)$ is permutation invariant and $c$ follows from the definition of $\bbT_{n}$, given in \eqref{universal_test}. 
Therefore, we obtain \eqref{lemma_achievability_theor_gen_eq2}.

\if0
If we choose $\widehat{R} < \min_{Q}I_{1-t}[X;\cB]_{P,Q}$ for $\delta > 0$ then \eqref{lemma_achievability_theor_gen_eq1} goes to zero for sufficiently large $n$ and there exists a $t\in(0,1)$ such that the following inequality holds,
\begin{equation}
     \lim_{n \to \infty}-\frac{1}{n}\log\beta_{n}(\eps) \geq \min_{Q\in \cP(\cS)}I[X;\cB]_{P,Q} .\label{finite_achievability_lb}
\end{equation}
This inequality gives a proof for Theorem \ref{theorem_generalised_independence_test_arbitrary_varying}, when $\cS$ is finite.
\fi

\subsection{Analysis on Assumption \ref{NM9}}\label{S4-AB}
We consider when Assumption \ref{NM9} holds.
Clearly, when $\cS$ is a finite set, Assumption \ref{NM9} holds.
Also, we have the following lemma.
\begin{lemma}\label{LJGF}
When the set $D_x:=\left\{\rho_{x,s}^{\cB}\right\}_{\substack{s \in \cS}}$
is a compact subset of $\cD(\cH_{\cB}) $ 
for any $x \in \cX$
and any element $\rho_{x,s}^{\cB}$ is a full rank state.
Then, the set $\left\{\rho_{x,s}^{\cB}\right\}_{\substack{x \in \cX,s \in \cS}}$
satisfies Assumption \ref{NM9}.
\end{lemma}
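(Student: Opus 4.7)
The plan is to exploit compactness of each $D_x$ together with finiteness of $\cX$ and the full-rank hypothesis to construct a finite $\eps$-net inside $\cS$ with respect to a suitable pseudometric, and then to convert a small operator-norm perturbation into the multiplicative operator inequality demanded by Assumption \ref{NM9}.

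First, I would endow $\cS$ with the pseudometric $d(s,s') := \max_{x \in \cX}\norm{\rho_{x,s}^{\cB} - \rho_{x,s'}^{\cB}}{\infty}$. Since each $D_x$ is compact in $\cD(\cH_\cB)$ and $\cX$ is finite, the product $\prod_{x \in \cX} D_x$ is compact in the max norm, and the image of $\cS$ under the map $s \mapsto (\rho_{x,s}^{\cB})_{x \in \cX}$ lies inside this compact product; hence $(\cS, d)$ is totally bounded. For any $\delta > 0$, by the standard fact that a totally bounded (pseudo)metric space admits finite $\delta$-nets chosen among its own points (pick one preimage from each ball of a finite cover of the image), there exists a finite subset $\{s_1,\ldots,s_N\} \subset \cS$ such that every $s \in \cS$ has some $s_i$ with $d(s,s_i) \leq \delta$.

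Next, I would use full-rankness to turn operator-norm closeness into the required multiplicative bound. The continuous function $\rho \mapsto \lambda_{\min}(\rho)$ is strictly positive on the full-rank states, hence attains a positive minimum $\lambda_x > 0$ on the compact set $D_x$; set $\lambda := \min_{x \in \cX} \lambda_x > 0$. Given $\eps > 0$, I would then take $\delta := \eps \lambda$ and define $\cS_\eps := \{s_1,\ldots,s_N\}$. For any $s \in \cS$, choose $\bar{s} \in \cS_\eps$ with $d(s,\bar{s}) \leq \eps \lambda$; for every $x \in \cX$ this yields
\begin{equation*}
\rho_{x,s}^{\cB} - \rho_{x,\bar{s}}^{\cB} \preceq \norm{\rho_{x,s}^{\cB} - \rho_{x,\bar{s}}^{\cB}}{\infty}\, \bbI \preceq \eps \lambda\, \bbI \preceq \eps\, \lambda_{\min}(\rho_{x,\bar{s}}^{\cB})\, \bbI \preceq \eps\, \rho_{x,\bar{s}}^{\cB},
\end{equation*}
so $\rho_{x,s}^{\cB} \preceq (1+\eps)\, \rho_{x,\bar{s}}^{\cB}$, which is exactly Assumption \ref{NM9}.

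The principal subtlety is producing the net inside $\cS$ itself, since $\cS$ is a priori an abstract parameter set with no assumed topology; this is resolved by transporting topological compactness through the image map and using the standard representative-selection argument above. Full-rankness enters only through the uniform lower bound $\lambda > 0$, without which the passage from operator-norm closeness to the multiplicative $\preceq$ inequality would break down at the third step of the displayed chain.
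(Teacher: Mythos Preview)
Your argument is correct. Both proofs rest on compactness of each $D_x$ together with full-rankness, but the routes differ. The paper works one $x$ at a time: for each pair $(x,s)$ it defines $U_{x,s} = \{\rho_{x,s'}^{\cB} : s' \in \cS,\ \rho_{x,s'}^{\cB} \preceq (1+\eps)\rho_{x,s}^{\cB}\}$, notes this is open in $D_x$ because $\rho_{x,s}^{\cB}$ is full rank, extracts from compactness of $D_x$ a finite subcover indexed by $\cS_{x,\eps} \subset \cS$, and then takes $\cS_\eps := \bigcup_{x \in \cX} \cS_{x,\eps}$. You instead bundle all $x$ together via the pseudometric $d(s,s') = \max_{x} \norm{\rho_{x,s}^{\cB} - \rho_{x,s'}^{\cB}}{\infty}$, pull a finite $\delta$-net from total boundedness of the image in the compact product $\prod_{x} D_x$, and use the uniform minimum-eigenvalue bound to convert additive operator-norm closeness into the multiplicative inequality $\rho_{x,s}^{\cB} \preceq (1+\eps)\rho_{x,\bar{s}}^{\cB}$. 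Your route makes the ``single $\bar{s}$ works for every $x$'' requirement of Assumption~\ref{NM9} completely transparent (it is built into the $\max$), at the cost of the extra additive-to-multiplicative conversion step; the paper avoids that conversion by defining its covering sets directly in terms of the $\preceq$ relation, but the uniformity over $x$ when passing to the union $\bigcup_x \cS_{x,\eps}$ is left more implicit there.
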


\begin{proof}
We fix any real number $\varepsilon >0$.
For any element $x \in \cX,s \in \cS$,
the subset $U_{x,s}:=
\{ \rho^{\cB}_{x,s'}: s' \in \cS,
\rho^{\cB}_{x,s'}\preceq (1+\varepsilon) \rho^{\cB}_{x,s} \}
\subset \cD(\cH_{\cB})$
is an open subset of $\cD(\cH_{\cB})$
because the element $\rho_{x,s}^{\cB}$ is a full rank state.
Since $ \cup_{s \in \cS} U_{x,s}$ equals $D_x$ and is a compact set,
there is a finite subset $\cS_{x,\varepsilon} \subset \cS$ such that
 $ \cup_{s \in \cS_{x,\varepsilon}} U_{x,s}=D_x$.
 Since the subset 
 $\cup_{x \in \cX} \cS_{x,\varepsilon}$ is a finite set,
 the subset 
 $\cup_{x \in \cX} \cS_{x,\varepsilon}$ satisfies the 
desired condition.
 \end{proof}
 
However, when  the set $D_x$ contains a pure state even when
$\cS$ is a finite set,
the assumption of the above lemma does not hold.
Therefore, 
the assumption of Lemma \ref{LJGF} is stronger than 
Assumption \ref{NM9}.

\subsection{Case with general set $\cS$}\label{S4-B}
We assume Assumption \ref{NM9}.
We choose an arbitrary real number 
$\widehat{R} $ to satisfy
\begin{align}
\widehat{R} < \inf_{Q \in \cP(\cS)}I[X;\cB]_{P,Q}. \label{BN85}
\end{align} 
We choose $t>0,\varepsilon>0$ such that
\begin{align}
t\left(\widehat{R} - \inf_{Q}I_{1-t}[X;\cB]_{P,Q}\right) <  - \log(1+\eps)\label{BNC7}.
\end{align} 

From Assumption \ref{NM9},
for any $s^n \in \cS^n$, there exists a $\bar{s}^n \in \cS^{n}_{\eps}$ 
such that any $x^n \in \cX^n$ satisfies 
\begin{equation}
\rho^{\cB^n}_{x^n,s^n} \preceq (1 + \eps)^{n}\rho^{\cB^n}_{x^n,\bar{s}^n}.\label{eps_covering_density_operators_eq}
\end{equation}
Applying Theorem \ref{theorem_generalised_independence_test_arbitrary_varying} on the set $\cS_{\eps}$,  for any $s^n \in \cS^n$, we upper-bound the quantity $ \tr[\left( \bbI^{X^n\cB^n} - \bbT_{n}\right)\rho^{X^n\cB^n}_{P,s^n}]$ as follows
\begin{align}
    \tr[\left( \bbI^{X^n\cB^n} - \bbT_{n}\right)\rho^{X^n\cB^n}_{P,s^n}] &\overset{a}{\leq} (1 + \eps)^{n}\tr[\left( \bbI^{X^n\cB^n} - \bbT_{n}\right)\rho^{X^n\cB^n}_{P,\bar{s}^n}]\nn\\
    &\overset{b}{\leq} (n+1)^{\abs{\cS_{\eps}}-1 + \frac{t\abs{\cX}(\abs{\cB}-1)\abs{\cB}}{2} + (\abs{\cB} - 1)\abs{\cX}t}\left((1 + \eps)2^{t\left(\widehat{R} - I_{1-t}[X;\cB]_{P,Q_{\bar{s}^n}}\right)}\right)^{n}\nn\\
    &= (n+1)^{\abs{\cS_{\eps}}-1 + \frac{t\abs{\cX}(\abs{\cB}-1)\abs{\cB}}{2} + (\abs{\cB} - 1)\abs{\cX}t}\left(2^{\left(t\left(\widehat{R} - I_{1-t}[X;\cB]_{P,Q_{\bar{s}^n}}\right) + \log(1 + \eps)\right)}\right)^{n},\label{continuous_eps_covering_density_operators_eq}
\end{align}
where $a$ follows from \eqref{eps_covering_density_operators_eq}, 
$b$ follows from \eqref{lemma_achievability_theor_gen_eq1}. 
From \eqref{lemma_achievability_theor_gen_eq2}, 
for any $\sigma^{\cB^n} \in \cD(\cH_\cB^{\otimes n})$, we directly have
\begin{equation}
    \tr\left[\bbT_{n}\left({\rho^{X}_{P}}^{\otimes n} \otimes\sigma^{\cB^n}\right)\right] \leq n^{\frac{\abs{\cB}(\abs{\cB}-1)}{2}}\abs{\Lambda^{n}_{\abs{\cB}}} 2^{-n\widehat{R}}.
\label{NSL9}
\end{equation}
Eq. \eqref{BNC7} guarantees that \eqref{continuous_eps_covering_density_operators_eq} goes to zero. Since
$\widehat{R} $ is 
an arbitrary real number to satisfy \eqref{BN85},
\eqref{NSL9} implies 
\begin{equation}
     \lim_{n \to \infty}-\frac{1}{n}\log\beta_{n}(\eps) \geq \inf_{Q\in \cP(\cS)}I[X;\cB]_{P,Q}. \label{continuous_achievability_lb}
\end{equation}

This gives a proof for Theorem \ref{theorem_generalised_independence_test_arbitrary_varying}, 
when $\cS$ is continuous. 
As an application of Theorem \ref{theorem_generalised_independence_test_arbitrary_varying}, 
we study the problem of reliable communication over CQ-AVC in Section \ref{sec:section_CQAVC}. 

\if0
\subsection{Proof of  Corollary \ref{cor1-1}}\label{proof_cor1-1}
We directly show the general case of 
Corollary \ref{cor1-1}.
As a special case of \eqref{eps_covering_density_operators_eq},
we have the following.
For any $s \in \cS$, 
the element $\bar{s}\in \cS$ given in \eqref{epsilon_net_state_density_eq}
satisfies the condition
\begin{equation}
\rho^{\cB^n}_{x^n,s^n} \preceq (1 + \eps)^{n}\rho^{\cB^n}_{x^n,\bar{s}^n}.\label{BN75}
\end{equation}
\if0
For any $\eps  > 0$, 
from \eqref{epsilon_net_state_density_eq} we have a finite subset $\cS_{\eps} \subset \cS$ to satisfy the following , such that for any $s \in \cS$, there exists a $\bar{s} \in \cS_{\eps}$ for which, for any $x \in \cX$ the following holds,
    \begin{equation}
        \rho^{\cB}_{x,s} \preceq (1 
 + \eps)\rho^{\cB}_{x,\bar{s}}.\label{BN75}
    \end{equation}
\fi
In the hypothesis $H_0$ in Corollary \ref{cor1-1},
$s^n \in \cS^n$ is limited to $f_n(s)$ with $s
\in \cS$.
Due to \eqref{BN75},
as a special case of \eqref{continuous_eps_covering_density_operators_eq},
for $s \in \cS$,
we have
\begin{align}
    \tr[\left( \bbI^{X^n\cB^n} - \bbT_{n}\right)\rho^{X^n\cB^n}_{P,f_{n}(s)}] 
    \le (n+1)^{\abs{\cS_{\eps}}-1 + \frac{t\abs{\cX}(\abs{\cB}-1)\abs{\cB}}{2} + (\abs{\cB} - 1)\abs{\cX}t}\left(2^{\left(t\left(\widehat{R} - I_{1-t}[X;\cB]_{P,\delta_{\bar{s}^n}}\right) + \log(1 + \eps)\right)}\right)^{n}.
    \label{NGH}
\end{align}
We have \eqref{NSL9} in the same way.
In the discussion of Subsection \ref{S4-B},
replacing \eqref{continuous_eps_covering_density_operators_eq} by \eqref{NGH},
we obtain the following inequality
\begin{equation}
     \lim_{n \to \infty}-\frac{1}{n}\log\beta_{n}(\eps) \geq \inf_{s \in \cS}I[X;\cB]_{P,\delta_s} \label{NHJ97}
\end{equation}
instead of \eqref{continuous_achievability_lb}.
\fi

\if0
In \eqref{continuous_eps_covering_density_operators_eq},

Hence, as a special case of \eqref{lemma_achievability_theor_gen_eq1}, we have
\eqref{corollary_achievability_theor_gen_eq1}.

We here first construct the proof for a finite discrete set $\cS$, 
and later we construct it for a continuous set $\cS$.

\subsubsection{Case 1 : $\cS$ is finite discrete}
To prove Corollary  \ref{cor1-1} given a discrete finite $\cS$, we show that for any $\widehat{R} > 0$ and $t \in (0,1)$, the test $\bbT_{n}$ obtained in \eqref{universal_test} satisfies the following:
    \begin{align}
     \tr\left[\left( \bbI^{X^n\cB^n} - \bbT_{n}\right)\rho^{X^n\cB^n}_{P,f_{n}(s)}\right] &\leq 
n^{\frac{t\abs{\cX}(\abs{\cB}-1)\abs{\cB}}{2} + (\abs{\cB} - 1)\abs{\cX}t}2^{nt\left(\widehat{R} - \min_{s \in \cS}I_{1-t}[X;\cB]_{P,\delta_{s}}\right)}, \quad \forall s \in \cS,\label{corollary_achievability_theor_gen_eq1}\\
    \tr\left[\bbT_{n}\left({\rho^{X}_{P}}^{\otimes n} \otimes\sigma^{\cB^n}\right)\right]
&\leq n^{\frac{\abs{\cB}(\abs{\cB}-1)}{2}}\abs{\Lambda^{n}_{\abs{\cB}}} 2^{-n\widehat{R}}, \hspace{140pt}\forall \sigma^{\cB^n} \in \cD(\cH_\cB^{\otimes n}).\label{corollary_achievability_theor_gen_eq2}
    \end{align}
Eq. \eqref{corollary_achievability_theor_gen_eq2} is the same as 
\eqref{lemma_achievability_theor_gen_eq1}.
In \eqref{lemma_achievability_theor_gen_eq1}, 
$s^n \in \cS^n$ is limited to $f_n(s)$ with $s
\in \cS$.
Hence, as a special case of \eqref{lemma_achievability_theor_gen_eq1}, we have
\eqref{corollary_achievability_theor_gen_eq1}.

\if0
For any $s \in \cS$, we upper-bound the quantity $ \tr[\left( \bbI^{X^n\cB^n} - \bbT_{n}\right)\rho^{X^n\cB^n}_{P,f_n(s)}]$ as follows,

\begin{align}
    \tr\left[\left( \bbI^{X^n\cB^n} - \bbT_{n}\right)\rho^{X^n\cB^n}_{P,f_n(s)}\right] &= \tr\left[\left( \bbI^{X^n\cB^n} - \bbT_{n}\right)\sum_{x^n \in \cX^n}P^{n}(x^n)\ketbrasys{x^n}{X^n} \otimes \left(\bigotimes_{i=1}^{n}\rho^{\cB}_{x_i,s}\right)
     \right]\nn\\
     &\overset{a}{=}\sum_{x^n \in \cX^n}P^{n}(x^n)\tr\left[ \left(\bigotimes_{i=1}^{n}\rho^{\cB}_{x_i,s}\right)\left\{\widehat{\rho}^{\cB^n}_{x^n} \prec 2^{n\widehat{R}}\rho^{\cB^n}_{U,n}\right\}
     \right]\nn\\
     &\overset{b}{\leq}2^{nt \widehat{R} }\sum_{x^n \in \cX^n}P^{n}(x^n)\tr\left[ \left(\bigotimes_{i=1}^{n}\rho^{\cB}_{x_i,s}\right)(\widehat{\rho}^{\cB^n}_{x^n})^{-t}(\rho^{\cB^n}_{U,n})^{t}
     \right]\nn\\
     &\overset{c}{\leq}2^{nt \widehat{R} }n^{\frac{t\abs{\cX}(\abs{\cB}-1)\abs{\cB}}{2}}\abs{\Lambda^{n}_{\abs\cB}}^{t \abs{\cX}}\sum_{x^n \in \cX^n}P^{n}(x^n)\tr\left[ \left(\bigotimes_{i=1}^{n}\rho^{\cB}_{x_i,s}\right)^{1-t}(\rho^{\cB^n}_{U,n})^{t}
     \right]\nn\\
     &= n^{\frac{t\abs{\cX}(\abs{\cB}-1)\abs{\cB}}{2}}\abs{\Lambda^{n}_{\abs\cB}}^{t \abs{\cX}}2^{nt \widehat{R} } \tr\left[ \left(\sum_{x^n \in \cX^n}P^{n}(x^n)\left(\bigotimes_{i=1}^{n}\rho^{\cB}_{x_i,s}\right)^{1-t}\right)(\rho^{\cB^n}_{U,n})^{t}
     \right]\nn\\
     &= n^{\frac{t\abs{\cX}(\abs{\cB}-1)\abs{\cB}}{2}}\abs{\Lambda^{n}_{\abs\cB}}^{t \abs{\cX}}2^{nt \widehat{R} } \tr\left[ \left(\sum_{x \in \cX}P(x)\left(\rho^{\cB}_{x,s}\right)^{1-t}\right)^{\otimes n}(\rho^{\cB^n}_{U,n})^{t}
     \right]\nn\\
     &\overset{d}{=} n^{\frac{t\abs{\cX}(\abs{\cB}-1)\abs{\cB}}{2}}\abs{\Lambda^{n}_{\abs\cB}}^{t \abs{\cX}}2^{nt \widehat{R} } \tr\left[ \left(\sum_{x \in \cX}P(x)\left(\rho^{\cB}_{\delta_s,x}\right)^{1-t}\right)^{\otimes n}(\rho^{\cB^n}_{U,n})^{t}
     \right]\nn\\
     &\overset{e}{\leq} n^{\frac{t\abs{\cX}(\abs{\cB}-1)\abs{\cB}}{2}}\abs{\Lambda^{n}_{\abs\cB}}^{t \abs{\cX}}2^{nt \widehat{R} } \left(\tr\left[ \left(\left(\sum_{x \in \cX}P(x)\left(\rho^{\cB}_{\delta_s,x}\right)^{1-t}\right)^{\otimes n}\right)^{\frac{1}{1 - t}}
     \right]\right)^{1 -t}\nn\\
     &= n^{\frac{t\abs{\cX}(\abs{\cB}-1)\abs{\cB}}{2}}\abs{\Lambda^{n}_{\abs\cB}}^{t \abs{\cX}}2^{nt \widehat{R} } \left(\tr\left[ \left(\sum_{x \in \cX}P(x)\left(\rho^{\cB}_{\delta_s,x}\right)^{1-t}\right)^{\frac{1}{1 - t}}
     \right]\right)^{n(1 -t)}\nn\\
     &\overset{f}{=} n^{\frac{t\abs{\cX}(\abs{\cB}-1)\abs{\cB}}{2}}\abs{\Lambda^{n}_{\abs\cB}}^{t \abs{\cX}} 2^{nt\left(\widehat{R} - I_{1-t}[X;\cB]_{P,\delta_s}\right)}\nn\\
     &\overset{g}{\leq}n^{\frac{t\abs{\cX}(\abs{\cB}-1)\abs{\cB}}{2} + (\abs{\cB} - 1)\abs{\cX}t}2^{nt\left(\widehat{R} - \inf_{s \in \cS}I_{1-t}[X;\cB]_{P,\delta_s}\right)}\nn,
\end{align}
where $a$ follows from the definition of $\bbT_{n}$ in \eqref{universal_test}, $b$ follows from the fact that since $ \widehat{\rho}^{\cB^n}_{x^n}$ commutes with $\rho^{\cB^n}_{U,n}$ (which follows from Lemma \ref{lemma_commutativity}), we have $\left\{\widehat{\rho}^{\cB^n}_{x^n} - 2^{n\widehat{R}}\rho^{\cB^n}_{U,n} \prec 0\right\} \leq 2^{nt\widehat{R}}\left(\widehat{\rho}^{\cB^n}_{x^n}\right)^{-t}\left(\rho^{\cB^n}_{U,n}\right)^{t}$ for all $t \in (0,1)$, $c$ follows from Lemma \ref{claim_f} mentioned below this paragraph, $d$ follows from the fact that  $\rho^{\cB}_{\delta_s,x} = \sum_{s' \in \cS}\delta_s(s')\rho^{\cB}_{x,s'} = \rho^{\cB}_{x,s}$, $e$ follows from \eqref{Holder_corllary_eq} of Fact \ref{Holder_corllary}, $f$ follows from \eqref{fact_sibson_qmi_expression} of Fact \ref{sibsons_quantum_mutual_info} and $g$ follows from Fact \ref{fact_type_size_ub}.

Now, \eqref{corollary_achievability_theor_gen_eq2} follows directly from \eqref{lemma_achievability_theor_gen_eq2_proof}. If we choose $\widehat{R} < \inf_{s \in \cS}I_{1-t}[X;\cB]_{P,\delta_s}$ for $\delta > 0$ then \eqref{corollary_achievability_theor_gen_eq1} goes to zero for sufficiently large $n$ and there exists a $t\in(0,1)$ such that the following inequality holds,
\begin{equation}
     \lim_{n \to \infty}-\frac{1}{n}\log\Tilde{\beta}_{n}(\eps) \geq \inf_{s \in \cS}I[X;\cB]_{P,\delta_s} .\label{finite_achievability_lb}
\end{equation}
This inequality gives a proof for Corollary \ref{cor1-1}, when $\cS$ is finite.
\fi
\subsubsection{Case $2$ : $\cS$ is continuous}
To establish a continuous counterpart of Corollary \ref{cor1-1}, for any $\eps  > 0$, from \eqref{epsilon_net_state_density_eq} we have a finite subset $\cS_{\eps} \subset \cS$ , such that for any $s \in \cS$, there exists a $\bar{s} \in \cS_{\eps}$ for which, for any $x \in \cX$ the following holds,
    \begin{equation*}
        \rho^{\cB}_{x,s} \preceq (1 
 + \eps)\rho^{\cB}_{x,\bar{s}}.
    \end{equation*}
From the above, it directly follows that for any $s \in \cS$, there exists a $\bar{s} \in \cS_{\eps}$ such that for any $x^n \in \cX^n$, the following holds
\begin{equation}
\rho^{\cB^n}_{x^n,f_n(s)} \preceq (1 + \eps)^{n}\rho^{\cB^n}_{x^n,f_n(\bar{s})}.\label{eps_covering_density_operators_eq_cor1-1}
\end{equation}

Now applying Corollary \ref{cor1-1} on the set $\cS_{\eps}$,  for any $s \in \cS$, we upper-bound the quantity $ \tr[\left( \bbI^{X^n\cB^n} - \bbT_{n}\right)\rho^{X^n\cB^n}_{P,f_n(s)}]$ as follows
\begin{align}
    \tr\left[\left( \bbI^{X^n\cB^n} - \bbT_{n}\right)\rho^{X^n\cB^n}_{P,f_n(s)}\right] &\overset{a}{\leq} (1 + \eps)^{n}\tr\left[\left( \bbI^{X^n\cB^n} - \bbT_{n}\right)\rho^{X^n\cB^n}_{P,f_n(\bar{s})}\right]\nn\\
    &\overset{b}{\leq} n^{\frac{t\abs{\cX}(\abs{\cB}-1)\abs{\cB}}{2} + (\abs{\cB} - 1)\abs{\cX}t}\left((1 + \eps)2^{t\left(\widehat{R} - I_{1-t}[X;\cB]_{P,\delta_{\bar{s}}}\right)}\right)^{n}\nn\\
    &= n^{\frac{t\abs{\cX}(\abs{\cB}-1)\abs{\cB}}{2} + (\abs{\cB} - 1)\abs{\cX}t}\left(2^{\left(t\left(\widehat{R} - I_{1-t}[X;\cB]_{P,\delta_{\bar{s}}}\right) + \log(1 + \eps)\right)}\right)^{n},\label{continuous_eps_covering_density_operators_eqT}
\end{align}
where $a$ follows from \eqref{eps_covering_density_operators_eq}, 
in inequality $b$, $\delta_{\bar{s}}$ is defined over the set $\cS_{\eps}$ and $b$ follows from \eqref{corollary_achievability_theor_gen_eq1}. 
From \eqref{corollary_achievability_theor_gen_eq2}, 
for any $\sigma^{\cB^n} \in \cD(\cH_\cB^{\otimes n})$, we directly have
\begin{equation*}
    \tr\left[\bbT_{n}\left({\rho^{X}_{P}}^{\otimes n} \otimes\sigma^{\cB^n}\right)\right] \leq n^{\frac{\abs{\cB}(\abs{\cB}-1)}{2}}\abs{\Lambda^{n}_{\abs{\cB}}} 2^{-n\widehat{R}}.
\end{equation*}

Now, if we choose $t\left(\widehat{R} - \inf_{s \in \cS}I_{1-t}[X;\cB]_{P,\delta_s}\right) <  - \log(1+\eps)$, then \eqref{continuous_eps_covering_density_operators_eq} goes to zero for sufficiently large $n$ and there exists an $t \in (0,1)$ such that the following inequality holds
\begin{equation}
     \lim_{n \to \infty}-\frac{1}{n}\log\Tilde{\beta}_{n}(\eps) \geq \inf_{s \in \cS}I[X;\cB]_{P,\delta_{s}}. \label{continuous_achievability_lb}
\end{equation}
This gives a proof for Corollary \ref{cor1-1}, 
when $\cS$ is continuous. 
As an application of Corollary \ref{cor1-1}, 
we study the problem of reliable communication over classical-quantum compound channels (CQ-CoC) in subsection \ref{sec:section_CQCOC}.

\fi

\section{Reliable Communication over Classical Quantum Arbitrarily Varying Channel (CQ-AVC)}\label{sec:section_CQAVC}
\subsection{Formulation}
In this section, we explain how a code for 
classical-quantum arbitrarily varying point-to-point channel (CQ-AVC)
can be constructed by using the independence test given in Theorem \ref{theorem_generalised_independence_test_arbitrary_varying}.
For this aim, we formulate a CQ-AVC as follows.
\begin{definition} (channel) We model a CQ-AVC  between parties
Alice and Bob as a map
$$\cN^{X^n \to B^n}_{s^n} : x^n \to \rho^{B^n}_{x^n,s^n},$$
where $x^n \in \cX^n$ and $s^n \in \cS^n$, (where $\abs{\cX}< \infty$ and $\cS$ is a non-empty set). For each $x^n := (x_1\cdots,x_n) \in \cX^n$ and $s^n := (s_1\cdots,s_n) \in \cS^n$, 
we define the state $\rho^{B^n}_{x^n,s^n}$ to be $\bigotimes_{i=1}^{n}\rho^{B}_{x_i,s_i}$.
Further, both Alice (the sender) and Bob (the receiver) have no information about the sequence $s^n.$
\end{definition}

We aim to use this channel and enable Alice to transmit message $m \in [2^{nR}]$ such that Bob can recover $m$ with high probability for every $s^n$.
\begin{definition}(Deterministic code)
An $(n,2^{nR})$-code $\cC$ for communication over a CQ-AVC  consists of 
\begin{itemize}
\item an encoding function $\cE^{(n)}: \cM_n \to \cX^n$, where $\cM_n := [2^{nR}]$,
\item a decoding POVM $\{ \cD^{(n)}_{m}: m \in \cM_n\}$.
  \end{itemize}
An $(n,2^{nR})$-code $\cC$ for communication over a CQ-AVC  is called
an $(n,2^{nR},\beta)$-code 
when for any $s^n \in \cS^n$, 
the average probability $\bar{e}(\cC,s^n)$ of error satisfies 
\begin{equation*}
      \bar{e}(\cC,s^n):= \frac{1}{2^{nR}}\sum_{m=1}^{2^{nR}}e(m,\cC,s^n) < \beta,
  \end{equation*}
  where $e(m,\cC,s^n):= 1 - \tr[D^{(n)}_{m} \cN^{X^n \to B^n}_{s^n}(\cE^{(n)}(m))]$.
  \end{definition}
\begin{definition}\label{defcqavc_det}
      A number $R>0$ is called an \textit{achievable deterministic code rate} for a given CQ-AVC , if for any $\beta>0,\delta>0$ and sufficiently large $n$, there exists a $(2^{nR},n,\beta)$ deterministic code $\cC$ such that
      \begin{equation*}
          \begin{tabular}{c  c}
            $\frac{1}{n}\log \abs{\cM_n} > R - \delta,$\hspace{10pt} &  $\sup_{s^n \in \cS^n}\bar{e}(\cC,s^n) < \beta$.
      \end{tabular}
      \end{equation*}  
  \end{definition}
  The supremum over all such achievable rates is called the \textit{deterministic code capacity} $C_d$ of a CQ-AVC. From \cite[Lemma 2]{Ahlswede07} we know that if $C_d > 0$ then 
  \begin{equation}
     C_{d} = \max_{P}\inf_{Q}I[X;\cB]_{P,Q},\label{Cd_ahlswede}
  \end{equation}
  where $I[X;\cB]_{P,Q}$ is defined in \eqref{sibsons_qmi_convergence_eq} of Fact \ref{sibsons_quantum_mutual_info}. 
  \begin{definition}(Random code)
An $(n,2^{nR})$ random code $\cC$ for communication over a CQ-AVC  consists of 
\begin{itemize}
\item a random encoding function $\cE^{(n),\gamma}: \cM_n \to \cX^n$, parameterized by a random variable $\gamma$, where $\cM_n := [2^{nR}]$,
\item a decoding POVM $\{ \cD^{(n),\gamma}_{m}: m \in \cM_n\}$, parameterized by a random variable $\gamma$.
  \end{itemize}
In the above, $\gamma \sim G$ is a random variable, shared between Alice and Bob and takes values over a finite set. 

An $(n,2^{nR})$ random code $\cC$ for communication over a CQ-AVC  is called
an $(n,2^{nR},\beta)$ random code 
when for any $s^n \in \cS^n$, 
the average probability $\bbE_{\cC}\left[\bar{e}(\cC,s^n)\right]$ of error under the expectation of choices of random code (choices of $\gamma$) satisfies 
\begin{equation*}
      \bbE_{\cC}\left[\bar{e}(\cC,s^n)\right]:= \frac{1}{2^{nR}}\sum_{m=1}^{2^{nR}}\bbE_{\cC}\left[e(m,\cC,s^n)\right] < \beta,
  \end{equation*}
  where $\bbE_{\cC}\left[e(m,\cC,s^n)\right]:= 1 - \bbE_{\gamma \sim G}\left[\tr[D^{(n),\gamma}_{m} \cN^{X^n \to B^n}_{s^n}(\cE^{(n),\gamma}(m))]\right]$.
  \end{definition}
  
When the sender and the receiver share the randomness $\gamma$, 
the random code $\cC$
can be realized.
  
  \begin{definition}\label{defcqavc_rand}
      A number $R>0$ is called an \textit{achievable random code rate} for a given CQ-AVC , if for any $\beta>0,\delta>0$ and sufficiently large $n$, there exists a $(2^{nR},n,\beta)$ random code $\cC$ such that
      \begin{equation*}
          \begin{tabular}{c  c}
            $\frac{1}{n}\log \abs{\cM_n} > R - \delta,$\hspace{10pt} &  $\sup_{s^n \in \cS^n}\bbE_{\cC}\left[\bar{e}(\cC,s^n)\right] < \beta$.
      \end{tabular}
      \end{equation*}  
  \end{definition}
  The supremum over all such achievable rates is called the \textit{random code capacity} $C_{r}$ of a CQ-AVC. 
    \begin{theorem}(\textbf{Random Code Capacity of CQ-AVC})\label{lemma_rand_cap_cqavc}
Assume that a collection of density operators $\left\{\rho_{x,s}^{\cB}\right\}_{\substack{ x \in \cX, s \in \cS}}$ satisfies Assumption \ref{NM9}.
      The random code capacity $C_{r}$ of a CQ-AVC is equal to $\max_{P}\inf_{Q}I[X;\cB]_{P,Q}$ (where $I[X;\cB]_{P,Q}$ is defined in \eqref{sibsons_qmi_convergence_eq} of Fact \ref{sibsons_quantum_mutual_info}).
  \end{theorem}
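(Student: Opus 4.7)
The plan is to establish both directions of the identity $C_r = \max_P \inf_Q I[X;\cB]_{P,Q}$, using Theorem \ref{theorem_generalised_independence_test_arbitrary_varying} for achievability and a channel-averaging argument for the converse.

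For achievability, fix $P$ nearly attaining the supremum on the right-hand side and choose $\widehat{R}$ and $t$ as in \eqref{BN85}--\eqref{BNC7}. The shared randomness $\gamma$ will be the codebook: generate $2^{nR}$ codewords $X^n(1),\dots,X^n(2^{nR})$ i.i.d.\ according to $P^n$. For each $m$, let $\Pi_m := \{\widehat{\rho}^{\cB^n}_{X^n(m)} \succeq 2^{n\widehat{R}} \rho^{\cB^n}_{U,n}\}$ be the conditional projector appearing in the universal test \eqref{universal_test}. Define the decoding POVM by the Hayashi--Nagaoka construction $D_m := \bigl(\sum_{m'} \Pi_{m'}\bigr)^{-1/2} \Pi_m \bigl(\sum_{m'} \Pi_{m'}\bigr)^{-1/2}$; this depends only on $\gamma$ and not on $s^n$, so it is a universal decoder. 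By Fact \ref{Hayashi_nagaoka}, the expected error for message $m$ under parameter sequence $s^n$ is at most $2\,\mathrm{tr}[(\bbI-\Pi_m)\rho^{\cB^n}_{X^n(m),s^n}] + 4\sum_{m'\neq m}\mathrm{tr}[\Pi_{m'} \rho^{\cB^n}_{X^n(m),s^n}]$. Taking the expectation over the random codebook, the first term is bounded uniformly in $s^n$ by the estimate \eqref{lemma_achievability_theor_gen_eq1}, while for each $m'\neq m$ the codeword $X^n(m')$ is independent of $X^n(m)$ and of the output, so the second term reduces to evaluating the test against ${\rho^{X}_{P}}^{\otimes n}\otimes \sigma^{\cB^n}$ for some $\sigma^{\cB^n}$, giving a bound $(2^{nR}-1)\cdot n^{|\cB|(|\cB|-1)/2}|\Lambda^n_{|\cB|}|\,2^{-n\widehat{R}}$ via \eqref{lemma_achievability_theor_gen_eq2}. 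Both contributions vanish provided $R < \widehat{R}$, and optimizing $\widehat{R}$ up to $\inf_Q I[X;\cB]_{P,Q}$ and then maximizing over $P$ yields the achievability.

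For the converse, let $\cC$ be any $(n,2^{nR},\beta)$ random code with $\sup_{s^n}\mathbb{E}_\cC[\bar e(\cC,s^n)]<\beta$. Fix an arbitrary $Q\in\cP(\cS)$ and average the error over $s^n \sim Q^n$; since an average of nonnegative quantities is dominated by the supremum, the averaged error is also at most $\beta$. But averaging transforms the CQ-AVC into an i.i.d.\ memoryless CQ channel $x \mapsto \rho^{\cB}_{Q,x} := \sum_s Q(s)\rho^{\cB}_{x,s}$, for which the Holevo--Schumacher--Westmoreland converse (via Fano's inequality, Fact \ref{fano}) gives $R \le \max_{P} I[X;\cB]_{P,Q} + o(1)$. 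Infimizing over $Q$ produces $C_r \le \inf_{Q}\max_{P} I[X;\cB]_{P,Q}$. The mutual information $I[X;\cB]_{P,Q}$ is concave in $P$ (standard) and convex (in fact linear after averaging) in $Q$ through $\rho^{\cB}_{Q,x}$, so Sion's minimax theorem applied on the compact simplex $\cP(\cX)$ and the convex set $\cP(\cS)$ gives $\inf_Q \max_P = \max_P \inf_Q$.

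The main obstacle is handling the continuous parameter set $\cS$ in the converse: $\cP(\cS)$ need not be compact in a useful topology, and Sion's theorem requires at least one side to be compact. The plan is to reduce to the finite case using Assumption \ref{NM9}: fix $\eps>0$ and replace $\cS$ by the finite $\eps$-net $\cS_\eps$; on $\cS_\eps$ the minimax identity holds without issue, and the $(1+\eps)$-domination \eqref{epsilon_net_state_density_eq} propagates to an $n\log(1+\eps)$ perturbation of the averaged output state, which only affects the mutual information by $O(\eps)$ per channel use. Letting $\eps\downarrow 0$ and invoking lower semicontinuity of $Q \mapsto I[X;\cB]_{P,Q}$ recovers the identity on all of $\cP(\cS)$, mirroring the continuous-case argument in Subsection \ref{S4-B}.
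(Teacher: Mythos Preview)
Your achievability argument is essentially the paper's: random i.i.d.\ codebook, square-root decoder built from the universal projectors of \eqref{universal_test}, Hayashi--Nagaoka (Fact~\ref{Hayashi_nagaoka}), and the two error bounds \eqref{continuous_eps_covering_density_operators_eq}--\eqref{NSL9}. No issue there.

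The converse is where you diverge from the paper. You first reach $R\le\inf_Q\max_P I[X;\cB]_{P,Q}$ via the HSW converse and then invoke Sion's minimax theorem to swap $\inf_Q\max_P$ to $\max_P\inf_Q$, worrying about compactness of $\cP(\cS)$ and proposing the $\eps$-net from Assumption~\ref{NM9} as a cure. The paper avoids this detour entirely: after Fano it keeps $\inf_Q$ \emph{outside} through the whole single-letterization, arriving at $\inf_Q\frac{1}{n}\sum_j I[X_j;\cB_j|\cC]_{P_{X_j},Q}$, and then uses concavity of $P\mapsto I[X;\cB]_{P,Q}$ to replace the empirical mixture of marginals $\{P_{X_j}\}$ by a single $\bar P$ that does not depend on $Q$. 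This yields $\inf_Q I[X;\cB]_{\bar P,Q}\le\max_P\inf_Q I[X;\cB]_{P,Q}$ directly, with no minimax swap and no topological hypotheses on $\cP(\cS)$. In particular, Assumption~\ref{NM9} is not used in the paper's converse; it is an achievability-only assumption.

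Your route is not wrong, but the compactness concern is a red herring: $\cP(\cX)$ is compact (finite alphabet), and $I[X;\cB]_{P,Q}$ depends on $Q$ only through the finite-dimensional channel $(\rho^{\cB}_{Q,x})_{x\in\cX}\in\cD(\cH_\cB)^{|\cX|}$, so one can apply Sion on $\cP(\cX)\times\overline{\text{conv}}\{(\rho_{x,s})_x:s\in\cS\}$ without any $\eps$-net. Still, the paper's concavity-based single-letterization is cleaner and makes it transparent that the converse requires nothing beyond standard Fano and concavity of mutual information in the input distribution.
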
 

When $\cS$ is a finite set,
the random code capacity $C_{r}$ is known in \cite[Lemma 1]{Ahlswede07}.

For derandomization of our code, see Section \ref{sec:determinsitic_cap_section}.

\begin{proof}
The original proof of achievability for \cite[Lemma 1]{Ahlswede07} required an approximation lemma (see \cite[equation $10$]{Ahlswede07}), min-max theorem and typicality-based arguments. Our proof follows straightforwardly only from Theorem \ref{theorem_generalised_independence_test_arbitrary_varying}. Further, unlike \cite{Ahlswede07} our decoder is universal and it does not depend on the statistics of the channel. We now proceed with the proof.

(\textbf{Achievability}) We here show that for every rate $R < \min_{Q}I[X;\cB]_{P,Q}$ there exists a $(n,2^{nR},\eps_n)$ random code $\cC$ for which the average error probability, averaged over the choice of code $\cC$ satisfies the following:
\begin{align}
\varepsilon_n:=
\sup_{s^n \in \cS^n}\bbE_{\cC}\left[\bar{e}(\cC,s^n)\right]\to 
0.
\label{NBO}
\end{align}
\subsubsection{Randomized Encoder Construction}
Since
$R < \inf_{Q \in \cP(\cS)} I[X;\cB]_{P,Q}$,
we choose $\widehat{R}$ as
\begin{align}
R <\widehat{R}< \inf_{Q \in \cP(\cS)} I[X;\cB]_{P,Q}.
\label{M77}
\end{align}
Hence, we can choose 
$t>0$ and $\epsilon>0$ such that
\begin{align}
t\Big(\widehat{R} - \inf_{Q \in \cP(\cS)} I_{1-t}[X;\cB]_{P,Q}\Big)
< -\log(1+\epsilon).
\label{M78}
\end{align}
Alice randomly generates $2^{nR}$ independent input sequences of random 
variables 
$\{X^n(m) \in \cX^n : m \in \cM_n\}$ according to $P^{n}$, where
$P \in \cP(\cX)$.
That is, for any element $\forall m \in \cM_n$,
the random variable $X^n(m)$ obeys the distribution 
$\Pr\bigl\{X^n(m)\bigr\} := \prod_{i=1}^{n}P(X_i(m))$, where $X^n(m):=\left(X_1(m),\cdots,X_n(m)\right)$. 
Then, we randomly choose the encoder $\cE^{(n)}$ subject to the following distribution;
\begin{equation*}
    \Pr( \cE^{(n)}(m)=X^n(m) \hbox{ for } m=1, \ldots,  2^{nR} ) := \prod_{m=1}^{2^{nR}}\Pr\left\{X^n(m)\right\}.
\end{equation*}
If Alice has to send a message $m \in \cM_n$, she encodes the message to the input sequence $X^n(m)$ and sends a state $\ket{X^n(m)}^{X^n}$ over the channel.
\subsubsection{Decoding Strategy (Universal Decoder)}
Upon receiving a quantum state $\rho_{\substack{X^n, s^n}}^{\cB^n} \in \cD(\cH_{\cB}^{\otimes n})$, Bob tries to guess the message sent by the sender with the help of the POVM $\left\{\Lambda_{m}\right\}_{m \in \cM_n}$, where for each $m \in \cM_n$,
\begin{equation*}
    \Lambda_{m}:= \left(\sum_{j \in \cM_n} T_{n,X^n(j)}\right)^{-\frac{1}{2}}T_{n,X^n(m)}\left(\sum_{ j \in \cM_n}T_{n,X^n(j)}\right)^{-\frac{1}{2}},
\end{equation*}
where for any $x^n \in \cX^n$, $T_{n,x^n} := \left\{\widehat{\rho}^{\cB^n}_{x^n} \succeq 2^{nR}\rho^{\cB^n}_{U,n}\right\}$ (mentioned in \eqref{universal_test}). Observe that the above POVM is universal as $T_{n,x^n}$ is independent of the channel. If Bob gets the measurement outcome corresponding to $\Lambda_{\hat{m}}$, Bob then declares the message to be $\hat{m}$.
In the following, we denote the pair $(\cE^{(n)},
\left\{\Lambda_{m}\right\}_{m \in \cM_n})$ by $\cC$.

\subsubsection{Error Analysis}
We now calculate the average error probability for the random codes $\cC$ for any particular $s^n \in \cS^n$:
\begin{align}
    &\hspace{10pt}\bbE_{\cC}\left[\bar{e}(\cC,s^n)\right] = \frac{1}{2^{nR}}\sum_{m=1}^{2^{nR}}\bbE_{\cC}\left[e(m,\cC,s^n)\right] = \frac{1}{2^{nR}}\sum_{m=1}^{2^{nR}}\bbE_{\cC}\left[ \tr\left[\left(\bbI^{\cB^n}-\Lambda_{m}\right)\left(\rho_{\substack{X^n(m), s^n}}^{\cB^n}\right)\right]\right]\notag\\
    &\overset{a}{\leq} \frac{1}{2^{nR}}\sum_{m=1}^{2^{nR}}\Biggl(2\bbE_{\cC}\left[\tr\left[\left(\bbI^{\cB^n}-T_{n,X^n(m)}\right)\rho_{\substack{X^n(m), s^n}}^{\cB'}\right]\right] + 4\sum_{ k \in \cM_n\setminus{m}}\bbE_{\cC}\left[\tr\left[T_{n,X^n(k)}\rho_{\substack{x^n(m), s^n}}^{\cB^n}\right]\right]\Biggr)\notag\\
    &= \frac{1}{2^{nR}}\sum_{m=1}^{2^{nR}}\Biggl(2\sum_{X^n(m)}P^{n}(X^n(m))\tr\left[\left(\bbI^{\cB^n}-T_{n,X^n(m)}\right)\rho_{\substack{X^n(m), s^n}}^{\cB^n}\right]\notag\\
    &\hspace{25pt}+ 4\sum_{ k \in \cM_n\setminus{m}}\sum_{X^n(k)}P^{n}(X^n(k))\tr\left[T_{n,X^n(k)}\sum_{X^n(m)}P^{n}(X^n(m))\left(\rho_{\substack{X^n(m), s^n}}^{\cB'}\right)\right]\Biggr)\notag\\
    &= \frac{1}{2^{nR}}\sum_{m=1}^{2^{nR}}\Biggl(4\sum_{ k \in \cM_n\setminus{m}}\sum_{X^n(k)}P^{n}(X^n(k))\tr\left[T_{n,X^n(k)}\left(\rho_{\substack{s^n}}^{\cB^n}\right)\right] + 2\tr\left[\left( \bbI^{X^n\cB^n} - \bbT_{n}\right)\rho^{X^n\cB^n}_{P,s^n}\right]\Biggr)\notag\\
    &\overset{b}{\leq} \frac{1}{2^{nR}}\sum_{m=1}^{2^{nR}}\left(2(n+1)^{\abs{\cS_\varepsilon}-1 + \frac{t\abs{\cX}(\abs{\cB}-1)\abs{\cB}}{2} + (\abs{\cB} - 1)\abs{\cX}t}2^{n\left(t\left(\widehat{R} - \min_{Q \in \cP(\cS)} I_{1-t}[X;\cB]_{P,Q}\right)+\log (1+\varepsilon)\right)} + 4\sum_{ k \in \cM_n\setminus{m}}\tr[\bbT_{n}(\rho^{\cH_X^{\otimes n}}_{P} \otimes \rho_{\substack{s^n}}^{\cB^n})]\right)\notag\\
    &\overset{c}{\leq}\frac{1}{2^{nR}}\sum_{m=1}^{2^{nR}}\left(2(n+1)^{\abs{\cS_\varepsilon}-1 + \frac{t\abs{\cX}(\abs{\cB}-1)\abs{\cB}}{2} + (\abs{\cB} - 1)\abs{\cX}t}2^{n\left(t\left(\widehat{R} - \min_{Q \in \cP(\cS)}I_{1-t}[X;\cB]_{P,Q}\right)+\log (1+\varepsilon)\right)} + 4\cdot 2^{nR}\cdot n^{\frac{\abs{\cB}(\abs{\cB}-1)}{2}}\abs{\Lambda^{n}_{\abs{\cB}}} 2^{-n\widehat{R}}\right)\notag\\
    &= 2(n+1)^{\abs{\cS_\varepsilon}-1 + \frac{t\abs{\cX}(\abs{\cB}-1)\abs{\cB}}{2} + (\abs{\cB} - 1)\abs{\cX}t}2^{n\left(t\left(\widehat{R} - \min_{Q \in \cP(\cS)} I_{1-t}[X;\cB]_{P,Q}\right)+\log (1+\varepsilon)\right)} + 4\cdot 2^{n(R - \widehat{R})}\cdot n^{\frac{\abs{\cB}(\abs{\cB}-1)}{2}}\abs{\Lambda^{n}_{\abs{\cB}}},
    \label{NVY5}
\end{align}
where $a$ follows from Fact \ref{Hayashi_nagaoka}, $b$ follows from \eqref{continuous_eps_covering_density_operators_eq} and $c$ follows from \eqref{NSL9}. 
The combination of \eqref{M77}, \eqref{M78}, and \eqref{NVY5} yields
\eqref{NBO}.
This proves the achievability of Theorem \ref{lemma_rand_cap_cqavc} using the hypothesis testing approach. 
This completes the proof.

(\textbf{Converse}) 
We assume that a sequence of  codes $\{\cC\}$ satisfy 
\begin{align}
\varepsilon_n:=
\sup_{s^n \in \cS^n}\bbE_{\cC}\left[\bar{e}(\cC,s^n)\right]
=
\sup_{Q_{S^n} \in \cP(\cS^n)}
\sum_{s^n \in \cS^n}Q_{S^n}(s^n)
\bbE_{\cC}\left[\bar{e}(\cC,s^n)\right]\to 0.
\end{align}
For an arbitrary element $Q \in \cP(\cS) $, 
using Fano inequality (Fact \ref{fano}), we have
\begin{align}
R- \frac{1}{n}H_b(\varepsilon_n)- \varepsilon_n R
\le \frac{1}{n}I[M_n;\cB^n|\cC]_{Q^n}
\overset{a}{\leq} \frac{1}{n}I[X^n;\cB^n|\cC]_{P_{X^n},Q^n},
\end{align}
where $a$ follows from the Markov chain $M_n- X^n-\cB^n$. 
We denote the marginal distribution of $ P_{X^n}$
for $X_j$ by $P_{X_j}$. 
Then, we have
\begin{align}
& R- \frac{1}{n}H_b(\varepsilon_n)- \varepsilon_n R
\le 
\frac{1}{n}\inf_{Q \in \cP(\cS)} I[X^n;\cB^n|\cC]_{P_{X^n},Q^n} \nn
\\
=&
\frac{1}{n}\inf_{Q \in \cP(\cS)} 
\sum_{j=1}^n
I[X_j;\cB^n|X_1,\ldots, X_{j-1}|\cC]_{P_{X^n},Q^n} \nn\\
\overset{b}{\leq} &
\frac{1}{n}\inf_{Q \in \cP(\cS)} 
\sum_{j=1}^n
I[X_j;\cB^n, X_1,\ldots, X_{j-1}|\cC]_{P_{X^n},Q^n} \nn\\
\overset{c}{=}&
\frac{1}{n}\inf_{Q \in \cP(\cS)} 
\sum_{j=1}^n
I[X_j;\cB_j|\cC ]_{P_{X^n},Q^n}\label{NGH1}\\
=&
\frac{1}{n}\inf_{Q \in \cP(\cS)} 
\sum_{j=1}^n
I[X_j;\cB_j |\cC]_{P_{X_j},Q}
\le
\max_{P \in \cP(\cX)} \inf_{Q \in \cP(\cS)} 
I[X; \cB]_{P,Q},
\end{align}
where $b$ follows from $I[X_j;X_1,\ldots, X_{j-1}|\cC]_{P_{X^n},Q^n}\ge 0$.
Since the random variable $S$ subject to $Q^n$, 
the channel $X^n\to \cB^n$ forms a discrete memoryless channel, which implies
the Markov chain $(X_1,\ldots, X_{j-1}, \cB_1,\ldots, \cB_{j-1},\cB_{j+1},\ldots,
\cB_n) - X_j-\cB_j$. 
This relation yields Step $c$.
Taking the limit, we obtain the desired inequality
\begin{align}
R \le \max_{P \in \cP(\cX)} \inf_{Q \in \cP(\cS)} 
I[X; \cB]_{P,Q}.
\end{align}
\end{proof}
\if0
The optimality of the above achievability result follows from \cite{Ahlswede07}. \fi

\if0
\subsection{Reliable Communication over Classical Quantum Compound Channel (CQ-CoC)}\label{sec:section_CQCOC}

\begin{definition} (channel) We model a CQ-CoC between parties
Alice and Bob as a map
$$\cN^{X^n \to B^n}_{f_n(s)} : x^n \to \rho^{B^n}_{x^n,f_n(s)},$$
where $x^n := (x_1\cdots,x_n) \in \cX^n$ and $f_n(s) := (\underbrace{s, \ldots, s}_{n}) \in \cS^n$, (where $\abs{\cX}< \infty$ and $\cS$ is a non-empty set). For each $x^n \in \cX^n$ and $s \in \cS$, 
we define the state $\rho^{B^n}_{x^n,f_n(s)}$ to be $\bigotimes_{i=1}^{n}\rho^{B}_{x_i,s}$.
Further, both Alice (the sender) and Bob (the receiver) have no information about $s.$
\end{definition}

We aim to use this channel and enable Alice to transmit message $m \in [2^{nR}]$ such that Bob can recover $m$ with high probability for every $s$.
\begin{definition}
An $(n,2^{nR})$-code $\cC$ for communication over a CQ-CoC  consists of 
\begin{itemize}
\item an encoding function $\cE^{(n)}: \cM_n \to \cX^n$, where $\cM_n := [2^{nR}]$,
\item a decoding POVM $\{ \cD^{(n)}_{m}: m \in \cM_n\}$.
  \end{itemize}
An $(n,2^{nR})$-code $\cC$ for communication over a CQ-CoC  is called
an $(n,2^{nR},\beta)$-code 
when for any $s \in \cS$, 
the average probability $\bar{e}(\cC,s)$ of error satisfies 
\begin{equation*}
      \bar{e}(\cC,s):= \frac{1}{2^{nR}}\sum_{m=1}^{2^{nR}}e(m,\cC,s) < \beta,
  \end{equation*}
  where $e(m,\cC,s):= 1 - \tr[D^{(n)}_{m} \cN^{X^n \to B^n}_{f_n(s)}(\cE^{(n)}(m))]$.
  \end{definition}
\begin{definition}\label{defcqcoc_det}
      A number $R>0$ is called an \textit{achievable deterministic code rate} for a given CQ-CoC , if for any $\beta>0,\delta>0$ and sufficiently large $n$, there exists a $(2^{nR},n,\beta)$ deterministic code $\cC$ such that
      \begin{equation*}
          \begin{tabular}{c  c}
            $\frac{1}{n}\log \abs{\cM_n} > R - \delta,$\hspace{10pt} &  $\max_{s \in \cS}\bar{e}(\cC,s) < \beta$.
      \end{tabular}
      \end{equation*}  
  \end{definition}
  The supremum over all such achievable rates is called the \textit{deterministic code capacity} $C^{(co)}_d$ of a CQ-CoC. 
The deterministic code capacity $C^{(co)}_{d}$ is known as the following proposition.
    \begin{proposition}\cite[Theorem $3.1$]{Datta_2007},\cite[Theorem $1$]{Bjelakovic2013} (\textbf{Deterministic Code Capacity of CQ-CoC})\label{lemma_rand_cap_cqcoc}
Assume that a collection of density operators $\left\{\rho_{x,s}^{\cB}\right\}_{\substack{ x \in \cX, s \in \cS}}$ satisfies Assumption \ref{NM9}.
      The deterministic code capacity $C^{(co)}_{d}$ of a CQ-CoC is equal to $\max_{P}\inf_{s \in \cS}I[X;\cB]_{P,\delta_{s}}$.
  \end{proposition}

\begin{proof}
   CQ-CoCs are a special case of CQ-AVCs, wherein the channel parameter remains fixed over the $n$ channel uses and unlike CQ-AVCs, the capacity of CQ-CoCs can never be zero. Therefore, the $\min_{Q} (\cdot)$ in \eqref{Cd_ahlswede} is replaced by the worst-case channel in the compound channel case.

In particular, 
the achievability part of Proposition \ref{lemma_rand_cap_cqcoc} 
can be shown from the 
achievability part of Theorem \ref{lemma_rand_cap_cqavc}
with replacing the roles of \eqref{continuous_eps_covering_density_operators_eq} in the derivation of \eqref{NVY5}
by \eqref{NGH} given in the proof of Corollary \ref{cor1-1}.

   For the converse, we employ the same notation as the proof of Theorem
  \ref{lemma_rand_cap_cqavc}.
  In \eqref{NGH1}, we restrict $Q_{S}$ to be delta distribution $\delta_{s}$ 
   for $s\in \cS$. Then, we have
   \begin{align}
& R- \frac{1}{n}H_b(\varepsilon_n)- \varepsilon_n R
\le 
\frac{1}{n}\inf_{s\in \cS} 
\sum_{j=1}^n
I[X_j;\cB_j |\cC]_{P_{X^n},\delta_{s}^n}\label{NGH2}\\
=&
\frac{1}{n}\inf_{s\in \cS} 
\sum_{j=1}^n
I[X_j;\cB_j |\cC]_{P_{X_j},\delta_s}
\le 
\frac{1}{n}
\inf_{s\in \cS} 
\sum_{j=1}^n
\max_{P_{X_j} \in \cP(X)} 
I[X_j;\cB_j |\cC]_{P_{X_j},\delta_s}
\le
\inf_{s\in \cS} \max_{P \in \cP(\cX)} 
I[X; \cB]_{P,\delta_s}.
\end{align}
Taking the limit, we obtain the desired inequality
\begin{align}
R \le 
\inf_{s\in \cS} \max_{P \in \cP(\cX)} 
I[X; \cB]_{P,\delta_s}.
\end{align}
\end{proof}
\if0
     fix an element $s \in \cS$ and the corresponding channel $\cN^{X^n \to B^n}_{f_n(s)}$. As proved in \cite[Theorem 1]{Hayashi2009}, for any $(n,2^{nR},\beta)$-code for the channel $\cN^{X^n \to B^n}_{f_n(s)}$, $R$ must satisfy $R \leq I[X;\cB]_{P,\delta_{s}}$. For the case of CQ-CoCs , the value of $s$ is unknown to both the parties.  Therefore, for any $(n,2^{nR},\beta)$-code for the CQ-CoC $\left\{\cN^{X^n \to B^n}_{f_n(s)}\right\}_{s \in \cS}$, $R$ must satisfy the following,
   \begin{equation*}
       R \leq \inf_{s \in \cS}I[X;\cB]_{P,\delta_{s}}.
   \end{equation*}

   Since, in the above, the distribution over $\cX$ is arbitrary, maximizing $\inf_{s \in \cS}I[X;\cB]_{P,\delta_{s}}$ over all possible input distributions $P \in \cP(\cX)$ gives us the desired bound on $R$.
\fi
\fi


\section{Universal Test for Multiple Independence Testing}\label{sec:GIT}
\subsection{Problem formulation}
We generalize independence testing studied in Section \ref{sec:IT} to the case when
the classical system is composed of two classical systems. 
    Consider a collection of density operators $\left\{\rho_{x,y,s}^{\cB} : x \in \cX, y \in \cY, s \in \cS\right\}  \subset \cD(\cB)$, where $\cX,\cY$ are some non-empty finite sets and $\cB$ is a Hilbert space. 
    This collection of density operators satisfies Assumption \ref{NM9}.
    Now for each $s^n \in \cS^n$ we consider the following state:
    \begin{align}
        \rho^{X^nY^n\cB^n}_{P_{X},P_{Y},s^n} &:= \sum_{\substack{ x^n\in \cX^n\\y^n \in \cY^n}}P^{n}_{X}(x^n)\ketbrasys{x^n}{X^n}\otimes P^{n}_{Y}(y^n)\ketbrasys{y^n}{Y^n}\otimes \rho^{\cB^n}_{x^n,y^n,s^n},\label{Omega1states_sn}
    \end{align}
    where $\forall x^n:=(x_1,\cdots,x_n) \in \cX^n, y^n := (y_1,\cdots,y_n) \in \cY^n, \rho^{\cB^n}_{x^n,y^n,s^n}:=\bigotimes_{i=1}^{n}\rho_{x_i,y_i,s_i}^{\cB}$, $P_{X}$ and $P_{Y}$ are probability distributions over $\cX$ and $\cY$,  we here denote $X,Y$ as the classical subsystems of respective dimensions $\abs{\cX},\abs{\cY}$, corresponding to the random variables $X$ and $Y$ (i.e. through out this section, $X$ and $Y$ are used both as random variables and classical subsystems in appropriate contexts) spanned by orthonormal basis $\left\{\ket{x}^{X}\right\}_{x \in \cX}$ and $\left\{\ket{y}^{Y}\right\}_{y \in \cY}$ respectively. Now consider the following hypotheses
     \begin{align}
        H_{0} &: \left\{\rho^{X^nY^n\cB^n}_{P_{X},P_{Y},s^n}\right\}_{s^n \in \cS^n},\label{GIT_hypotheses}\\
        H_{1,X} &: \left\{\sum_{\substack{ x^n\in \cX^n\\y^n \in \cY^n}}P^{n}_{X}(x^n)\ketbrasys{x^n}{X^n}\otimes P^{n}_{Y}(y^n)\ketbrasys{y^n}{Y^n} \otimes \sigma^{\cB^n}_{y^n}\right\}_{\substack{\substack{\{\sigma^{\cB^n}_{y^n}\} \subset \cD(\cH_\cB^{\otimes n})}}}\label{Omega2states_arbit_X},\\
        H_{1,Y} &: \left\{\sum_{\substack{ x^n\in \cX^n\\y^n \in \cY^n}}P^{n}_{X}(x^n)\ketbrasys{x^n}{X^n}\otimes P^{n}_{Y}(y^n)\ketbrasys{y^n}{Y^n} \otimes \sigma^{\cB^n}_{x^n}\right\}_{\substack{\substack{\{\sigma^{\cB^n}_{x^n}\} \subset \cD(\cH_\cB^{\otimes n})}}}\label{Omega2states_arbit_Y},\\
        H_{1} &: \left\{\sum_{\substack{ x^n\in \cX^n\\y^n \in \cY^n}}P^{n}_{X}(x^n)\ketbrasys{x^n}{X^n}\otimes P^{n}_{Y}(y^n)\ketbrasys{y^n}{Y^n} \otimes \sigma^{\cB^n}\right\}_{\substack{\substack{\sigma^{\cB^n} \in \cD(\cH_\cB^{\otimes n})}}}\label{Omega2states_arbit_None},
    \end{align}
where $\sigma^{\cB^n}_{x^n},\sigma^{\cB^n}_{y^n}$ and $\sigma^{\cB^n}$ come from the respective marginals of the following state:
\begin{align}
    \sigma^{X^nY^n\cB^n}_{P_{X},P_{Y}, \{\sigma^{\cB^n}_{x^n,y^n}\}} 
 &:= \sum_{\substack{x^n\in \cX^n\\y^n \in \cY^n}} P^{n}_{X}(x^n) \ketbrasys{x^n}{X^n}\otimes P^{n}_{Y}(y^n)\ketbrasys{y^n}{Y^n} \otimes \sigma^{\cB^n}_{x^n,y^n}\label{Omega2states_arbit},
\end{align}
and $\{\sigma_{x^n,y^n}\}$ is an arbitrary collection of states indexed by $x^n \in \cX^n$ and 
$y^n \in \cY^n$ (i.e., $\{\sigma_{x^n,y^n}\}_{x^n \in \cX^n, y^n \in \cY^n}$).

The aim here is to design a measurement-based test $\bbT^{\star}_{n}$, which accepts $H_0$ with very high probability and rejects $H_{1,X},H_{1,X}$ and $H_1$ with exponentially large probability. However, before solving this problem, we will study a weaker version of the above problem in the following subsection.
In addition, the converse part can be shown in the same way as Appendix
\ref{NMA8}, we focus only on the direct part, i.e., the construction of 
asymptotically good codes.

\subsection{Universal Non-simultaneous Test for Multiple Independence Testing}\label{cq_mac_avht_null_statement_non_sim}
Here, our objective is to find two tests $\hat{\bbT}_{n}^{X}$ and  $\hat{\bbT}_{n}^{Y}$  which accept $H_0$ (mentioned in \eqref{GIT_hypotheses}) with very high probability and accepts $H_1^X$ and $H_1^Y$ with arbitrarily small probability respectively, where $H_1^X$ and $H_1^Y$ is  defined as follows,
\begin{equation*}
    H_1^X := H_{1,X} \cup H_1 \quad \text{ and }  H_1^Y := H_{1,Y} \cup H_1.
\end{equation*}
In the following, we construct the tests $\hat{\bbT}_{n}^{X}$ and  $\hat{\bbT}_{n}^{Y}$ which satisfies the objective mentioned above.

For an $\left(\bar{x}^n,\bar{y}^n\right) := ((\bar{x}_1,\bar{y}_1),(\bar{x}_2, \bar{y}_2),\cdots,(\bar{x}_n,\bar{y}_n))\in \cX^n\times\cY^n$ which has a form $\left(\bar{x}^n,\bar{y}^n\right) := \Big(\underbrace{(1,1),\cdots,(1,1)}_{m_{1,1}},\underbrace{(1,2),\cdots,(1,2)}_{m_{1,2}},$\\$\cdots,\underbrace{(\abs{\cX},\abs{\cY}),\cdots,(\abs{\cX},\abs{\cY})}_{m_{\abs{\cX},\abs{\cY}}}\Big)$, 
where $\forall i \in [\abs{\cX}], j \in [\abs{\cY}], m_{i,j} \geq 0$ and $\sum_{\substack{i \in [\abs{\cX}] \\ j \in[ \abs{\cY}]}}m_{i,j} = n$. 
Given $\left(\bar{x}^n,\bar{y}^n\right)$, we consider the following state:
\begin{equation}
    \widehat{\rho}^{\cB^n}_{\bar{x}^n,\bar{y}^n} := \rho^{\cB^{m_{1,1}}}_{U,m_{1,1}} \otimes \rho^{\cB^{m_{1,2}}}_{U,m_{1,2}} \otimes \cdots \otimes \rho^{\cB^{m_{\abs{\cX},\abs{\cY}}}}
    _{U,m_{\abs{\cX},\abs{\cY}}},\label{rhohatxbarnybarn1}
\end{equation}
where 
$\rho^{\cB^{m_{i,j}}}_{U,m_{i,j}}$ is defined in \eqref{uniform_n_state}
    for any $i \in [\abs{\cX}], j \in [\abs{\cY}]$.
Now for any general pair $(x^n,y^n) \in \cX^n\times\cY^n$, which is a permutation of $(\bar{x}^n,\bar{y}^n)$ i.e. $x^n  = \pi(\bar{x}^n) = (\bar{x}_{\pi^{-1}(1)},\bar{x}_{\pi^{-1}(2)},\cdots,\bar{x}_{\pi^{-1}(n)})$ and $y^n  = \pi(\bar{y}^n) = (\bar{y}_{\pi^{-1}(1)},\bar{y}_{\pi^{-1}(2)},\cdots,\bar{y}_{\pi^{-1}(n)})$, for some $\pi \in S_n$, we define $\hat{\rho}^{\cB^n}_{x^n,y^n}$ as
\begin{equation}
    \widehat{\rho}^{\cB^n}_{x^n,y^n} := V^{\cB^n}(\pi) \left(\widehat{\rho}^{\cB^n}_{\bar{x}^n,\bar{y}^n}\right){V^{\cB^n}}^{\dagger}(\pi).\label{rhohatxnyn1}
\end{equation}
We define $\widehat{\rho}^{\cB^n}_{y^n}$ for any $y^n \in \cY^n$ in an analogous way, $\widehat{\rho}^{\cB^n}_{x^n}$ is defined for any $x^n \in \cX^n$ in \eqref{rhohatxn}. We now define the following states:
\begin{align*}
    \widehat{\rho}^{X^nY^n\cB^n}_{U,P_X,P_Y} &:= \sum_{\substack{x^n \in \cX^n \\ y^n \in \cY^n}}P^{n}_X(x^n) \ketbrasys{x^n}{X^n} \otimes P^{n}_Y(y^n) \ketbrasys{y^n}{Y^n} \otimes \widehat{\rho}^{\cB^n}_{x^n,y^n},\\
    \widehat{\rho}^{X^n\cB^n}_{U,P_X} &:= \sum_{\substack{x^n \in \cX^n}}P^{n}_X(x^n) \ketbrasys{x^n}{X^n} \otimes \widehat{\rho}^{\cB^n}_{x^n},\\
    \widehat{\rho}^{Y^n\cB^n}_{U,P_Y} &:= \sum_{\substack{y^n \in \cY^n}}P^{n}_Y(y^n) \ketbrasys{y^n}{Y^n} \otimes \widehat{\rho}^{\cB^n}_{y^n}.
\end{align*}

For $\widehat{R}_1,\widehat{R}_2>0$, we define three projective measurement-based tests $\hat{\bbT}_{n,X},\hat{\bbT}_{n,Y}$ and $\hat{\bbT}_{n}$ as follows,
\begin{align}
    \hat{\bbT}_{n,X} &:= \sum_{\substack{x^n \in \cX^n \\ y^n \in \cY^n}} \ketbrasys{x^n}{X^n} \otimes \ketbrasys{y^n}{Y^n} \otimes \left\{\widehat{\rho}^{\cB^n}_{x^n,y^n} \succeq 2^{n\widehat{R}_2}\widehat{\rho}^{\cB^n}_{x^n}\right\} \triangleq \left\{\widehat{\rho}^{X^nY^n\cB^n}_{U,P_X,P_Y} \succeq 2^{n\widehat{R}_2} \left({\rho^{\cH_Y^{\otimes n}}_{P_Y}} \otimes \rho^{X^n\cB^n}_{U,P_X}\right)\right\}\label{universal_test_X_TS},\\
    \hat{\bbT}_{n,Y} &:= \sum_{\substack{x^n \in \cX^n \\ y^n \in \cY^n}} \ketbrasys{x^n}{X^n} \otimes \ketbrasys{y^n}{Y^n} \otimes \left\{\widehat{\rho}^{\cB^n}_{x^n,y^n} \succeq 2^{n\widehat{R}_1}\widehat{\rho}^{\cB^n}_{y^n}\right\} \triangleq \left\{\widehat{\rho}^{X^nY^n\cB^n}_{U,P_X,P_Y} \succeq 2^{n\widehat{R}_1} \left({\rho^{\cH_X^{\otimes n}}_{P_X}} \otimes \rho^{Y^n\cB^n}_{U,P_Y}\right)\right\}\label{universal_test_Y_TS},\\
    \hat{\bbT}_{n} &:= \sum_{\substack{x^n \in \cX^n \\ y^n \in \cY^n}} \ketbrasys{x^n}{X^n} \otimes \ketbrasys{y^n}{Y^n} \otimes \left\{\widehat{\rho}^{\cB^n}_{x^n,y^n} \succeq 2^{n(\widehat{R}_1+\widehat{R}_2)}\rho^{\cB^n}_{U,n}\right\} \triangleq \left\{\widehat{\rho}^{X^nY^n\cB^n}_{U,P_X,P_Y} \succeq 2^{n(\widehat{R}_1+\widehat{R}_2)} \left({\rho^{\cH_X^{\otimes n}}_{P_X}} \otimes {\rho^{\cH_Y^{\otimes n}}_{P_Y}} \otimes \rho^{\cB^n}_{U,n}\right)\right\}\label{universal_test_None_TS}.
\end{align}

     Observe that like $\bbT_{n}$ defined in \eqref{universal_test}, the expressions of 
     $\hat{\bbT}_{n,X}, \hat{\bbT}_{n,Y}$, and $\hat{\bbT}_{n}$ are independent of 
     the forms of the distributions $P_X$ and $P_Y$. We now define our tests $\hat{\bbT}_{n}^{X}$ and $\hat{\bbT}_{n}^{Y}$ as follows,
\begin{equation}
    \hat{\bbT}_{n}^{X} := \hat{\bbT}_{n}\hat{\bbT}_{n,Y} \quad \text{ and } \quad \hat{\bbT}_{n}^{Y} := \hat{\bbT}_{n}\hat{\bbT}_{n,X}. \label{tests_hayashi_generalised}
\end{equation}
Note that the constructions of $\hat{\bbT}_{n}^{X}$ and $\hat{\bbT}_{n}^{Y}$ are independent of the forms 
of the distributions $P_X$ and $P_Y$. 
To evaluate the performance of the tests $\hat{\bbT}_{n}^{X}$ and $\hat{\bbT}_{n}^{Y}$,
for any probability distribution $Q \in \cP(\cS)$, we introduce the state as
\begin{align}
\rho^{XY\cB}_{P_X,P_Y,Q} :=  \sum_{\substack{x \in \cX \\ y \in \cY}} P_X(x)\ketbrasys{x}{X} \otimes P_Y(y)\ketbrasys{y}{Y} \otimes \rho_{Q,x,y}^{\cB}.\label{rhoPXPYQ_1}
\end{align}
where $\rho_{Q,x,y}^{\cB} = \sum_{s \in \cS} Q(s)\rho^{\cB}_{x,y,s}$. Then,
we introduce the mutual-information-type quantities;
\begin{align}
    I[Y;\cB|X]_{P_X,P_Y,Q}&:=  D(\rho^{XY\cB}_{P_X,P_Y,Q} || \rho^{Y}_{P_Y} \otimes \rho^{X\cB}_{P_X,Q}),\label{iy;b|x}\\
    I[X;\cB|Y]_{P_{X},P_{Y},Q}&:= D(\rho^{XY\cB}_{P_X,P_Y,Q} || \rho^{X}_{P_X} \otimes \rho^{Y\cB}_{P_Y,Q}),\label{ix;b|y}\\
    I[XY;\cB]_{P_{X},P_{Y},Q}&:=  D(\rho^{XY\cB}_{P_X,P_Y,Q} || \rho^{X}_{P_X} \otimes \rho^{Y}_{P_Y} \otimes \rho^{\cB}_{Q}),\label{ixy;b}
\end{align}
where 
\begin{equation*}
   \rho^{X\cB}_{P_X,Q} := \tr_{Y}[\rho^{XY\cB}_{P_X,P_Y,Q}], \rho^{Y\cB}_{P_Y,Q} := \tr_{X}[\rho^{XY\cB}_{P_X,P_Y,Q}] \text{ and } \rho^{\cB}_{Q} := \tr_{XY}[\rho^{XY\cB}_{P_X,P_Y,Q}].
\end{equation*}
To characterize the above quantities, we introduce 
R\'{e}nyi-type quantities;
\begin{align}    
&I_{1-t}[Y;\cB|X]_{P_{X},P_{Y},Q} :=  -\frac{1 - t}{t} \log\left(\sum_{x \in \cX}P_{X}(x)\tr\left[ \left(\sum_{y \in \cY}P_{Y}(y) \left({\rho_{Q,x,y}^{\cB}}\right)^{1 - t}\right)^{\frac{1}{1-t}}\right]\right), \label{sibsonqmiY;XB_1}\\
    &I_{1-t}[X;\cB|Y]_{P_{X},P_{Y},Q} :=  -\frac{1 - t}{t} \log \left(\sum_{y \in \cY}P_{Y}(y)\tr\left[ \left(\sum_{x \in \cX}P_{X}(x) \left({\rho_{Q,x,y}^{\cB}}\right)^{1 - t}\right)^{\frac{1}{1-t}}\right]\right),\label{sibsonqmiX;YB_1}\\
    &I_{1-t}[XY;\cB]_{P_{X},P_{Y},Q} :=  -\frac{1 - t}{t} \log \tr\left[ \left(\sum_{\substack{x \in \cX \\ y \in \cY}}P_{X}(x)P_{Y}(y) \left({\rho_{Q,x,y}^{\cB}}\right)^{1 - t}\right)^{\frac{1}{1-t}}\right] \label{sibsonqmiXY;B_1},
    \end{align}
where $\rho_{Q,x,y}^{\cB} := \sum_{s \in \cS} Q(s)\rho_{x,y,s}^{\cB}$.
By using the above quantities, the mutual-information-type quantities are recovered as
\begin{align}
    I[Y;\cB|X]_{P_X,P_Y,Q}&= \lim_{t\to 0} I_{1-t}[Y;\cB|X]_{P_X,P_Y,Q} ,\label{NV1}\\
    I[X;\cB|Y]_{P_{X},P_{Y},Q}& = \lim_{t\to 0} I_{1-t}[X;\cB|Y]_{P_X,P_Y,Q},\label{NV2}\\
    I[XY;\cB]_{P_{X},P_{Y},Q}&= \lim_{t\to 0} I_{1-t}[XY;\cB]_{P_X,P_Y,Q} .\label{NV3}
\end{align}
The performance of the tests $\hat{\bbT}_{n}^{X}$ and $\hat{\bbT}_{n}^{Y}$
are evaluated by the following lemma.

\begin{lemma}\label{lemma_cq_mac_avht_independent_ts}
Assume that
the subset $\left\{\rho_{x,y,s}^{\cB} : x \in \cX, y \in \cY, s \in \cS\right\}  \subset \cD(\cB)$ satisfies
 Assumption \ref{NM9}.
For any real number $\varepsilon>0$,
any element $ s^n \in \cS^n$, 
any subsets $\{\sigma^{\cB^n}_{x^n}\}$, $\{\sigma^{\cB^n}_{y^n}\}$, and $\sigma^{\cB^n}$,  
the tests $\hat{\bbT}_{n}^{X}$ and $\hat{\bbT}_{n}^{Y}$ satisfy the following,
\begin{align}
     &\tr\left[\left( \bbI^{X^nY^n\cB^n} - \hat{\bbT}_{n}^{X}\right)\rho^{X^nY^n\cB^n}_{P_{X},P_{Y},s^n}\right] \leq 
f(n,\abs{\cX},\abs{\cY}, \abs{\cB}, \abs{\cS_\varepsilon},t)
(1+\varepsilon)^n\left(2^{nt\left(\widehat{R}_1 - \min _{Q \in \cP(\cS_\varepsilon)} I_{1-t}[X;\cB|Y]_{P_{X},P_{Y},Q}\right)} \right.\nn\\
&\hspace{170pt}+ \left.2^{nt\left(\widehat{R}_1 + \widehat{R}_2 -  \min _{Q \in \cP(\cS_\varepsilon)} I_{1-t}[XY;\cB]_{P_{X},P_{Y},Q}\right)}\right)\label{lemma_achievability_mac_independent_eq1_TS},\\
     &\tr\left[\left( \bbI^{X^nY^n\cB^n} - \hat{\bbT}_{n}^{Y}\right)\rho^{X^nY^n\cB^n}_{P_{X},P_{Y},s^n}\right] \leq 
f(n,\abs{\cX},\abs{\cY}, \abs{\cB}, \abs{\cS_\varepsilon},t)
(1+\varepsilon)^n
\left(2^{nt\left(\widehat{R}_2 - \min _{Q \in \cP(\cS_\varepsilon)} I_{1-t}[Y;\cB|X]_{P_{X},P_{Y},Q}\right)} \right.\nn\\
&\hspace{170pt}+ \left.2^{nt\left(\widehat{R}_1 + \widehat{R}_2 -  \min _{Q \in \cP(\cS_\varepsilon)} I_{1-t}[XY;\cB]_{P_{X},P_{Y},Q}\right)}\right),\label{lemma_achievability_mac_independent_eq2_TS}\\
&\tr\left[\hat{\bbT}_{n}^{X}\left(\sum_{\substack{ x^n\in \cX^n\\y^n \in \cY^n}}P^{n}_{X}(x^n)\ketbrasys{x^n}{X^n}\otimes P^{n}_{Y}(y^n)\ketbrasys{y^n}{Y^n} \otimes \sigma^{\cB^n}_{y^n}\right)\right]
\leq g_2(n,\abs{\cY},\abs{\cB})2^{-n\widehat{R}_1}, \label{lemma_achievability_mac_independent_eq3_TS}\\
&\tr\left[\hat{\bbT}_{n}^{X}\left(\sum_{\substack{ x^n\in \cX^n\\y^n \in \cY^n}}P^{n}_{X}(x^n)\ketbrasys{x^n}{X^n}\otimes P^{n}_{Y}(y^n)\ketbrasys{y^n}{Y^n} \otimes \sigma^{\cB^n}\right)\right]
\leq g_3(n,\abs{\cB})2^{-n(\widehat{R}_1 + \widehat{R}_2)}, \label{lemma_achievability_mac_independent_eq4_TS}\\
&\tr\left[\hat{\bbT}_{n}^{Y}\left(\sum_{\substack{ x^n\in \cX^n\\y^n \in \cY^n}}P^{n}_{X}(x^n)\ketbrasys{x^n}{X^n}\otimes P^{n}_{Y}(y^n)\ketbrasys{y^n}{Y^n} \otimes \sigma^{\cB^n}_{x^n}\right)\right]
\leq g_1(n,\abs{\cX},\abs{\cB})2^{-n\widehat{R}_2}, \label{lemma_achievability_mac_independent_eq5_TS}\\
&\tr\left[\hat{\bbT}_{n}^{Y}\left(\sum_{\substack{ x^n\in \cX^n\\y^n \in \cY^n}}P^{n}_{X}(x^n)\ketbrasys{x^n}{X^n}\otimes P^{n}_{Y}(y^n)\ketbrasys{y^n}{Y^n} \otimes \sigma^{\cB^n}\right)\right]
\leq g_3(n,\abs{\cB})2^{-n(\widehat{R}_1 + \widehat{R}_2)},\label{lemma_achievability_mac_independent_eq6_TS}
    \end{align}
where $\widehat{R}_1,\widehat{R}_2 > 0$, $t \in (0,1)$, and the symbols used here are defined as 
\begin{align}
    &f(n,\abs{\cX},\abs{\cY}, \abs{\cB}, \abs{\cS_\varepsilon},t) := (n+1)^{\abs{\cS_\varepsilon}-1 + \frac{t\abs{\cX}\abs{\cY}(\abs{\cB}-1)\abs{\cB}}{2} + (\abs{\cB} - 1)t \abs{\cX}\abs{\cY}} ,
    \quad g_1(n,\abs{\cX},\abs{\cB}) := n^{\frac{\abs{\cX}\abs{\cB}(\abs{\cB}-1)}{2}}\abs{\Lambda^{n}_{\abs{\cB}}}^{\abs{\cX}}\label{g_1_f},\\
    &g_2(n,\abs{\cY},\abs{\cB}) := n^{\frac{\abs{\cY}\abs{\cB}(\abs{\cB}-1)}{2}}\abs{\Lambda^{n}_{\abs{\cB}}}^{\abs{\cY}} \hspace{2pt} ,
    \quad g_3(n,\abs{\cB}) := n^{\frac{\abs{\cB}(\abs{\cB}-1)}{2}}\abs{\Lambda^{n}_{\abs{\cB}}}. \hspace{48pt}\label{g_3_f}
    \end{align}
\end{lemma}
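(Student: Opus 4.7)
The plan is to reduce the six inequalities to individual Type I and Type II error analyses for each of the three component tests $\hat{\bbT}_{n}$, $\hat{\bbT}_{n,X}$, and $\hat{\bbT}_{n,Y}$, and then combine these using standard operator-inequality tools. For each individual bound I follow the recipe used in the proof of Theorem \ref{theorem_generalised_independence_test_arbitrary_varying}: symmetrize over $S_n$ using permutation invariance of the test, reduce the arbitrarily varying parameter $s^n$ to a finite $\varepsilon$-net via Assumption \ref{NM9} (which contributes the prefactor $(n+1)^{\abs{\cS_\varepsilon}-1}$ and the multiplicative $(1+\varepsilon)^n$ common to both \eqref{lemma_achievability_mac_independent_eq1_TS} and \eqref{lemma_achievability_mac_independent_eq2_TS}), apply the operator inequality $\{A \prec cB\} \preceq c^{t}A^{-t}B^{t}$---valid because a conditional variant of Lemma \ref{lemma_commutativity} guarantees the needed commutativity---and finally invoke H\"older's inequality (Fact \ref{Holder_corllary}) to extract the correct R\'enyi mutual information. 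The test $\hat{\bbT}_{n,Y}$, which compares $\widehat{\rho}^{\cB^n}_{x^n,y^n}$ against $\widehat{\rho}^{\cB^n}_{y^n}$, produces the conditional quantity $I_{1-t}[X;\cB|Y]_{P_X,P_Y,Q}$ because the $\widehat{\rho}^{\cB^n}_{y^n}$ side already averages out the $y$-sub-block structure; symmetrically $\hat{\bbT}_{n,X}$ yields $I_{1-t}[Y;\cB|X]_{P_X,P_Y,Q}$, while $\hat{\bbT}_{n}$ (comparing against $\rho^{\cB^n}_{U,n}$) yields the unconditional $I_{1-t}[XY;\cB]_{P_X,P_Y,Q}$. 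The Type I error bounds for each component test come from Lemma \ref{lemma_perm_Inv_universal} applied after symmetrizing the adversary state over the appropriate sub-permutation group (the $y$-type stabilizer for $\hat{\bbT}_{n,Y}$, the $x$-type stabilizer for $\hat{\bbT}_{n,X}$, all of $S_n$ for $\hat{\bbT}_{n}$), which supplies the polynomial factors $g_2$, $g_1$, and $g_3$ respectively.

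Given these per-test bounds, inequalities \eqref{lemma_achievability_mac_independent_eq1_TS} and \eqref{lemma_achievability_mac_independent_eq2_TS} follow by Gao's union bound (Fact \ref{Gao}) applied to the sandwich $\hat{\bbT}_{n,Y}\hat{\bbT}_n\hat{\bbT}_{n,Y}$, giving
\begin{equation*}
\tr\left[\left(\bbI - \hat{\bbT}_n^{X}\right)\rho^{X^nY^n\cB^n}_{P_X,P_Y,s^n}\right] \leq 4\,\tr\left[\left(\bbI - \hat{\bbT}_n\right)\rho^{X^nY^n\cB^n}_{P_X,P_Y,s^n}\right] + 4\,\tr\left[\left(\bbI - \hat{\bbT}_{n,Y}\right)\rho^{X^nY^n\cB^n}_{P_X,P_Y,s^n}\right],
\end{equation*}
and summing the individual Type II exponents produces the two-term expression in \eqref{lemma_achievability_mac_independent_eq1_TS}; the factor of $4$ is absorbed into the polynomial $f$. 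The analogous argument handles \eqref{lemma_achievability_mac_independent_eq2_TS}. For the Type I error bounds \eqref{lemma_achievability_mac_independent_eq3_TS}--\eqref{lemma_achievability_mac_independent_eq6_TS}, I exploit $\norm{\hat{\bbT}_n}{\infty}\le 1$ and H\"older's inequality (Fact \ref{trace_ineq}): writing $\tr[\hat{\bbT}_n^{X}\sigma] = \tr[\hat{\bbT}_n(\hat{\bbT}_{n,Y}\sigma\hat{\bbT}_{n,Y})] \leq \tr[\hat{\bbT}_{n,Y}\sigma]$, and by the same cyclic manoeuvre with the roles of the two projections exchanged, also $\tr[\hat{\bbT}_n^{X}\sigma] \leq \tr[\hat{\bbT}_n\sigma]$. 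For the adversary state of \eqref{lemma_achievability_mac_independent_eq3_TS} (independent of $x^n$) the Type I bound of $\hat{\bbT}_{n,Y}$ delivers the tighter $g_2\cdot 2^{-n\widehat{R}_1}$, while for the fully product state of \eqref{lemma_achievability_mac_independent_eq4_TS} the Type I bound of $\hat{\bbT}_n$ supplies $g_3\cdot 2^{-n(\widehat{R}_1+\widehat{R}_2)}$; the dual choices of component test handle $\hat{\bbT}_n^{Y}$ in \eqref{lemma_achievability_mac_independent_eq5_TS}--\eqref{lemma_achievability_mac_independent_eq6_TS}.

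The main technical obstacle is establishing the conditional R\'enyi exponent in the Type II analysis of $\hat{\bbT}_{n,Y}$. Unlike the unconditional setting of Theorem \ref{theorem_generalised_independence_test_arbitrary_varying}, where $\widehat{\rho}^{\cB^n}_{x^n}$ factorises over a single classical alphabet and Lemma \ref{lemma_commutativity} directly delivers commutativity with $\rho^{\cB^n}_{U,n}$, the comparison of $\widehat{\rho}^{\cB^n}_{x^n,y^n}$ with $\widehat{\rho}^{\cB^n}_{y^n}$ demands a conditional analogue: within each $y$-sub-block of the permuted tensor factorisation, the finer $(x,y)$-sub-blocks of $\widehat{\rho}^{\cB^n}_{x^n,y^n}$ must commute with the corresponding $y$-uniform factor of $\widehat{\rho}^{\cB^n}_{y^n}$. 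Once this conditional commutativity and the matching extension of Lemma \ref{claim_f} bounding $\left(\bigotimes_{i}\rho^{\cB}_{Q,x_i,y_i}\right)^{t}(\widehat{\rho}^{\cB^n}_{x^n,y^n})^{-t}$ uniformly in operator norm are in place, H\"older's inequality applied slice-by-slice across the $y$-types produces the inner expression $\sum_{x\in\cX}P_X(x)\bigl(\rho^{\cB}_{Q,x,y}\bigr)^{1-t}$ that defines \eqref{sibsonqmiX;YB_1}, and the outer $y$-average completes the conditional R\'enyi quantity.
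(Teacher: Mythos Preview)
Your proposal is correct and follows the same strategy as the paper: per-test Type I and Type II analyses via permutation invariance, type-class averaging, the commutativity-based operator inequality, Lemma~\ref{claim_f}/H\"older, and Lemma~\ref{lemma_perm_Inv_universal} after sub-permutation symmetrization. The one point of divergence is the union-bound step. The paper observes (and you also note) that $\hat{\bbT}_n$ and $\hat{\bbT}_{n,Y}$ commute, so $\hat{\bbT}_n^{X}=\hat{\bbT}_n\hat{\bbT}_{n,Y}$ is a projector and the elementary inequality $\bbI-PQ\preceq(\bbI-P)+(\bbI-Q)$ for commuting projectors gives the two-term bound with no numerical constant; this recovers the explicit $f$ exactly. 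Your route through Gao's bound (Fact~\ref{Gao}) is valid but introduces a factor~$4$, and since $f$ is defined explicitly in \eqref{g_1_f} the claim that this factor ``is absorbed into $f$'' is not literally correct. Replacing Gao by the commuting-projector union bound fixes this and brings your argument in line with the paper's.
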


Therefore, when $\widehat{R}_1,\widehat{R}_2$ satisfy the conditions
\begin{align}
 \widehat{R}_1 &< \min _{Q \in \cP(\cS_\eps)} I[X;\cB|Y]_{P_{X},P_{Y},Q},\label{VHI1}\\
 \widehat{R}_2 &< \min _{Q \in \cP(\cS_\eps)} I[Y;\cB|X]_{P_{X},P_{Y},Q},\label{VHI2}\\
  \widehat{R}_1 + \widehat{R}_2 &< \min _{Q \in \cP(\cS_\eps)} I[XY;\cB]_{P_{X},P_{Y},Q}
 \label{VHI3},
\end{align}
the tests $\hat{\bbT}_{n}^{X}$ and $\hat{\bbT}_{n}^{Y}$ well detect
the hypotheses $H_1^X$ and $H_1^Y$, respectively for large enough $n$.
This lemma is shown in Subsection \ref{S5-C}.
A corollary of Lemma \ref{lemma_cq_mac_avht_independent_ts} is mentioned below, which will be required to give a proof of achievability for deriving the capacity region for CQ-AVMAC (see Theorem \ref{lemma_rand_capacity_avmac}).

\begin{corollary}\label{corllary_cq_mac_avht_independent_ts}
Assume that
the subset $\left\{\rho_{x,y,s}^{\cB} : x \in \cX, y \in \cY, s \in \cS\right\}  \subset \cD(\cB)$ satisfies Assumption \ref{NM9}.
For any real number $\epsilon>0$ and
any element $s^n \in \cS^n$, 
the tests $\hat{\bbT}_{n}^{X}$ and $\hat{\bbT}_{n}^{Y}$ satisfy the following,
\begin{align}
     &\tr\left[\left( \bbI^{X^nY^n\cB^n} - \hat{\bbT}_{n}^{X}\right)\rho^{X^nY^n\cB^n}_{P_{X},P_{Y},s^n}\right] \leq 
f(n,\abs{\cX},\abs{\cY}, \abs{\cB}, \abs{\cS_\varepsilon},t)
(1+\varepsilon)^n
\left(2^{nt\left(\widehat{R}_1 - \min _{Q \in \cP(\cS_\varepsilon)} I_{1-t}[X;\cB|Y]_{P_{X},P_{Y},Q}\right)} \right.\nn\\
&\hspace{170pt}+ \left.2^{nt\left(\widehat{R}_1 + \widehat{R}_2 -  \min _{Q \in \cP(\cS_\varepsilon)} I_{1-t}[XY;\cB]_{P_{X},P_{Y},Q}\right)}\right)\label{corllary_achievability_mac_independent_eq1_TS}\\
     &\tr\left[\left( \bbI^{X^nY^n\cB^n} - \hat{\bbT}_{n}^{Y}\right)\rho^{X^nY^n\cB^n}_{P_{X},P_{Y},s^n}\right] \leq 
f(n,\abs{\cX},\abs{\cY}, \abs{\cB}, \abs{\cS_\varepsilon},t)(1+\varepsilon)^n
\left(2^{nt\left(\widehat{R}_2 - \min _{Q \in \cP(\cS_\varepsilon)} I_{1-t}[Y;\cB|X]_{P_{X},P_{Y},Q}\right)} \right.\nn\\
&\hspace{170pt}+ \left.2^{nt\left(\widehat{R}_1 + \widehat{R}_2 -  \min _{Q \in \cP(\cS_\varepsilon)} I_{1-t}[XY;\cB]_{P_{X},P_{Y},Q}\right)}\right)\label{corllary_achievability_mac_independent_eq2_TS}\\
&\tr\left[\hat{\bbT}_{n}^{X}\left(\sum_{\substack{ x^n\in \cX^n\\y^n \in \cY^n}}P^{n}_{X}(x^n)\ketbrasys{x^n}{X^n}\otimes P^{n}_{Y}(y^n)\ketbrasys{y^n}{Y^n} \otimes \rho^{\cB^n}_{y^n,s^n}\right)\right]
\leq g_2(n,\abs{\cY},\abs{\cB})2^{-n\widehat{R}_1}, \label{corllary_achievability_mac_independent_eq3_TS}\\
&\tr\left[\hat{\bbT}_{n}^{X}\left(\sum_{\substack{ x^n\in \cX^n\\y^n \in \cY^n}}P^{n}_{X}(x^n)\ketbrasys{x^n}{X^n}\otimes P^{n}_{Y}(y^n)\ketbrasys{y^n}{Y^n} \otimes \rho^{\cB^n}_{s^n}\right)\right]
\leq g_3(n,\abs{\cB})2^{-n(\widehat{R}_1 + \widehat{R}_2)}, \label{corllary_achievability_mac_independent_eq4_TS}\\
&\tr\left[\hat{\bbT}_{n}^{Y}\left(\sum_{\substack{ x^n\in \cX^n\\y^n \in \cY^n}}P^{n}_{X}(x^n)\ketbrasys{x^n}{X^n}\otimes P^{n}_{Y}(y^n)\ketbrasys{y^n}{Y^n} \otimes \rho^{\cB^n}_{x^n,s^n}\right)\right]
\leq g_1(n,\abs{\cY},\abs{\cB})2^{-n\widehat{R}_2}, \label{corllary_achievability_mac_independent_eq5_TS}\\
&\tr\left[\hat{\bbT}_{n}^{Y}\left(\sum_{\substack{ x^n\in \cX^n\\y^n \in \cY^n}}P^{n}_{X}(x^n)\ketbrasys{x^n}{X^n}\otimes P^{n}_{Y}(y^n)\ketbrasys{y^n}{Y^n} \otimes \rho^{\cB^n}_{s^n}\right)\right]
\leq g_3(n,\abs{\cB})2^{-n(\widehat{R}_1 + \widehat{R}_2)},\label{corllary_achievability_mac_independent_eq6_TS}
    \end{align}
where $\widehat{R}_1,\widehat{R}_2 > 0$, $t \in (0,1)$.
For each $x^n\in \cX^n,y^n \in \cY^n,s^n \in \cS^n$, we define the following
\begin{align}
\rho^{\cB^n}_{x^n,s^n} &:= \sum_{y^n \in \cY^n}P^{n}_{Y}(y^n)\rho^{\cB^n}_{x^n,y^n,s^n},\label{marginal_X}\\
    \rho^{\cB^n}_{y^n,s^n} &:= \sum_{x^n \in \cX^n}P^{n}_{X}(x^n)\rho^{\cB^n}_{x^n,y^n,s^n},\label{marginal_Y}\\
    \rho^{\cB^n}_{s^n} &:= \sum_{\substack{x^n \in \cX^n \\ y^n \in \cY^n}}P^{n}_{X}(x^n)P^{n}_{Y}(y^n)\rho^{\cB^n}_{x^n,y^n,s^n}.\label{marginal_None}
\end{align}
\end{corollary}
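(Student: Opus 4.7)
The plan is to obtain Corollary \ref{corllary_cq_mac_avht_independent_ts} as a direct specialization of Lemma \ref{lemma_cq_mac_avht_independent_ts}. The first observation is that the bounds \eqref{corllary_achievability_mac_independent_eq1_TS} and \eqref{corllary_achievability_mac_independent_eq2_TS} are literally identical to \eqref{lemma_achievability_mac_independent_eq1_TS} and \eqref{lemma_achievability_mac_independent_eq2_TS}, so they follow verbatim by invoking the lemma for the given $s^n \in \cS^n$.

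For the remaining four inequalities, I would verify that each of the states $\rho^{\cB^n}_{x^n,s^n}, \rho^{\cB^n}_{y^n,s^n}, \rho^{\cB^n}_{s^n}$, defined by marginalizing $\rho^{\cB^n}_{x^n,y^n,s^n}$ against $P^n_Y$, $P^n_X$, or both in \eqref{marginal_X}--\eqref{marginal_None}, is a bona fide element of $\cD(\cH_\cB^{\otimes n})$. They are each convex combinations of density operators on $\cH_\cB^{\otimes n}$, so positivity and unit trace are immediate. Once this is checked, I can specialize the free families $\{\sigma^{\cB^n}_{x^n}\}, \{\sigma^{\cB^n}_{y^n}\}$, and the free state $\sigma^{\cB^n}$ appearing in Lemma \ref{lemma_cq_mac_avht_independent_ts} to these particular choices. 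Since the bounds in \eqref{lemma_achievability_mac_independent_eq3_TS}--\eqref{lemma_achievability_mac_independent_eq6_TS} hold \emph{uniformly} over all such $\sigma$'s, substitution yields \eqref{corllary_achievability_mac_independent_eq3_TS}--\eqref{corllary_achievability_mac_independent_eq6_TS} with exactly the same prefactors $g_1,g_2,g_3$.

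Since the lemma is being invoked on the very same tests $\hat{\bbT}_n^X$ and $\hat{\bbT}_n^Y$ constructed in \eqref{universal_test_X_TS}--\eqref{universal_test_None_TS} and on the same collection $\{\rho^{\cB}_{x,y,s}\}$ satisfying Assumption \ref{NM9}, no new hypotheses are introduced, and the parameters $\widehat{R}_1, \widehat{R}_2 > 0$ and $t \in (0,1)$ carry over unchanged. There is no real obstacle here: the corollary is a ``ready-to-use'' repackaging of the lemma tailored to the statistics encountered in the CQ-AVMAC problem, where the state on $\cB^n$ is not an arbitrary one but the actual channel output averaged over one of the codewords. The only thing to be careful about is to match notations precisely, i.e., to confirm that the definitions \eqref{marginal_X}--\eqref{marginal_None} are indeed the partial traces over $Y^n$, $X^n$, and $X^nY^n$ of the state $\rho^{X^nY^n\cB^n}_{P_X,P_Y,s^n}$ from \eqref{Omega1states_sn}, which is a straightforward calculation using the product structure of $\rho^{\cB^n}_{x^n,y^n,s^n}$ and of $P^n_X, P^n_Y$. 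Hence the proof is a one-line deduction once the correspondence is laid out, and no additional machinery beyond Lemma \ref{lemma_cq_mac_avht_independent_ts} is required.
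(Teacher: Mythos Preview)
Your proposal is correct and matches the paper's treatment: the paper presents Corollary \ref{corllary_cq_mac_avht_independent_ts} as an immediate consequence of Lemma \ref{lemma_cq_mac_avht_independent_ts} without a separate proof, and your argument---that \eqref{corllary_achievability_mac_independent_eq1_TS}--\eqref{corllary_achievability_mac_independent_eq2_TS} are verbatim restatements while \eqref{corllary_achievability_mac_independent_eq3_TS}--\eqref{corllary_achievability_mac_independent_eq6_TS} follow by specializing the arbitrary families $\{\sigma^{\cB^n}_{x^n}\},\{\sigma^{\cB^n}_{y^n}\},\sigma^{\cB^n}$ in the lemma to the channel marginals \eqref{marginal_X}--\eqref{marginal_None}---is exactly the intended deduction.
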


\subsection{Proof of Lemma \ref{lemma_cq_mac_avht_independent_ts}}\label{S5-C}
     We note from Lemma \ref{lemma_commutativity}, for any $x^n \in \cX^n,y^n \in \cY^n$, the states $\widehat{\rho}^{\cB^n}_{x^n,y^n}, \widehat{\rho}^{\cB^n}_{y^n}$ and $\rho^{\cB^n}_{U,n}$ commute with each other. Thus, the tests $\hat{\bbT}_{n}$ and $\hat{\bbT}_{n,Y}$ commute with each other. Using similar arguments, we see that the tests $\hat{\bbT}_{n}$ and $\hat{\bbT}_{n,X}$ commute with each other as well. 

Further, the states $\widehat{\rho}^{X^nY^n\cB^n}_{U,P_X,P_Y}, {\rho^{\cH_X^{\otimes n}}_{P_X}} \otimes \rho^{Y^n\cB^n}_{U,P_Y}, {\rho^{\cH_Y^{\otimes n}}_{P_Y}} \otimes \rho^{X^n\cB^n}_{U,P_X}$ and $\rho^{\cH_X^{\otimes n}}_{P_X} \otimes \rho^{\cH_Y^{\otimes n}}_{P_Y} \otimes \rho^{\cB^n}_{U,n}$ are permutation invariant, $\hat{\bbT}_{n}^{X}$ and $\hat{\bbT}_{n}^{Y}$ are permutation invariant. Thus, for any permutation $\pi \in S_n$, the following holds,

\begin{align*}
    \tr\left[\left( \bbI^{X^nY^n\cB^n} - \hat{\bbT}_{n}^{X}\right)\rho^{X^nY^n\cB^n}_{P_{X},P_{Y},s^n}\right] &= \tr\left[\left( \bbI^{X^nY^n\cB^n} - \hat{\bbT}_{n}^{X}\right)\rho^{X^nY^n\cB^n}_{P_{X},P_{Y},\pi(s^n)}\right],\\
    \tr\left[\left( \bbI^{X^nY^n\cB^n} - \hat{\bbT}_{n}^{Y}\right)\rho^{X^nY^n\cB^n}_{P_{X},P_{Y},s^n}\right] &= \tr\left[\left( \bbI^{X^nY^n\cB^n} - \hat{\bbT}_{n}^{Y}\right)\rho^{X^nY^n\cB^n}_{P_{X},P_{Y},\pi(s^n)}\right].
\end{align*}

Thus, for any $s^n \in \cS_\varepsilon^n$, we upper-bound the quantity $ \tr\left[\left( \bbI^{X^nY^n\cB^n} - \hat{\bbT}_{n}^{X}\right)\rho^{X^nY^n\cB^n}_{P_X,P_Y,s^n}\right]$ as follows,
\begin{align}
    &\tr\left[\left( \bbI^{X^nY^n\cB^n} - \hat{\bbT}_{n}^{X}\right)\rho^{X^nY^n\cB^n}_{P_{X},P_{Y},s^n}\right]\nn\\ 
     &\overset{a}{=}  \tr\left[\left( \bbI^{X^nY^n\cB^n} - \hat{\bbT}^{X}_{n}\right)\frac{1}{\abs{T_{Q_{s^n}}}} \sum_{\pi \in S_n}\rho^{X^nY^n\cB^n}_{P_{X},P_{Y},\pi(s^n)}\right]\nn\\
     &= \tr\left[\left( \bbI^{X^nY^n\cB^n} - \hat{\bbT}_{n}^{X}\right)\frac{1}{\abs{T_{Q_{s^n}}}}\sum_{\widehat{s}^{n} \in T_{Q_{s^n}}} \rho^{X^nY^n\cB^n}_{P_{X},P_{Y},\widehat{s}^{n}}\right]\nn\\
     &\overset{b}{\leq} (n+1)^{\abs{\cS_\varepsilon}-1}\tr\left[\left( \bbI^{X^nY^n\cB^n} - \hat{\bbT}_{n}^{X}\right)\sum_{\widehat{s}^{n} \in T_{Q_{s^n}}} Q^{n}_{s^n}(\widehat{s}^n)\rho^{X^nY^n\cB^n}_{P_{X},P_{Y},\widehat{s}^n}\right]\nn\\
     &\leq (n+1)^{\abs{\cS_\varepsilon}-1}\tr\left[\left( \bbI^{X^nY^n\cB^n} - \hat{\bbT}_{n}^{X}\right)\sum_{\widehat{s}^{n} \in \cS_\varepsilon^n} Q^{n}_{s^n}(\widehat{s}^n)\rho^{X^nY^n\cB^n}_{P_{X},P_{Y},\widehat{s}^n}\right]\nn\\
     &\overset{c}{=} (n+1)^{\abs{\cS_\varepsilon}-1}\tr\left[\left( \bbI^{X^nY^n\cB^n} - \hat{\bbT}_{n}^{X}\right)\left(\rho^{{VXY\cB}}_{P_{X},P_{Y},Q_{s^n}}\right)^{\otimes n}
     \right]\nn\\
     &\leq (n+1)^{\abs{\cS_\varepsilon}-1}\tr\left[\left( \bbI^{X^nY^n\cB^n} - \hat{\bbT}_{n,Y}\right)\left(\rho^{{XY\cB}}_{P_{X},P_{Y},Q_{s^n}}\right)^{\otimes n}
     \right] + (n+1)^{\abs{\cS_\varepsilon}-1}\tr\left[\left( \bbI^{X^nY^n\cB^n} - \hat{\bbT}_{n}\right)\left(\rho^{{XY\cB}}_{P_{X},P_{Y},Q_{s^n}}\right)^{\otimes n}
     \right],\label{combined}
\end{align}
where $a$ follows from the fact that $\hat{\bbT}^{X}_{n}$ is permutation invariant, $b$ follows from \eqref{inv_type_set_size_ub}, $c$ follows from \eqref{rhoPXPYQ_1}. We upper bound the first term in \eqref{combined} as follows
\begin{align}
     &\hspace{14pt}(n+1)^{\abs{\cS_\varepsilon}-1}\tr\left[\left( \bbI^{X^nY^n\cB^n} - \hat{\bbT}_{n,Y}\right)\left(\rho^{{XY\cB}}_{P_X,P_Y,Q_{s^n}}\right)^{\otimes n}
     \right]\nn\\
     &= (n+1)^{\abs{\cS_\varepsilon}-1}\tr\left[\left( \bbI^{X^nY^n\cB^n} - \hat{\bbT}_{n,Y}\right)\sum_{\substack{x^n \in \cX^n \\ y^n \in \cY^n}}P^{n}_X(x^n)\ketbrasys{x^n}{X^n} \otimes P^{n}_Y(y^n)\ketbrasys{y^n}{Y^n} \otimes \left(\bigotimes_{i=1}^{n}\rho^{\cB}_{Q_{s^n},x_i,y_i}\right)
     \right]\nn\\
     &\overset{a}{=}(n+1)^{\abs{\cS_\varepsilon}-1}\sum_{y^n \in \cY^n}P^{n}_Y(y^n)\tr\left[ \sum_{x^n \in \cX^n}P^{n}_X(x^n)\left(\bigotimes_{i=1}^{n}\rho^{\cB}_{Q_{s^n},x_i,y_i}\right)\left\{\widehat{\rho}^{\cB^n}_{x^n,y^n} \prec 2^{n\widehat{R}_1}\widehat{\rho}^{\cB^n}_{y^n}\right\}
     \right]\nn\\
     &\overset{b}{\leq}(n+1)^{\abs{\cS_\varepsilon}-1}2^{nt \widehat{R}_1 }\sum_{y^n \in \cY^n}P^{n}_Y(y^n)\tr\left[ \sum_{x^n \in \cX^n}P^{n}_X(x^n)\left(\bigotimes_{i=1}^{n}\rho^{\cB}_{Q_{s^n},x_i,y_i}\right)(\widehat{\rho}^{\cB^n}_{x^n,y^n})^{-t}(\widehat{\rho}^{\cB^n}_{y^n})^{t}
     \right]\nn\\
     &\overset{c}{\leq}(n+1)^{\abs{\cS_\varepsilon}-1}2^{nt \widehat{R}_1}n^{\frac{t\abs{\cX}\abs{\cY}(\abs{\cB}-1)\abs{\cB}}{2}}\abs{\Lambda^{n}_{\abs\cB}}^{t \abs{\cX}\abs{\cY}}\sum_{y^n \in \cY^n}P^{n}_Y(y^n)\tr\left[ \sum_{x^n \in \cX^n}P^{n}_X(x^n) \left(\bigotimes_{i=1}^{n}\rho^{\cB}_{Q_{s^n},x_i,y_i}\right)^{1-t}(\widehat{\rho}^{\cB^n}_{y^n})^{t}
     \right]\nn\\
     &\overset{d}{\leq} (n+1)^{\abs{\cS_\varepsilon}-1 + \frac{t\abs{\cX}\abs{\cY}(\abs{\cB}-1)\abs{\cB}}{2} + (\abs{\cB} - 1)t \abs{\cX}\abs{\cY}}2^{nt \widehat{R}_1}\sum_{y^n \in \cY^n}P^{n}_Y(y^n)\tr\left[ \bigotimes_{i=1}^{n}\left(\sum_{x \in \cX}P_X(x)\left(\rho^{\cB}_{Q_{s^n},x,y_i}\right)^{1-t}\right)(\widehat{\rho}^{\cB^n}_{y^n})^{t}
     \right]\nn\\
     &\overset{e}{\leq} (n+1)^{\abs{\cS_\varepsilon}-1 + \frac{t\abs{\cX}\abs{\cY}(\abs{\cB}-1)\abs{\cB}}{2} + (\abs{\cB} - 1)t \abs{\cX}\abs{\cY}}2^{nt \widehat{R}_1}\sum_{y^n \in \cY^n}P^{n}_Y(y^n)\left(\tr\left[ \left(\bigotimes_{i=1}^{n}\left(\sum_{x \in \cX}P_X(x)\left(\rho^{\cB}_{Q_{s^n},x,y_i}\right)^{1-t}\right)\right)^{\frac{1}{1 - t}}
     \right]\right)^{1-t}\nn\\
     &=(n+1)^{\abs{\cS_\varepsilon}-1 + \frac{t\abs{\cX}\abs{\cY}(\abs{\cB}-1)\abs{\cB}}{2} + (\abs{\cB} - 1)t \abs{\cX}\abs{\cY}}2^{nt \widehat{R}_1}\sum_{y^n \in \cY^n}P^{n}_Y(y^n)\prod_{i=1}^{n}\left(\tr\left[ \left(\sum_{x \in \cX}P_X(x)\left(\rho^{\cB}_{Q_{s^n},x,y_i}\right)^{1-t}\right)^{\frac{1}{1 - t}}
     \right]\right)^{1-t}\nn\\
     &= (n+1)^{\abs{\cS_\varepsilon}-1 + \frac{t\abs{\cX}\abs{\cY}(\abs{\cB}-1)\abs{\cB}}{2} + (\abs{\cB} - 1)t \abs{\cX}\abs{\cY}}2^{nt \widehat{R}_1}\left(\sum_{y \in \cY}P_Y(y)\left(\tr\left[ \left(\left(\sum_{x \in \cX}P_X(x)\rho^{\cB}_{Q_{s^n},x,y}\right)^{1-t}\right)^{\frac{1}{1 - t}}
     \right]\right)^{1-t}\right)^{n}\nn\\
     &\overset{f}{=} (n+1)^{\abs{\cS_\varepsilon}-1 + \frac{t\abs{\cX}\abs{\cY}(\abs{\cB}-1)\abs{\cB}}{2} + (\abs{\cB} - 1)t \abs{\cX}\abs{\cY}}2^{nt( \widehat{R}_1 - I_{1-t}[X;\cB|Y]_{P_{X},P_{Y},Q_{s^n}})}\nn\\
     &{\leq}(n+1)^{\abs{\cS_\varepsilon}-1 + \frac{t\abs{\cX}\abs{\cY}(\abs{\cB}-1)\abs{\cB}}{2} + (\abs{\cB} - 1)t \abs{\cX}\abs{\cY}}2^{nt( \widehat{R}_1 - \min_{Q \in \cP(\cS_\varepsilon)}I_{1-t}[X;\cB|Y]_{P_{X},P_{Y},Q})}\nn\\
     &{\leq}f(n,\abs{\cX},\abs{\cY}, \abs{\cB}, \abs{\cS_\varepsilon},t)2^{nt( \widehat{R}_1 - \min_{Q \in \cP(\cS_\varepsilon)}I_{1-t}[X;\cB|Y]_{P_{X},P_{Y},Q})}\label{TnY_H0_1},
\end{align}
 where $a$ follows from the definition of $\hat{\bbT}_{n,Y}$ in \eqref{universal_test_Y_TS}, $b$ follows from the fact that since $ \widehat{\rho}^{\cB^n}_{x^n,y^n}$ commutes with $\widehat{\rho}^{\cB^n}_{y^n}$ (which follows from Lemma \ref{lemma_commutativity}), we have $\left\{\widehat{\rho}^{\cB^n}_{x^n,y^n} - 2^{n\widehat{R}_1}\widehat{\rho}^{\cB^n}_{y^n} \prec 0\right\} \leq 2^{nt\widehat{R}_1}\left(\widehat{\rho}^{\cB^n}_{x^n,y^n}\right)^{-t}\left(\widehat{\rho}^{\cB^n}_{y^n}\right)^{t}$ for all $t \in (0,1)$, $c$ follows using arguments similar to that used in the proof of Lemma \ref{claim_f}, $d$ follows from Fact \ref{fact_type_size_ub}, $e$ follows from \eqref{Holder_corllary_eq} of Fact \ref{Holder_corllary} and $f$ follows from \eqref{sibsonqmiY;XB_1}.
 
 Similarly, we can upper-bound the second term in \eqref{combined} as follows
 \begin{align}
     &\hspace{14pt}(n+1)^{\abs{\cS_\varepsilon}-1}\tr\left[\left( \bbI^{X^nY^n\cB^n} - \hat{\bbT}_{n}\right)\left(\rho^{{XY\cB}}_{P_X,P_Y,Q_{s^n}}\right)^{\otimes n}
     \right] \leq f(n,\abs{\cX},\abs{\cY}, \abs{\cB}, \abs{\cS_\varepsilon},t)2^{nt( \widehat{R}_1 + \widehat{R}_2 - \min_{Q \in \cP(\cS_\varepsilon)}I_{1-t}[XY;\cB]_{P_{X},P_{Y},Q})}.\label{TnY_H0_2}
 \end{align}
 Thus, for any $s^n \in \cS_\varepsilon^n$,
 from \cref{TnY_H0_1,TnY_H0_2,combined}, we get 
 \begin{align}
     &\tr\left[\left( \bbI^{X^nY^n\cB^n} - \hat{\bbT}_{n}^{X}\right)\rho^{X^nY^n\cB^n}_{P_{X},P_{Y},s^n}\right] \leq 
f(n,\abs{\cX},\abs{\cY}, \abs{\cB}, \abs{\cS_\varepsilon},t)
\left(2^{nt\left(\widehat{R}_1 - \min _{Q \in \cP(\cS_\varepsilon)} I_{1-t}[X;\cB|Y]_{P_{X},P_{Y},Q}\right)} \right.\nn\\
&\hspace{170pt}+ \left.2^{nt\left(\widehat{R}_1 + \widehat{R}_2 -  \min _{Q \in \cP(\cS_\varepsilon)} I_{1-t}[XY;\cB]_{P_{X},P_{Y},Q}\right)}\right)\label{BNB78}.
\end{align}
 The combination of \eqref{epsilon_net_state_density_eq} and \eqref{BNB78} implies \eqref{lemma_achievability_mac_independent_eq1_TS}
for any $s^n \in \cS$.

 Using similar calculations we can prove \eqref{lemma_achievability_mac_independent_eq2_TS} for the test $\hat{\bbT}_{n}^{Y}$ as well. Now, for any collection $\left\{\sigma_{y^n}^{\cB^n}\right\}_{\substack{y^n \in \cY^n}} \subset \cD(\cH_\cB^{\otimes n})$, we have the following upper-bound:
\begin{align*}
    &\tr\left[\hat{\bbT}_{n}^{X}\left(\sum_{\substack{ x^n\in \cX^n\\y^n \in \cY^n}}P^{n}_{X}(x^n)\ketbrasys{x^n}{X^n}\otimes P^{n}_{Y}(y^n)\ketbrasys{y^n}{Y^n} \otimes \sigma^{\cB^n}_{y^n}\right)\right] \\
    &\leq \tr\left[\hat{\bbT}_{n,Y}\left(\sum_{\substack{ x^n\in \cX^n\\y^n \in \cY^n}}P^{n}_{X}(x^n)\ketbrasys{x^n}{X^n}\otimes P^{n}_{Y}(y^n)\ketbrasys{y^n}{Y^n} \otimes \sigma^{\cB^n}_{y^n}\right)\right]\\
    &\overset{a}{=} \tr\left[\hat{\bbT}_{n,Y}\left(\sum_{\substack{ x^n\in \cX^n\\y^n \in \cY^n}}P^{n}_{X}(x^n)\ketbrasys{x^n}{X^n}\otimes P^{n}_{Y}(y^n)\ketbrasys{y^n}{Y^n} \otimes \frac{1}{\abs{S_{y^n}}}\sum_{\pi \in S_{n,y^n}}V^{\cB^n}(\pi)(\sigma^{\cB^n}_{y^n}) {V^{\cB^n}}^{\dagger}(\pi)\right)\right] \\
    &\overset{b}{\leq} n^{\frac{\abs{\cY}\abs{\cB}(\abs{\cB}-1)}{2}}\abs{\Lambda^{n}_{\abs{\cB}}}^{\abs{\cY}}\tr\left[\hat{\bbT}_{n,Y}\left(\sum_{\substack{ x^n\in \cX^n\\y^n \in \cY^n}}P^{n}_{X}(x^n)\ketbrasys{x^n}{X^n}\otimes P^{n}_{Y}(y^n)\ketbrasys{y^n}{Y^n} \otimes \widehat{\rho}^{\cB^n}_{y^n}\right)\right] \\\
    &\overset{c}{\leq}  n^{\frac{\abs{\cY}\abs{\cB}(\abs{\cB}-1)}{2}}\abs{\Lambda^{n}_{\abs{\cB}}}^{\abs{\cY}}2^{-n\widehat{R}_1}\tr\left[\hat{\bbT}_{n,Y}\widehat{\rho}^{X^nY^n\cB^n}_{U,P_X,P_Y}\right]
    \leq  n^{\frac{\abs{\cX}\abs{\cB}(\abs{\cB}-1)}{2}}\abs{\Lambda^{n}_{\abs{\cB}}}^{\abs{\cX}}2^{-n\widehat{R}_1}.
\end{align*}
Here, to show $a$, we define the subgroup $S_{n,y^n} := \left\{\pi \in S_n : \pi(y^n) = (y^n)\right\}$. 
Then, since $\hat{\bbT}_{n,Y}$ is permutation invariant, 
any $\pi \in S_{n,y^n}$ satisfies the relation
\begin{align*}
    &\tr\left[\hat{\bbT}_{n,Y}\left(\sum_{\substack{ x^n\in \cX^n\\y^n \in \cY^n}}P^{n}_{X}(x^n)\ketbrasys{x^n}{X^n}\otimes P^{n}_{Y}(y^n)\ketbrasys{y^n}{Y^n} \otimes \sigma^{\cB^n}_{y^n}\right)\right]\\               
    &{=} \tr\left[\hat{\bbT}_{n,Y}\left(\sum_{\substack{ x^n\in \cX^n\\y^n \in \cY^n}}P^{n}_{X}(x^n)\ketbrasys{x^n}{X^n}\otimes P^{n}_{Y}(y^n)\ketbrasys{y^n}{Y^n} \otimes V^{\cB^n}(\pi)(\sigma^{\cB^n}_{y^n}) {V^{\cB^n}}^{\dagger}(\pi)\right)\right],
\end{align*}
which implies the equality $a$.
Further, the inequality $b$ above, follows from Lemma \ref{lemma_perm_Inv_universal} and $c$ follows from the definition of $\hat{\bbT}_{n,Y}$, given in \eqref{universal_test_Y_TS}. Using similar approaches one can derive a proof for \cref{lemma_achievability_mac_independent_eq4_TS}.

The proofs for \cref{lemma_achievability_mac_independent_eq5_TS,lemma_achievability_mac_independent_eq6_TS} follow similarly. This completes the proof of Lemma \ref{lemma_cq_mac_avht_independent_ts}.
 \endproof

\subsection{Universal Simultaneous Test for Multiple Independence Testing }\label{cq_mac_avht_null_statement_sim}
In the problem discussed in the above subsection, for each classical subsystem, we designed a test that accepts the states where the quantum sub-system is not correlated with this particular classical subsystem with arbitrarily small probability and accepts the states where the quantum subsystem is correlated with each of the classical subsystems with high probability. A drawback of this test is that it won't be able to make any decision for the states where the quantum sub-system is not correlated with that particular classical subsystem. Therefore, we call these above tests as \textit{non-simultaneous universal tests}.

We now aim to construct a test that accepts the states where the quantum subsystem is independent of any of the classical subsystems with arbitrarily low probability and accepts the state where the quantum subsystem is correlated with all the classical subsystems with high probability.
For simplicity, we assume that
$|\cS|< \infty$.

For each $s^n \in \cS^n$ and $\{\sigma^{\cB^n}_{x^n}\}, \{\sigma^{\cB^n}_{y^n}\}\subset \cD(\cH_\cB^{\otimes n})$ and $\sigma^{\cB^n} \in \cD(\cH_\cB^{\otimes n})$, we first consider the following probabilities
    \begin{align}
        &P_{0,s^n} :=  \tr\left[\bbT^{\star}_{n} \rho^{X^nY^n\cB^n}_{P_{X},P_{Y},s^n}\right]\label{bounde0}\\
        & P_{1,Y,\{\sigma^{\cB^n}_{x^n}\}} := \tr\left[\bbT^{\star}_{n} \left({\sum_{\substack{ x^n\in \cX^n\\y^n \in \cY^n}}P^{n}_{X}(x^n)\ketbrasys{x^n}{X^n}\otimes P^{n}_{Y}(y^n)\ketbrasys{y^n}{Y^n} \otimes \sigma^{\cB^n}_{x^n}}\right)\right] ,\label{bounde1Y}\\
       &P_{1,X,\{\sigma^{\cB^n}_{y^n}\}} := \tr\left[\bbT^{\star}_{n} \left(\sum_{\substack{ x^n\in \cX^n\\y^n \in \cY^n}}P^{n}_{X}(x^n)\ketbrasys{x^n}{X^n}\otimes P^{n}_{Y}(y^n)\ketbrasys{y^n}{Y^n} \otimes \sigma^{\cB^n}_{y^n}\right)\right] ,\label{bounde1X}\\
        & P_{1,\sigma^{\cB^n}} := \tr\left[\bbT^{\star}_{n} \left(\sum_{\substack{ x^n\in \cX^n\\y^n \in \cY^n}}P^{n}_{X}(x^n)\ketbrasys{x^n}{X^n}\otimes P^{n}_{Y}(y^n)\ketbrasys{y^n}{Y^n} \otimes \sigma^{\cB^n}\right)\right].\label{bounde1}
    \end{align}
    Our aim here is to design a measurement-based test $\bbT^{\star}_{n}$ in the composite Hilbert space 
    $\cH_X^{\otimes n} \otimes \cH_Y^{\otimes n} \otimes \cH_\cB^{\otimes n}$ such that for each $s^n \in \cS^n$, 
    $P_{0,s^n}$ is arbitrarily high and for each $\{\sigma^{\cB^n}_{x^n}\}, \{\sigma^{\cB^n}_{y^n}\}\subset \cD(\cH_\cB^{\otimes n})$ and $\sigma^{\cB^n} \in \cD(\cH_\cB^{\otimes n})$, the probabilities
    $P_{1,Y,\{\sigma^{\cB^n}_{x^n}\}},  P_{1,X,\{\sigma^{\cB^n}_{y^n}\}}$ and $P_{1,\sigma^{\cB^n}}$ are arbitrarily small.

    This problem is very similar to the problem discussed by Sen in \cite{Sen2021}, in the context of CQ-MAC (classical-quantum multiple-access channel). However, we note here that there is a subtle difference between the problem discussed above and the problem studied in \cite{Sen2021}. To highlight this difference we make the following remark:
    \begin{remark}\label{remark_sen}
        In \cite{Sen2021}, the author considered the singleton collections $H^{\mbox{Sen}}_0, H^{\mbox{Sen}}_{1,Y},H^{\mbox{Sen}}_{1,X},H^{\mbox{Sen}}_1$ which are defined as follows,
        \begin{align*}
            H^{\mbox{Sen}}_0 &:=  \rho^{X^nY^n\cB^n}_{P_X,P_Y},\\
            H^{\mbox{Sen}}_{1,Y} &:= \rho^{Y^n}_{P_Y} \otimes \rho^{X^n\cB^n}_{P_X},\\
            H^{\mbox{Sen}}_{1,X} &:= \rho^{X^n}_{P_X} \otimes \rho^{Y^n\cB^n}_{P_Y},\\
            H^{\mbox{Sen}}_{1} &:= \rho^{X^n}_{P_X} \otimes \rho^{Y^n}_{P_Y} \otimes \rho^{Y^n\cB^n},
        \end{align*}
    where
    \begin{align*}
        \rho^{X^nY^n\cB^n}_{P_X,P_Y} &:= \sum_{\substack{x^n\in \cX^n\\y^n \in \cY^n}}P^{n}_{X}(x^n)\ketbrasys{x^n}{X^n}\otimes P^{n}_{Y}(y^n)\ketbrasys{y^n}{Y^n}\otimes \rho^{\cB^n}_{x^n,y^n},\\
        \rho^{X^n\cB^n}_{P_Y} &:= \tr_{Y^n}\left[\rho^{X^nY^n\cB^n}_{P_X,P_Y}\right], \rho^{X^n}_{P_X} :=\tr_{Y^n\cB^n}\left[\rho^{X^nY^n\cB^n}_{P_X,P_Y}\right], \\
        \rho^{Y^n\cB^n}_{P_Y} &:= \tr_{X^n}\left[\rho^{X^nY^n\cB^n}_{P_X,P_Y}\right], \rho^{Y^n}_{P_Y} :=\tr_{X^n\cB^n}\left[\rho^{X^nY^n\cB^n}_{P_X,P_Y}\right]. 
    \end{align*}
    Here the aim is to find a hypothesis test-based measurement that accepts $H^{\mbox{Sen}}_0$ with very high probability and accepts $H^{\mbox{Sen}}_{1,X}, H^{\mbox{Sen}}_{1,Y} $ and $H^{\mbox{Sen}}_{1}$ with arbitrarily low probability. Unlike $H^{\mbox{Sen}}_0$, which has only a single state, in $H_0$,  mentioned in \eqref{GIT_hypotheses}, not only do we have more than one state, but also we have a dependency of the index $s^n$ i.e. for each $s^n \in \cS^n$, the state changes. Similarly, each of $H^{\mbox{Sen}}_{1,X}, H^{\mbox{Sen}}_{1,Y} $ and $H^{\mbox{Sen}}_{1}$ has only a single state and that too is one of the correct marginals of $\rho^{X^nY^n\cB^n}_{P_X,P_Y}$. However, the collections $H_{1,X}, H_{1,Y}$ and $ H_{1}$ respectively mentioned in \cref{Omega2states_arbit_X,Omega2states_arbit_Y,Omega2states_arbit_None}, can be arbitrarily large and may not even be the correct marginals of the states in $H_0$, mentioned in \eqref{GIT_hypotheses}. 
\end{remark}   
    Despite these subtle differences as mentioned in Remark \ref{remark_sen}, 
we solve the above problem using the projector designed in the proof of Theorem \ref{theorem_generalised_independence_test_arbitrary_varying}, along with the techniques developed in \cite{Sen2021}, by increasing the dimension of the composite Hilbert space in a controlled way and mapping the original collection of states $\left\{\rho^{X^nY^n\cB^n}_{P_{X},P_{Y},s^n}\right\}_{s^n \in \cS^n}$ and $\left\{\sigma^{X^nY^n\cB^n}_{P_{X},P_{Y}, \{\sigma^{\cB^n}_{x^n,y^n}\}} \right\}_{\substack{\{\sigma^{\cB^n}_{x^n,y^n}\} \subset \cD(\cH_\cB^{\otimes n})}}$ inside the enlarged space using an inclusion isometric map:
    
    $$\cE^{\star} : \cH_X^{\otimes n} \otimes \cH_Y^{\otimes n} \otimes \cH_\cB^{\otimes n} \xhookrightarrow{} \cH_{X'}\otimes \cH_{Y'}\otimes\cH_{\cB'},$$
because $\cH_{X'},\cH_{Y'}$ and $\cH_{\cB'}$ are enlarged Hilbert spaces containing $\cH_X^{\otimes n},\cH_Y^{\otimes n}$ and $\cH_\cB^{\otimes n}$ respectively. 
Using the isometry $\cE^{\star}(\cdot)$
and 
a projector ${\bbT}^{\star}_{n}$ in the enlarged composite Hilbert space 
$\cH_{X'}\otimes \cH_{Y'}\otimes\cH_{\cB'}$,
we define 
the probabilities below 
    \begin{align}
        \widehat{P}_{0,s^n} ({\bbT}^{\star}_{n})&:= \tr\left[ {\bbT}^{\star}_{n}  \cE^{\star}\left(\rho^{X^nY^n\cB^n}_{P_{X},P_{Y},s^n}\right)\right],\label{eh1} \\
        \widehat{P}_{1,Y,\{\sigma^{\cB^n}_{x^n}\}} ({\bbT}^{\star}_{n})&:= 
        \tr\left[{\bbT}^{\star}_{n} \left(\tr_{X'\cB'}\left[\cE^{\star}\left(\rho^{X^nY^n\cB^n}_{P_{X},P_{Y},s^n}\right)\right] \otimes \tr_{Y'}\left[\cE^{\star}\left( \sigma^{X^nY^n\cB^n}_{P_{X},P_{Y}, \{\sigma^{\cB^n}_{x^n,y^n}\}}\right)\right]\right)\right] ,\label{eh2X}\\
        \widehat{P}_{1,X,\{\sigma^{\cB^n}_{y^n}\}} ({\bbT}^{\star}_{n}) &:= \tr \left[{\bbT}^{\star}_{n}  \left(\tr_{Y'\cB'}\left[\cE^{\star}\left(\rho^{X^nY^n\cB^n}_{P_{X},P_{Y},s^n}\right)\right]  \otimes \tr_{X'}\bigg[\cE^{\star}\bigg( \sigma^{X^nY^n\cB^n}_{P_{X},P_{Y}, \{\sigma^{\cB^n}_{x^n,y^n}\}}\bigg)\bigg] \right)\right],\label{eh2Y}\\
        \widehat{P}_{1,\sigma^{\cB^n}} ({\bbT}^{\star}_{n}) &:=\tr \left[{\bbT}^{\star}_{n}  \left(\tr_{Y'\cB'}\left[\cE^{\star}\left(\rho^{X^nY^n\cB^n}_{P_{X},P_{Y},s^n}\right)\right]  \otimes \tr_{X'\cB'}\left[\cE^{\star}\left(\rho^{X^nY^n\cB^n}_{P_{X},P_{Y},s^n}\right)\right]  \otimes \tr_{X'Y'}\left[\cE^{\star}\left( \sigma^{X^nY^n\cB^n}_{P_{X},P_{Y}, \{\sigma^{\cB^n}_{x^n,y^n}\}}\right)\right]\right)\right].\label{eh2noth}
    \end{align}
Then, we have the following lemma, which is useful for generalized independent testing.

\begin{lemma}\label{theorem_sen_MAC_generalised_indep}
For $t \in (0,1)$, let the pair  ($\widehat{R}_1,\widehat{R}_2$) satisfy the following conditions
\begin{align}
    0 &< \widehat{R}_1 < \min _{Q \in \cP(\cS)} I_{1-t}[X;\cB|Y]_{P_{X},P_{Y},Q},\label{hatR_1c}\\
    0 &< \widehat{R}_2 < \min _{Q \in \cP(\cS)} I_{1-t}[Y;\cB|X]_{P_{X},P_{Y},Q},\label{hatR_2c}\\
    0 &< \widehat{R}_1 + \widehat{R}_2 < \min _{Q \in \cP(\cS)} I_{1-t}[XY;\cB]_{P_{X},P_{Y},Q},\label{hatSUMc}
\end{align}
where $I_{1-t}[Y;\cB|X]_{P_{X},P_{Y},Q}, I_{1-t}[X;\cB|Y]_{P_{X},P_{Y},Q}$ and $I_{1-t}[XY;\cB]_{P_{X},P_{Y},Q}$ are defined in \cref{sibsonqmiY;XB_1,sibsonqmiX;YB_1,sibsonqmiXY;B_1} respectively. 
Then, for any positive integer $n$, there exists a POVM ${\bbT}^{\star}_{n}$ to satisfy the following conditions.
(i) The relations
    \begin{align}
       \forall \{\sigma^{\cB^n}_{x^n}\} \subset \cD(\cH_\cB^{\otimes n}) &, 
       \widehat{P}_{1,Y,\{\sigma^{\cB^n}_{x^n}\}} ({\bbT}^{\star}_{n})\leq g_1(n,\abs{\cX},\abs{\cB})2^{-n\widehat{R}_2+1},\label{accept_Omega2X_optimal_theo}\\
        \forall \{\sigma^{\cB^n}_{y^n}\} \subset \cD(\cH_\cB^{\otimes n}) &,  
        \widehat{P}_{1,X,\{\sigma^{\cB^n}_{y^n}\}} ({\bbT}^{\star}_{n}) \leq g_2(n,\abs{\cY},\abs{\cB})2^{-n\widehat{R}_2+1},\label{accept_Omega2Y_optimal_theo}\\
        \forall \sigma^{\cB^n} \in \cD(\cH_\cB^{\otimes n}) &, 
    \widehat{P}_{1,\sigma^{\cB^n}} ({\bbT}^{\star}_{n})\leq g_3(n,\abs{\cB})2^{-n(\widehat{R}_1 + \widehat{R}_2)+1}\label{accept_Omega2XY_optimal_theo}
    \end{align}
hold, where 
  $g_1(n,\abs{\cX},\abs{\cB}), g_2(n,\abs{\cY},\abs{\cB})$ and $g_3(n,\abs{\cB})$ are defined in \cref{g_1_f,g_3_f} respectively.
(ii) In addition, for any arbitrary $\delta \in (0,1)$, there exists a sufficiently 
large $n_0$ such that the relation
    \begin{align}
        \forall s^n \in \cS^n &, \widehat{P}_{0,s^n} ({\bbT}^{\star}_{n}) \geq 1 - 76\delta \label{reject_Omega1_optimal_theo}
    \end{align}
holds for any integer $n \ge n_0$.
\end{lemma}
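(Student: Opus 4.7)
The plan is to apply Sen's tilting-and-augmentation technique (Fact \ref{corol_sen2}) to the non-simultaneous projectors $\hat{\bbT}_{n,X}$, $\hat{\bbT}_{n,Y}$, $\hat{\bbT}_{n}$ from \eqref{universal_test_X_TS}--\eqref{universal_test_None_TS}, fusing them into a single simultaneous test $\bbT^\star_n$ acting on an enlarged Hilbert space $\cH_{X'}\otimes\cH_{Y'}\otimes\cH_{\cB'}$. By Lemma \ref{lemma_commutativity} these three projectors pairwise commute, and the analysis in Section \ref{S5-C} in fact furnishes, per individual projector, the exponentially small bounds on the acceptance of the composite alternatives $H_{1,X}$, $H_{1,Y}$, $H_{1}$ that will feed the construction.

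To build $\bbT^\star_n$, in the base space $\cH_X^{\otimes n}\otimes\cH_Y^{\otimes n}\otimes\cH_\cB^{\otimes n}$ I take $W_0$ to be the common acceptance range of $\hat{\bbT}_{n,X}\hat{\bbT}_{n,Y}\hat{\bbT}_{n}$ (well-defined by commutativity) and $W_1, W_2, W_3$ to be auxiliary subspaces tailored so that each composite alternative projects onto at most one of them nontrivially. Adjoining three orthogonal copies of the base space and setting up the tilting isometries $\cE_{j,\alpha}$ of Fact \ref{corol_sen2} with a fixed constant $\alpha\in(0,1)$, I let $\bbT^\star_n$ be the orthogonal projector onto
\begin{equation*}
\mathbf{W}_\alpha \;=\; W_0 \,+\, \sum_{j=1}^{3}\cE_{j,\alpha}(W_j).
\end{equation*}
The isometry $\cE^\star$ used in \eqref{eh1}--\eqref{eh2noth} is the natural inclusion of the base space into this augmented space.

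For the rejection bounds \eqref{accept_Omega2X_optimal_theo}--\eqref{accept_Omega2XY_optimal_theo}, I expand each state inside $\widehat{P}_{1,Y,\{\sigma^{\cB^n}_{x^n}\}}$, $\widehat{P}_{1,X,\{\sigma^{\cB^n}_{y^n}\}}$, $\widehat{P}_{1,\sigma^{\cB^n}}$ as a mixture of pure components $|h\rangle$ and apply the upper inequality of \eqref{corol_sen2_eq1} to each, trading $\norm{\Pi_{\mathbf{W}_\alpha}|h\rangle}{2}^{2}$ for a sum of $\epsilon_j=\norm{\Pi_{W_j}|h\rangle}{2}^{2}$. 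Averaging back, each $\epsilon_j$-term becomes a trace against one of the alternative states; the universality of $\hat{\bbT}_{n,X}$, $\hat{\bbT}_{n,Y}$, $\hat{\bbT}_{n}$ in $P_X,P_Y$ and in the $\sigma$-collections then lets me import the single-projector bounds of the form $g_i(n,\cdot,\abs{\cB})2^{-n\hat{R}_j}$ proved in Section \ref{S5-C}, and absorbing the constant $3l/\alpha^2$ from \eqref{corol_sen2_eq1} into the overall prefactor $2$ produces the target inequalities. The acceptance bound \eqref{reject_Omega1_optimal_theo} follows from the lower inequality $\norm{\Pi_{\mathbf{W}_\alpha}|h\rangle}{2}^{2}\ge \epsilon_0$ combined with Gao's union bound (Fact \ref{Gao}): for $\rho^{X^nY^n\cB^n}_{P_X,P_Y,s^n}\in H_0$,
\begin{equation*}
\tr\!\left[\hat{\bbT}_{n,X}\hat{\bbT}_{n,Y}\hat{\bbT}_{n}\,\rho^{X^nY^n\cB^n}_{P_X,P_Y,s^n}\,\hat{\bbT}_{n}\hat{\bbT}_{n,Y}\hat{\bbT}_{n,X}\right] \;\ge\; 1 - 4\sum_{i}\tr\!\left[(\bbI-\hat{\bbT}_{n,i})\rho^{X^nY^n\cB^n}_{P_X,P_Y,s^n}\right],
\end{equation*}
and each summand on the right-hand side vanishes uniformly in $s^n$ under the rate conditions \eqref{hatR_1c}--\eqref{hatSUMc}, giving the $1-76\delta$ bound at large $n$ after tracking the constants from the tilting step.

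The main obstacle is the composite nature of the three alternative hypotheses here, which contrasts with the singleton-alternative framework Sen originally worked with (see Remark \ref{remark_sen}): the auxiliary subspaces $W_j$ must be chosen so that each $\epsilon_j$ is controlled \emph{uniformly} over the arbitrary collections $\{\sigma^{\cB^n}_{x^n}\}$, $\{\sigma^{\cB^n}_{y^n}\}$, $\sigma^{\cB^n}$. This is achieved by exploiting that $\hat{\bbT}_{n,X},\hat{\bbT}_{n,Y},\hat{\bbT}_{n}$ are themselves universal in all these parameters, so the per-projector bounds underlying Lemma \ref{lemma_cq_mac_avht_independent_ts} persist uniformly. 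The corresponding uniformity in $s^n\in\cS^n$ on the $H_0$ side is obtained via the permutation-invariance argument of Section \ref{S5-C} together with the subsection-wide assumption $\abs{\cS}<\infty$.
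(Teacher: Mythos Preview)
Your proposal rests on a false claim: the three projectors do \emph{not} pairwise commute. Lemma \ref{lemma_commutativity} (and its use in Section \ref{S5-C}) only gives that $\widehat{\rho}^{\cB^n}_{x^n,y^n}$ commutes with each of $\widehat{\rho}^{\cB^n}_{x^n}$, $\widehat{\rho}^{\cB^n}_{y^n}$, $\rho^{\cB^n}_{U,n}$, which yields $[\hat{\bbT}_{n},\hat{\bbT}_{n,X}]=[\hat{\bbT}_{n},\hat{\bbT}_{n,Y}]=0$; but $\widehat{\rho}^{\cB^n}_{x^n}$ and $\widehat{\rho}^{\cB^n}_{y^n}$ need not commute, so in general $[\hat{\bbT}_{n,X},\hat{\bbT}_{n,Y}]\neq 0$. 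The paper's proof flags exactly this point as the central obstruction: the product $\hat{\bbT}_{n,X}\hat{\bbT}_{n,Y}\hat{\bbT}_{n}$ is not a projector, and your $W_0$ (the ``common acceptance range'') is not a subspace. Since your whole construction---the definition of $\mathbf{W}_\alpha$, the use of the lower bound in \eqref{corol_sen2_eq1} combined with Gao's union bound---hinges on $W_0$ being a genuine subspace, the argument does not go through. Separately, absorbing the factor $3l/\alpha^2$ from the upper bound of \eqref{corol_sen2_eq1} into the prefactor $2$ in $2^{-n\widehat{R}_j+1}$ is not possible for any fixed $\alpha\in(0,1)$.

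The paper's route is the dual one. It works with the \emph{complements} $\cW_{X,x^n,y^n}$, $\cW_{Y,x^n,y^n}$, $\cW_{x^n,y^n}$ of the three projectors; non-commutativity means their span could fill $\cH_\cB^{\otimes n}$, so Sen's tilting isometries are applied to $\cW_{X,x^n,y^n}$ and $\cW_{Y,x^n,y^n}$ to push them into fresh orthogonal directions inside an enlarged $\cH_{\cB'}$, and $\bbT^\star_n$ is built from $(\bbI-\Pi_{\cW'})\mathbf{P}_{\cH_\cB^{\otimes n}}(\bbI-\Pi_{\cW'})$ with $\cW'$ the span of the tilted complements. The roles of the two sides of \eqref{corol_sen2_eq1} are then inverted relative to your sketch: the \emph{upper} bound $\frac{3l}{\alpha^2}\sum_j\epsilon_j$ controls the mass of $H_0$-states on $\cW'$ (giving \eqref{reject_Omega1_optimal_theo} once $n$ is large enough that the per-projector $H_0$-rejection in Lemma \ref{lemma_cq_mac_avht_independent} is $\le\delta^3$), while the alternative-rejection bounds \eqref{accept_Omega2X_optimal_theo}--\eqref{accept_Omega2XY_optimal_theo} come from the containment $\cM_{X,a_r,\delta}(\cW_{X,x^n,y^n})\subset\cW'$ and the isometry property of the tilts. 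The extra factor $2$ arises by choosing the auxiliary dimension $|\cA|$ large enough that a residual $O(\delta|\cB|^n/\sqrt{|\cA|})$ error is dominated by the main exponential term.
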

\begin{proof}
    See Appendix \ref{proof_theorem_sen_MAC_generalised_indep} for the proof. 
\end{proof}

We need the following lemma to prove Lemma \ref{theorem_sen_MAC_generalised_indep}. 
\begin{lemma}\label{lemma_cq_mac_avht_independent}
    There exist projectors $\hat{\bbT}_{n,X},\hat{\bbT}_{n,Y}$ and $\hat{\bbT}_{n}$ which, $\forall s^n \in \cS^n$,$\forall\{\sigma^{\cB^n}_{x^n}\}, \forall\{\sigma^{\cB^n}_{y^n}\}, $ and $\forall\sigma^{\cB^n}$ satisfy the following
\begin{align}
     \tr\left[\left( \bbI^{X^nY^n\cB^n} - \hat{\bbT}_{n,X}\right)\rho^{X^nY^n\cB^n}_{P_{X},P_{Y},s^n}\right] &\leq 
f(n,\abs{\cX},\abs{\cY}, \abs{\cB}, \abs{\cS},t)2^{nt\left(\widehat{R}_2 - \min _{Q \in \cP(\cS)} I_{1-t}[Y;\cB|X]_{P_{X},P_{Y},Q}\right)},\label{lemma_achievability_mac_independent_eq1}\\
    \tr\left[\hat{\bbT}_{n,X}\left({\rho^{\cH_Y^{\otimes n}}_{P_Y}} \otimes \sigma^{X^n\cB^n}_{P_{X}, \{\sigma^{\cB^n}_{x^n}\}}\right)\right]
&\leq g_1(n,\abs{\cX},\abs{\cB})2^{-n\widehat{R}_2},
\label{lemma_achievability_mac_independent_eq2}\\
 \tr\left[\left( \bbI^{X^nY^n\cB^n} - \hat{\bbT}_{n,Y}\right)\rho^{X^nY^n\cB^n}_{P_{X},P_{Y},s^n}\right] &\leq 
f(n,\abs{\cX},\abs{\cY}, \abs{\cB}, \abs{\cS},t) 2^{nt\left(\widehat{R}_1 -  \min _{Q \in \cP(\cS)}I_{1-t}[X;\cB|Y]_{P_{X},P_{Y},Q}\right)},\label{lemma_achievability_mac_independent_eq3}\\
    \tr\left[\hat{\bbT}_{n,Y}\left({\rho^{\cH_X^{\otimes n}}_{P_X}} \otimes\sigma^{Y^n\cB^n}_{P_{Y}, \{\sigma^{\cB^n}_{y^n}\}}\right)\right]
&\leq g_2(n,\abs{\cY},\abs{\cB})2^{-n\widehat{R}_1}, \label{lemma_achievability_mac_independent_eq4}\\
 \tr\left[\left( \bbI^{X^nY^n\cB^n} - \hat{\bbT}_{n}\right)\rho^{X^nY^n\cB^n}_{P_{X},P_{Y},s^n}\right] &\leq 
f(n,\abs{\cX},\abs{\cY}, \abs{\cB}, \abs{\cS},t) 2^{nt\left(\widehat{R}_1 + \widehat{R}_2 -  \min _{Q \in \cP(\cS)} I_{1-t}[XY;\cB]_{P_{X},P_{Y},Q}\right)},\label{lemma_achievability_mac_independent_eq5}\\
    \tr\left[\hat{\bbT}_{n}\left({\rho^{\cH_X^{\otimes n}}_{P_X}} \otimes{\rho^{\cH_Y^{\otimes n}}_{P_Y}} \otimes \sigma^{\cB^n}\right)\right]
&\leq g_3(n,\abs{\cB})2^{-n(\widehat{R}_1 + \widehat{R}_2)},\label{lemma_achievability_mac_independent_eq6}
    \end{align}
where $\widehat{R}_1,\widehat{R}_2 > 0$, $t \in (0,1)$, $f(n,\abs{\cX},\abs{\cY}, \abs{\cB}, \abs{\cS},t)$, $g_1(n,\abs{\cX},\abs{\cB}), g_2(n,\abs{\cY},\abs{\cB})$, $g_3(n,\abs{\cB})$ are defined in \cref{g_1_f,g_3_f} respectively and the states $\sigma^{X^n\cB^n}_{P_{X}, \{\sigma^{\cB^n}_{x^n}\}}, \sigma^{Y^n\cB^n}_{P_{Y}, \{\sigma^{\cB^n}_{y^n}\}}, \sigma^{\cB^n}$ are the respective marginals of the state mentioned in \eqref{Omega2states_arbit}.
\end{lemma}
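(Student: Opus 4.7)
The plan is to verify the six claimed bounds for the three projectors $\hat{\bbT}_{n,X}$, $\hat{\bbT}_{n,Y}$, $\hat{\bbT}_{n}$ already defined in \eqref{universal_test_X_TS}--\eqref{universal_test_None_TS}. This amounts to extracting the intermediate estimates that appear inside the proof of Lemma \ref{lemma_cq_mac_avht_independent_ts} in Subsection \ref{S5-C}; since $|\cS|<\infty$ throughout Subsection \ref{cq_mac_avht_null_statement_sim}, no $\varepsilon$-net approximation is required and the factor $(1+\varepsilon)^n$ present in Lemma \ref{lemma_cq_mac_avht_independent_ts} disappears.

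For each of the three ``accept $H_0$'' inequalities \eqref{lemma_achievability_mac_independent_eq1}, \eqref{lemma_achievability_mac_independent_eq3}, \eqref{lemma_achievability_mac_independent_eq5}, I would illustrate the steps with $\hat{\bbT}_{n,Y}$: (i) use permutation invariance of $\hat{\bbT}_{n,Y}$ together with \eqref{inv_type_set_size_ub} to replace $\rho^{X^nY^n\cB^n}_{P_X,P_Y,s^n}$ by $(\rho^{XY\cB}_{P_X,P_Y,Q_{s^n}})^{\otimes n}$ up to the polynomial prefactor $(n+1)^{|\cS|-1}$; (ii) invoke Lemma \ref{lemma_commutativity} so that $\hat{\rho}^{\cB^n}_{x^n,y^n}$ commutes with $\hat{\rho}^{\cB^n}_{y^n}$, giving the operator bound $\{\hat{\rho}^{\cB^n}_{x^n,y^n}\prec 2^{n\widehat R_1}\hat{\rho}^{\cB^n}_{y^n}\}\preceq 2^{nt\widehat R_1}(\hat{\rho}^{\cB^n}_{x^n,y^n})^{-t}(\hat{\rho}^{\cB^n}_{y^n})^t$ for every $t\in(0,1)$; (iii) apply a direct analogue of Lemma \ref{claim_f} to absorb $(\bigotimes_i\rho^{\cB}_{Q_{s^n},x_i,y_i})^{t}(\hat{\rho}^{\cB^n}_{x^n,y^n})^{-t}$ into a polynomial prefactor of size $n^{\frac{t|\cX||\cY|(|\cB|-1)|\cB|}{2}}|\Lambda^{n}_{|\cB|}|^{t|\cX||\cY|}$ times $\bbI^{\cB^n}$; (iv) use H\"older's inequality (Fact \ref{Holder_corllary}) to trade $(\hat{\rho}^{\cB^n}_{y^n})^{t}$ for a $\tfrac{1}{1-t}$-norm and identify the resulting expression with $2^{-nt\,I_{1-t}[X;\cB|Y]_{P_X,P_Y,Q_{s^n}}}$ via \eqref{sibsonqmiX;YB_1}; (v) minimize over $Q\in\cP(\cS)$ since $Q_{s^n}$ ranges over empirical distributions. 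The bounds \eqref{lemma_achievability_mac_independent_eq1} and \eqref{lemma_achievability_mac_independent_eq5} follow by the mirror calculation (swap $x^n$ and $y^n$) and by replacing $\hat{\rho}^{\cB^n}_{y^n}$ with $\rho^{\cB^n}_{U,n}$ respectively, the relevant commutativity being supplied in each case by Lemma \ref{lemma_commutativity}.

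For the three ``reject $H_1$'' inequalities \eqref{lemma_achievability_mac_independent_eq2}, \eqref{lemma_achievability_mac_independent_eq4}, \eqref{lemma_achievability_mac_independent_eq6}, the strategy is to exploit permutation invariance of each test together with the de-symmetrization Lemma \ref{lemma_perm_Inv_universal}. For instance, to prove \eqref{lemma_achievability_mac_independent_eq4}, I would symmetrize each $\sigma^{\cB^n}_{y^n}$ over the stabilizer subgroup $S_{n,y^n}:=\{\pi\in S_n:\pi(y^n)=y^n\}$ (which leaves both $\hat{\bbT}_{n,Y}$ and $\rho^{\cH_X^{\otimes n}}_{P_X}\otimes\rho^{\cH_Y^{\otimes n}}_{P_Y}$ invariant); applying Lemma \ref{lemma_perm_Inv_universal} blockwise to each homogeneous block of $y^n$ bounds the symmetrized $\sigma^{\cB^n}_{y^n}$ by $g_2(n,|\cY|,|\cB|)\,\hat{\rho}^{\cB^n}_{y^n}$, after which the threshold in \eqref{universal_test_Y_TS} contributes the exponential factor $2^{-n\widehat R_1}$. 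The bounds \eqref{lemma_achievability_mac_independent_eq2} and \eqref{lemma_achievability_mac_independent_eq6} are handled analogously, with $\hat{\rho}^{\cB^n}_{x^n}$ and $\rho^{\cB^n}_{U,n}$ in place of $\hat{\rho}^{\cB^n}_{y^n}$ and with prefactors $g_1(n,|\cX|,|\cB|)$ and $g_3(n,|\cB|)$ respectively.

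The main technical obstacle will be the blockwise application of Lemma \ref{lemma_perm_Inv_universal}: one must verify that symmetrizing over the stabilizer $S_{n,y^n}$ produces a state whose restrictions to the homogeneous blocks of $y^n$ are individually permutation invariant, so that the lemma can be invoked block-by-block and yields the target $\hat{\rho}^{\cB^n}_{y^n}$ rather than the cruder uniform state $\rho^{\cB^n}_{U,n}$; an analogous verification is needed for $\hat{\rho}^{\cB^n}_{x^n}$ in the proof of \eqref{lemma_achievability_mac_independent_eq2}. Modulo this bookkeeping, the proof is a direct transcription of the estimates \eqref{TnY_H0_1}--\eqref{TnY_H0_2} already carried out in Subsection \ref{S5-C}.
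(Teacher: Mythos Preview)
Your proposal is correct and follows essentially the same approach as the paper's proof in Appendix~\ref{proof_lemma_cq_mac_avht_independent}: the same projectors \eqref{universal_test_X_TS}--\eqref{universal_test_None_TS} are used, the ``accept $H_0$'' bounds are obtained via permutation invariance, \eqref{inv_type_set_size_ub}, Lemma~\ref{lemma_commutativity}, the analogue of Lemma~\ref{claim_f}, and H\"older's inequality (Fact~\ref{Holder_corllary}), and the ``reject $H_1$'' bounds follow by symmetrizing over the stabilizer subgroup $S_{n,x^n}$ (resp.\ $S_{n,y^n}$) and applying Lemma~\ref{lemma_perm_Inv_universal} blockwise. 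The only cosmetic difference is that the paper illustrates the calculation with $\hat{\bbT}_{n,X}$ while you illustrate with $\hat{\bbT}_{n,Y}$, and your observation that the $(1+\varepsilon)^n$ factor disappears when $|\cS|<\infty$ is exactly why the bounds here are sharper than those in Lemma~\ref{lemma_cq_mac_avht_independent_ts}.
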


 \begin{proof}
     See Appendix \ref{proof_lemma_cq_mac_avht_independent} for the proof.
 \end{proof}
\begin{remark}
    Note that for the case of general $\cS$, if $\left\{\rho_{x,y,s}^{\cB} : x \in \cX, y \in \cY, s \in \cS\right\}$ satisfies Assumption \ref{NM9}, Lemma \ref{theorem_sen_MAC_generalised_indep} can be directly extended for the case of general $\cS$, using \eqref{epsilon_net_state_density_eq}. However, we have omitted this calculation for simplicity.
\end{remark}

We now mention a corollary of Lemma \ref{theorem_sen_MAC_generalised_indep}, which will be required to give an alternative proof (see subsection \ref{alternative_proof_lemma_rand_capacity_avmac}) of achievability for deriving the capacity of Classical-Quantum Arbitrarily Varying Channels (CQ-AVMAC) under randomization of codes. 

\begin{corollary}\label{theo_gen_indep_mac_corollary}
    Consider the collection $\left\{\rho^{X^nY^n\cB^n}_{P_{X},P_{Y},s^n}\right\}_{s^n \in \cS^n}$, where for each $s^n \in \cS^n, \rho^{X^nY^n\cB^n}_{P_{X},P_{Y},s^n}$ is defined in \eqref{Omega1states_sn}. Then, for any arbitrary $\delta \in (0,1)$ and large enough $n$, the POVM ${\bbT}^{\star}_{n}$ mentioned in \eqref{tilted_augmented_intersection_test} satisfies the following:
\begin{align}
    \tr\left[\left( \bbI^{X'Y'\cB'} - {\bbT}^{\star}_{n}\right)\rho^{X'Y'\cB'}_{P_{X},P_{Y},s^n}\right] &\leq 76\delta,\label{reject_Omega1_optimal_corollary}\\
    \tr\left[{\bbT}^{\star}_{n}\left(\rho^{Y'}_{P_{Y}} \otimes \rho^{X'\cB'}_{P_{X},s^n} \right)\right] &\leq g_1(n,\abs{\cX},\abs{\cB})2^{-n\widehat{R}_2+1},\label{accept_Omega2X_optimal_corollary}\\
    \tr\left[{\bbT}^{\star}_{n}\left(\rho^{X'}_{P_{X}} \otimes \rho^{Y'\cB'}_{P_{Y},s^n} \right)\right] &\leq g_2(n,\abs{\cY},\abs{\cB})2^{-n\widehat{R}_1+1},\label{accept_Omega2Y_optimal_corollary}\\
\tr\left[{\bbT}^{\star}_{n}\left(\rho^{X'}_{P_{X}} \otimes\rho^{Y'}_{P_{Y}} \otimes \rho^{\cB'}_{s^n} \right)\right] &\leq g_3(n,\abs{\cB})2^{-n(\widehat{R}_1+\widehat{R}_2) +1},\label{accept_Omega2XY_optimal_corollary}
\end{align}
where $(\widehat{R}_1,\widehat{R}_2)$ satisfies \cref{hatR_1c,hatR_2c,hatSUMc}, and for each $s^n \in \cS^n$, we define the states as $\rho^{X'\cB'}_{P_{X},s^n}, \rho^{Y'\cB'}_{P_{Y},s^n}$ and $\rho^{\cB'}_{s^n}$ as follows,
\begin{align*}
    \rho^{X'\cB'}_{P_{X}, s^n} &:= \tr_{Y'} \left[\rho^{X^nY^n\cB^n}_{P_{X},P_{Y},s^n}\right] = \frac{1}{\abs{\cA}}\sum_{\substack{x^n\in \cX^n , r \in [\abs{\cA}]}} P^{n}_{X}(x^n) \ketbrasys{x^n,a_{r}}{X'}\otimes   \rho^{\cB'}_{(x^n,a_{r}),s^n,\delta},\\
    \rho^{Y'\cB'}_{P_{Y}, s^n} &:= \tr_{X'} \left[\rho^{X^nY^n\cB^n}_{P_{X},P_{Y},s^n}\right] = \frac{1}{\abs{\cA}}\sum_{\substack{y^n\in \cY^n ,t \in [\abs{\cA}]}} P^{n}_{Y}(y^n) \ketbrasys{y^n,b_{t}}{Y'}\otimes  \rho^{\cB'}_{(y^n,b_{t}),s^n,\delta},\\
\rho^{\cB'}_{s^n} &:= \tr_{X'Y'} \left[\rho^{X^nY^n\cB^n}_{P_{X},P_{Y},s^n}\right] = \frac{1}{\abs{\cA}^2}\sum_{\substack{x^n \in \cX^n ,y^n\in \cY^n \\ r \in [\abs{\cA}], t \in [\abs{\cA}]}} P^{n}_{X}(x^n)P^{n}_{Y}(y^n) \rho^{\cB'}_{(x^n,a_{r}),(y^n,b_{t}),s^n\delta},
\end{align*}
where for each $x^n \in \cX^n, r \in [\abs{\cA}]$ and $y^n \in \cY^n, t \in [\abs{\cA}]$ we have,
\begin{align*}
    \rho^{\cB'}_{(x^n,a_{r}),s^n,\delta} &:= \frac{1}{\abs{\cA}}\sum_{y^n, t \in [\abs{\cA}]}P^{n}_{Y}(y^n) \rho^{\cB'}_{(x^n,a_{r}),(y^n,b_{t}),s^n,\delta}, \\
    \rho^{\cB'}_{(y^n,b_{t}),\delta} &:= \frac{1}{\abs{\cA}}\sum_{x^n, r \in [\abs{\cA}]}P^{n}_{X}(x^n) \rho^{\cB'}_{(x^n,a_{r}),(y^n,b_{t}),s^n,\delta},
\end{align*}
and $\rho^{\cB'}_{(x^n,a_{r}),(y^n,b_{t}),s^n,\delta}$ is defined in \eqref{tilted_rho_inner}.
\end{corollary}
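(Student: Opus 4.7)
The plan is to show that this corollary is a direct specialization of Lemma \ref{theorem_sen_MAC_generalised_indep}, obtained by choosing the free collections $\{\sigma^{\cB^n}_{x^n}\},\{\sigma^{\cB^n}_{y^n}\}$ and the free state $\sigma^{\cB^n}$ appearing in the hypotheses $H_{1,X}$, $H_{1,Y}$, $H_1$ of the lemma to be the appropriate $Y^n$-, $X^n$- and $(X^nY^n)$-marginals of the true $H_0$-state $\rho^{X^nY^n\cB^n}_{P_X,P_Y,s^n}$, which by \eqref{marginal_X}--\eqref{marginal_None} are precisely $\rho^{\cB^n}_{x^n,s^n}$, $\rho^{\cB^n}_{y^n,s^n}$ and $\rho^{\cB^n}_{s^n}$.

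First, I would identify the enlarged state $\rho^{X'Y'\cB'}_{P_X,P_Y,s^n}$ appearing in the corollary with $\cE^\star(\rho^{X^nY^n\cB^n}_{P_X,P_Y,s^n})$, where $\cE^\star$ is the encoding isometry constructed in the proof of Lemma \ref{theorem_sen_MAC_generalised_indep}. Under this identification, the bound \eqref{reject_Omega1_optimal_corollary} on the $H_0$-acceptance probability is exactly \eqref{reject_Omega1_optimal_theo} of the lemma, valid for every $s^n \in \cS^n$ once $n$ is large enough.

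For \eqref{accept_Omega2X_optimal_corollary}, I would apply \eqref{accept_Omega2X_optimal_theo} with the specific choice $\sigma^{\cB^n}_{x^n,y^n} := \rho^{\cB^n}_{x^n,y^n,s^n}$ in \eqref{Omega2states_arbit}, so that $\sigma^{X^nY^n\cB^n}_{P_X,P_Y,\{\sigma^{\cB^n}_{x^n,y^n}\}}$ coincides with $\rho^{X^nY^n\cB^n}_{P_X,P_Y,s^n}$ itself. The two factors inside the definition \eqref{eh2X} then become
\begin{align*}
\tr_{X'\cB'}\bigl[\cE^\star(\rho^{X^nY^n\cB^n}_{P_X,P_Y,s^n})\bigr] &= \rho^{Y'}_{P_Y}, \\
\tr_{Y'}\bigl[\cE^\star(\rho^{X^nY^n\cB^n}_{P_X,P_Y,s^n})\bigr] &= \rho^{X'\cB'}_{P_X,s^n},
\end{align*}
so that $\widehat{P}_{1,Y,\{\sigma^{\cB^n}_{x^n}\}}(\bbT^\star_n)$ from the lemma equals $\tr[\bbT^\star_n(\rho^{Y'}_{P_Y} \otimes \rho^{X'\cB'}_{P_X,s^n})]$, and \eqref{accept_Omega2X_optimal_corollary} is immediate.

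The remaining bounds \eqref{accept_Omega2Y_optimal_corollary} and \eqref{accept_Omega2XY_optimal_corollary} follow by perfectly parallel arguments: I would make the same choice $\sigma^{\cB^n}_{x^n,y^n}:=\rho^{\cB^n}_{x^n,y^n,s^n}$, compute the appropriate partial traces of $\cE^\star(\rho^{X^nY^n\cB^n}_{P_X,P_Y,s^n})$ in \eqref{eh2Y} and \eqref{eh2noth} to obtain the tensor-product states $\rho^{X'}_{P_X}\otimes\rho^{Y'\cB'}_{P_Y,s^n}$ and $\rho^{X'}_{P_X}\otimes\rho^{Y'}_{P_Y}\otimes\rho^{\cB'}_{s^n}$, and then invoke \eqref{accept_Omega2Y_optimal_theo} and \eqref{accept_Omega2XY_optimal_theo} of the lemma. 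The only point that requires care is the bookkeeping of the partial traces under the isometry $\cE^\star$ arising from the tilting-and-augmentation construction in the proof of Lemma \ref{theorem_sen_MAC_generalised_indep}. Since $\cE^\star$ acts as a product of isometries across the $X^n$, $Y^n$ and $\cB^n$ factors, each partial trace on the enlarged space commutes with the embeddings on the complementary subsystems, so the marginal identifications are mechanical and no substantive technical obstacle arises beyond this verification.
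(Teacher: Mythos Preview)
Your proposal is correct and follows essentially the same route as the paper's own proof: both specialize the free collection $\{\sigma^{\cB^n}_{x^n,y^n}\}$ in \eqref{Omega2states_arbit} to $\{\rho^{\cB^n}_{x^n,y^n,s^n}\}$, identify the resulting enlarged marginals $\sigma^{X'\cB'}_{P_X,\{\sigma^{\cB^n}_{x^n}\}}$, $\sigma^{Y'\cB'}_{P_Y,\{\sigma^{\cB^n}_{y^n}\}}$, $\sigma^{\cB'}$ with $\rho^{X'\cB'}_{P_X,s^n}$, $\rho^{Y'\cB'}_{P_Y,s^n}$, $\rho^{\cB'}_{s^n}$, and then read off the four bounds from Lemma~\ref{theorem_sen_MAC_generalised_indep} (the paper cites the concrete inequalities \eqref{reject_Omega1_optimal}, \eqref{accept_Omega2X_optimal}--\eqref{accept_Omega2XY_optimal} from the proof of that lemma rather than the statement, but this is only a cosmetic difference). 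Your remark about commuting partial traces with $\cE^\star$ is slightly imprecise---the tilting map on $\cB^n$ is conditioned on the classical registers $(a_r,b_t)$ rather than being a strict tensor-product isometry---but since the corollary statement already \emph{defines} the marginals as the partial traces of $\rho^{X'Y'\cB'}_{P_X,P_Y,s^n}$, the identification with the $\sigma$-marginals of \eqref{traced_tilted_sigma_X}--\eqref{traced_tilted_sigma_None} is indeed a direct structural match and no further work is needed.
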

\begin{proof}
    See Appendix \ref{proof_theo_gen_indep_mac_corollary} for the proof.
\end{proof}

\begin{remark}\label{remark_simultanous}
    
    In the above method of universal testing, we have a single test that accepts the states where the quantum subsystem is correlated with all the classical subsystems with high probability and rejects the states where the quantum sub-system is not correlated with any of the classical subsystems \textbf{simultanously} with arbitrarily high probability. This is the reason this test is called a \textbf{simultaneous universal test}. 
\end{remark}

 \section{Reliable communication over Classical Quantum Arbitrarily Varying Multiple Access Channel (CQ-AVMAC)}\label{section_CQAVMAC}
\subsection{Case with $2$ senders}
First, we discuss classical quantum arbitrarily varying multiple access channels (CQ-AVMAC) for the case with $2$ senders.
This case is modeled as follows.
\begin{definition}\label{CQ_AVMAC} (channel) We model a two-sender classical-quantum arbitrarily varying multiple-access channel (CQ-AVMAC) between parties
Alice (sender $1$), Bob (sender $2$) and Charlie (receiver) as a map
$$\cN^{X^nY^n \to B^n}_{s^n} : (x^n,y^n) \to \rho^{B^n}_{x^n,y^n,s^n},$$
where $x^n \in \cX^n, y^n \in \cY^n$ and $s^n \in \cS^n$ (where $\abs{\cX},\abs{\cY}).$ For each $x^n := (x_1\cdots,x_n) \in \cX^n, y^n := (y_1\cdots,y_n) \in \cY^n$ and $s^n := (s_1\cdots,s_n) \in \cS^n$, 
we define the state $\rho^{B^n}_{x^n,y^n,s^n}$ to be $\bigotimes_{i=1}^{n}\rho^{B}_{x_i,y_i,s_i}$.
Further, Alice, Bob (the senders), and Charlie (the receiver) have no information about the sequence $s^n.$
\end{definition}
We aim to use this channel and enable Alice to transmit a message $m_1 \in [2^{nR_1}]$  and Bob to transmit a message $m_2 \in [2^{nR_2}]$, such that Charlie can recover $(m_1,m_2)$ with high probability for every $s^n$.
\begin{definition}(Deterministic code)\label{definition_cq_avmac_code}
An $(n,2^{nR_1},2^{nR_2})$-code $\cC$ for communication over a CQ-AVMAC consists of 
\begin{itemize}
\item two encoding functions $\cE^{(n)}_1: \cM^{1}_n  \to \cX^n$ and $\cE^{(n)}_2: \cM^{2}_n  \to \cY^n$ where $\cM^{1}_n:= [2^{nR_1}]$ and $\cM^{2}_n := [2^{nR_2}]$
\item a decoding POVM $\{ \cD_{m_1,m_2}: (m_1,m_2) \in \cM^{1}_n \times \cM^{2}_n\}$.   \end{itemize}
An $(n,2^{nR_1},2^{nR_2})$-code $\cC$ for communication over a CQ-AVMAC is called 
an $(n,2^{nR_1},2^{nR_2},\beta)$-code when for any $s^n \in \cS^n$, 
the average probability $\bar{e}(\cC,s^n)$ of error
satisfies 
\begin{equation*}
      \bar{e}(\cC,s^n):= \frac{1}{2^{nR_1}}\frac{1}{2^{nR_2}}\sum_{m_1=1}^{2^{nR_1}}\sum_{m_2=1}^{2^{nR_2}}e(m_1,m_2,\cC,s^n) < \beta,
  \end{equation*}
  where $e(m_1,m_2,\cC,s^n):= 1 - \tr[D_{m_1,m_2} \cN^{X^n \to B^n}_{s^n}(\cE^{(n)}_1(m_1), \cE^{(n)}_2(m_2))].$
  \end{definition}

\begin{definition}\label{def2_avmac}
    A pair $\left(R_1,R_2\right)$ is called an \textit{achievable deterministic rate-pair} for a given CQ-AVMAC, if for any $\beta>0,\delta_1>0 \text{ and } \delta_2>0$ and sufficiently large $n$, there exists a $\left(n,2^{nR_1},2^{nR_2},\beta\right)$ deterministic code $\cC$ such that 
    \begin{equation*}
        \begin{split}
            \frac{1}{n}\log |\cM_n^1| &> R_1 - \delta_1,\\
            \frac{1}{n}\log |\cM_n^2| &> R_2 - \delta_2,
        \end{split}
        \begin{split}
            \hspace{20pt}\sup_{s^n \in \cS^n}\bar{e}(\cC,s^n)<\beta.
        \end{split}
    \end{equation*}
    
    The \textit{achievable deterministic code rate region} $\cR_{d}$ (also known as \textit{deterministic code capacity region}) is defined to be the closure of all possible achievable deterministic rate pairs.
\end{definition}

\begin{definition}(Random code)\label{definition_cq_avmac_rand_code}
An $(n,2^{nR_1},2^{nR_2})$ random code $\cC$ for communication over a CQ-AVMAC consists of 
\begin{itemize}
\item two encoding functions $\cE^{(n),\gamma_1}_1: \cM^{1}_n  \to \cX^n$ and $\cE^{(n),\gamma_2}_2: \cM^{2}_n  \to \cY^n$, parameterized by two random variables $\gamma_1$, $\gamma_2$ respectively, where $\cM^{1}_n:= [2^{nR_1}]$, $\cM^{2}_n := [2^{nR_2}]$,
\item a decoding POVM $\{ \cD^{\gamma_1,\gamma_2}_{m_1,m_2}: (m_1,m_2) \in \cM^{1}_n \times \cM^{2}_n\}$, parameterized by $\gamma_1$, $\gamma_2$.  \end{itemize}
In the above,  $(\gamma_1,\gamma_2) \sim G_1 \times G_2$ are two independent random variables taking values over two finite sets. 

An $(n,2^{nR_1},2^{nR_2})$ random code $\cC$ for communication over a CQ-AVMAC is called 
an $(n,2^{nR_1},2^{nR_2},\beta)$-code when for any $s^n \in \cS^n$, 
the average probability $\bbE_{\cC}\left[\bar{e}(\cC,s^n)\right]$ of error under the expectation of choices of random code (choices of $\gamma_1$ and $\gamma_2$) satisfies
\begin{equation*}
      \bbE_{\cC}\left[\bar{e}(\cC,s^n)\right]:= \frac{1}{2^{nR_1}}\frac{1}{2^{nR_2}}\sum_{m_1=1}^{2^{nR_1}}\sum_{m_2=1}^{2^{nR_2}}\bbE_{\cC}\left[e(m_1,m_2,\cC,s^n)\right] < \beta,
  \end{equation*}
  where $\bbE_{\cC}\left[e(m_1,m_2,\cC,s^n)\right] := 1 - \bbE_{\substack{\gamma_1 \sim G_1\\\gamma_2 \sim G_2}}\left[\tr[D^{\gamma_1,\gamma_2}_{m_1,m_2} \cN^{X^n \to B^n}_{s^n}(\cE^{(n),\gamma_1}_1(m_1), \cE^{(n),\gamma_2}_2(m_2))]\right].$
  \end{definition}

When 
$\gamma_1$ is shared between Alice (sender $1$) and Charlie (the receiver) and $\gamma_2$ is shared between Bob (sender $2$) and Charlie (the receiver),
the above random code $\cC$
can be realized.

\begin{definition}\label{def2_avmac_rand}
    A pair $\left(R_1,R_2\right)$ is called an \textbf{achievable randomized rate-pair} for a given CQ-AVMAC, if for any $\beta>0,\delta_1>0 \text{ and } \delta_2>0$ and sufficiently large $n$, there exists a $\left(n,2^{nR_1},2^{nR_2},\beta\right)$ random code $\cC$ such that 
    \begin{equation*}
        \begin{split}
            \frac{1}{n}\log |\cM_n^1| &> R_1 - \delta_1,\\
            \frac{1}{n}\log |\cM_n^2| &> R_2 - \delta_2,
        \end{split}
        \begin{split}
            \hspace{20pt}\sup_{s^n \in \cS^n}\bbE_{\cC}[\bar{e}(\cC,s^n)]<\beta.
        \end{split}
    \end{equation*}
    
    The \textit{achievable random code rate region} $\cR_{r}$ (also known as \textit{random code capacity region}) is defined to be the closure of all possible achievable randomized rate-pairs.
\end{definition}
 
 Since the input systems $X, Y$ are classical we will consider them as a pair of random variables with support sets $\cX$ and $\cY$. Given $X \sim P_{X}$ and $Y \sim P_{Y}$ with a joint probability distribution $P_{XY} := P_{X}\cdot P_{Y}$, we define the following set
\begin{align}
     \cR_{P_{X},P_{Y}} := \left\{\forall (R_1,R_2) :
        \begin{array}{cc}
              0\leq R_1 &\hspace{-10pt}\leq \inf_{Q \in \cP(\cS)} I[X;\cB|Y]_{P_{X},P_{Y},Q}  \\
              0\leq R_2 &\hspace{-10pt}\leq \inf_{Q \in \cP(\cS)} I[Y;\cB|X]_{P_{X},P_{Y},Q}  \\
              R_1 + R_2 &\hspace{-10pt}\leq \inf_{Q \in \cP(\cS)} I[XY;\cB]_{P_{X},P_{Y},Q}
        \end{array}
    \right\},\label{Rpxpy}
 \end{align}
 where $I[X;\cB|Y]_{P_{X},P_{Y},Q},  I[Y;\cB|X]_{P_{X},P_{Y},Q}$ and $I[XY;\cB]_{P_{X},P_{Y},Q}$ are defined in \cref{ix;b|y,iy;b|x,ixy;b}.
 Further, we define $\cR^{\star}$ as follows,
\begin{equation}
    \cR^{\star} := conv\left(\bigcup_{\substack{P_{X} \in \cP(\cX) \\ P_{Y} \in \cP(\cY)}}\cR_{P_{X},P_{Y}}\right).\label{Rstar}
\end{equation}

In the theorem below we establish the random code capacity region $\cR_r$ by giving two achievability proofs and a matching converse. The first achievability proof is based on the non-simultaneous test designed in Lemma \ref{lemma_cq_mac_avht_independent_ts} in subsection 
\ref{cq_mac_avht_null_statement_non_sim}. The second achievability proof is based on the simultaneous test designed in Lemma \ref{lemma_cq_mac_avht_independent} in subsection 
\ref{cq_mac_avht_null_statement_sim}. The detailed proof for this alternate approach is given in Appendix \ref{alternative_proof_lemma_rand_capacity_avmac}. 
For both these proofs we obtain constraints on the pair $(R_1,R_2)$, by showing that for every $s^n \in \cS^n$, the average probability of error averaged over the random choice of code goes to zero if these constraints are met. 
\begin{theorem}\label{lemma_rand_capacity_avmac} 
Assume that
the subset $\left\{\rho_{x,y,s}^{\cB} : x \in \cX, y \in \cY, s \in \cS\right\}  \subset \cD(\cB)$ satisfies Assumption \ref{NM9}.
The random code capacity region $\cR_{r}$ of a CQ-AVMAC is equal to $\cR^{\star}$ (See \eqref{Rstar} for the definition).
\end{theorem}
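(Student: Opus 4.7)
The plan is to prove both the achievability $\cR^{\star} \subseteq \cR_{r}$ and the converse $\cR_{r} \subseteq \cR^{\star}$. The achievability uses the multiple independence testing machinery of Section \ref{sec:GIT}, while the converse adapts the Fano-inequality argument from the proof of Theorem \ref{lemma_rand_cap_cqavc}.

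For achievability, fix a product distribution $P_X \times P_Y$ and a rate pair $(R_1, R_2)$ lying in the interior of $\cR_{P_X, P_Y}$. Choose $\widehat{R}_1, \widehat{R}_2$ slightly larger than $R_1, R_2$ but still satisfying \eqref{VHI1}--\eqref{VHI3}, and pick $t > 0$ and $\varepsilon > 0$ small enough that the R\'enyi-type conditional mutual informations $I_{1-t}[\cdot;\cdot|\cdot]_{P_X,P_Y,Q}$ still dominate $\widehat{R}_1, \widehat{R}_2, \widehat{R}_1+\widehat{R}_2$ by a positive margin after adding $\log(1+\varepsilon)$. Alice and Bob independently generate random codebooks $\{X^n(m_1)\}_{m_1 \in \cM_n^1}$ and $\{Y^n(m_2)\}_{m_2 \in \cM_n^2}$ i.i.d.\ from $P_X^n$ and $P_Y^n$; these codebooks are the shared randomness $(\gamma_1,\gamma_2)$. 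For Charlie's decoder I would use the universal projectors $T_{n,(x^n,y^n)} := \{\widehat{\rho}_{x^n,y^n}^{\cB^n} \succeq 2^{n\widehat{R}_2} \widehat{\rho}_{x^n}^{\cB^n}\} \cdot \{\widehat{\rho}_{x^n,y^n}^{\cB^n} \succeq 2^{n(\widehat{R}_1+\widehat{R}_2)} \rho_{U,n}^{\cB^n}\}$ from \eqref{universal_test_Y_TS}--\eqref{universal_test_None_TS} (the analog of $\hat\bbT_n^Y$) and form the POVM via the Hayashi-Nagaoka normalization $\Lambda_{m_1,m_2} := (\sum_{j_1,j_2} T_{n,(X^n(j_1),Y^n(j_2))})^{-1/2} T_{n,(X^n(m_1),Y^n(m_2))}(\sum_{j_1,j_2} T_{n,(X^n(j_1),Y^n(j_2))})^{-1/2}$.

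The error analysis then follows the template of the CQ-AVC proof. Applying Fact \ref{Hayashi_nagaoka} splits $\bbE_\cC[\bar e(\cC,s^n)]$ into (i) the probability that the true test $T_{n,(X^n(m_1),Y^n(m_2))}$ rejects the true state $\rho_{P_X,P_Y,s^n}^{X^nY^n\cB^n}$, and (ii) competing-codeword cross terms of three kinds, corresponding to wrong $m_1$ alone, wrong $m_2$ alone, and both wrong. Term (i) is bounded by \eqref{corllary_achievability_mac_independent_eq2_TS} (times the covering inflation $(1+\varepsilon)^n$ absorbed into the exponent). The three cross terms are controlled by \eqref{corllary_achievability_mac_independent_eq5_TS}, \eqref{corllary_achievability_mac_independent_eq3_TS}, and \eqref{corllary_achievability_mac_independent_eq6_TS} respectively, after taking expectation over the codebook and collapsing the wrong-codeword average back to the appropriate marginals $\rho^{\cB^n}_{y^n,s^n}$, $\rho^{\cB^n}_{x^n,s^n}$, $\rho^{\cB^n}_{s^n}$. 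Combining, one obtains an exponent of the form $n\max\{t(\widehat{R}_2 - \min_Q I_{1-t}[Y;\cB|X]) + \log(1+\varepsilon), \ldots\}$ plus terms like $n(R_1 - \widehat{R}_1), n(R_2 - \widehat{R}_2), n(R_1+R_2 - \widehat{R}_1 - \widehat{R}_2)$ times polynomial prefactors, all of which vanish by the strict inequalities. Taking the closure of the convex hull over $(P_X,P_Y)$ gives $\cR^\star \subseteq \cR_r$; the alternate proof via the simultaneous test (Corollary \ref{theo_gen_indep_mac_corollary}) is essentially identical and is deferred to Appendix \ref{alternative_proof_lemma_rand_capacity_avmac}.

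For the converse, suppose $(R_1,R_2)$ is achievable with worst-case error $\varepsilon_n \to 0$. Fix any $Q \in \cP(\cS)$ and average the channel over $s^n \sim Q^n$; this does not increase the error. The induced channel is a product memoryless CQ-MAC with averaged channel $\rho^{\cB}_{Q,x,y}$. Conditioning on the shared randomness $\cC$ and applying Fano's inequality (Fact \ref{fano}) to the three standard events (decoding $M_1$ correctly given $M_2$ known, decoding $M_2$ correctly given $M_1$ known, and decoding the pair $(M_1,M_2)$) yields
\begin{align*}
R_1 - o(1) &\le \tfrac{1}{n} I[X^n;\cB^n | Y^n, \cC]_{P_{X^n},P_{Y^n},Q^n},\\
R_2 - o(1) &\le \tfrac{1}{n} I[Y^n;\cB^n | X^n, \cC]_{P_{X^n},P_{Y^n},Q^n},\\
R_1 + R_2 - o(1) &\le \tfrac{1}{n} I[X^nY^n;\cB^n | \cC]_{P_{X^n},P_{Y^n},Q^n}.
\end{align*}
Each right-hand side single-letterizes exactly as in Step $c$ of the CQ-AVC converse using the chain rule, the Markov structure of the product channel, and the independence of $X^n$ and $Y^n$ across senders; one arrives at bounds of the form $I[X;\cB|Y]_{P_X,P_Y,Q}$, $I[Y;\cB|X]_{P_X,P_Y,Q}$, and $I[XY;\cB]_{P_X,P_Y,Q}$ averaged over time indices, which via concavity of the auxiliary-less MAC region in $(P_X,P_Y)$ is dominated by its value at some $(P_X^\ast,P_Y^\ast)$. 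Since $Q$ is arbitrary, we take the infimum over $Q$ on the right-hand sides, placing $(R_1,R_2)$ in $\cR_{P_X^\ast,P_Y^\ast}$ and hence in $\cR^\star$.

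The main obstacle I anticipate is on the converse side: because $Q$ is chosen \emph{after} the code, one must be careful that the same infimum over $Q \in \cP(\cS)$ appears in each of the three rate bounds simultaneously, rather than three separate infima at three different $Q$'s. This is resolved by the fact that the supremum over $Q$ adversarial distributions for a \emph{fixed} code is attained uniformly, so the worst-case rate for a given code lies in $\bigcap_Q \cR_{P_X,P_Y}(Q)$, which equals the region with $\inf_Q$ inside each constraint as in \eqref{Rpxpy}. On the achievability side the subtlety is in passing from the R\'enyi quantities back to the ordinary conditional mutual informations via \eqref{NV1}--\eqref{NV3} uniformly over $Q$, which requires an $\varepsilon$-net argument on $\cS$ exactly as in Section \ref{S4-B} under Assumption \ref{NM9}.
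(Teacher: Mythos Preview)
Your achievability has a genuine gap. The single test $T_{n,(x^n,y^n)} = \hat\bbT_{n,X,x^n,y^n}\cdot\hat\bbT_{n,x^n,y^n}$ you form is precisely $\hat\bbT^Y_{n,x^n,y^n}$ from \eqref{tests_hayashi_generalised}. After Hayashi--Nagaoka, the ``wrong $m_1$ only'' cross term becomes, upon averaging over the independent codeword $X^n(\hat m_1)$,
\[
\bbE_{X^n(\hat m_1),Y^n(m_2)}\Big[\tr\big[\hat\bbT^Y_{n,X^n(\hat m_1),Y^n(m_2)}\,\rho^{\cB^n}_{Y^n(m_2),s^n}\big]\Big],
\]
and Corollary \ref{corllary_cq_mac_avht_independent_ts} gives \emph{no} bound for $\hat\bbT^Y_n$ against states of the form $\rho^{\cB^n}_{y^n,s^n}$: the bounds \eqref{corllary_achievability_mac_independent_eq5_TS} and \eqref{corllary_achievability_mac_independent_eq6_TS} cover only $\rho^{\cB^n}_{x^n,s^n}$ and $\rho^{\cB^n}_{s^n}$. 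The inequality \eqref{corllary_achievability_mac_independent_eq3_TS} you cite is for $\hat\bbT^X_n$, a different operator. You cannot simply replace $T_{n,(x^n,y^n)}$ by the product $\hat\bbT_{n,X}\hat\bbT_{n,Y}\hat\bbT_n$ either, because $\widehat\rho^{\cB^n}_{x^n}$ and $\widehat\rho^{\cB^n}_{y^n}$ do not commute, so the product fails to be a positive operator and the Hayashi--Nagaoka square-root construction is no longer a POVM.

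This non-commutativity is exactly the obstruction the paper works around. Its first proof builds \emph{two} pretty-good decoders $\Lambda^X_{m_1}$ (from $\hat\bbT^X_n$) and $\Lambda^Y_{m_2}$ (from $\hat\bbT^Y_n$), then forms the joint POVM $\Lambda^{XY}_{m_1,m_2}=(\Lambda^Y_{m_2})^{1/2}\Lambda^X_{m_1}(\Lambda^Y_{m_2})^{1/2}$ and controls the resulting error via the gentle-measurement lemma, obtaining $\bar e \le \bar e^X+\bar e^Y+2\sqrt{\bar e^Y}$; each of $\bar e^X,\bar e^Y$ is then handled by your style of analysis using the correct test. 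The alternative (Appendix \ref{alternative_proof_lemma_rand_capacity_avmac}) uses Sen's tilting to build a single simultaneous test, at the cost of enlarging the Hilbert space. Your POVM sits between these and catches only two of the three error modes.

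On the converse, your single-letterization is fine, but the last step is not a concavity argument: $I[X;\cB|Y]_{P_X,P_Y,Q}$ is not concave in the pair $(P_X,P_Y)$. What makes the argument close is that $\cR^\star$ is defined as a \emph{convex hull}: after single-letterizing and averaging over $i$ and the shared randomness $\cC=c$, the triple of bounds places $(R_1-\delta_n,R_2-\delta_n)$ inside a convex combination of the regions $\cR_{P_{X_i|c},P_{Y_i|c}}$, hence in $\cR^\star$. Your worry about ``three different $Q$'s'' is not an issue, since \eqref{Rpxpy} already carries a separate $\inf_Q$ on each constraint.
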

 This theorem is shown in subsections \ref{S4-B} and \ref{S4-C}. In subsection \ref{S4-B}, we show the proof of the achievability part of Theorem \ref{lemma_rand_capacity_avmac} by designing a joint decoding strategy based on the non-simultaneous test designed in Lemma \ref{lemma_cq_mac_avht_independent_ts} in subsection 
 \ref{cq_mac_avht_null_statement_non_sim}.

\subsection{Achievability part of Theorem \ref{lemma_rand_capacity_avmac}}\label{S4-BY}
Since 
$ \cR^{\star}$ is given as convex full of 
$\cR_{P_{X},P_{Y}}$ with choices of $P_{X},P_{Y}$,
once 
every rate-pair $(R_1,R_2) \in \cR_{P_{X},P_{Y}}$
is achieved with any pair $P_X,P_Y$, 
every rate-pair $(R_1,R_2) \in \cR^{\star}$ is achieved by applying time sharing.
Hence, given distributions $P_X,P_Y $, 
we here show that for every rate-pair $(R_1,R_2) \in \cR_{P_{X},P_{Y}}$ there exists a $(n,2^{nR_1},2^{nR_2},\eps_n)$ random code $\cC$ for which the average error probability, averaged over the choice of code $\cC$ satisfies the following:
\begin{align}
\lim_{n \to \infty}
\sup_{s^n \in \cS^n} \bbE_{\cC}\left[\bar{e}(\cC,s^n)\right] =0 \label{BVQ}.
\end{align}
    We now start with the randomized construction for $\cC$.
  
    \subsubsection{Randomized Encoder Construction}
We choose the rate pair $(R_1,R_2)$ satisfying
   \begin{align*}
    R_1 &< \inf_{Q \in \cP(\cS)} I[X;\cB|Y]_{P_X,P_Y,Q},\\
    R_2 &< \inf_{Q \in \cP(\cS)} I[Y;\cB|X]_{P_X,P_Y,Q},\\
   R_1+ R_2 &< \inf_{Q \in \cP(\cS)} I[XY;\cB]_{P_X,P_Y,Q}.
\end{align*}
Then, we set the pair $( \widehat{R}_1,\widehat{R}_2)$ as
\begin{align}
R_1<    \widehat{R}_1 &< \inf_{Q \in \cP(\cS)} I[X;\cB|Y]_{P_{X},P_{Y},Q},\label{XBHI1}\\
R_2<    \widehat{R}_2 &< \inf_{Q \in \cP(\cS)} I[Y;\cB|X]_{P_{X},P_{Y},Q},\label{XBHI2}\\
 R_1+R_2<   \widehat{R}_1 + \widehat{R}_2 &< \inf_{Q \in \cP(\cS)} I[XY;\cB]_{P_{X},P_{Y},Q}
 \label{XBHI3}.
\end{align}
Then, we choose sufficiently small real numbers 
$t,\varepsilon \in (0,1)$ such that
    \begin{align}
   \widehat{R}_1 &< \min _{Q \in \cP(\cS_\eps)} I_{1-t}[X;\cB|Y]_{P_{X},P_{Y},Q}
   -\frac{1}{t}\log (1+\varepsilon),
   \label{Xtheo_gen_sen_macR_1}\\
    \widehat{R}_2 &< \min _{Q \in \cP(\cS_\eps)} I_{1-t}[Y;\cB|X]_{P_{X},P_{Y},Q}-\frac{1}{t}\log (1+\varepsilon),\label{Xtheo_gen_sen_macR_2}\\
   \widehat{R}_1 + \widehat{R}_2 &< \min _{Q \in \cP(\cS_\eps)} I_{1-t}[XY;\cB]_{P_{X},P_{Y},Q}-\frac{1}{t}\log (1+\varepsilon)\label{Xtheo_gen_sen_macR_1R_2},
\end{align}
where $\cS_\eps$ is chosen in \eqref{epsilon_net_state_density_eq}.

    Given a distribution $P_{X}$ on $\cX$, 
    Alice randomly generates $2^{nR_1}$ sequences $\{X^n(m_1) \in \cX^n : m_1 \in [2^{nR_1}]\}$ according to a probability distribution $P_{X}^{ n}$.
    Similarly, given a distribution $P_{Y}$ on $\cY$, 
    Bob randomly generates $2^{nR_2}$ sequences $\{Y^n(m_2) \in \cY^n : m_2 \in [2^{nR_2}]\}$ according to a probability distribution $P_{Y}^{ n}$.
    That is, for any element $m_1 \in [2^{nR_1}]$, 
the random variables $X^n(m_1)$ and $T^n(m_2)$ obeys the distributions 
 $\Pr\left\{X^n(m_1)\right\} := \prod_{i=1}^{n}P_{X}(X_i(m_1))$ and $\Pr\left\{Y^n(m_2)\right\} := \prod_{i=1}^{n}P_{Y}(Y_i(m_2))$, where $X^n(m_1):=\left(X_1(m_1),\cdots,X_n(m_1)\right)$ and $Y^n(m_2):=\bigl(Y_1(m_2),\cdots,Y_n(m_2)\bigr)$.

    That is, for any element $\forall m \in \cM_n$,
the random variable $X^n(m)$ obeys the distribution 
$\Pr\bigl\{X^n(m)\bigr\} := \prod_{i=1}^{n}P(X_i(m))$, where $X^n(m):=\left(X_1(m),\cdots,X_n(m)\right)$. 
Then, we randomly choose the encoders $\cE^{(n)}_1$ and $\cE^{(n)}_2$ subject to the following distribution;
\begin{align*}
   & \Pr( \cE^{(n)}_1(m_1)=X^n(m_1), \cE^{(n)}_2(m_2)=Y^n(m_2),  \hbox{ for } m_1=1, \ldots,  2^{nR_1} , m_2=1,\ldots, 2^{nR_2}) \\
 &:= \prod_{m_1=1}^{\cM^{1}_{n}}\Pr\left\{X^n(m_1)\right\}\cdot\prod_{m_2=1}^{\cM^{2}_{n}}\Pr\left\{Y^n(m_2)\right\}.
\end{align*}
    To send a message $m_1 \in \cM^{1}_{n}$, Alice encodes it to input sequence $X^n(m_1)$ and transmits it  over the channel. Similarly, to end a message $m_2 \in \cM^{2}_{n}$, Bob encodes it to input sequence $Y^n(m_2)$ and transmits it over the channel.

    \subsubsection{Decoding Strategy (Universal Decoder)}
    Upon receiving the state $\rho_{\substack{X^n,Y^n, s^n}}^{\cB^n}$, Charlie tries to guess the message pair sent simultaneously by measuring the state. We now define the measurement which will be used by Charlie. From \eqref{tests_hayashi_generalised}, we consider $\hat{\bbT}_{n}^{X}$ and $\hat{\bbT}_{n}^{Y}$ to be in the following form:

    \begin{align*}
        \hat{\bbT}_{n}^{X} &:= \sum_{\substack{x^n \in \cX^n \\ y^n \in \cY^n}} \ketbrasys{x^n}{X^n} \otimes \ketbrasys{y^n}{Y^n} \otimes \hat{\bbT}_{n,x^n,y^n}^{X},\\
        \hat{\bbT}_{n}^{Y} &:= \sum_{\substack{x^n \in \cX^n \\ y^n \in \cY^n}} \ketbrasys{x^n}{X^n} \otimes \ketbrasys{y^n}{Y^n} \otimes \hat{\bbT}_{n,x^n,y^n}^{Y}.
    \end{align*}
    
    For each message pair $(m_1,m_2) \in \cM^{1}_n \times  \cM^{2}_n$, we construct two separate decoders (each for one of messages) as follows, 
    \begin{align*}
        \Lambda_{m_1}^{X} &:= \left(\sum_{\hat{m}_1 \in \cM^{1}_n} \Pi^{X}_{\hat{m}_1}\right)^{-\frac{1}{2}}\Pi^{X}_{m_1} \left(\sum_{\hat{m}_1 \in \cM^{1}_n} \Pi^{X}_{\hat{m}_1}\right)^{-\frac{1}{2}},\\
        \Lambda_{m_2}^{Y} &:= \left(\sum_{\hat{m}_2 \in \cM^{2}_n} \Pi^{Y}_{\hat{m}_2}\right)^{-\frac{1}{2}}\Pi^{Y}_{m_2} \left(\sum_{\hat{m}_2 \in \cM^{2}_n} \Pi^{Y}_{\hat{m}_2}\right)^{-\frac{1}{2}},
    \end{align*}
where $\Pi^{X}_{\hat{m}_1}$ and $ \Pi^{Y}_{\hat{m}_2}$ are defined as follows
\begin{equation*}
    \Pi^{X}_{\hat{m}_1} := \sum_{\hat{m}_2 \in  \cM^{2}_n} \hat{\bbT}_{n,X^n(\hat{m}_1),Y^n(\hat{m}_2)}^{X}, \quad  \Pi^{Y}_{\hat{m}_2} := \sum_{\hat{m}_1 \in  \cM^{1}_n} \hat{\bbT}_{n,X^n(\hat{m}_1),Y^n(\hat{m}_2)}^{Y}.
\end{equation*}
    
    Charlie then uses the POVM $\left\{\Lambda_{m_1,m_2}^{XY}\right\}_{m_1,m_2}$ to decode the message pair \emph{jointly}, where $\Lambda_{m_1,m_2}^{XY}$ is defined as follows
\begin{equation}
    \Lambda_{m_1,m_2}^{XY}:= \left(\Lambda_{m_2}^{Y}\right)^{\frac{1}{2}}\left(\Lambda_{m_1}^{X}\right)\left(\Lambda_{m_2}^{Y}\right)^{\frac{1}{2}},\label{decoder_h}
\end{equation}
where
\begin{equation*}
    \sum_{m_1 \in \cM^{1}_n}\sum_{m_2 \in \cM^{2}_n} \Lambda_{m_1,m_2}^{XY} = \bbI^{\cB}.
\end{equation*}

Observe that the above POVM is universal as $\hat{\bbT}_{n,x^n,y^n}^{X}$ and $\hat{\bbT}_{n,x^n,y^n}^{Y}$ are independent of the channel. In the following, we denote the triplet $(\cE^{(n)}_1,\cE^{(n)}_2,\left\{\Lambda_{m_1,m_2}^{XY}\right\}_{m_1,m_2})$ by $\cC$.


    \subsubsection{Error Analysis}\label{S7-B-3}
    We now do the error analysis as follows. For every $s^n \in \cS^n$, the following holds.
    \begin{align}
        &\hspace{10pt} \bar{e}(\cC,s^n) 
        = \frac{1}{2^{n(R_1+R_2)}}\sum_{\substack{m_1\in \cM^{1}_n \\ m_2\in \cM^{2}_n}} e(m_1,m_2,\cC,s^n)\\
        &= \frac{1}{2^{n(R_1+R_2)}}\sum_{\substack{m_1\in \cM^{1}_n \\ m_2\in \cM^{2}_n}} e(m_1,m_2,\cC,s^n)\nn\\
        &= \frac{1}{2^{n(R_1+R_2)}}\sum_{\substack{m_1\in \cM^{1}_n \\ m_2\in \cM^{2}_n}} \tr\left[\left(\bbI^{\cB}-\Lambda_{m_1,m_2}^{XY}\right)\rho_{\substack{X^n(m_1),Y^n(m_2), s^n}}^{\cB}\right]\nn\nn\\
         &= \frac{1}{2^{n(R_1+R_2)}}\sum_{\substack{m_1\in \cM^{1}_n \\ m_2\in \cM^{2}_n}} \tr\left[\left(\bbI^{\cB}- \left(\Lambda_{m_2}^{Y}\right)^{\frac{1}{2}}\left(\Lambda_{m_1}^{X}\right)\left(\Lambda_{m_2}^{Y}\right)^{\frac{1}{2}}\right)\rho_{\substack{X^n(m_1),Y^n(m_2), s^n}}^{\cB}\right]\nn\\
         &= \frac{1}{2^{n(R_1+R_2)}}\sum_{\substack{m_1\in \cM^{1}_n \\ m_2\in \cM^{2}_n}} \tr\left[\left(\bbI^{\cB}- \left(\Lambda_{m_2}^{Y}\right)\right)\rho_{\substack{X^n(m_1),Y^n(m_2), s^n}}^{\cB}\right] \nn\\
         &\hspace{10pt}+ \frac{1}{2^{n(R_1+R_2)}}\sum_{\substack{m_1\in \cM^{1}_n \\ m_2\in \cM^{2}_n}} \tr\left[\left( \left(\Lambda_{m_2}^{Y}\right) - \left(\Lambda_{m_2}^{Y}\right)^{\frac{1}{2}}\left(\Lambda_{m_1}^{X}\right)\left(\Lambda_{m_2}^{Y}\right)^{\frac{1}{2}}\right)\rho_{\substack{X^n(m_1),Y^n(m_2), s^n}}^{\cB}\right]\nn\\
         &\overset{a}{=} \bar{e}^{Y}(\cC,s^n) + \frac{1}{2^{n(R_1+R_2)}}\sum_{\substack{m_1\in \cM^{1}_n \\ m_2\in \cM^{2}_n}} \tr\left[\left( \left(\Lambda_{m_2}^{Y}\right)^{\frac{1}{2}}\left(\bbI^{\cB} - \Lambda_{m_1}^{X}\right)\left(\Lambda_{m_2}^{Y}\right)^{\frac{1}{2}}\right)\rho_{\substack{X^n(m_1),Y^n(m_2), s^n}}^{\cB}\right]\nn\\
         &\overset{b}{\leq} \bar{e}^{Y}(\cC,s^n) + \frac{1}{2^{n(R_1+R_2)}}\sum_{\substack{m_1\in \cM^{1}_n \\ m_2\in \cM^{2}_n}} \tr\left[\left(\bbI^{\cB} - \Lambda_{m_1}^{X}\right)\rho_{\substack{X^n(m_1),Y^n(m_2), s^n}}^{\cB}\right]\nn\\
         &\hspace{10pt}+ \frac{1}{2^{n(R_1+R_2)}}\sum_{\substack{m_1\in \cM^{1}_n \\ m_2\in \cM^{2}_n}} \norm{ \left(\Lambda_{m_2}^{Y}\right)^{\frac{1}{2}}\rho_{\substack{X^n(m_1),Y^n(m_2), s^n}}^{\cB}\left(\Lambda_{m_2}^{Y}\right)^{\frac{1}{2}} - \rho_{\substack{X^n(m_1),Y^n(m_2), s^n}}^{\cB}}{1} \nn\\
         &\overset{c}{\leq} \bar{e}^{Y}(\cC,s^n) 
         + \bar{e}^{X}(\cC,s^n) + \frac{1}{2^{n(R_1+R_2)}}\sum_{\substack{m_1\in \cM^{1}_n \\ m_2\in \cM^{2}_n}} 2 
         \sqrt{\tr\left[\left(\bbI^{\cB}- \left(\Lambda_{m_2}^{Y}\right)\right)\rho_{\substack{X^n(m_1),Y^n(m_2), s^n}}^{\cB}\right]}\nn\\
         &\leq \bar{e}^{Y}(\cC,s^n) 
         + \bar{e}^{X}(\cC,s^n) + \sqrt{\frac{1}{2^{n(R_1+R_2)}}\sum_{\substack{m_1\in \cM^{1}_n \\ m_2\in \cM^{2}_n}} 2 
         \tr\left[\left(\bbI^{\cB}- \left(\Lambda_{m_2}^{Y}\right)\right)\rho_{\substack{X^n(m_1),Y^n(m_2), s^n}}^{\cB}\right]}\nn\\
         &= \bar{e}^{Y}(\cC,s^n)
         + \bar{e}^{X}(\cC,s^n) + 2\sqrt{\bar{e}^{Y}(\cC,s^n)},    
         \end{align}
    where in equality $a$, we define 
    $\bar{e}^{Y}(\cC,s^n) := \frac{1}{2^{n(R_1+R_2)}}\sum_{\substack{m_1\in \cM^{1}_n \\ m_2\in \cM^{2}_n}} \tr\left[\left(\bbI^{\cB}- \left(\Lambda_{m_2}^{Y}\right)\right)\rho_{\substack{X^n(m_1),Y^n(m_2), s^n}}^{\cB}\right],$ 
    $b$ follows from Fact \ref{trace_norm2} and in $c$ we define 
    $\bar{e}^{X}(\cC,s^n) := \frac{1}{2^{n(R_1+R_2)}}\sum_{\substack{m_1\in \cM^{1}_n \\ m_2\in \cM^{2}_n}} \tr\left[\left(\bbI^{\cB}- \left(\Lambda_{m_1}^{X}\right)\right)\rho_{\substack{X^n(m_1),Y^n(m_2), s^n}}^{\cB}\right],$ and the inequality $c$ follows from Fact \ref{gent_measurement}. 
    Taking the average with respect to $\bbE_{\cC}$, we have
        \begin{align}
        \bbE_{\cC}\left[\bar{e}(\cC,s^n)\right]
         \le &\bbE_{\cC}\left[\bar{e}^{Y}(\cC,s^n)\right] + \bbE_{\cC}\left[\bar{e}^{X}(\cC,s^n)\right] + 2\bbE_{\cC}\left[\sqrt{\bar{e}^{Y}(\cC,s^n)}\right] \nn\\
         \le &\bbE_{\cC}\left[\bar{e}^{Y}(\cC,s^n)\right] + \bbE_{\cC}\left[\bar{e}^{X}(\cC,s^n)\right] + 2\sqrt{\bbE_{\cC}\left[\bar{e}^{Y}(\cC,s^n)\right]}\label{joint_decoding_hayashi},
    \end{align}
    where we employ Jensen inequality.
    
    We now upper-bound the quantity $\bbE_{\cC}\left[\bar{e}^{Y}(\cC,s^n)\right]$ as follows
    \begin{align}
        &\bbE_{\cC}\left[\bar{e}^{Y}(\cC,s^n)\right]\nn\\
        &= \frac{1}{2^{n(R_1+R_2)}}\sum_{\substack{m_1\in \cM^{1}_n \\ m_2\in \cM^{2}_n}}\bbE_{\cC}\left[ \tr\left[\left(\bbI^{\cB}- \left(\Lambda_{m_2}^{Y}\right)\right)\rho_{\substack{X^n(m_1),Y^n(m_2), s^n}}^{\cB}\right]\right]\nn\\
        &\overset{a}{\leq} \frac{1}{2^{n(R_1+R_2)}}\sum_{\substack{m_1\in \cM^{1}_n \\ m_2\in \cM^{2}_n}}\left(2\bbE_{\cC}\left[\tr\left[\left(\bbI^{\cB}- \Pi^{Y}_{m_2}\right)\rho_{\substack{X^n(m_1),Y^n(m_2), s^n}}^{\cB}\right]\right] + 4\bbE_{\cC}\left[\tr\left[\left(\sum_{\substack{\hat{m}_2 \in \cM^{2}_n : \\ \hat{m}_2 \neq m_2}} \Pi^{Y}_{\hat{m}_2}\right)\rho_{\substack{X^n(m_1),Y^n(m_2), s^n}}^{\cB}\right]\right]\right)\nn\\
        &= \frac{1}{2^{n(R_1+R_2)}}\sum_{\substack{m_1\in \cM^{1}_n \\ m_2\in \cM^{2}_n}}\left(2\bbE_{\cC}\left[\tr\left[\left(\bbI^{\cB}- \sum_{\hat{m}_1 \in  \cM^{1}_n} \hat{\bbT}_{n,X^n(\hat{m}_1),Y^n(m_2)}^{Y}\right)\rho_{\substack{X^n(m_1),Y^n(m_2), s^n}}^{\cB}\right]\right]\right.\nn\\
        &\hspace{70pt} \left.+ 4\bbE_{\cC}\left[\tr\left[\left(\sum_{\substack{\hat{m}_2 \in \cM^{2}_n : \\ \hat{m}_2 \neq m_2}} \sum_{\hat{m}_1 \in  \cM^{1}_n} \hat{\bbT}_{n,X^n(\hat{m}_1),Y^n(\hat{m}_2)}^{Y}\right)\rho_{\substack{X^n(m_1),Y^n(m_2), s^n}}^{\cB}\right]\right]\right)\nn\\
        &\overset{b}{\leq} \frac{1}{2^{n(R_1+R_2)}}\sum_{\substack{m_1\in \cM^{1}_n \\ m_2\in \cM^{2}_n}}\left(2\bbE_{\cC}\left[\tr\left[\left(\bbI^{\cB}-  \hat{\bbT}_{n,X^n(m_1),Y^n(m_2)}^{Y}\right)\rho_{\substack{X^n(m_1),Y^n(m_2), s^n}}^{\cB}\right]\right] \right.\nn\\
        & \hspace{70pt}\left.+ 4\sum_{\substack{\hat{m}_2 \in \cM^{2}_n : \\ \hat{m}_2 \neq m_2}} \bbE_{\cC}\left[\tr\left[\hat{\bbT}_{n,X^n(m_1),Y^n(\hat{m}_2)}^{Y}\rho_{\substack{X^n(m_1),Y^n(m_2), s^n}}^{\cB}\right]\right]\right.\nn\\
        &\hspace{70pt} \left.+ 4\sum_{\substack{(\hat{m}_1,\hat{m}_2) \in \cM^{1}_n \times \cM^{2}_n \\ :\hat{m}_1 \neq m_1, \hat{m}_2 \neq m_2}}\bbE_{\cC}\left[\tr\left[ \hat{\bbT}_{n,X^n(\hat{m}_1),Y^n(\hat{m}_2)}^{Y}\rho_{\substack{X^n(m_1),Y^n(m_2), s^n}}^{\cB}\right]\right]\right)\nn\\
        &= \frac{1}{2^{n(R_1+R_2)}}\sum_{\substack{m_1\in \cM^{1}_n \\ m_2\in \cM^{2}_n}}\left(2\bbE_{X^n(m_1),Y^n(m_2)}\left[\tr\left[\left(\bbI^{\cB}-  \hat{\bbT}_{n,X^n(m_1),Y^n(m_2)}^{Y}\right)\rho_{\substack{X^n(m_1),Y^n(m_2), s^n}}^{\cB}\right]\right] \right.\nn\\
        & \hspace{70pt}\left.+ 4\sum_{\substack{\hat{m}_2 \in \cM^{2}_n : \\ \hat{m}_2 \neq m_2}} \bbE_{X^n(m_1),Y^n(\hat{m}_2)}\left[\tr\left[\hat{\bbT}_{n,X^n(m_1),Y^n(\hat{m}_2)}^{Y}\rho_{\substack{X^n(m_1), s^n}}^{\cB}\right]\right]\right.\nn\\
        &\hspace{70pt} \left.+ 4\sum_{\substack{(\hat{m}_1,\hat{m}_2) \in \cM^{1}_n \times \cM^{2}_n \\ : \hat{m}_1 \neq m_1, \hat{m}_2 \neq m_2}}\bbE_{X^n(\hat{m}_1),Y^n(\hat{m}_2)}\left[\tr\left[ \hat{\bbT}_{n,X^n(\hat{m}_1),Y^n(\hat{m}_2)}^{Y}\rho_{\substack{ s^n}}^{\cB}\right]\right]\right)\nn\\
        &{\leq}2\tr\left[\left( \bbI^{X^nY^n\cB^n} - \hat{\bbT}_{n}^{Y}\right)\rho^{X^nY^n\cB^n}_{P_{X},P_{Y},s^n}\right] + 4\cdot2^{nR_2}\tr\left[\hat{\bbT}_{n}^{Y}\left(\sum_{\substack{ x^n\in \cX^n\\y^n \in \cY^n}}P^{n}_{X}(x^n)\ketbrasys{x^n}{X^n}\otimes P^{n}_{Y}(y^n)\ketbrasys{y^n}{Y^n} \otimes \rho^{\cB^n}_{x^n,s^n}\right)\right]\nn\\
        &\hspace{10pt}+ 4\cdot2^{n(R_1 + R_2)}\tr\left[\hat{\bbT}_{n}^{Y}\left(\sum_{\substack{ x^n\in \cX^n\\y^n \in \cY^n}}P^{n}_{X}(x^n)\ketbrasys{x^n}{X^n}\otimes P^{n}_{Y}(y^n)\ketbrasys{y^n}{Y^n} \otimes \rho^{\cB^n}_{s^n}\right)\right]\nn\\
        &\overset{c}{\leq} 2f(n,\abs{\cX},\abs{\cY}, \abs{\cB}, \abs{\cS_\varepsilon},t)(1+\varepsilon)^n
        \left(2^{nt\left(\widehat{R}_2 - \min _{Q \in \cP(\cS_\varepsilon)} I_{1-t}[Y;\cB|X]_{P_{X},P_{Y},Q}\right)} + 2^{nt\left(\widehat{R}_1 + \widehat{R}_2 -  \min _{Q \in \cP(\cS_\varepsilon)} I_{1-t}[XY;\cB]_{P_{X},P_{Y},Q}\right)}\right)\nn \\
        &\hspace{10pt}+ 4\cdot g_1(n,\abs{\cY},\abs{\cB})2^{n(R_2 - \widehat{R}_2)} + 4\cdot g_3(n,\abs{\cB})2^{-n(\widehat{R}_1 + \widehat{R}_2 - R_1 - R_2)},\label{error_term_Y_hayashi}
    \end{align}
    where $a$ follows from Fact \ref{Hayashi_nagaoka}, $b$ follows from the fact that $\sum_{\hat{m}_1 \in  \cM^{1}_n} \hat{\bbT}_{n,X^n(\hat{m}_1),Y^n(m_2)}^{Y} \geq \hat{\bbT}_{n,X^n(m_1),Y^n(m_2)}^{Y}$ and $c$ follows from \cref{corllary_achievability_mac_independent_eq2_TS,corllary_achievability_mac_independent_eq5_TS,corllary_achievability_mac_independent_eq6_TS}.
    
    Similarly, using \cref{corllary_achievability_mac_independent_eq1_TS,corllary_achievability_mac_independent_eq3_TS,corllary_achievability_mac_independent_eq4_TS}, we get the following upper-bound for the quantity $\bbE_{\cC}\left[\bar{e}^{X}(\cC,s^n)\right]$:
    \begin{align}
        &\bbE_{\cC}\left[\bar{e}^{X}(\cC,s^n)\right] \nn\\
        &\leq 2f(n,\abs{\cX},\abs{\cY}, \abs{\cB}, \abs{\cS_\varepsilon},t)(1+\varepsilon)^n
        \left(2^{nt\left(\widehat{R}_1 - \min _{Q \in \cP(\cS_\varepsilon)} I_{1-t}[X;\cB|Y]_{P_{X},P_{Y},Q}\right)} 
        + 2^{nt\left(\widehat{R}_1 + \widehat{R}_2 -  \min _{Q \in \cP(\cS_\varepsilon)} I_{1-t}[XY;\cB]_{P_{X},P_{Y},Q}\right)}\right)\nn\\
        &\hspace{10pt}+ 4\cdot g_2(n,\abs{\cY},\abs{\cB})2^{n(R_1 - \widehat{R}_1)} + 4\cdot g_3(n,\abs{\cB})2^{-n(\widehat{R}_1 + \widehat{R}_2 - R_1 - R_2)}.\label{error_term_X_hayashi}
    \end{align}
        To reduce redundancy in the main script, the proof of \eqref{error_term_X_hayashi} is given in Appendix \ref{proof_error_term_X_hayashi}. 
        Due to the conditions \eqref{XBHI1} -- \eqref{Xtheo_gen_sen_macR_1R_2},
the pair of \eqref{error_term_Y_hayashi} and \eqref{error_term_X_hayashi} implies 
\begin{align}
\lim_{n\to \infty}\sup_{s^n \in \cS^n}\bbE_{\cC}\left[\bar{e}^Y(\cC,s^n)\right]=0, \quad
\lim_{n\to \infty}\sup_{s^n \in \cS^n}\bbE_{\cC}\left[\bar{e}^X(\cC,s^n)\right]=0.\label{FJ9}
\end{align}
Then, the combination of 
\eqref{joint_decoding_hayashi} and \eqref{FJ9}
guarantees \eqref{BVQ}.
\if0
\eqref{error_term_Y_hayashi},
and \eqref{error_term_X_hayashi} 
the convergences 

 implies
the convergence 
$\lim_{n\to \infty}\max_{s^n \in \cS^n}\bbE_{\cC}\left[\bar{e}(\cC,s^n)\right]=0$,
the convergence $\lim_{n\to \infty} \max_{s^n \in \cS^n}\bbE_{\cC}\left[\bar{e}(\cC,s^n)\right]=0$ holds, if we choose the rate pair $(R_1,R_2)$ satisfying
   \begin{align*}
    R_1 &< \min _{Q \in \cP(\cS)} I[X;\cB|Y]_{P_X,P_Y,Q},\\
    R_2 &< \min _{Q \in \cP(\cS)} I[Y;\cB|X]_{P_X,P_Y,Q},\\
   R_1+ R_2 &< \min _{Q \in \cP(\cS)} I[XY;\cB]_{P_X,P_Y,Q}.
\end{align*}
\fi
This completes the proof for achievability using the non-simultaneous testing approach, mentioned in Lemma \ref{lemma_cq_mac_avht_independent_ts}.
\endproof

\subsection{Converse part of Theorem \ref{lemma_rand_capacity_avmac}}\label{S4-C}
Let $\cC$ be a $(n,2^{nR_1},2^{nR_2},\eps_n)$ random code satisfying
\begin{align}
\lim_{n\to \infty}\sup_{s^n \in \cS^n}\bbE_{\cC}\left[\bar{e}(\cC,s^n)\right]=0.
\end{align}
For any any probability distribution $Q$ over $\cS$, we have
\begin{align}
\sup_{s^n \in \cS^n}\bbE_{\cC}\left[\bar{e}(\cC,s^n)\right]
\ge 
\bbE_{\cC} \left[\sum_{s^n \in \cS^n}Q^n(s^n)\bar{e}(\cC,s^n)\right].
\end{align}
Hence, the RHS of the above also goes to zero.
We consider a classical-quantum multiple access channel (CQ-MAC) $\cW_{mac}^{Q^{n}} := \sum_{s^n \in \cS^n} Q^{n}(s^n) \cN^{X^nY^n \to B^n}_{s^n}$.
Then, the joint state induced on $X^n$, $Y^n$ and $\cB^n$ is represented by the following CQ state,
\begin{equation}\rho^{X^nY^n\cB^n}_{P_X,P_Y, Q^{n}} := \sum_{x^n \in \cX^n , y^n \in \cY^n, s^n \in \cS^n}Q^{n}(s^n) P^{n}_{X}(x^n)\ketbrasys{x^n}{X^n}\otimes P^{n}_{Y}(y^n)\ketbrasys{y^n}{Y^n}\otimes \rho^{\cB^n}_{x^n,y^n,s^n}.\label{cons}
\end{equation}
All the information-theoretic quantities below are calculated with respect to \eqref{cons}.
We will prove a converse under the MAC $\cW_{mac}^{Q^{n}}$. 
This will imply an outer bound on the random code capacity region for the CQ-AVMAC (mentioned in Definition \ref{def2_avmac_rand}).

Let $(M_1,M_2)$ be two independent and uniformly distributed random variables over $[2^{nR_1}]\times[2^{nR_2}]$ and let $(\hat{M}_1,\hat{M_2})$ be an estimation of $(M_1,M_2)$. The diagram below shows the general process of communicating the message pair $(M_1M_2)$ over $\cW_{mac}^{Q^{n}}$.
\begin{equation*}
    (M_1,M_2) \longrightarrow (X^n,Y^n) \underset{\cW_{mac}^{Q^{n}}}{\longrightarrow} \cB^n \longrightarrow (\hat{M}_1,\hat{M_2}).
\end{equation*}
Fano's inequality (Fact \ref{fano}) implies
\begin{align}
    H(M_1 M_2 |\hat{M}_1 \hat{M_2}\cC) &\leq n(R_1+R_2)\Pr\left\{(M_1,M_2) \neq (\hat{M}_1,\hat{M_2})\right\} 
    + H_{b} \left(\Pr\left\{(M_1,M_2) \neq (\hat{M}_1,\hat{M_2})\right\}\right)\nn\\
    &\leq n(R_1+R_2)\eps_n + H_{b}(\eps_n)\nn\\
    &\overset{a}{=} n \delta_n,\label{fano_bound}
\end{align}
 where in $a$, we define $\delta_n$ as follows
\begin{equation}
   \delta_n := (R_1+R_2)\eps_n + \frac{1}{n}H_{b}(\eps_n).\label{epsn}
\end{equation}

Observe that, $\lim_{n \to \infty}\delta_n = 0$. We now upper-bound the rate $R_1$ as follows
\begin{align*}
    nR_1 &= H(M_1|\cC)\\
    &= I[M_1;\hat{M}_1 M_2\hat{M}_2 \cC] + H(M_1|\hat{M}_1 M_2 \hat{M}_2 \cC)\\
    &\leq I[M_1;\hat{M}_1 M_2\hat{M}_2|\cC] 
    + H(M_1,  M_2|\hat{M}_1 \hat{M}_2\cC)\\
    &\overset{b}{\leq} I[X^n ; Y^n\cB^n|\cC] +  n \delta_n\\
    &\overset{c}{=} I[X^n ; \cB^n|Y^n\cC] +  n \delta_n\\
    &= H(\cB^n|Y^n\cC) - H(\cB^n|X^n Y^n\cC) +  n \delta_n\\
    &\overset{d}{=} H(\cB^n|Y^n\cC) - \sum_{i=1}^{n}H(B|X_i,Y_i\cC) +  n \delta_n\\ 
    &\overset{e}{\leq} \sum_{i=1}^{n}H(B|Y_i\cC) - \sum_{i=1}^{n}H(B|X_i,Y_i\cC) +  n \delta_n\\ 
    &=  \sum_{i=1}^{n} I[X_i : \cB|Y_i\cC]_{P_{X_i},P_{Y_i},Q}  +  n \delta_n,
\end{align*}
where $b$ follows from data processing inequality \cite[Theorem 4.3.3]{gallager-68-book} and \eqref{fano_bound}, $c$ follows from the fact that $X^n$ and $Y^n$ are independent, $d$ follows from the fact that $\rho_{x^n,y^n,Q^{n}}^{\cB^n}$ can be written as $\bigotimes_{i=1}^{n}\rho_{x_i,y_i,Q}^{\cB}$ (where for any  
$x \in \cX, y \in \cY, \rho_{x_i,y_i,Q}^{\cB} := \sum_{s \in \cS} Q(s)\rho_{x,y,s}^{\cB}$) and $e$ follows from chain rule for entropy and the fact that removing the conditioning increases the entropy. Since the chosen $Q$ is any distribution over $\cS$ therefore, we get the following upper-bound on $R_1$ by optimizing over the choice of $Q$,
\begin{align}
    R_1- \delta_n\leq& \frac{1}{n} \sum_{i=1}^{n} \inf_{Q \in \cP(\cS)} I[X_i : \cB|Y_i\cC]_{P_{X_i},P_{Y_i},Q}  \nn\\
=& \frac{1}{n} \sum_{i=1}^{n} \sum_{c}P_{\cC}(c)
\inf_{Q \in \cP(\cS)} I[X_i : \cB|Y_i\cC]_{P_{X_i|\cC=c},P_{Y_i|\cC=c},Q} .\label{FO1}
\end{align}

Similarly, the rate $R_2$ is upper-bounded as follows
\begin{align}
    R_2-\delta_n \leq &\frac{1}{n} \sum_{i=1}^{n} \inf_{Q \in \cP(\cS)} I[Y_i : \cB|X_i\cC]_{P_{X_i},P_{Y_i},Q}\nn \\
= &\frac{1}{n} \sum_{i=1}^{n} \sum_{c}P_{\cC}(c)
\inf_{Q \in \cP(\cS)} I[Y_i : \cB|X_i\cC]_{P_{X_i|\cC=c},P_{Y_i|\cC=c},Q}.\label{FO2}
\end{align}
The sum rate $(R_1 + R_2)$ is upper-bounded as follows
\begin{align*}
    n(R_1 + R_2) &= H(M_1 M_2|\cC)\\
    &= I[M_1 M_2;\hat{M}_1 \hat{M}_2|\cC] + H(M_1 M_2|\hat{M}_1 \hat{M}_2\cC)\\
    &\overset{f}{\leq} I[X^nY^n : \cB^n|\cC] +  n \delta_n\\
    &= H(\cB^n) - H(\cB^n | X^nY^n\cC) + n \delta_n\\
    &\overset{g}{=}  H(\cB^n|\cC) 
    - \sum_{i=1}^{n}H(\cB | X_i Y_i\cC) + n \delta_n\\
    &\leq n H(\cB|\cC) - \sum_{i=1}^{n}H(\cB | X_i Y_i\cC) + n \delta_n\\
    &= \sum_{i=1}^{n} I[X_i Y_i :\cB|\cC]_{P_{X_i},P_{Y_i},Q}  +  n \delta_n,
\end{align*}
where $f$ follows from data processing inequality \cite[Theorem 4.3.3]{gallager-68-book} and \eqref{fano_bound}, $g$ from the fact that $\rho_{x^n,y^n,Q^n}^{\cB^n}$ can be written as $\bigotimes_{i=1}^{n}\rho_{x_i,y_i,Q}^{\cB}$ (where for any  $x \in \cX, y \in \cY, \rho_{x_i,y_i,Q}^{\cB} := \sum_{s \in \cS} Q(s)\rho_{x,y,s}^{\cB}$). Since the chosen $Q$ is any distribution over $\cS$ therefore, we get the following upper-bound on $(R_1+R_2)$ by optimizing over the choice of $Q$,
\begin{align}
    (R_1-\delta_n) + (R_2 -\delta_n)
\le    R_1 + R_2 -\delta_n
    \leq &\frac{1}{n} \sum_{i=1}^{n} \inf_{Q \in \cP(\cS)} I[X_i Y_i : \cB|\cC]_{P_{X_i},P_{Y_i},Q}\nn  \\
    =&\frac{1}{n} \sum_{i=1}^{n} \sum_{c}P_{\cC}(c)
    \inf_{Q \in \cP(\cS)} I[X_i Y_i : \cB]_{P_{X_i|\cC=c},P_{Y_i|\cC=c},Q} .\label{FO3}
\end{align}
The combination of \eqref{FO1}, \eqref{FO2}, and \eqref{FO3}
guarantees that $(R_1-\delta_n,R_2-\delta_n)$ belongs to $\cR^{\star}$.
Since $\delta_n\to 0$,
\if0
Hence, we have the following rate region:
\begin{align}
R_1 &\leq \frac{1}{n} \sum_{i=1}^{n} R_{1}^{(i)} + \delta_n,  &&\text{ where }\forall i \in [n],  R_{1}^{(i)} \leq \min_{Q \in \cP(\cS)} I[X_i : \cB | Y_i]_{P_{X_i},P_{Y_i},Q},\label{BN1}\\
    R_2 &\leq \frac{1}{n} \sum_{i=1}^{n} R_{2}^{(i)} + \delta_n, &&\text{ where } \forall i \in [n],  R_{2}^{(i)} \leq \min_{Q \in \cP(\cS)} I[Y_i : \cB | X_i]_{P_{X_i},P_{Y_i},Q},\label{BN2}\\
    R_1 + R_2 &\leq \frac{1}{n} \sum_{i=1}^{n}\left( R_{1}^{(i)} + R_{2}^{(i)}\right) + \delta_n, &&\text{ where } \forall i \in [n],  R_{1}^{(i)} + R_{2}^{(i)} \leq \min_{Q \in \cP(\cS)} I[X_i Y_i : \cB]_{P_{X_i},P_{Y_i},Q}.\label{BN3}
\end{align}
Since $(R_1^{(i)},R_2^{(i)}) $ belongs to $\cR^{\star}$,
\fi
$(R_1,R_2)$ also belongs to $\cR^{\star}$.
This completes the proof of Theorem \ref{lemma_rand_capacity_avmac}. 
\endproof

In the subsection below we discuss the generalization of our achievability techniques to the case of $T (T>2)$ senders.
\subsection{Discussion on the achievability techniques for the case of $T$ $(T > 2)$ senders}
\subsubsection{Formulation}
In this subsection, for simplicity, we assume that
$|\cS|< \infty$.
To address the case of $T$ $(T > 2)$ senders, we formulate CQ-AVMAC with this case as follows.
For the case of CQ-AVMAC with $T$ senders we will denote each of the classical input systems as $X_{[T]}:= \{X_i\}_{i\in [T]}$ and the output quantum system as $\cB$. The channel output is represented by a collection $\{\rho^{\cB}_{x_{[T]},s}\}_{x_{[T]} \in \cX_{[T]}, s \in \cS}$, where $\cX_{[T]} := (\times_{i \in [T]}\cX_i)$ (where $\abs{\cX_i}<\infty$), $x_{[T]} := (x_1,\cdots,x_{T})$, for each state $s \in \cS$, $\rho^{\cB}_{x_{[T]},s}$ is the output of the channel corresponding to state $s$, upon inputting $x_{[T]}$. We can represent the joint quantum state induced on $X_{[T]}$ and $\cB$ using the following CQ-states:
\begin{align*}
    \rho^{X^n_{[T]}\cB^n}_{P_{X_{[T]}},s^n} := \sum_{x^n_{[T]} \in \cX^n_{[T]}} \prod_{i=1}^{[T]}P^{n}_{X_i}(x^n_i) \bigotimes_{i=1}^{T}\ketbrasys{x^n_i}{X^n_i} \otimes \rho^{\cB^n}_{x^n_{[T]},s^n}, 
\end{align*}
where $X^n_{[T]} := \{X^n_i\}_{i\in [T]}$, $P_{X_{[T]}} := \{P_{X_{1}},\cdots,P_{X_{T}}\}$, $\cX^n_{[T]}:= (\times_{i=1}^{T}\cX^n_i)$, $x^n_{[T]} := \{x^n_i\}_{i \in [T]}$, where for each $i \in [T]$, $x^n_i := (x_{i,1},\cdots,x_{i,n})$, $P_{X_{i}}$ is a probability distribution over $\cX_i$ and $P^{n}_{X_{i}}(x^n_i) := \prod_{j =1}^{n}P_{X_{i}}(x_{i,j})$. For any $x^n_{[T]} \in \cX^n_{[T]}$ and $s^n := (s_1,\cdots,s_n) \in \cS^n$, we define $\rho^{\cB^n}_{x^n_{[T]},s^n} := \otimes_{j=1}^{n}\rho^{\cB}_{\{x_{i,j}\}_{i\in [T]},s_j}$ and $\rho^{\cB}_{\{x_{i,j}\}_{i\in [T]},s_j}$ is the output of the channel corresponding to $s_j$ upon inputting $\{x_{i,j}\}_{i\in [T]}$ into it. 

We assume that for each $i \in [T]$, $i$-th sender has a message set $\cM^{i}_{n}:= [2^{nR_i}]$. For a particular message tuple $m_{[T]} := (m_1,\cdots,m_T)$ chosen by the $T$ senders respectively and 
given an encoder $\mathcal{E}^{(n),T}$ and an element $s^n$
and a decoding POVM $\{ \Lambda_{m_{[T]}}^{X_{[T]}}\}_{m_{[T]}}$,
we define the probability of error as 
\begin{align}
    e(m_{[T]},\mathcal{E}^{(n),T},s^n, \{ \Lambda_{m_{[T]}}^{X_{[T]}}\}_{m_{[T]}}) &:= 
\tr\left[\left(\bbI^{\cB^n} - \Lambda_{m_{[T]}}^{X_{[T]}}\right)\rho^{\cB^n}_{X^n_{[T]}(m_{[T]}),s^n}\right].
\end{align}
Then, we define its average error probability as
  \begin{equation*}
      \bar{e}(\mathcal{E}^{(n),T},s^n,\{ \Lambda_{m_{[T]}}^{X_{[T]}}\}_{m_{[T]}})
      := \frac{1}{2^{n(\sum_{i \in [T]} R_i)}} \sum_{m_{[T]} \in (\times_{i \in [T]} \cM^{i}_{n})}e(m_{[T]},\mathcal{E}^{(n),T},s^n,\{ \Lambda_{m_{[T]}}^{X_{[T]}}\}_{m_{[T]}}).
  \end{equation*}

\subsubsection{Generalization of Techniques Based on Lemma \ref{lemma_cq_mac_avht_independent_ts}}
To construct our decoder, 
we next discuss the generalization of the techniques based on Lemma \ref{lemma_cq_mac_avht_independent_ts}. 
For any $\widehat{R}_1,\cdots,\widehat{R}_T > 0$ and each $V\subset [T]$, we define the following tests (projectors):
\begin{align*}
    \widehat{\bbT}_{n,X_{V}} := \sum_{x^n_{[T]} \in \cX^n_{[T]}} \bigotimes_{i=1}^{T}\ketbrasys{x^n_i}{X^n_i} \otimes \left\{\widehat{\rho}^{\cB^n}_{x^n_{[T]}} \succeq 2^{n \sum_{t \in [T] \setminus V} \widehat{R}_t}\widehat{\rho}^{\cB^n}_{x^n_{V}}\right\},
\end{align*}
where for any $V \subset [T] : V \neq \varnothing$, $\widehat{\rho}^{\cB^n}_{x^n_{V}}$ is defined using the same techniques used in \cref{rhohatxbarnybarn1,rhohatxnyn1} and for $V = \varnothing$, $\widehat{\rho}^{\cB^n}_{x^n_{V}} := \rho^{\cB^n}_{U,n}$ (which is defined in \eqref{uniform_n_state}) and $\widehat{\bbT}_{n,X_{V}} \triangleq \widehat{\bbT}_{n}$. For each $i \in [T]$, as a generalization of \eqref{tests_hayashi_generalised}, we define a projector $\widehat{\bbT}^{X_i}_{n}$ as follows,
\begin{align}
    \widehat{\bbT}^{X_i}_{n} := \underset{\substack{\\V \subseteq [T]\setminus \{i\}}}{\bullet} \widehat{\bbT}_{n,X_{V}},\label{tests_hayashi_generalised_n_sender}
\end{align}
where $\bullet$ denotes iterative matrix multiplication (see Definition \ref{def_iter_mul}). Using similar techniques used in the proof of Lemma \ref{lemma_cq_mac_avht_independent_ts}, one can show that 
for each $i \in [T]$ and each $t \in (0,1)$, 
the projector $\widehat{\bbT}^{X_i}_{n}$ satisfies the following:
\begin{align*}
\tr\left[\left(\bbI^{X^n_{[T]}\cB^n} -\widehat{\bbT}^{X_i}_{n}\right)\rho^{X^n_{[T]}\cB^n}_{P_{X_{[T]}},s^n}\right] &\leq \hat{f}(n,\abs{\cX_1},\cdots,\abs{\cX_{T}},\abs{\cB},\abs{\cS}) \sum_{V \subseteq [T] : i \in V} 2^{n\left(\sum_{v \in V} \widehat{R}_v - \min_{Q \in \cP(\cS)}I_{1-t}[X_{V};B|X_{[T]\setminus V}]_{P_{X_{[T]}},Q}\right)},\\
\tr\left[\widehat{\bbT}^{X_i}_{n}\left(\rho^{X^n_{M}}_{P_{X_{M}}} \otimes \rho^{X^n_{[T]\setminus M}\cB^n}_{P_{X_{[T]\setminus M}},s^n}\right)\right] &\leq \hat{g}(n,\prod_{t \in [T]\setminus M}\abs{\cX_t},\abs{\cB}) 2^{-n\left(\sum_{u \in M} \widehat{R}_u\right)}, \quad \forall M \subseteq [T]:  i \in M,
\end{align*}
where for each $V \subseteq [T]$ and probability distributions $Q \in \cP(\cS)$, we define $I_{1-t}[X_{V};B|X_{[T]\setminus V}]_{P_{X_{[T]}},Q}$ as follows,
\begin{align*}
     &I_{1-t}[X_{V};B|X_{[T]\setminus V}]_{P_{X_{[T]}},Q} \\
     &:= \min_{\{\mu_{x_{[T]\setminus V}} \in \cD(\cH_\cB^{\otimes n})\}_{x_{[T]\setminus V}}} D_{1-t}\left(\rho^{X_{[T]}\cB}_{P_{X_{[T]}},Q} \left\| \rho^{X_{V}}_{P_{X_{V}}} \otimes \left(\sum_{x_{[T]\setminus V}\in \cX_{[T]\setminus V}} \prod_{j\in [T]\setminus V} P_{X_j}(x_j) \bigotimes_{j\in [T]\setminus V}\ketbrasys{x_j}{X_j} \otimes \mu_{x_{[T]\setminus V}}\right.\right)\right)\\
     &= -\frac{1-t}{t}\log\left(\sum_{x_{[T]\setminus V}\in \cX_{[T]\setminus V}} \prod_{j\in [T]\setminus V} P_{X_j}(x_j)\tr\left[\left(\sum_{x_{V} \in \cX_{V}} \prod_{k\in V} P_{X_k}(x_k)(\rho^{\cB}_{Q,x_{[T]}})^{1-t}\right)^{\frac{1}{1 - t}}\right]\right),\\
     &\rho^{X_{[T]}\cB}_{P_{X_{[T]}},Q} := \sum_{x_{[T]} \in \cX_{[T]}} \prod_{i=1}^{[T]}P_{X_i}(x_i) \bigotimes_{i=1}^{T}\ketbrasys{x_i}{X_i} \otimes \rho^{\cB}_{x_{[T]},Q},
\end{align*}
where the symbols used here are defined as
$\rho^{\cB}_{x_{[T]},Q} := \sum_{s \in \cS}Q(s)\rho^{\cB^n}_{x_{[T]},s}$, and 
\begin{align*}
    \hat{f}(n,\abs{\cX_1},\cdots,\abs{\cX_{T}},\abs{\cB},\abs{\cS}) &:= (n+1)^{\abs{\cS}-1 + \frac{t(\prod_{i \in [T]}\abs{\cX_i})(\abs{\cB}-1)\abs{\cB}}{2} + (\abs{\cB} - 1)t(\prod_{i \in [T]}\abs{\cX_i})},\\
    \hat{g}(n,\prod_{t \in [T]\setminus M}\abs{\cX_t},\abs{\cB}) &:= n^{\frac{(\prod_{t \in [T]\setminus M}\abs{\cX_t})\abs{\cB}(\abs{\cB}-1)}{2}}\abs{\Lambda^{n}_{\abs{\cB}}}^{(\prod_{t \in [T]\setminus M}\abs{\cX_t})}.
\end{align*}

    \subsubsection{Decoding Strategy}
    Upon receiving the state $\rho_{\substack{X^n_{[T]}, s^n}}^{\cB^n}$ (where $s^n$ is unknown to the receiver), the receiver then tries to guess the message pair sent using certain measurement based decoder. Now from \eqref{tests_hayashi_generalised_n_sender}, for each $i \in [T]$, we assume that $\widehat{\bbT}^{X_i}_{n}$ has the following form:
    \begin{equation*}
       \widehat{\bbT}^{X_i}_{n} := \sum_{x^n_{[T]} \in \cX^n_{[T]}} \bigotimes_{i=1}^{T}\ketbrasys{x^n_i}{X^n_i} \otimes \widehat{\bbT}^{X_i}_{n,x^n_{[T]}}.
    \end{equation*}
    
    For each message $T$-tuple $m_{[T]} \in (\times_{i\in [T]}\cM^{i}_{n})$, we construct $T$ separate decoders (each for one of messages). For each $i \in [T]$, we define a separate decoder for $i$-th sender as follows,
    \begin{align}
        \Lambda_{m_i}^{X_i} &:= \left(\sum_{\hat{m}_i \in \cM^{i}_n} \Pi^{X_i}_{\hat{m}_i}\right)^{-\frac{1}{2}}\Pi^{X_i}_{m_i} \left(\sum_{\hat{m}_i \in \cM^{i}_n} \Pi^{X_i}_{\hat{m}_i}\right)^{-\frac{1}{2}},\label{lambda_m_i}
    \end{align}
where for any $\hat{m}_{i} \in \cM^{i}_{n}$, $\Pi^{X_i}_{\hat{m}_i}$ is defined as follows,
\begin{equation}
    \Pi^{X_i}_{\hat{m}_i} := \sum_{\hat{m}_{[T]\setminus i} \in  (\times_{\substack{j \in [T]\setminus i}}\cM^{j}_{n})} \hat{\bbT}_{n,X^n_{[T]}(\hat{m}_{[T]})},\label{pi_X_i_m_i}
\end{equation}
where $\hat{m}_{[T]} : (\hat{m}_1,\cdots,\hat{m}_T)$ 
 and $\hat{m}_{[T]\setminus i} := (\hat{m}_j)_{j \in [T] : j \neq i}$. The receiver then uses the POVM $\left\{\Lambda_{m_{[T]}}^{X_{[T]}}\right\}_{m_{[T]}}$ to decode the message jointly, where $\Lambda_{m_{[T]}}^{X_{[T]}}$ is defined as follows,
\begin{equation}
    \Lambda_{m_{[T]}}^{X_{[T]}}:= \left(\Lambda_{m_T}^{X_T}\right)^{\frac{1}{2}}\left(\Lambda_{m_{T-1}}^{X_{T-1}}\right)^{\frac{1}{2}}\cdots\left(\Lambda_{m_2}^{X_2}\right)^{\frac{1}{2}}\left(\Lambda_{m_1}^{X_1}\right)\left(\Lambda_{m_2}^{X_2}\right)^{\frac{1}{2}}\cdots\left(\Lambda_{m_{T-1}}^{X_{T-1}}\right)^{\frac{1}{2}}\left(\Lambda_{m_T}^{Y}\right)^{\frac{1}{2}}.\label{lambda_m[T]}
\end{equation}

It is important to note that, for each $m_{[T]} \in (\times_{i\in [T]}\cM^{i}_{n})$, 
the matrices $\Lambda_{m_1}^{X_1}, \cdots, \Lambda_{m_T}^{X_T}$ commute with each other. Thus, the following holds,
\begin{equation*}
    \sum_{m_{[T]} \in (\times_{i\in [T]}\cM^{i}_{n})} \Lambda_{m_{[T]}}^{X_{[T]}} = \bbI^{\cB}.
\end{equation*}
Now, if the receiver gets the measurement outcome corresponding to $\Lambda_{\hat{m}_{[T]}}^{X_{[T]}}$, the receiver then declares the message tuple to be $\hat{m}_{[T]}$.
Under the above choice of the decoder,
we simplify 
$e(m_{[T]},\mathcal{E}^{(n),T},s^n,\{ \Lambda_{m_{[T]}}^{X_{[T]}}\}_{m_{[T]}})$ and 
$\bar{e}(\mathcal{E}^{(n),T},s^n,\{ \Lambda_{m_{[T]}}^{X_{[T]}}\}_{m_{[T]}})$ to
$e(m_{[T]},\mathcal{E}^{(n),T},s^n)$ and 
$\bar{e}(\mathcal{E}^{(n),T},s^n)$.

In the next part, we will show the characterization of rate-tuples $(R_1,\cdots,R_T)$, for which there exists 
an $(n,2^{nR_1},\cdots,2^{nR_T},\eps_n)$ random code with the encoder  $\mathcal{E}^{(n),T}$ for which the average error probability, averaged over the choice of encoder $\mathcal{E}^{(n),T}$ satisfies
$\max_{s^n \in \cS^n}\bbE_{\mathcal{E}^{(n),T}} \left[\bar{e}(\mathcal{E}^{(n),T},s^n)\right]  < \eps_n$, where $\lim_{n \to \infty}\eps_n = 0$. We now start with the construction of the encoder.
    \subsubsection{Randomized Encoder Construction}
    For each $i \in [T]$ and a probability distribution $P_{X_i}$ on $\cX_i$, 
    the $i$-th sender randomly generates $2^{nR_i}$ sequences $\{X^n_i(m_i) \in \cX^n_i : m_i \in [2^{nR_i}]\}$ according to 
    the probability distribution $P_{X_i}^{n}$.
    That is, for any element
    $ m_i \in [2^{nR_i}]$, the random variable $X^n_i(m_i)$ obeys the distribution 
$\Pr\left\{X^n_i(m_i)\right\} := \prod_{j=1}^{n}P_{X_i}(X_{i,j}(m_i))$, where $X^n_i(m_i):=\left(X_{i,1}(m_i),\cdots,X_{i,n}(m_i)\right)$. 
Then, we randomly choose the encoder $\cE^{(n),T}$ subject to the following distribution;
\begin{equation*}
    \Pr( \cE^{(n),T}_i(m_i)=X_i^n(m) \hbox{ for } i=1, \ldots, T,~ m_i=1, \ldots,  2^{nR_i} ) 
    := \prod_{i=1}^{T}\prod_{m_i=1}^{\cM^{i}_{n}}\Pr\left\{X^n_i(m_i)\right\}.
\end{equation*}

    If the $T$ senders have to send a message tuple  $m_{[T]} := (m_1,m_2,\cdots,m_T) \in (\times_{i\in [T]}\cM^{i}_{n})$, for each $i \in [T]$, $i$-th sender encodes $m_i$ to the input sequence $X^n_i(m_i)$ and 
    sends it over the channel.

\subsubsection{Error Analysis}
    Using the fact that  for each $m_{[T]} \in (\times_{i\in [T]}\cM^{i}_{n})$, $\Lambda_{m_1}^{X_1}, \cdots, \Lambda_{m_T}^{X_T}$ commutes with each other, we can show that the average error probability 
    for any particular $s^n \in \cS^n$ is of the following form:
    \begin{align}
        &\hspace{10pt} \bar{e}(\mathcal{E}^{(n),T},s^n,\{ \Lambda_{m_{[T]}}^{X_{[T]}}\}_{m_{[T]}})\nn\\
        &= \frac{1}{2^{n(\sum_{t \in [T]} R_t)}} \sum_{m_{[T]} \in (\times_{j \in [T]} \cM^{j}_{n})}e(m_{[T]},\mathcal{E}^{(n),T},s^n,\{ \Lambda_{m_{[T]}}^{X_{[T]}}\}_{m_{[T]}})\nn\\
        &= \frac{1}{2^{n(\sum_{t \in [T]} R_t)}} \sum_{m_{[T]} \in (\times_{j \in [T]} \cM^{j}_{n})}\tr\left[\left(\bbI^{\cB^n} - \Lambda_{m_{[T]}}^{X_{[T]}}\right)\rho^{\cB^n}_{X^n_{[T]}(m_{[T]}),s^n}\right]\nn\\
        &= \frac{1}{2^{n(\sum_{t \in [T]} R_t)}} \sum_{m_{[T]} \in (\times_{j \in [T]} \cM^{j}_{n})}\tr\left[\left(\bbI^{\cB^n} - \Lambda_{m_{T}}^{X_{T}}\right)\rho^{\cB^n}_{X^n_{[T]}(m_{[T]}),s^n}\right]\nn\\
        &\hspace{10pt}+  \frac{1}{2^{n(\sum_{t \in [T]} R_t)}} \sum_{m_{[T]} \in (\times_{j \in [T]} \cM^{j}_{n})}\tr\left[\left( \left(\Lambda_{m_T}^{X_T}\right)^{\frac{1}{2}}\left(\bbI^{\cB^n} - \left(\Lambda_{m_{T-1}}^{X_{T-1}}\right)^{\frac{1}{2}}\cdots\left(\Lambda_{m_1}^{X_1}\right)\cdots\left(\Lambda_{m_{T-1}}^{X_{T-1}}\right)^{\frac{1}{2}}\right)\left(\Lambda_{m_T}^{Y}\right)^{\frac{1}{2}}\right)\rho^{\cB^n}_{X^n_{[T]}(m_{[T]}),s^n}\right]\nn\\
        &\overset{a}{\leq} \frac{1}{2^{n(\sum_{t \in [T]} R_t)}} \sum_{m_{[T]} \in (\times_{j \in [T]} \cM^{j}_{n})}\tr\left[\left(\bbI^{\cB^n} - \Lambda_{m_{T}}^{X_{T}}\right)\rho^{\cB^n}_{X^n_{[T]}(m_{[T]}),s^n}\right]\nn\\
        &\hspace{10pt}+  \frac{1}{2^{n(\sum_{t \in [T]} R_t)}} \sum_{m_{[T]} \in (\times_{j \in [T]} \cM^{j}_{n})}\tr\left[\left(\bbI^{\cB^n} - \left(\Lambda_{m_{T-1}}^{X_{T-1}}\right)^{\frac{1}{2}}\cdots\left(\Lambda_{m_1}^{X_1}\right)\cdots\left(\Lambda_{m_{T-1}}^{X_{T-1}}\right)^{\frac{1}{2}}\right)\rho^{\cB^n}_{X^n_{[T]}(m_{[T]}),s^n}\right]\nn\\
        &\hspace{10pt}+ \frac{1}{2^{n(\sum_{t \in [T]} R_t)}} \sum_{m_{[T]} \in (\times_{j \in [T]} \cM^{j}_{n})} \norm{\left(\Lambda_{m_T}^{Y}\right)^{\frac{1}{2}}\rho^{\cB^n}_{X^n_{[T]}(m_{[T]}),s^n}\left(\Lambda_{m_T}^{Y}\right)^{\frac{1}{2}} - \rho^{\cB^n}_{X^n_{[T]}(m_{[T]}),s^n}}{1}\nn\\
        &\overset{b}{\leq} \frac{1}{2^{n(\sum_{t \in [T]} R_t)}} \sum_{m_{[T]} \in (\times_{j \in [T]} \cM^{j}_{n})}\tr\left[\left(\bbI^{\cB^n} - \Lambda_{m_{T}}^{X_{T}}\right)\rho^{\cB^n}_{X^n_{[T]}(m_{[T]}),s^n}\right]\nn\\
        &\hspace{10pt}+  \frac{1}{2^{n(\sum_{t \in [T]} R_t)}} \sum_{m_{[T]} \in (\times_{j \in [T]} \cM^{j}_{n})}\tr\left[\left(\bbI^{\cB^n} - \left(\Lambda_{m_{T-1}}^{X_{T-1}}\right)^{\frac{1}{2}}\cdots\left(\Lambda_{m_1}^{X_1}\right)\cdots\left(\Lambda_{m_{T-1}}^{X_{T-1}}\right)^{\frac{1}{2}}\right)\rho^{\cB^n}_{X^n_{[T]}(m_{[T]}),s^n}\right]\nn\\
        &\hspace{10pt}+ 2\sqrt{\frac{1}{2^{n(\sum_{t \in [T]} R_t)}} \sum_{m_{[T]} \in (\times_{j \in [T]} \cM^{j}_{n})} \tr\left[\left(\bbI^{\cB^n} - \Lambda_{m_{T}}^{X_{T}}\right)\rho^{\cB^n}_{X^n_{[T]}(m_{[T]}),s^n}\right]}\nn\\
        &\overset{c}{=} \bar{e}^{X_T}(\mathcal{E}^{(n),T},s^n,\{ \Lambda_{m_{[T]}}^{X_{[T]}}\}_{m_{[T]}}) + 2\sqrt{\bar{e}^{X_T}(\mathcal{E}^{(n),T},s^n,\{ \Lambda_{m_{[T]}}^{X_{[T]}}\}_{m_{[T]}})} \nn\\
        &\hspace{10pt}+  \frac{1}{2^{n(\sum_{t \in [T]} R_t)}} \sum_{m_{[T]} \in (\times_{j \in [T]} \cM^{j}_{n})}\tr\left[\left(\bbI^{\cB^n} - \left(\Lambda_{m_{T-1}}^{X_{T-1}}\right)^{\frac{1}{2}}\cdots\left(\Lambda_{m_1}^{X_1}\right)\cdots\left(\Lambda_{m_{T-1}}^{X_{T-1}}\right)^{\frac{1}{2}}\right)\rho^{\cB^n}_{X^n_{[T]}(m_{[T]}),s^n}\right]\nn\\
        &\overset{d}{\leq} \bar{e}^{X_T}(\mathcal{E}^{(n),T},s^n,\{ \Lambda_{m_{[T]}}^{X_{[T]}}\}_{m_{[T]}})+ \bar{e}^{X_{T-1}}(\mathcal{E}^{(n),T},s^n,\{ \Lambda_{m_{[T]}}^{X_{[T]}}\}_{m_{[T]}}) + 2\sqrt{\bar{e}^{X_T}(\mathcal{E}^{(n),T},s^n,\{ \Lambda_{m_{[T]}}^{X_{[T]}}\}_{m_{[T]}})}\nn\\
        &\hspace{10pt} + 2\sqrt{\bar{e}^{X_{T-1}}(\mathcal{E}^{(n),T},s^n,\{ \Lambda_{m_{[T]}}^{X_{[T]}}\}_{m_{[T]}})} \nn\\
        &\hspace{10pt}+  \frac{1}{2^{n(\sum_{t \in [T]} R_t)}} \sum_{m_{[T]} \in (\times_{j \in [T]} \cM^{j}_{n})}\tr\left[\left(\bbI^{\cB^n} - \left(\Lambda_{m_{T-2}}^{X_{T-2}}\right)^{\frac{1}{2}}\cdots\left(\Lambda_{m_1}^{X_1}\right)\cdots\left(\Lambda_{m_{T-2}}^{X_{T-2}}\right)^{\frac{1}{2}}\right)\rho^{\cB^n}_{X^n_{[T]}(m_{[T]}),s^n}\right]\nn\\
        &\leq \sum_{\substack{i \in [T]:\\i \neq 1}} \bar{e}^{X_i}(\mathcal{E}^{(n),T},s^n,\{ \Lambda_{m_{[T]}}^{X_{[T]}}\}_{m_{[T]}}) + 2\sum_{\substack{i \in [T]:\\i \neq 1}} \sqrt{\bar{e}^{X_i}(\mathcal{E}^{(n),T},s^n,\{ \Lambda_{m_{[T]}}^{X_{[T]}}\}_{m_{[T]}})}\nn\\
        &\hspace{10pt}+
        \frac{1}{2^{n(\sum_{t \in [T]} R_t)}} \sum_{m_{[T]} \in (\times_{j \in [T]} \cM^{j}_{n})}\tr\left[\left(\bbI^{\cB^n} - \Lambda_{m_1}^{X_1}\right)\rho^{\cB^n}_{X^n_{[T]}(m_{[T]}),s^n}\right]\nn\\
        &=\sum_{\substack{i \in [T]}} \bar{e}^{X_i}(\mathcal{E}^{(n),T},s^n,\{ \Lambda_{m_{[T]}}^{X_{[T]}}\}_{m_{[T]}}) + 2\sum_{\substack{i \neq 1}} \sqrt{\bar{e}^{X_i}(\mathcal{E}^{(n),T},s^n,\{ \Lambda_{m_{[T]}}^{X_{[T]}}\}_{m_{[T]}})},\nn
    \end{align}
    where $a$ follows from Fact \ref{trace_norm2}, $b$ follows from Fact \ref{gent_measurement}, in equality $c$, for any $i \in [T]$, we define $$\bar{e}^{X_i}(\mathcal{E}^{(n),T},s^n,\{ \Lambda_{m_{[T]}}^{X_{[T]}}\}_{m_{[T]}}):= \frac{1}{2^{n(\sum_{t \in [T]} R_t)}} \sum_{m_{[T]} \in (\times_{j \in [T]} \cM^{j}_{n})}\tr\left[\left(\bbI^{\cB^n} - \Lambda_{m_{i}}^{X_{i}}\right)\rho^{\cB^n}_{X^n_{[T]}(m_{[T]}),s^n}\right],$$
    $d$ follows from repertitve calculations performed in earlier steps.    Taking the expectation with respect to $\mathcal{E}^{(n),T}$, for any particular $s^n \in \cS^n$ we have,
\begin{align}
    \bbE_{\mathcal{E}^{(n),T}}\left[\bar{e}(\mathcal{E}^{(n),T},s^n,\{ \Lambda_{m_{[T]}}^{X_{[T]}}\}_{m_{[T]}})\right] \leq \sum_{i \in [T]}\bbE_{\mathcal{E}^{(n),T}}\left[\bar{e}^{X_i}(\mathcal{E}^{(n),T},s^n,\{ \Lambda_{m_{[T]}}^{X_{[T]}}\}_{m_{[T]}})\right] + 2\sum_{\substack{i \neq 1}} \sqrt{\bbE_{\mathcal{E}^{(n),T}}\left[\bar{e}^{X_i}(\mathcal{E}^{(n),T},s^n,\{ \Lambda_{m_{[T]}}^{X_{[T]}}\}_{m_{[T]}})\right]},\label{error_analysis_n_sender}
\end{align}
where for each $i \in [T]$, we define the individual error term $\bar{e}^{X_i}(\mathcal{E}^{(n),T},s^n)$ corresponding to $i$-th sender as follows,
\begin{equation*}
\bar{e}^{X_i}(\mathcal{E}^{(n),T},s^n):=  \frac{1}{2^{n(\sum_{j \in [T]} R_j)}} \sum_{m_{[T]} \in (\times_{j \in [T]} \cM^{j}_{n})}\tr\left[\left(\bbI^{\cB^n} - \Lambda_{m_{i}}^{X_{i}}\right)\rho^{\cB^n}_{X^n_{[T]}(m_{[T]}),s^n}\right].
\end{equation*}

The derivation of \eqref{error_analysis_n_sender} has not been included in the manuscript since it can be directly shown by iterating the derivations of \eqref{joint_decoding_hayashi} for  $T-1$ times. Now using similar arguments that were used to prove \eqref{error_term_Y_hayashi}, for each $i \in [T]$, we show that
\begin{align*}
    \bbE_{\mathcal{E}^{(n),T}}\left[\bar{e}^{X_i}(\mathcal{E}^{(n),T},s^n)\right] \leq & 2\hat{f}(n,\abs{\cX_1},\cdots,\abs{\cX_{T}},\abs{\cB},\abs{\cS}) \sum_{V \subset [T] : i \in V} 2^{n\left(\sum_{v \in V} \widehat{R}_v - \min_{Q \in \cP(\cS)}I_{1-t}[X_{V};B|X_{[T]\setminus V}]_{P_{X_{[T]}},Q}\right)}\\
    &+ 4\sum_{M \subset [T]:  i \in M}\hat{g}(n,\prod_{t \in [T]\setminus M}\abs{\cX_t},\abs{\cB}) 2^{-n\left(\sum_{u \in M} \widehat{R}_u\right)}.
\end{align*}

Thus, for sufficiently large $n$, $\max_{s^n \in \cS^n}\bbE_{\mathcal{E}^{(n),T}}\left[\bar{e}(\mathcal{E}^{(n),T},s^n)\right]$ goes to zero, if we choose a rate tuple $(R_1,\cdots,R_T)$, which satisfies the following:
\begin{align*}
    \sum_{v \in V}R_v < \min_{Q \in \cP(\cS)}I[X_{V};B|X_{[T]\setminus V}]_{P_{X_{[T]}},Q}, \quad\quad \forall V \subseteq [T] : V \neq \varnothing.
\end{align*}
This gives a proof of achievability for the capacity region of $T$-sender CQ-AVMAC under random codes using an extension of the non-simultaneous testing approach for generalized independence testing discussed in subsection \ref{cq_mac_avht_null_statement_non_sim}.
\textit{The achievability technique based on Lemma \ref{lemma_cq_mac_avht_independent} (also see Appendix \ref{alternative_proof_lemma_rand_capacity_avmac}) can be easily generalized using similar techniques required to prove \cite[Corollary 4]{Sen2021}.}
The code obtained through these techniques can be derandomized by defining $2^{T} - 1$ symmetrizablity conditions mentioned in \cref{x_sym,y_sym,xy_sym}.

\if0
\section{Reliable communication over Classical Quantum Compound Multiple Access Channel (CQ-CoMAC)}\label{section_CQCOMAC}
We only discuss classical quantum compound multiple access channels under the case with $2$ senders.
This case is modeled as follows.
\begin{definition}\label{CQ_COMAC} (channel) We model a two-sender classical-quantum compound multiple-access channel (CQ-CoMAC) between parties
Alice (sender $1$), Bob (sender $2$) and Charlie (receiver) as a map
$$\cN^{X^nY^n \to B^n}_{f_n(s)} : (x^n,y^n) \to \rho^{B^n}_{x^n,y^n,f_n(s)},$$
where $x^n :=  (x_1\cdots,x_n) \in \cX^n, y^n := (y_1\cdots,y_n) \in \cY^n$ and $f_n(s) := (\underbrace{s, \ldots, s}_{n}) \in \cS$ (where $\abs{\cX},\abs{\cY},|\cS|< \infty).$ For each $x^n \in \cX^n, y^n \in \cY^n$ and $s \in \cS$, 
we define the state $\rho^{B^n}_{x^n,y^n,s^{n}}$ to be $\bigotimes_{i=1}^{n}\rho^{B}_{x_i,y_i,s}$.
Further, Alice, Bob (the senders) and Charlie (the receiver) have no information about $s.$
\end{definition}
We aim to use this channel and enable Alice to transmit a message $m_1 \in [2^{nR_1}]$  and Bob to transmit a message $m_2 \in [2^{nR_2}]$, such that Charlie can recover $(m_1,m_2)$ with high probability for every $s$.
\begin{definition}\label{definition_cq_comac_code}
An $(n,2^{nR_1},2^{nR_2})$-code $\cC$ for communication over a CQ-CoMAC consists of 
\begin{itemize}
\item two encoding functions $\cE^{(n)}_1: \cM^{1}_n  \to \cX^n$ and $\cE^{(n)}_2: \cM^{2}_n  \to \cY^n$ where $\cM^{1}_n:= [2^{nR_1}]$ and $\cM^{2}_n := [2^{nR_2}]$
\item a decoding POVM $\{ \cD_{m_1,m_2}: (m_1,m_2) \in \cM^{1}_n \times \cM^{2}_n\}$.   \end{itemize}
An $(n,2^{nR_1},2^{nR_2})$-code $\cC$ for communication over a CQ-CoMAC is called 
an $(n,2^{nR_1},2^{nR_2},\beta)$-code when for any $s \in \cS$, 
the average probability $\bar{e}(\cC,s)$ of error
satisfies 
\begin{equation*}
      \bar{e}(\cC,s):= \frac{1}{2^{nR_1}}\frac{1}{2^{nR_2}}\sum_{m_1=1}^{2^{nR_1}}\sum_{m_2=1}^{2^{nR_2}}e(m_1,m_2,\cC,s) < \beta,
  \end{equation*}
  where $e(m_1,m_2,\cC,s):= 1 - \tr[D_{m_1,m_2} \cN^{X^n \to B^n}_{f_n(s)}(\cE^{(n)}_1(m_1), \cE^{(n)}_2(m_2))].$
  \end{definition}

\begin{definition}\label{def2_comac}
    A pair $\left(R_1,R_2\right)$ is called an \textit{achievable deterministic rate-pair} for a given CQ-CoMAC, if for any $\beta>0,\delta_1>0 \text{ and } \delta_2>0$ and sufficiently large $n$, there exists a $\left(n,2^{nR_1},2^{nR_2},\beta\right)$ deterministic code $\cC$ such that 
    \begin{equation*}
        \begin{split}
            \frac{1}{n}\log |\cM_n^1| &> R_1 - \delta_1,\\
            \frac{1}{n}\log |\cM_n^2| &> R_2 - \delta_2,
        \end{split}
        \begin{split}
            \hspace{20pt}\max_{s \in \cS}\bar{e}(\cC,s)<\beta.
        \end{split}
    \end{equation*}
    
    The \textit{achievable deterministic code rate region} $\cR_{d}^{(co)}$ (also known as \textit{deterministic code capacity region}) is defined to be the closure of all possible achievable deterministic rate pairs.
\end{definition}

 Since the input systems $X, Y$ are classical we will consider them as a pair of random variables with support sets $\cX$ and $\cY$. Given $X \sim P_{X}$ and $Y \sim P_{Y}$ with a joint probability distribution $P_{XY} := P_{X}\cdot P_{Y}$, we define the following set
\begin{align}
     \cR^{(co)}_{P_{X},P_{Y}} := \left\{\forall (R_1,R_2) :
        \begin{array}{cc}
              0\leq R_1 &\hspace{-10pt}\leq \inf_{s \in \cS} I[X;\cB|Y]_{P_{X},P_{Y},\delta_s}  \\
              0\leq R_2 &\hspace{-10pt}\leq \inf_{s \in \cS} I[Y;\cB|X]_{P_{X},P_{Y},\delta_s}  \\
              R_1 + R_2 &\hspace{-10pt}\leq \inf_{s \in \cS} I[XY;\cB]_{P_{X},P_{Y},\delta_s}
        \end{array}
    \right\},\label{Rpxpyco}
 \end{align}

 Further, we define $\cR^{\star}_{(co)}$ as follows,
\begin{equation}
    \cR^{\star}_{(co)} := conv\left(\bigcup_{\substack{P_{X} \in \cP(\cX) \\ P_{Y} \in \cP(\cY)}}\cR^{(co)}_{P_{X},P_{Y}}\right).\label{Rstarco}
\end{equation}

In the theorem below we establish the deterministic code capacity region $\cR^{(co)}_d$.
\begin{proposition}\label{lemma_det_capacity_comac}\cite[Theorem 6]{HC}
The deterministic code capacity region $\cR^{(co)}_d$ of a CQ-CoMAC is equal to $\cR^{\star}_{(co)}$ (See \eqref{Rstarco} for the definition).
\end{proposition}

To prove Proposition \ref{lemma_det_capacity_comac}, 
we employ the following corollary of Lemma \ref{lemma_cq_mac_avht_independent_ts}
instead of Corollary \ref{corllary_cq_mac_avht_independent_ts}.

\begin{corollary}\label{cor567}
Assume that
the subset $\left\{\rho_{x,y,s}^{\cB} : x \in \cX, y \in \cY, s \in \cS\right\}  \subset \cD(\cB)$ satisfies
 Assumption \ref{NM9}.
For any real number $\varepsilon>0$ and any element $s \in \cS$, 
the tests $\hat{\bbT}_{n}^{X}$ and $\hat{\bbT}_{n}^{Y}$ satisfy 
\eqref{corllary_achievability_mac_independent_eq3_TS}
-- \eqref{corllary_achievability_mac_independent_eq6_TS} and
the following
\begin{align}
     &\tr\left[\left( \bbI^{X^nY^n\cB^n} - \hat{\bbT}_{n}^{X}\right)\rho^{X^nY^n\cB^n}_{P_{X},P_{Y},f_n(s)}\right] \leq 
f(n,\abs{\cX},\abs{\cY}, \abs{\cB}, \abs{\cS_\varepsilon},t)
(1+\varepsilon)^n\left(2^{nt\left(\widehat{R}_1 - \min _{s \in \cS_\varepsilon} I_{1-t}[X;\cB|Y]_{P_{X},P_{Y},\delta_s}\right)} \right.\nn\\
&\hspace{170pt}+ \left.2^{nt\left(\widehat{R}_1 + \widehat{R}_2 -  \min _{s \in \cS_\varepsilon} I_{1-t}[XY;\cB]_{P_{X},P_{Y},\delta_s}\right)}\right)\label{Acorllary_achievability_mac_independent_eq1_TS}\\
     &\tr\left[\left( \bbI^{X^nY^n\cB^n} - \hat{\bbT}_{n}^{Y}\right)\rho^{X^nY^n\cB^n}_{P_{X},P_{Y},f_n(s)}\right] \leq 
f(n,\abs{\cX},\abs{\cY}, \abs{\cB}, \abs{\cS_\varepsilon},t)(1+\varepsilon)^n
\left(2^{nt\left(\widehat{R}_2 - \min _{s \in \cS_\varepsilon} I_{1-t}[Y;\cB|X]_{P_{X},P_{Y},Q}\right)} \right.\nn\\
&\hspace{170pt}+ \left.2^{nt\left(\widehat{R}_1 + \widehat{R}_2 -  \min _{s \in \cS_\varepsilon} I_{1-t}[XY;\cB]_{P_{X},P_{Y},\delta_s}\right)}\right)\label{Acorllary_achievability_mac_independent_eq2_TS}
\end{align}
for $\widehat{R}_1,\widehat{R}_2 > 0$, $t \in (0,1)$.
Here, we use the same notations as Corollary \ref{corllary_cq_mac_avht_independent_ts} 
\end{corollary}

\if0
To prove Proposition \ref{lemma_det_capacity_comac}, we need the following corollary of Lemma \ref{theorem_sen_MAC_generalised_indep}.
\begin{corollary}\label{corollary_sen_MAC_generalised_indep}
For $t \in (0,1)$, let the pair  ($\widehat{R}_1,\widehat{R}_2$) satisfy the following conditions
\begin{align}
    0 &< \widehat{R}_1 < \inf_{s \in \cS} I_{1-t}[X;\cB|Y]_{P_{X},P_{Y},\delta_s},\label{hatR1_c_corr}\\
    0 &< \widehat{R}_2 < \inf_{s \in \cS} I_{1-t}[Y;\cB|X]_{P_{X},P_{Y},\delta_s},\label{hatR2_c_corr}\\
    0 &< \widehat{R}_1 + \widehat{R}_2 < \inf_{s \in \cS} I_{1-t}[XY;\cB]_{P_{X},P_{Y},\delta_s}.\label{hatSUM_c_corr}
\end{align}
 Then, for any positive integer $n$, the POVM ${\bbT}^{\star}_{n}$ mentioned in \eqref{tilted_augmented_intersection_test} satisfies the following conditions.
(i) The relations
    \begin{align}
       \forall \{\sigma^{\cB^n}_{x^n}\} \subset \cD(\cH_\cB^{\otimes n}) &, 
       \widehat{P}_{1,Y,\{\sigma^{\cB^n}_{x^n}\}} ({\bbT}^{\star}_{n})\leq g_1(n,\abs{\cX},\abs{\cB})2^{-n\widehat{R}_2+1},\label{accept_Omega2X_optimal_coro}\\
        \forall \{\sigma^{\cB^n}_{y^n}\} \subset \cD(\cH_\cB^{\otimes n}) &,  
        \widehat{P}_{1,X,\{\sigma^{\cB^n}_{y^n}\}} ({\bbT}^{\star}_{n}) \leq g_2(n,\abs{\cY},\abs{\cB})2^{-n\widehat{R}_2+1},\label{accept_Omega2Y_optimal_coro}\\
        \forall \sigma^{\cB^n} \in \cD(\cH_\cB^{\otimes n}) &, 
    \widehat{P}_{1,\sigma^{\cB^n}} ({\bbT}^{\star}_{n})\leq g_3(n,\abs{\cB})2^{-n(\widehat{R}_1 + \widehat{R}_2)+1}\label{accept_Omega2XY_optimal_coro}
    \end{align}
hold, where 
  $g_1(n,\abs{\cX},\abs{\cB}), g_2(n,\abs{\cY},\abs{\cB})$ and $g_3(n,\abs{\cB})$ are defined in \cref{g_1_f,g_3_f} respectively.
(ii) In addition, for any arbitrary $\delta \in (0,1)$, there exists a sufficiently 
large $n_0$ such that the relation
    \begin{align}
        \forall s \in \cS &, \widehat{P}_{0,f_n(s)} ({\bbT}^{\star}_{n}) \geq 1 - 76\delta \label{reject_Omega1_optimal_coro}
    \end{align}
holds for any integer $n \ge n_0$.
\end{corollary}
\begin{proof}
    Equations \eqref{accept_Omega2X_optimal_coro},\eqref{accept_Omega2Y_optimal_coro} and \eqref{accept_Omega2XY_optimal_coro} are exactly similar to \cref{accept_Omega2X_optimal_theo,accept_Omega2Y_optimal_theo,accept_Omega2XY_optimal_theo} respectively. Thus, \cref{accept_Omega2X_optimal_coro,accept_Omega2Y_optimal_coro,accept_Omega2XY_optimal_coro} hold directly. Further, in the statement of Corollary \ref{corollary_sen_MAC_generalised_indep}, $s$ is same for every $n$. Therefore, the $\min_{Q \in \cP(\cS)}(.)$ in \cref{hatR_1c,hatR_2c,hatSUMc} is replaced by $\inf_{s \in \cS}(.)$ in \cref{hatR1_c_corr,hatR2_c_corr,hatSUM_c_corr} respectively. 
    Since $\{f_n(s)\}_{s \in \cS}$ is a subset of $\cS^n$, \eqref{reject_Omega1_optimal_coro} directly follows from \eqref{reject_Omega1_optimal_theo} of Lemma \ref{theorem_sen_MAC_generalised_indep}.  
\end{proof}
\fi
\begin{remark}
    One can state a corollary similar to Corollary \ref{cor567}, for the tests obtained in Lemma \ref{theorem_sen_MAC_generalised_indep}.
\end{remark}

\subsection{Proof of Proposition \ref{lemma_det_capacity_comac}}
(\textbf{Achievability})
Although Subsection \ref{S4-BY} gives the proof of the direct part of Theorem \ref{lemma_rand_capacity_avmac},
for the proof of the direct part of Proposition \ref{lemma_det_capacity_comac}, 
we employ the same discussion as 
Subsection \ref{S4-BY} except for Subsubsection \ref{S7-B-3}.
In the discussion of Subsubsection \ref{S7-B-3},
we replace the roles of $s^n \in \cS_\varepsilon^n$,  
\eqref{corllary_achievability_mac_independent_eq1_TS}, and
\eqref{corllary_achievability_mac_independent_eq2_TS}
by $f_n(s) \in \cS_\varepsilon^n$,
\eqref{Acorllary_achievability_mac_independent_eq1_TS}, and
\eqref{Acorllary_achievability_mac_independent_eq2_TS}
in Corollary \ref{cor567}, respectively.
Using the above discussion, instead of 
\eqref{error_term_Y_hayashi} and \eqref{error_term_X_hayashi}, we have
    \begin{align}
        \bbE_{\cC}\left[\bar{e}^{Y}(\cC,f_n(s))\right] &\leq       2f(n,\abs{\cX},\abs{\cY}, \abs{\cB}, \abs{\cS_\varepsilon},t)(1+\varepsilon)^n
        \left(2^{nt\left(\widehat{R}_2 - \min _{s \in \cS_\varepsilon} I_{1-t}[Y;\cB|X]_{P_{X},P_{Y},\delta_s}\right)} + 2^{nt\left(\widehat{R}_1 + \widehat{R}_2 -  \min _{s \in \cS_\varepsilon} I_{1-t}[XY;\cB]_{P_{X},P_{Y},\delta_s}\right)}\right)\nn \\
        &\hspace{10pt}+ 4\cdot g_1(n,\abs{\cY},\abs{\cB})2^{n(R_2 - \widehat{R}_2)} + 4\cdot g_3(n,\abs{\cB})2^{-n(\widehat{R}_1 + \widehat{R}_2 - R_1 - R_2)},\label{Aerror_term_Y_hayashi} \\
        \bbE_{\cC}\left[\bar{e}^{X}(\cC,f_n(s))\right] &\leq 2f(n,\abs{\cX},\abs{\cY}, \abs{\cB}, \abs{\cS_\varepsilon},t)(1+\varepsilon)^n
        \left(2^{nt\left(\widehat{R}_1 - \min _{s \in \cS_\varepsilon} I_{1-t}[X;\cB|Y]_{P_{X},P_{Y},\delta_s}\right)} + 2^{nt\left(\widehat{R}_1 + \widehat{R}_2 -  \min _{s \in \cS_\varepsilon} I_{1-t}[XY;\cB]_{P_{X},P_{Y},\delta_s}\right)}\right)\nn\\
        &\hspace{10pt}+ 4\cdot g_2(n,\abs{\cY},\abs{\cB})2^{n(R_1 - \widehat{R}_1)} + 4\cdot g_3(n,\abs{\cB})2^{-n(\widehat{R}_1 + \widehat{R}_2 - R_1 - R_2)}.\label{Aerror_term_X_hayashi}
    \end{align}
In the remaining part of Subsubsection \ref{S7-B-3},
replacing the role of 
\eqref{error_term_Y_hayashi} and \eqref{error_term_X_hayashi}
by 
\eqref{Aerror_term_Y_hayashi} and \eqref{Aerror_term_X_hayashi}, 
we obtain the direct part of Proposition \ref{lemma_det_capacity_comac}.

\if0
CQ-CoMACs are a special case of CQ-AVMACCs, wherein the channel parameter remains fixed over the $n$ channel uses and unlike CQ-AVMACs, the capacity of CQ-CoMACs can never be zero. Therefore, the $\min_{Q} (\cdot)$ in \eqref{Rpxpyco} is replaced by the worst-case channel in the compound channel case. In particular, the achievability part of Proposition \ref{lemma_det_capacity_comac} directly follows from Corollary \ref{corollary_sen_MAC_generalised_indep}.
\fi

   (\textbf{Converse}) 
In the proof of Theorem \ref{lemma_rand_capacity_avmac},
the choice of $Q \in \cP(\cS)$ is limited to the delta distributions
$\{\delta_s\}_{s \in \cS}$.
Since the range of $Q$ is restricted to 
$\{\delta_s\}_{s \in \cS}$ in \eqref{FO1}, \eqref{FO2}, and \eqref{FO3},
$(R_1,R_2)$ also belongs to $\cR^{\star}_{(co)}$.
\endproof

\if0   
      For the converse, fix a $s \in \cS$ and the corresponding channel $\cN^{X^nY^n \to B^n}_{f_n(s)}$. As proved in \cite{Savov-2012,Sen2021}, for any $(n,2^{nR_1},2^{nR_2},\beta)$-code for the channel $\cN^{X^nY^n \to B^n}_{f_n(s)}$, $(R_1,R_2)$ must satisfy the following,
   \begin{align*}
        R_1 &\leq  I[X;\cB|Y]_{P_{X},P_{Y},\delta_s},\\
    R_2 &\leq  I[Y;\cB|X]_{P_{X},P_{Y},\delta_s},\\
   R_1 + R_2 &\leq I[XY;\cB]_{P_{X},P_{Y},\delta_s}.
   \end{align*}
   
 For the case of CQ-CoMACs, the value of $s$ is unknown to all the parties. Therefore, for any $(n,2^{nR_1},2^{nR_2},\beta)$-code for the CQ-CoMAC $\left\{\cN^{X^nY^n \to B^n}_{f_n(s)}\right\}_{s \in \cS}$, $(R_1,R_2)$ must satisfy the following,
   \begin{align*}
        R_1 &\leq \inf_{s \in \cS} I[X;\cB|Y]_{P_{X},P_{Y},\delta_s},\\
    R_2 &\leq \inf_{s \in \cS} I[Y;\cB|X]_{P_{X},P_{Y},\delta_s},\\
   R_1 + R_2 &\leq \inf_{s \in \cS} I[XY;\cB]_{P_{X},P_{Y},\delta_s}.
   \end{align*}
\fi
\fi

\section{Deterministic Code Capacity Region of CQ-AVMAC}\label{sec:determinsitic_cap_section}
\subsection{Characterization of Deterministic Code Capacity Region}
In the previous sections, we derived 
the random code capacity of CQ-AVC in Theorem \ref{lemma_rand_cap_cqavc} and
the random code capacity region $\cR_{r}$ of CQ-AVMAC in Theorem \ref{lemma_rand_capacity_avmac}.  
However, unlike the random code capacity and the random code capacity region, 
the deterministic code capacity and 
the deterministic code capacity region of CQ-AVMAC may be empty sometimes. 
In this section, we only consider the 
deterministic code capacity with two senders, where
Alice (sender 1) and Bob (sender 2) have input variables 
$X$ and $Y$, respectively. 
The case with a single sender can be easily recovered 
by the following modification;
We set $Y$ to be a singleton variable.
We define the topological interior $\mathbf{Int}\{A\}$ of 
a set $A\subset \mathbb{R}\times \{0\}$
as a subset $\mathbb{R}\times \{0\}$.

To discuss this more formally, we first introduce a notion of symmetrizability in Definition \ref{defsym} 
for a discrete set $\cS$
(a similar notion was also discussed in the classical case in \cite{gubner1990deterministic}). We then use this definition to establish a necessary and sufficient condition for $\mathbf{Int}\{\cR_{d}\} = \varnothing$ in Lemma \ref{symm_avmac_cq}. 

Finally we use Lemma \ref{symm_avmac_cq} to derandomize the code obtained in Theorem \ref{lemma_rand_capacity_avmac} and show that either $\mathbf{Int}\{\cR_{d}\} = \varnothing$ or $\cR_{d} = \cR_{r}$. 

\begin{definition}(Symmetrizability for a discrete set $\cS$)\label{defsym}
Assume that $\cS$ is a discrete set. 
  A CQ-AVMAC 
  $\left\{\rho_{x,y,s}^{\cB} : x \in \cX, y \in \cY, s \in \cS\right\}$
  is called \textit{symmetrizable} if at least one of the following conditions holds. 
\begin{itemize}
    \item ($\cX-$symmetrizable) There exists a set of probability distributions $\{U_1(. | x)\}_{x \in \cX}$ over $\cS$ such that 
    any two elements $x,x' \in \cX$ satisfy the following condition;
    \begin{equation}
    \sum_{s \in \cS}U_1(s|x')\rho_{x,y,s}^{\cB} = \sum_{s \in \cS}U_1(s|x)\rho_{x',y,s}^{\cB}.\label{x_sym}
\end{equation}
\item ($\cY-$symmetrizable) There exists a set of probability distributions $\{U_2(. | y)\}_{y \in \cY}$ over $\cS$ such that 
any two elements $y,y' \in \cY$ satisfy the following condition;
\begin{equation}
    \sum_{s \in \cS}U_2(s|y')\rho_{x,y,s}^{\cB} = \sum_{s \in \cS}U_2(s|y)\rho_{x,y',s}^{\cB}.\label{y_sym}
\end{equation}
\item ($\cX\cY-$symmetrizable) There exists a set of probability distributions $\{U_3(. | x, y)\}_{x \in \cX, y \in \cY}$ over $\cS$ such that 
any four elements $x,x' \in \cY$ and $y,y' \in \cY$ satisfy the following condition;
\begin{equation}
    \sum_{s \in \cS}U_3(s|x',y')\rho_{x,y,s}^{\cB} = \sum_{s \in \cS}U_3(s|x,y)\rho_{x',y',s}^{\cB}.\label{xy_sym}
\end{equation}
\end{itemize}
 \end{definition}

This definition is generalized to a general measurable space $\cS$
as follows,

\begin{definition}(Symmetrizability for a general measurable space $\cS$)\label{2defsym}
Assume that $\cS$ is a general measurable space. 
  A CQ-AVMAC 
  $\left\{\rho_{x,y,s}^{\cB} : x \in \cX, y \in \cY, s \in \cS\right\}$
  is called \textit{symmetrizable} if at least one of the following conditions holds. 
\begin{itemize}
    \item ($\cX-$symmetrizable) 
    The relation
\begin{equation}
\theta_1:= \inf_{U_1(.|x)}
\max_{x,x'\in \cX,y \in \cY, z \in \cZ}
\left\|    \int_{\cS}U_1(ds|x')\rho_{x,y,s}^{\cB} - \int_{\cS} U_1(ds|x)\rho_{x',y,s}^{\cB} \right\|_1 =0
\label{2x_sym}
\end{equation}
holds, where
$\{U_1(. | x)\}_{x \in \cX}$ is
a set of probability measures over $\cS$.
\item ($\cY-$symmetrizable) 
    The relation
\begin{equation}
\theta_2:= \inf_{U_2(.|y)}
\max_{x\in \cX,y,y'\in \cY, z \in \cZ}
\left\|    \int_{\cS}U_2(ds|y')\rho_{x,y,s}^{\cB} - \int_{\cS} U_2(ds|y)\rho_{x,y',s}^{\cB}\right\|_1 =0
\label{2y_sym}
\end{equation}
holds, where
$\{U_2(. | y)\}_{y \in \cY}$ is
a set of probability measures over $\cS$.
\item ($\cX\cY-$symmetrizable) 
The relation
\begin{equation}
\theta_3 := \inf_{U_3(.|x,y)}
\max_{x,x'\in \cX,y,y'\in \cY, z \in \cZ}
\left\|    \int_{\cS}U_3(ds|x',y')\rho_{x,y,s}^{\cB} - \int_{\cS} U_3(ds|x,y)\rho_{x',y',s}^{\cB}\right\|_1 =0
\label{2xy_sym}
\end{equation}
holds, where
$\{U_3(. |x, y)\}_{x \in \cX,y \in \cY}$ is
a set of probability measures over $\cS$.
\end{itemize}
 \end{definition}

Using the above definition, we obtain the following lemma.
 \begin{lemma}
 \label{symm_avmac_cq}
Assume that $\cS$ is a general measurable space
and satisfies Assumption \ref{NM9}.
The relation 
$\mathbf{Int}\{\cR_{d}\} = \varnothing$ holds if and only if 
CQ-AVMAC is symmetrizable.
\end{lemma}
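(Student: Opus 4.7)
The plan is to prove the lemma in two independent directions. For the \emph{only if} direction I would show that each form of symmetrizability collapses one of the facets of $\cR_d$ to an affine subspace of $\bbR^2$, so that $\mathbf{Int}\{\cR_d\}$ must be empty. For the \emph{if} direction I would take any point in the interior of the random code capacity region $\cR_r$ (which by Theorem \ref{lemma_rand_capacity_avmac} equals $\cR^\star$) and derandomize the random code using an Ahlswede-style elimination argument together with a short deterministic prefix code, whose existence is precisely what the non-symmetrizability hypothesis provides.

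For the forward direction, assume first that the channel is $\cX$-symmetrizable with witness $\{U_1(\cdot|x)\}_{x\in\cX}$, so $\theta_1=0$. Given any deterministic $(n,2^{nR_1},2^{nR_2},\beta)$-code $\cC$, pick distinct $m_1\ne m_1'$ in $\cM_n^1$ and any $m_2\in\cM_n^2$, and let $x^n=\cE_1^{(n)}(m_1)$, $\tilde x^n=\cE_1^{(n)}(m_1')$, $y^n=\cE_2^{(n)}(m_2)$. Consider the averaged output states
\begin{align*}
\bar\rho_{m_1,m_1',m_2}^{\cB^n} &:= \int_{\cS^n} U_1^n(ds^n|x^n)\,\rho^{\cB^n}_{\tilde x^n,y^n,s^n},\\
\bar\rho_{m_1',m_1,m_2}^{\cB^n} &:= \int_{\cS^n} U_1^n(ds^n|\tilde x^n)\,\rho^{\cB^n}_{x^n,y^n,s^n}.
\end{align*}
The $\cX$-symmetrizability condition together with a telescoping bound in trace norm show that these two states differ by at most $n\theta_1=0$ in trace distance. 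By Fact \ref{trace_norm2}, no POVM can reliably distinguish $(m_1,m_2)$ from $(m_1',m_2)$ when the jammer chooses $s^n$ according to one of the two mixtures, and a minimax argument then extracts a single $s^n\in\cS^n$ for which the pairwise error is at least $1/2-o(1)$. Combining this with Fano's inequality (Fact \ref{fano}) forces $R_1\to 0$. Analogous constructions using $U_2$ and $U_3$ treat $\cY$- and $\cX\cY$-symmetrizability, forcing $R_2\to 0$ and $R_1+R_2\to 0$ respectively; any of these places $\cR_d$ on a line segment with empty interior. The continuous-$\cS$ case is handled by taking the infimum in \eqref{2x_sym}--\eqref{2xy_sym} at a level $\eta_n\to 0$ and invoking Assumption \ref{NM9} to truncate to the finite net $\cS_\eps$ before forming the averaging measure.

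For the reverse direction, assume $\theta_1,\theta_2,\theta_3>0$ and fix $(R_1,R_2)\in\mathbf{Int}\{\cR_r\}$. Theorem \ref{lemma_rand_capacity_avmac} supplies a random $(n,2^{nR_1},2^{nR_2},\beta_n)$-code $\cC^{(\gamma_1,\gamma_2)}$ with $\sup_{s^n\in\cS^n}\bbE_{\gamma_1,\gamma_2}[\bar e(\cC,s^n)]\to 0$. Draw $N$ i.i.d.\ copies $(\gamma_1^{(k)},\gamma_2^{(k)})_{k=1}^N$ of the shared randomness and apply the Chernoff bound (Fact \ref{chernoff}) to the empirical mean $\frac1N\sum_k \bar e(\cC^{(\gamma_1^{(k)},\gamma_2^{(k)})},s^n)$ for each $s^n$ in a refined $\varepsilon$-net $\cS_\eps^n$ provided by Assumption \ref{NM9}. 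Taking $N=\mathrm{poly}(n)$ and invoking the union bound (which is finite thanks to $|\cS_\eps^n|\le |\cS_\eps|^n$), one obtains a \emph{fixed} list of $N$ codes which, uniformly in $s^n$, average to the same small error as the random code. The sender and receiver now need only share $\lceil\log N\rceil=O(\log n)$ bits to select which code in the list to use. Those bits are transmitted via a prefix block of length $n'=o(n)$ by any code that achieves a strictly positive rate on the CQ-AVMAC itself; because $\theta_1,\theta_2,\theta_3>0$, the marginal CQ-AVC obtained by fixing either sender to a fixed input symbol is non-symmetrizable (in the single-user sense of \cite{Ahlswede07}), so Theorem \ref{lemma_rand_cap_cqavc} combined with the single-user elimination technique yields such a positive-rate deterministic prefix. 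Since $n'/n\to 0$, the overall rate pair is still $(R_1-o(1),R_2-o(1))$, proving that $\mathbf{Int}\{\cR_d\}\supseteq \mathbf{Int}\{\cR_r\}\neq\varnothing$.

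The main obstacle will be in the last step of the reverse direction: establishing that non-symmetrizability in any of the three senses of Definition \ref{2defsym} guarantees a positive-rate deterministic prefix code, because $\cX\cY$-symmetrizability is a strictly weaker condition than the conjunction of $\cX$- and $\cY$-symmetrizability and does not directly reduce to single-user non-symmetrizability of either marginal. The resolution is a case analysis: if only $\cX$-symmetrizability fails, fix Bob's input to a single symbol and derandomize the resulting single-user CQ-AVC via Theorem \ref{lemma_rand_cap_cqavc}; if only $\cY$-symmetrizability fails, symmetrically fix Alice; and if $\cX\cY$-symmetrizability fails (but one of the single-coordinate conditions may hold), show that the pair-input CQ-AVC with alphabet $\cX\times\cY$ is non-symmetrizable in the single-user sense, so the same construction applies with a joint prefix code. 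A secondary difficulty is the continuous-$\cS$ Chernoff union bound, which requires first passing to $\cS_\eps$ and controlling the additional factor $(1+\eps)^n$ as in Subsection \ref{S4-B}; choosing $\eps$ small enough that this factor is dominated by the exponential decay of the average error is routine once the prefix construction is in place.
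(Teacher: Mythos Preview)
Your sufficiency direction (symmetrizable $\Rightarrow$ empty interior) is in the right spirit and close to the paper's argument, though Fano's inequality is not the natural tool here---the paper instead averages the pairwise symmetry relation over all message pairs to obtain $\sup_{s^n}\bar e(\cC,s^n)\ge 1/4$ directly, with no appeal to entropy.

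The necessity direction (non-symmetrizable $\Rightarrow$ non-empty interior) has a genuine gap. You are essentially attempting to prove Lemma~\ref{symm_avmac_cq} and Lemma~\ref{lemma_derand} in one stroke: start from a random code at rate $(R_1,R_2)\in\mathbf{Int}\{\cR_r\}$, derandomize by elimination, and transmit the $O(\log n)$-bit seed via a deterministic prefix code whose existence you derive from $\theta_1,\theta_2,\theta_3>0$. The problem is your prefix-code construction. The claim that ``fixing either sender to a fixed input symbol'' yields a non-symmetrizable single-user CQ-AVC is false: $\theta_1>0$ only says that no \emph{single} witness $\{U_1(\cdot\,|\,x)\}$ satisfies \eqref{2x_sym} simultaneously for \emph{all} $y$; for each fixed $y_0$ a $y_0$-dependent witness may still exist, so every marginal channel $\{\rho_{x,y_0,s}\}$ could be symmetrizable. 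Your fallback---treating the MAC as a single-user channel with input alphabet $\cX\times\cY$ and using a ``joint prefix code''---does identify $\cX\cY$-non-symmetrizability with single-user non-symmetrizability, but a joint code requires Alice and Bob to choose $(x^{n'},y^{n'})$ cooperatively, which violates the separate-encoder constraint of the MAC. Thus the prefix code you need is itself a deterministic MAC code with positive rates for both users, and the argument is circular.

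The paper breaks this circularity by a completely different route: it applies an informationally complete POVM $\{\cO_z\}_{z=1}^{|\cB|^2}$ to the channel output, obtaining a \emph{classical} AVMAC $W_{x,y,s}(z)=\Tr[\cO_z\rho_{x,y,s}]$. Because the POVM elements are linearly spanning, non-symmetrizability of the CQ-AVMAC is equivalent to non-symmetrizability of this classical AVMAC. The classical result of Ahlswede--Cai \cite{Ahlswede1999} (their Lemmas~1--2 and Theorem~1, restated in the paper as Propositions~\ref{OO2}--and Lemma~\ref{BFT} and extended to general $\cS$ via Assumption~\ref{NM9}) then furnishes a deterministic classical MAC code with positive rates; since the classical deterministic region sits inside the CQ one, $\mathbf{Int}\{\cR_d\}\neq\varnothing$ follows. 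Your derandomization-plus-prefix argument is exactly what the paper uses for the \emph{separate} Lemma~\ref{lemma_derand}, once the non-emptiness of $\mathbf{Int}\{\cR_d\}$ has already been secured.
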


This lemma is shown in subsections \ref{S7-B1} and \ref{S7-B2}.
We also have the following theorem when $|\cS|<\infty$.

\begin{theorem} (\textbf{Deterministic Code Capacity Region}) \label{theorem_avmac_capacity_restated}
Assume that $|\cS|<\infty$.
     If $\mathbf{Int}\{\cR_{d}\} \neq \varnothing$, then $\cR_{d} = \cR_{r}$.
 \end{theorem}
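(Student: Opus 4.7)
The plan is to apply Ahlswede's elimination technique, adapted to the multi-sender setting, wherein the shared randomness between each sender and the receiver is absorbed into a small number of additional message bits transmitted via a positive-rate deterministic prefix code. Since every deterministic code is a random code with trivial shared randomness, the inclusion $\cR_d \subseteq \cR_r$ is immediate, so the task is to prove $\cR_r \subseteq \cR_d$. Fix a rate pair $(R_1, R_2) \in \cR_r$ and a small $\delta > 0$. Theorem \ref{lemma_rand_capacity_avmac}, together with the downward-closedness of capacity regions, supplies for every sufficiently large $n'$ an $(n', 2^{n'(R_1 - \delta)}, 2^{n'(R_2 - \delta)}, \eps'_{n'})$ random code $\cC^{\mathrm{rand}}$ parameterized by independent seeds $\gamma_1 \sim G_1, \gamma_2 \sim G_2$, with $\sup_{s^{n'} \in \cS^{n'}} \bbE_{\cC^{\mathrm{rand}}}[\bar{e}(\cC^{\mathrm{rand}}, s^{n'})] \leq \eps'_{n'} \to 0$.

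The first step is to reduce the randomness to a polynomial-size sub-ensemble. I draw $L = \mathrm{poly}(n')$ i.i.d.\ samples of each $\gamma_i$ and form the $L \times L$ family $\{\cC_{\gamma_1^{(\ell_1)}, \gamma_2^{(\ell_2)}}\}_{\ell_1, \ell_2 \in [L]}$. For every fixed $s^{n'}$, the empirical mean $F(s^{n'}) := L^{-2} \sum_{\ell_1, \ell_2} \bar{e}(\cC_{\gamma_1^{(\ell_1)}, \gamma_2^{(\ell_2)}}, s^{n'})$ is a $(1/L)$-Lipschitz function of the $2L$ independent samples, so McDiarmid's inequality gives $\Pr[F(s^{n'}) > 2\eps'_{n'}] \leq \exp(-(\eps'_{n'})^2 L)$. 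Because $|\cS| < \infty$, a union bound over the $|\cS|^{n'}$ possible state sequences is defeated by any $L$ polynomial in $n'$ of sufficiently large degree, yielding realizations $\{(\gamma_1^{(\ell_1)}, \gamma_2^{(\ell_2)})\}_{\ell_1, \ell_2 \in [L]}$ for which $F(s^{n'}) \leq 2\eps'_{n'}$ uniformly in $s^{n'} \in \cS^{n'}$.

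The second step is concatenation with a deterministic prefix. Since $\mathbf{Int}\{\cR_d\} \neq \varnothing$, I pick $(r_1, r_2) \in \mathbf{Int}\{\cR_d\}$ with $r_1, r_2 > 0$, and let $\cC_0$ be a corresponding deterministic $(n_1, 2^{n_1 r_1}, 2^{n_1 r_2}, \eps_{n_1})$-code with $\eps_{n_1} \to 0$, choosing $n_1 = \lceil \log L / \min(r_1, r_2) \rceil = O(\log n')$ so that each sender's $\cC_0$-message set has size at least $L$. I then define a deterministic length-$n$ code with $n = n_1 + n'$ by expanding the message set of sender $i$ to $\cM^i_{n} \times [L]$: on input $(m_i, \ell_i)$, sender $i$ transmits the $\cC_0$-codeword for $\ell_i$ in the first $n_1$ uses and $\cE_i^{(n'), \gamma_i^{(\ell_i)}}(m_i)$ in the remaining $n'$ uses. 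The receiver first applies $\cC_0$'s decoder to recover $(\hat{\ell}_1, \hat{\ell}_2)$, then applies $\cD^{\gamma_1^{(\hat{\ell}_1)}, \gamma_2^{(\hat{\ell}_2)}}$ on the main-block output. The factorization $\rho^{\cB^n}_{x^n, y^n, s^n} = \rho^{\cB^{n_1}}_{\cdot, s^{n_1}} \otimes \rho^{\cB^{n'}}_{\cdot, s_{n_1+1}^n}$ decouples the two blocks, so averaging over the augmented message set (i.e., over $(\ell_1, \ell_2)$ uniform on $[L]^2$) gives total error at most $\eps_{n_1} + 2\eps'_{n'} \to 0$ uniformly in $s^n \in \cS^n$.

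The resulting rates are $\bigl(n'(R_i - \delta) + \log L\bigr)/(n_1 + n') \to R_i - \delta$ as $n' \to \infty$, so $(R_1 - \delta, R_2 - \delta)$ is deterministically achievable for every $\delta > 0$, giving $\cR_r \subseteq \cR_d$ after taking $\delta \to 0$ and invoking closure. The main obstacle in the argument is the McDiarmid step: the exponential concentration $\exp(-(\eps'_{n'})^2 L)$ must dominate the state-sequence count $|\cS|^{n'}$, which both forces the finiteness hypothesis $|\cS| < \infty$ and requires a careful balance between how fast $\eps'_{n'} \to 0$ and how fast $L = L(n')$ is allowed to grow. A secondary subtlety is that the ``expansion of the message space'' trick converts what would be local private randomization of each encoder into a deterministic encoding of a slightly larger message, incurring only the negligible additive rate penalty $\log L / n = O((\log n)/n)$; this is the place where one must be careful that no common randomness between the two senders is ever needed, only the small additional message bits $\ell_1$ and $\ell_2$ which each sender chooses and transmits independently.
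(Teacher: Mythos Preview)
Your approach is essentially the paper's: Ahlswede-style elimination of randomness via a concentration bound plus a union bound over $\cS^{n}$ (this is exactly where $|\cS|<\infty$ enters), followed by concatenation with a short deterministic prefix code furnished by $\mathbf{Int}\{\cR_d\}\neq\varnothing$. The paper uses a Chernoff bound with a \emph{fixed} target error $\mu$ and a prefix of length $v(n)=\omega(\log n)=o(n)$; you use McDiarmid with $\eps'_{n'}\to 0$ and prefix length $O(\log n')$.

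One point where your write-up is actually cleaner than the paper's: the concatenated code requires the $L^{2}$ sub-codes to be indexed by a \emph{product} set $[L]\times[L]$ of seed values (one factor per sender, since Alice cannot know Bob's seed index and vice versa), so the $L^{2}$ error variables are not mutually independent. Your McDiarmid argument over the $2L$ independent marginal samples $\{\gamma_1^{(\ell_1)}\}_{\ell_1},\{\gamma_2^{(\ell_2)}\}_{\ell_2}$ handles this product structure directly, whereas the paper's Chernoff step factorizes the moment generating function as $\prod_{i,j}\bbE[\,\cdot\,]$ as though the $n^{2t}$ pairs were i.i.d., which sits uneasily with the $n^{t}\times n^{t}$ product indexing used in its concatenated code. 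The one soft spot you flag---that the union bound needs $(\eps'_{n'})^{2}L\gg n'\log|\cS|$, so ``$L$ polynomial of large degree'' fails if $\eps'_{n'}$ decays super-polynomially---is most simply resolved by doing what the paper does: take $\eps'_{n'}$ equal to an arbitrarily small \emph{fixed} constant (achievability at fixed error is all you need here), after which $L$ linear in $n'$ already suffices and the final error is bounded by that constant plus $\eps_{n_1}\to 0$.
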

Therefore, when CQ-AVMAC is not symmetrizable,
the capacity region of deterministic codes equals
the capacity region of randomized codes.
     The proof of Theorem \ref{theorem_avmac_capacity_restated} directly follows from the following two lemmas mentioned below.

\begin{lemma}\label{lemma_converse_det}
    $\cR_{d} \subseteq \cR_{r}.$
\end{lemma}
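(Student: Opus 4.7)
The plan is to establish this inclusion by the standard observation that every deterministic code is a degenerate instance of a random code. Concretely, suppose $(R_1,R_2) \in \cR_d$. By Definition \ref{def2_avmac}, for any $\beta>0$ and $\delta_1,\delta_2>0$, and for sufficiently large $n$, there exists a deterministic $(n,2^{nR_1},2^{nR_2},\beta)$-code $\cC = (\cE^{(n)}_1,\cE^{(n)}_2,\{\cD_{m_1,m_2}\})$ whose rates exceed $R_1-\delta_1, R_2-\delta_2$ and which satisfies $\sup_{s^n \in \cS^n}\bar{e}(\cC,s^n)<\beta$.

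Next, I would construct a random code with the same triple by declaring the randomization variables $\gamma_1,\gamma_2$ to take values in singleton sets (equivalently, the distributions $G_1,G_2$ are Dirac masses on a single outcome). Setting $\cE^{(n),\gamma_1}_1 := \cE^{(n)}_1$, $\cE^{(n),\gamma_2}_2 := \cE^{(n)}_2$, and $\cD^{\gamma_1,\gamma_2}_{m_1,m_2} := \cD_{m_1,m_2}$ for the unique values of $\gamma_1,\gamma_2$ gives an $(n,2^{nR_1},2^{nR_2})$ random code in the sense of Definition \ref{definition_cq_avmac_rand_code}. For this random code the expectation over $(\gamma_1,\gamma_2)$ is trivial, so for every $s^n \in \cS^n$ we have
\begin{equation*}
\bbE_{\cC}[e(m_1,m_2,\cC,s^n)] = e(m_1,m_2,\cC,s^n),
\end{equation*}
whence $\bbE_{\cC}[\bar{e}(\cC,s^n)]=\bar{e}(\cC,s^n)$ and consequently $\sup_{s^n}\bbE_{\cC}[\bar{e}(\cC,s^n)]<\beta$.

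Finally, since the same rate inequalities $\tfrac{1}{n}\log|\cM_n^i| > R_i - \delta_i$ for $i=1,2$ are met by construction, the rate pair $(R_1,R_2)$ satisfies the conditions of Definition \ref{def2_avmac_rand}, so $(R_1,R_2) \in \cR_r$. Taking the closure preserves this inclusion, yielding $\cR_d \subseteq \cR_r$. There is no genuine obstacle here; the only thing to be careful about is verifying that the class of random codes as formally defined in Definition \ref{definition_cq_avmac_rand_code} admits trivial (deterministic) members, which it does since the finite sets on which $\gamma_1,\gamma_2$ are supported are unrestricted and in particular may be singletons.
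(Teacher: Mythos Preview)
Your proposal is correct and follows exactly the same approach as the paper: the paper's proof is the single sentence ``Since any deterministic code $\cC$ is a special case of a random code, it directly follows that $\cR_{d} \subseteq \cR_{r}$,'' and you have simply unpacked this observation in full detail by exhibiting the deterministic code as a random code with singleton randomization sets.
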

\begin{proof}
    Since any deterministic code $\cC$ is a special case of a random code, it directly follows that $\cR_{d} \subseteq \cR_{r}$.
\end{proof}

\begin{lemma}\label{lemma_derand}
Assume that $|\cS|<\infty$.
    If $\mathbf{Int}\left\{\cR_{d}\right\}\neq \varnothing$ then $\cR_{d} \supseteq \cR_{r}.$ 
\end{lemma}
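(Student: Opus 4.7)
The plan is to apply Ahlswede's elimination technique, adapted to the multiple-access setting where the two senders' randomizations are independent. Given $(R_1, R_2) \in \cR_r$ and $\beta > 0$, Theorem \ref{lemma_rand_capacity_avmac} supplies an $(n, 2^{nR_1}, 2^{nR_2}, \beta)$-random code $\cC_{\gamma_1, \gamma_2}$ for all sufficiently large $n$, with independent sender-side randomness $\gamma_j \sim G_j$. The hypothesis $\mathbf{Int}\{\cR_d\} \neq \varnothing$ supplies a deterministic $(l, 2^{lr_1}, 2^{lr_2}, \eta)$-code on CQ-AVMAC with positive rates $r_1, r_2 > 0$ and arbitrarily small error $\eta$. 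I will convert the random code into a deterministic one by reducing the support of $(\gamma_1, \gamma_2)$ to polynomially many values and then using the deterministic $(l, \ldots, \eta)$-code as a short preamble to communicate the chosen indices from each sender to the receiver.

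The first step is a random-sampling plus union-bound argument. I will draw $\gamma_1^{(1)}, \ldots, \gamma_1^{(N_1)} \overset{\text{i.i.d.}}{\sim} G_1$ and $\gamma_2^{(1)}, \ldots, \gamma_2^{(N_2)} \overset{\text{i.i.d.}}{\sim} G_2$ independently, with $N_j = \Theta(n \log|\cS|/\beta^2)$, and show that with strictly positive probability
\begin{align*}
\frac{1}{N_1 N_2}\sum_{i_1 = 1}^{N_1}\sum_{i_2 = 1}^{N_2} \bar{e}\bigl(\cC_{\gamma_1^{(i_1)}, \gamma_2^{(i_2)}}, s^n\bigr) \le 2\beta \quad \text{for every } s^n \in \cS^n.
\end{align*}
Because the $N_1 N_2$ codes share samples across rows and columns, a direct Chernoff bound on the double sum is unavailable; I will instead condition on $\{\gamma_2^{(i_2)}\}_{i_2}$, apply Chernoff to the i.i.d.\ inner sum over $i_1$ (with conditional mean $h(\gamma_2^{(i_2)}, s^n) := \mathbb{E}_{\gamma_1}[\bar{e}(\cC_{\gamma_1, \gamma_2^{(i_2)}}, s^n)]$), and then apply Chernoff to $\{h(\gamma_2^{(i_2)}, s^n)\}_{i_2}$, which are i.i.d.\ with mean $\le \beta$. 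The per-$s^n$ failure probability is of order $N_2\cdot 2^{-\Omega(N_1\beta^2)} + 2^{-\Omega(N_2\beta^2)}$, and the union bound over $|\cS|^n$ sequences is beaten for polynomial $N_j$ precisely because $|\cS|$ is finite. Fixing an admissible realisation yields finite collections $\Gamma_j' = \{\gamma_j^{(1)}, \ldots, \gamma_j^{(N_j)}\}$.

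Next, I will build a deterministic code of total length $n + l$, where $l = \max_j\lceil \log N_j / r_j\rceil = O(\log n)$. Each sender $j$ partitions their message as $(m_j', i_j)$ with $i_j \in [N_j]$ and $m_j' \in \cM_j'$, so that $|\cM_j'| \cdot N_j = 2^{nR_j}$. Sender $j$ first transmits $i_j$ using the deterministic $(l, \ldots, \eta)$-preamble and then transmits $m_j'$ on the main block using the fixed encoder $\cE^{(n)}_{j, \gamma_j^{(i_j)}}$. The receiver decodes $(i_1, i_2)$ from the preamble and then applies $\cD_{\gamma_1^{(i_1)}, \gamma_2^{(i_2)}}$ to the main block. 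Since messages are uniformly distributed by convention, $(i_1, i_2)$ is uniform on $[N_1] \times [N_2]$, so the main-block error averaged over all messages equals the average of the first step, at most $2\beta$; combined with the preamble error $\eta$, the overall error is $\le \eta + 2\beta$. The rate per sender is $\log|\cM_j'|/(n+l) = (nR_j - O(\log n))/(n + O(\log n)) = R_j - o(1)$, so every $(R_1, R_2) \in \cR_r$ is achievable deterministically in the sense of Definition \ref{def2_avmac}, hence lies in $\cR_d$.

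The main obstacle I anticipate is the joint concentration in the sampling step: the $N_1 N_2$ error values are statistically coupled through the shared marginal samples of each sender's randomness, which forces the two-step conditional Chernoff approach and careful bookkeeping of the exponents in order to dominate the $|\cS|^n$-sized union bound with polynomial $N_j$; this is precisely where the finiteness of $\cS$ is used. A secondary subtlety is that the average-error guarantee obtained by the sampling step translates into a true average-error guarantee for the derandomized code only because the message set factorizes as $\cM_j' \times [N_j]$ with uniform inputs, which makes the induced distribution on $(i_1, i_2)$ uniform on $[N_1] \times [N_2]$.
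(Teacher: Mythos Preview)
Your approach follows the same Ahlswede elimination-of-randomness strategy as the paper: reduce the support of the random code to polynomially many values via random sampling and a union bound over $\cS^n$ (this is where $|\cS|<\infty$ enters), then prepend a short deterministic preamble---guaranteed by $\mathbf{Int}\{\cR_d\}\neq\varnothing$---through which each sender tells the receiver its sampled index. Your two-step conditional Chernoff is in fact the correct device for producing the \emph{product} family $\{\gamma_1^{(i_1)}\}_{i_1}\times\{\gamma_2^{(i_2)}\}_{i_2}$ that the MAC encoders require (sender~$j$'s encoder may depend only on $i_j$); the paper's writeup samples $n^{2t}$ i.i.d.\ pairs $Z_{i,j}$ and then asserts the existence of a product-structured collection $\{(\gamma_{1,i},\gamma_{2,j})\}$, so on this point your argument is the more careful one.

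There is, however, one genuine slip in your concatenation step. You partition sender $j$'s message as $(m_j',i_j)$ with $|\cM_j'|\cdot N_j=2^{nR_j}$ and feed $m_j'$ into the main-block encoder $\cE^{(n)}_{j,\gamma_j^{(i_j)}}$. But that encoder is designed for the full message set $[2^{nR_j}]$, and your sampling step only controls the \emph{full} average
\[
\bar e\bigl(\cC_{\gamma_1^{(i_1)},\gamma_2^{(i_2)}},s^n\bigr)=\frac{1}{2^{n(R_1+R_2)}}\sum_{m_1,m_2}e(m_1,m_2,\cC_{\gamma_1^{(i_1)},\gamma_2^{(i_2)}},s^n);
\]
it says nothing about the partial average over the sub-codebook $\cM_1'\times\cM_2'$, which can be arbitrarily larger. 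So the sentence ``the main-block error averaged over all messages equals the average of the first step'' does not follow as written. The paper sidesteps this by \emph{enlarging} rather than partitioning: its concatenated code carries messages in $[n^t]\times[2^{nR_j}]$ for sender~$j$, so the main-block message still ranges over all of $[2^{nR_j}]$ and the sampled bound on $\bar e$ applies verbatim (the extra $\log N_j$ bits of message are absorbed into the $o(1)$ rate loss). You can repair your version in exactly the same way, or alternatively start from a random $(n,|\cM_1'|,|\cM_2'|,\beta)$-code at the slightly reduced rates $R_j-\tfrac{1}{n}\log N_j$, which still lie in $\cR_r$.
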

We prove this lemma in subsection \ref{S7-C}
 and it allows us to derandomize the code obtained in Theorem \ref{lemma_rand_capacity_avmac}.
 That is, when $|\cS|<\infty$,
the deterministic code capacity region $\cR_{d}$ is given by 
$\cR^{\star}$ defined in \eqref{Rstar}.

Next, we extend Lemma \ref{lemma_derand} to the case with a continuous set $\cS$.
For this aim, we assume the following.
\begin{assumption}\label{ASS2}
The set $\cS$ is assumed to be a compact subset of 
$\mathbb{R}^{d}$. The map $s (\in \cS)\mapsto \rho_{x,y,s}^\cB$
is assumed to be a $C^1$ continuous map for $x\in \cX,y\in\cY$.
Then, the symmetric logarithmic derivative (SLD) $L_{x,y,s|j}$ is defined as \cite[Section 6.2]{H2017QIT}
\begin{align}
\frac{1}{2}(L_{x,y,s|j} \rho_{x,y,s}^\cB+ \rho_{x,y,s}^\cB L_{x,y,s|j} )
=\frac{\partial}{\partial s^j}\rho_{x,y,s}^\cB.
\end{align}
The SLD Fisher information matrix $J_{x,y,s|j,j'}$ is defined as
\begin{align}
J_{x,y,s|j,j'} :=\Tr (\frac{\partial}{\partial s^j}\rho_{x,y,s}^\cB )L_{x,y,s|j'}.
\end{align}
It is also assumed that the map $s (\in \cS)\mapsto
J_{x,y,s|j,j'}$ is continuous.
\end{assumption}

We extend Lemma \ref{lemma_derand} as follows.
\begin{lemma}\label{lemma_derand2}
Assume Assumption \ref{ASS2}.
    If $\mathbf{Int}\left\{\cR_{d}\right\}\neq \varnothing$ then $\cR_{d} \supseteq \cR_{r}.$ 
\end{lemma}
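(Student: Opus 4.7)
The plan is to reduce the continuous-parameter case to the finite case already handled by Lemma \ref{lemma_derand}, via an $\varepsilon$-net discretization. Under Assumption \ref{ASS2}, the compactness of $\cS$ together with the $C^1$-continuity of $s\mapsto \rho_{x,y,s}^{\cB}$ (in the spirit of Lemma \ref{LJGF}) yields, for each $\varepsilon>0$, a finite subset $\cS_\varepsilon\subset \cS$ satisfying the dominance relation \eqref{epsilon_net_state_density_eq} uniformly in $(x,y)\in \cX\times \cY$. Writing $\cR_d^{(\cS')}$ and $\cR_r^{(\cS')}$ for the deterministic and random capacity regions of the AVMAC restricted to a parameter set $\cS'\subset \cS$, I will establish two facts: (i) if $\mathbf{Int}\{\cR_d\}\neq\varnothing$ for the continuous AVMAC, then $\mathbf{Int}\{\cR_d^{(\cS_\varepsilon)}\}\neq \varnothing$ for every sufficiently small $\varepsilon$; and (ii) every $(R_1,R_2)$ in the interior of $\cR_r=\cR_r^{(\cS)}$ lies in $\cR_r^{(\cS_\varepsilon)}$ once $\varepsilon$ is small.

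For (i), I would argue contrapositively. Suppose that along a subsequence $\varepsilon_k\to 0$ the finite AVMACs on $\cS_{\varepsilon_k}$ are symmetrizable in the sense of Definition \ref{defsym}; pick symmetrizing kernels $\{U_i^{(k)}\}$ there. Viewed as probability measures on the compact set $\cS$, Prokhorov's theorem extracts a weakly convergent subsequence with limit $U_i^{\infty}$. The Lipschitz bound $\|\rho_{x,y,s}^{\cB}-\rho_{x,y,s'}^{\cB}\|_1 \le C\|s-s'\|$, obtained from the continuity of the SLD Fisher matrix in Assumption \ref{ASS2}, makes $s\mapsto \rho_{x,y,s}^{\cB}$ a bounded continuous function, so the integrals appearing in \eqref{2x_sym}--\eqref{2xy_sym} are continuous under weak-$*$ convergence of measures. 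The limit $U_i^{\infty}$ therefore symmetrizes the continuous AVMAC, forcing $\theta_i=0$; by Lemma \ref{symm_avmac_cq} this contradicts $\mathbf{Int}\{\cR_d\}\neq\varnothing$. Claim (ii) follows because, under the same continuity, the single-letter mutual-information quantities defining $\cR_r$ are continuous in $s\in \cS$, and a fine enough discretization approximates the infima over $\cP(\cS)$ arbitrarily closely.

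Given (i) and (ii), fix $(R_1,R_2)$ in the interior of $\cR_r$ and a small slack $\delta>0$. Lemma \ref{lemma_derand} applied to the finite AVMAC over $\cS_\varepsilon$ produces a deterministic $(n,2^{n(R_1-\delta)},2^{n(R_2-\delta)},\beta_n)$-code. Inspecting the derandomization argument underlying Lemma \ref{lemma_derand}, which proceeds via Chernoff-based elimination of randomness in the Ahlswede style, one sees $\beta_n=e^{-\Omega(n)}$. Lifting this code back to the continuous AVMAC through the dominance inequality \eqref{eps_covering_density_operators_eq} inflates the worst-case error by at most a factor $(1+\varepsilon)^n$, exactly as in Subsection \ref{S4-B}. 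Choosing $\varepsilon$ small enough that $\log(1+\varepsilon)$ is strictly less than the exponent of $\beta_n$ keeps the error vanishing, and letting $\delta\to 0$ delivers $(R_1,R_2)\in \cR_d$. Hence $\cR_d\supseteq \cR_r$.

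The main obstacle will be step (i), propagating non-symmetrizability from the continuous AVMAC down to a finite discretization: the symmetrizability condition involves an infimum over kernels whose support might a priori avoid $\cS_\varepsilon$, so one must genuinely use the weak-$*$ compactness of $\cP(\cS)$ and the $L^1$-continuity of the states provided by the SLD Fisher-information assumption. A secondary technical issue is ensuring the exponential decay of the derandomized error strictly dominates $(1+\varepsilon)^n$; this is a quantitative tracking exercise in the elimination-of-randomness argument, done after fixing the rate margin $\delta$ and only then choosing $\varepsilon$.
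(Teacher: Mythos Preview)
Your approach has a genuine gap at the derandomization step. You claim that the deterministic code produced by Lemma \ref{lemma_derand} for the finite AVMAC on $\cS_\varepsilon$ has error $\beta_n = e^{-\Omega(n)}$. This is not what the Ahlswede elimination yields. In the proof of Lemma \ref{lemma_derand} (Subsection \ref{S7-C}) the concatenated code $\hat{C}_{\text{concat}}$ has worst-case error at most $2\mu$ for any preassigned constant $\mu>0$, but $\mu$ cannot be pushed down exponentially in $n$: the existence of a good realization relies on the union bound \eqref{BZO}, which requires $|\cS_\varepsilon|^n e^{-\mu n^{2t}}\to 0$, hence $\mu n^{2t}\gtrsim n\log|\cS_\varepsilon|$, i.e.\ $\mu$ can decay at best like $n^{1-2t}$. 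A polynomially small $\beta_n$ is crushed by the multiplicative $(1+\varepsilon)^n$ blow-up from the operator-dominance lift \eqref{eps_covering_density_operators_eq}, so your last paragraph does not close. There is also a secondary issue: Assumption \ref{ASS2} does not assume full-rank outputs, so the dominance net you invoke via Lemma \ref{LJGF} need not exist in the first place.

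The paper's route is different and is precisely designed to avoid any exponential lift. It discretizes with an $n$-\emph{dependent} lattice $\cS_n=\cS\cap\frac{1}{n}\bbZ^d$, so $|\cS_n|=O(n^d)$ and the union bound becomes $|\cS_n|^n e^{-\mu n^{2t}}=e^{O(n\log n)-\mu n^{2t}}\to 0$ (this is why $t>\tfrac12$ is chosen). That gives a deterministic code with error $\le 2\mu$ uniformly over $\cS_n^{v(n)+n}$. Extension to arbitrary $s^n\in\cS^n$ is then done \emph{additively} via trace distance: Assumption \ref{ASS2} (continuous SLD Fisher matrix on a compact set) yields $d_B(\rho_{x,y,s}^{\cB},\rho_{x,y,\bar s}^{\cB})^2=O(\|s-\bar s\|^2)=O(n^{-2})$, so $D_{1/2}$ on the $n$-fold states is $O(n^{-1})$ and the trace distance is $O(n^{-1/2})$. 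The final error is $2\mu+O(n^{-1/2})$, and letting $\mu\to 0$ finishes the argument. This additive Fisher-information control is the actual purpose of Assumption \ref{ASS2}; it is not merely providing Lipschitz continuity for a weak-$*$ compactness argument.
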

This lemma is shown in subsection \ref{S7D}.
Combining Lemmas \ref{lemma_converse_det} and \ref{lemma_derand2}, we have
\begin{theorem} (\textbf{Deterministic Code Capacity Region}) \label{theorem_avmac_capacity_restated2}
Assume Assumption \ref{ASS2}.
     If $\mathbf{Int}\{\cR_{d}\} \neq \varnothing$, then $\cR_{d} = \cR_{r}$.
 \end{theorem}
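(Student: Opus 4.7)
The plan is to obtain Theorem \ref{theorem_avmac_capacity_restated2} as an immediate consequence of Lemmas \ref{lemma_converse_det} and \ref{lemma_derand2}, both of which appear just above in the excerpt and may be invoked directly. The inclusion $\cR_{d} \subseteq \cR_{r}$ is supplied by Lemma \ref{lemma_converse_det}: any $(n,2^{nR_1},2^{nR_2},\beta)$ deterministic code is a degenerate random code (the distribution over codes is a point mass), so each deterministic achievable rate pair is a random achievable rate pair, and the two rate regions nest accordingly. Under the hypothesis $\mathbf{Int}\{\cR_{d}\}\neq \varnothing$, the reverse inclusion $\cR_{r} \subseteq \cR_{d}$ is exactly the statement of Lemma \ref{lemma_derand2}. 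Combining the two inclusions yields $\cR_{d} = \cR_{r}$, which is the theorem.

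Because the formal proof of the theorem is this two-line combination, the substantive question is what goes into Lemma \ref{lemma_derand2}, and the plan there is to mimic the Ahlswede-style derandomization carried out in Lemma \ref{lemma_derand} for finite $\cS$. One would start from a random code attaining any rate pair in $\cR_{r}$, apply a Markov/Chernoff argument to reduce the support of the code distribution to a polynomial-in-$n$ collection of deterministic codes that are simultaneously good against every state sequence, and then use the hypothesis $\mathbf{Int}\{\cR_{d}\} \neq \varnothing$ together with Lemma \ref{symm_avmac_cq} to obtain a short deterministic prefix code of vanishing rate that communicates the index into this polynomial pool, upgrading the random code to a fully deterministic one of essentially the same rate.

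The main obstacle, and the reason Assumption \ref{ASS2} is imposed, is that the Chernoff/union bound step of the finite-$\cS$ derandomization sums over all $s^n \in \cS^n$, which is meaningless when $\cS$ is continuous. The plan to overcome this is to replace $\cS$ by a finite $\varepsilon_n$-net $\cS_{\varepsilon_n}$ supplied by Assumption \ref{NM9} and to control the error at an arbitrary $s^n \in \cS^n$ by $(1+\varepsilon_n)^n$ times the error at the nearest point of $\cS_{\varepsilon_n}^n$ via the operator inequality \eqref{epsilon_net_state_density_eq}, exactly as was done in \eqref{continuous_eps_covering_density_operators_eq} when proving achievability. The compactness of $\cS \subset \mathbb{R}^d$, the $C^1$ regularity of $s\mapsto \rho^{\cB}_{x,y,s}$, and the continuity of the SLD Fisher information in Assumption \ref{ASS2} together provide a uniform volumetric estimate $|\cS_{\varepsilon_n}| = O(\varepsilon_n^{-d})$.

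The delicate quantitative balance, and the point where I expect the proof to be most technical, is choosing $\varepsilon_n \to 0$ slowly enough that $|\cS_{\varepsilon_n}|^n$ remains subexponential relative to the exponential decay of the random-code error (so that the net-based union bound still closes), yet quickly enough that $(1+\varepsilon_n)^n \to 1$ (so that the passage from a net point back to an arbitrary $s^n$ does not inflate the error). Under Assumption \ref{ASS2} any polynomially decaying $\varepsilon_n$ (for example $\varepsilon_n = 1/n^2$) meets both requirements simultaneously, and once this is in place the remaining bookkeeping is a direct transcription of the finite-$\cS$ argument of Lemma \ref{lemma_derand}.
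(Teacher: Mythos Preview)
Your combination of Lemmas \ref{lemma_converse_det} and \ref{lemma_derand2} is exactly the paper's proof of the theorem, and the architecture you sketch for Lemma \ref{lemma_derand2} (finite net, Chernoff union bound over the net, then extension to non-net points) also matches. The gap is in the last step. You propose to pass from a net point $\bar{s}^n$ back to an arbitrary $s^n$ via the multiplicative operator inequality $\rho^{\cB}_{x,y,s} \preceq (1+\varepsilon_n)\rho^{\cB}_{x,y,\bar{s}}$ of Assumption \ref{NM9}, and you claim that the regularity in Assumption \ref{ASS2} furnishes such a net with $|\cS_{\varepsilon_n}| = O(\varepsilon_n^{-d})$. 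That implication fails. Assumption \ref{ASS2} allows rank-deficient families --- for instance pure states $\rho_s = \ketbra{\psi_s}$ with $s\mapsto\ket{\psi_s}$ smooth --- for which the SLD Fisher information is finite and continuous, yet $\rho_s \preceq C\rho_{\bar{s}}$ fails for every finite $C$ once the supports differ. So under the theorem's hypotheses you have neither Assumption \ref{NM9} nor the volumetric bound on an operator-inequality net, and the $(1+\varepsilon_n)^n$ scheme cannot be run.

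The paper avoids the operator ordering entirely and controls the non-net points \emph{additively} via trace distance, which is precisely what Assumption \ref{ASS2} is tailored to deliver. It takes the fixed grid $\cS_n := \cS \cap (n^{-1}\bbZ)^d$, so $|\cS_n| = O(n^d)$ by compactness, and runs the Chernoff argument of Lemma \ref{lemma_derand} over $\cS_n^n$; the union bound $|\cS_n|^n e^{-\mu n^{2t}} = e^{dn\log n - \mu n^{2t}} \to 0$ still closes because $2t>1$. Then the Bures-metric expansion $d_B(\rho_{x,y,s},\rho_{x,y,\bar{s}})^2 \approx J_{x,y,s}\cdot\|s-\bar{s}\|^2$ (valid regardless of rank, and uniform by compactness and continuity of $J$) gives $D_{1/2}(\rho^{\cB}_{x,y,s}\|\rho^{\cB}_{x,y,\bar{s}}) = O(1/n^2)$ on the $1/n$-grid, hence $D_{1/2}$ of the $n$-fold product is $O(1/n)$, hence $\|\rho^{\cB^n}_{x^n,y^n,s^n}-\rho^{\cB^n}_{x^n,y^n,\bar{s}^n}\|_1 = O(1/\sqrt{n})$. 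This yields $\sup_{s^{v(n)+n}}\bar{e}(\hat{C}_{\text{concat}},s^{v(n)+n}) \le 2\mu + O(1/\sqrt{n})$ with no multiplicative blow-up, completing the derandomization.
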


Therefore, when Assumptions \ref{NM9} and \ref{ASS2} hold,
the deterministic code capacity region $\cR_{d}$ is given by 
$\cR^{\star}$ defined in \eqref{Rstar}.

\subsection{Proof of sufficiency in Lemma \ref{symm_avmac_cq}}\label{S7-B1} 
We will show that if a CQ-AVMAC is either $\cX-$symmetrizable or $\cY-$symmetrizable or $\cX\cY-$symmetrizable then $\mathbf{Int}\{\cR_{d}\} = \varnothing$. 
In this part, we will not use Assumption \ref{NM9}.
\if0
To show this we will only analyze the case when the CQ-AVMAC is $\cX\cY$-symmetrizable and show that for a fixed code $\cC$ $\max_{s^n \in \cS^n}\bar{e}(\cC,s^{n}) > 0$. When a CQ-AVMAC is $\cX-$symmetrizable or $\cY-$symmetrizable, one can arrive at a similar conclusion. 
        However, we omit the detailed proofs for these case for simplicity. 
\fi
    
\subsubsection{$\cX\cY-$symmetrizable case 1}  \label{S7-B1-1}  
We will first analyze the term \eqref{error_term_combined1} by assuming that the CQ-AVMAC is $\cX\cY-$symmetrizable.
For simplicity, we assume that there exists a collection of distribution $\{U_3(. | x, y)\}_{x \in \cX, y \in \cY}$ over $\cS$ that 
achieves $0$ in \eqref{2xy_sym}.
    The error probability $e(m_1,m_2,\cC,s^n)$ can be written as follows,
    \begin{align}
       &e(m_1,m_2,\cC,s^n)  = \sum_{(\hat{m}_1,\hat{m}_2) \neq (m_1,m_2)}\tr[D_{\hat{m}_1,\hat{m}_2}\rho_{x^n(m_1),y^n(m_2),s^n}^{\cB^n}] \label{error_term_combined1}.
    \end{align}
We choose a random variable $S^n_{\bar{m}_1,\bar{m}_2}$ 
subject to the probability measure $ U_3^n(d s^n| x^n(\bar{m}_1),y^n(\bar{m}_2))$.
The expectation \par
\noindent
$\bbE_{S^n_{\bar{m}_1,\bar{m}_2}}\left[e(m_1,m_2,\cC,S^n_{\bar{m}_1,\bar{m}_2})\right]$ 
with respect to the random variable $S^n_{\bar{m}_1,\bar{m}_2}$
is calculated as follows.
    \begin{align}
    \bbE_{S^n_{\bar{m}_1,\bar{m}_2}}\left[e(m_1,m_2,\cC,S^n_{\bar{m}_1,\bar{m}_2})\right] 
    &{:=} 
    \sum_{(\hat{m}_1,\hat{m}_2) \neq (m_1,m_2)}
    \int_{\cS^n}
    U_3^n(d s^n| x^n(\bar{m}_1),y^n(\bar{m}_2))\tr[D_{\hat{m}_1,\hat{m}_2}\rho_{x^n(m_1),y^n(m_2),s^n}^{\cB^n}]\nn\\
    &= \sum_{(\hat{m}_1,\hat{m}_2) \neq (m_1,m_2)}
    \tr\left[D_{\hat{m}_1,\hat{m}_2}\bigotimes_{i=1}^{n}
    \int_{\cS}U_3(d s_i| x_i(\bar{m}_1),y_i(\bar{m}_2))\rho_{x_i(m_1),y_i(m_2),s_i}^{\cB}\right]\nn\\
    &\overset{a}{=} 
    \sum_{(\hat{m}_1,\hat{m}_2) \neq (m_1,m_2)}
     \tr\left[D_{\hat{m}_1,\hat{m}_2}\bigotimes_{i=1}^{n}\int_{\cS} U_3(ds_i| x_i(m_1),y_i(m_2))\rho_{x_i(\bar{m}_1),y_i(\bar{m}_2),s_i}^{\cB}\right]\nn\\
    &=\sum_{(\hat{m}_1,\hat{m}_2) \neq (m_1,m_2)}
    \int_{\cS^n} U_3^n(ds^n| x^n(m_1),y^n(m_2))\tr[D_{\hat{m}_1,\hat{m}_2}\rho_{x^n(\bar{m}_1),y^n(\bar{m}_2),s^n}^{\cB^n}],\label{BNT1}
    \end{align}    
 where $a$ follows from $\cX\cY-$symmetrizabilty condition mentioned in \eqref{2xy_sym}. Similarly, we have 
 \begin{equation}
\bbE_{S^n_{\bar{m}_1,\bar{m}_2}}\left[e(\bar{m}_1,\bar{m}_2,\cC,S^n_{m_1,m_2})\right] = 
\sum_{(\hat{m}_1,\hat{m}_2) \neq (\bar{m}_1,\bar{m}_2)}
\int_{\cS^n} U_3^n(ds^n| x^n(m_1),y^n(m_2))\tr[D_{\hat{m}_1,\hat{m}_2}\rho_{x^n(\bar{m}_1),y^n(\bar{m}_2),s^n}^{\cB^n}].
\label{BNT2}
 \end{equation}

We set $(\bar{m}_1,\bar{m}_2) \neq (m_1,m_2)$.
Since
any element $(\hat{m}_1,\hat{m}_2)\in \cM_n^1\times\cM_n^2
$ satisfies 
the condition $(\hat{m}_1,\hat{m}_2) \neq ({m}_1,{m}_2)$
or $(\hat{m}_1,\hat{m}_2) \neq (\bar{m}_1,\bar{m}_2)$,
\eqref{BNT1} and \eqref{BNT2} imply
 the relation 
 \begin{equation}
     \bbE_{S^n_{m_1,m_2}}\left[e(\bar{m}_1,\bar{m}_2,\cC,S^n_{m_1,m_2})\right] + \bbE_{S^n_{\bar{m}_1,\bar{m}_2}}\left[e(m_1,m_2,\cC,S^n_{\bar{m}_1,\bar{m}_2})\right] \geq 1. \label{err_lb_xy_sym}
 \end{equation}
Thus, we have
\begin{align*}
    \frac{1}{|\cM_n^1| |\cM_n^2|}
    \sum_{\hat{m}_1 =1}^{|\cM_n^1|}
    \sum_{\hat{m}_2 =1}^{|\cM_n^2|}
    \bbE_{S^n_{\hat{m}_1,\hat{m}_2}}\left[\bar{e}(\cC,S_{\hat{m}_1, \hat{m}_2}^{n})\right] 
    &\geq \frac{1}{|\cM_n^1|^2|\cM_n^2|^2}
    \sum_{\hat{m}_1 =1}^{|\cM_n^1|}\sum_{\hat{m}_2 =1}^{|\cM_n^2|}\sum_{m_1 =1}^{|\cM_n^1|}\sum_{m_2 =1}^{|\cM_n^2|}\bbE_{S^n_{\hat{m}_1,\hat{m}_2}}\left[e(m_1,m_2,\cC,S^n_{\hat{m}_1,\hat{m}_2})\right]\\
    &\overset{b}{\geq} \frac{1}{|\cM_n^1|^2|\cM_n^2|^2}\left(\begin{array}{c}
        |\cM_n^1| |\cM_n^2| \\
        2
    \end{array}\right)
    = \frac{(|\cM_n^1| |\cM_n^2|- 1)}{2\cdot |\cM_n^1||\cM_n^2|}\\
    &\overset{c}{\geq} \frac{1}{4},
\end{align*}
where $b$ follows from the fact that $\left(\begin{array}{c}
        |\cM_n^1| |\cM_n^2| \\
        2
    \end{array}\right)$
    pairs $((\bar{m}_1,\bar{m}_2) , (m_1,m_2))$
    satisfy the condition $(\bar{m}_1,\bar{m}_2) \neq (m_1,m_2)$, which implies
    \eqref{err_lb_xy_sym}. 
    $c$ follows from the fact that $|\cM_n^1||\cM_n^2| \geq 2$ since the cardinality of the smallest message set would be $2$ (when the length of message is only $1$). Thus, it directly follows that there exists a message pair $({m}_1,{m}_2) \in [|\cM_n^1|] \times [|\cM_n^2|]$ such that 
    \begin{equation*}
        \bbE_{S^n_{{m}_1,{m}_2}}\left[\bar{e}(\cC,S_{{m}_1, {m}_2}^{n})\right] \geq \frac{1}{4}.
    \end{equation*}
        Thus, the following holds
    \begin{equation}
        \sup_{s^n \in \cS^n}\bar{e}(\cC,s^{n}) \geq \frac{1}{4}.\label{BNJ7}
    \end{equation}
         Thus, if a CQ-AVMAC is $\cX\cY$-symmetrizable then $\forall R_1,R_2>0$, there does not exist an 
     $(n,|\cM_n^1|,|\cM_n^2|,\eps_n)$ deterministic code $\cC$ such that $\lim_{n \to \infty} \eps_n = 0.$ This implies that $\mathbf{Int}\{\cR_{d}\} = \varnothing$. 

\subsubsection{$\cX\cY-$symmetrizable case 2}\label{VB86}    
Next, we consider the case when 
the CQ-AVMAC is $\cX\cY-$symmetrizable and 
no collection of distribution $\{U_3(. | x, y)\}_{x \in \cX, y \in \cY}$ over $\cS$
achieves $0$ in \eqref{2xy_sym}.
In this case, there exists a sequence of 
collections of distribution $\{U_{3,k}(. | x, y)\}_{x \in \cX, y \in \cY}$ such that
\begin{equation}
\theta_{3,k} := 
\max_{x,x'\in \cX,y,y'\in \cY, z \in \cZ}
\left\|    \int_{\cS}U_{3,k}(ds|x',y')\rho_{x,y,s}^{\cB} - \int_{\cS} U_{3,k}(ds|x,y)\rho_{x',y',s}^{\cB}\right\|_1 \to 0 \hbox{ as }k \to \infty.
\label{3xy_sym}
\end{equation}
When we choose $\{U_{3,k}(. | x, y)\}_{x \in \cX, y \in \cY}$ 
as $\{U_3(. | x, y)\}_{x \in \cX, y \in \cY}$
in the discussion of Subsubsection \ref{S7-B1-1},
the difference between LHS and RHS in \eqref{BNT1} is upper bounded by 
$n \theta_{3,k}$. 
We choose a random variable $S^n_{\bar{m}_1,\bar{m}_2|k}$ 
subject to the probability measure $ U_{3,k}^n(d s^n| x^n(\bar{m}_1),y^n(\bar{m}_2))$.
Hence, instead of \eqref{err_lb_xy_sym}, we have
 \begin{equation}
     \bbE_{S^n_{m_1,m_2|k}}\left[e(\bar{m}_1,\bar{m}_2,\cC,S^n_{m_1,m_2})\right] + \bbE_{S^n_{\bar{m}_1,\bar{m}_2|k}}\left[e(m_1,m_2,\cC,S^n_{\bar{m}_1,\bar{m}_2})\right] \geq 1-n \theta_{3,k}. \label{2Terr_lb_xy_sym}
 \end{equation}
        Thus, 
        there exists a message pair $({m}_1,{m}_2) \in [|\cM_n^1|] \times [|\cM_n^2|]$ such that 
    \begin{equation}
        \bbE_{S^n_{{m}_1, {m}_2|k}}\left[\bar{e}(\cC,S_{ {m}_1,  {m}_2}^{n})\right] \geq \frac{1}{4}(1-n \theta_{3,k}).\label{VOG}
    \end{equation}
Combining \eqref{3xy_sym} and \eqref{VOG}, we obtain \eqref{BNJ7}.

\subsubsection{$\cX-$symmetrizable case}    \label{MF4}
          To discuss the case when 
a CQ-AVMAC is $\cX-$symmetrizable,
       we rewrite the error probability $e(m_1,m_2,\cC,s^n)$ as follows
    \begin{align}
       &e(m_1,m_2,\cC,s^n)  \nn\\
       &= \underbrace{\sum_{\hat{m}_1 \neq m_1}\tr[D_{\hat{m}_1,m_2}\rho_{x^n(m_1),y^n(m_2),s^n}^{\cB^n}]}_{e_1(m_1,m_2,\cC,s^n)} + \underbrace{\sum_{\hat{m}_2 \neq m_2}\tr[D_{m_1,\hat{m}_2}\rho_{x^n(m_1),y^n(m_2),s^n}^{\cB^n}]}_{e_2(m_1,m_2,\cC,s^n)} + \underbrace{\sum_{\substack{\hat{m}_1 \neq m_1 \\ \hat{m}_2 \neq m_2}}\tr[D_{\hat{m}_1,\hat{m}_2}\rho_{x^n(m_1),y^n(m_2),s^n}^{\cB^n}]}_{e_3(m_1,m_2,\cC,s^n)}.\label{error_term_combined}
    \end{align}
For simplicity, we assume that there exists a collection of distribution $\{U_1(. | x)\}_{x \in \cX}$ over $\cS$ that 
achieves $0$ in \eqref{2x_sym}.
We choose a random variable $S^n_{\bar{m}_1}$ 
subject to the probability measure $ U_1^n(d s^n| x^n(\bar{m}_1))$.
We employ the first term $e_1(m_1,m_2,\cC,s^n) $ instead of $e(m_1,m_2,\cC,s^n) $. 
The expectation 
$\bbE_{S^n_{\bar{m}_1}}\left[e_1(m_1,m_2,\cC,S^n_{\bar{m}_1})\right]$ 
with respect to the random variable $S^n_{\bar{m}_1}$
is calculated as follows.
    \begin{align}
    \bbE_{S^n_{\bar{m}_1}}\left[e_1(m_1,m_2,\cC,S^n_{\bar{m}_1})\right] 
    &{:=} 
    \sum_{\hat{m}_1 \neq m_1}
    \int_{\cS^n}
    U_1^n(d s^n| x^n(\bar{m}_1))\tr[D_{\hat{m}_1,{m}_2}\rho_{x^n(m_1),y^n(m_2),s^n}^{\cB^n}]\nn\\
    &= \sum_{\hat{m}_1 \neq m_1}
    \tr\left[D_{\hat{m}_1,{m}_2}\bigotimes_{i=1}^{n}
    \int_{\cS}U_1(d s_i| x_i(\bar{m}_1))
    \rho_{x_i(m_1),y_i(m_2),s_i}^{\cB}\right]\nn\\
    &\overset{a}{=} 
    \sum_{\hat{m}_1 \neq m_1 }
     \tr\left[D_{\hat{m}_1,{m}_2}\bigotimes_{i=1}^{n}
     \int_{\cS} U_1(ds_i| x_i(m_1))\rho_{x_i(\bar{m}_1),y_i({m}_2),s_i}^{\cB}\right]\nn\\
    &=\sum_{\hat{m}_1 \neq m_1}
    \int_{\cS^n} U_1^n(ds^n| x^n(m_1))
    \tr[D_{\hat{m}_1, {m}_2} \rho_{x^n(\bar{m}_1),y^n({m}_2),s^n}^{\cB^n}],\label{2BNT1}
    \end{align}    
 where $a$ follows from $\cX-$symmetrizabilty condition mentioned in \eqref{2x_sym}. Similarly, we have 
 \begin{equation}
\bbE_{S^n_{\bar{m}_1}}\left[e_1(\bar{m}_1, {m}_2,\cC,S^n_{m_1,m_2})\right] 
= 
\sum_{\hat{m}_1  \neq \bar{m}_1 }
\int_{\cS^n} U_1^n(ds^n| x^n(m_1) )\tr[D_{\hat{m}_1, {m}_2}\rho_{x^n(\bar{m}_1),y^n( {m}_2),s^n}^{\cB^n}].
\label{2BNT2}
 \end{equation}

We choose $\bar{m}_1  \neq m_1 $.
Since any element
$\hat{m}_1 \in \cM_n^1$ satisfies 
the condition $\hat{m}_1 \neq {m}_1$
or $\hat{m}_1 \neq \bar{m}_1$,
\eqref{2BNT1} and \eqref{2BNT2} imply
 the relation 
 \begin{equation}
     \bbE_{S^n_{m_1}}\left[e_1(\bar{m}_1,{m}_2,\cC,S^n_{m_1})\right] 
     + \bbE_{S^n_{\bar{m}_1}}\left[e_1(m_1,m_2,\cC,S^n_{\bar{m}_1})\right] \geq 1. \label{2err_lb_xy_sym}
 \end{equation}
Thus, we have, 
\begin{align*}
    \frac{1}{|\cM_n^1| }
    \sum_{\hat{m}_1 =1}^{|\cM_n^1|}
    \bbE_{S^n_{\hat{m}_1}}
    \left[\bar{e}(\cC,S_{\hat{m}_1}^{n})\right] 
    &= \frac{1}{|\cM_n^1|^2|\cM_n^2|}\sum_{\hat{m}_1 =1}^{|\cM_n^1|}    \sum_{m_1 =1}^{|\cM_n^1|}\sum_{m_2 =1}^{|\cM_n^2|}\bbE_{S^n_{\hat{m}_1}}\left[e(m_1,m_2,\cC,S^n_{\hat{m}_1})\right]\\
    &\geq \frac{1}{|\cM_n^1|^2|\cM_n^2|}
    \sum_{\hat{m}_1 =1}^{|\cM_n^1|}
    \sum_{m_1 =1}^{|\cM_n^1|}\sum_{m_2 =1}^{|\cM_n^2|}
    \bbE_{S^n_{\hat{m}_1}}\left[e_1(m_1,m_2,\cC,S^n_{\hat{m}_1})\right]\\
    &\overset{b}{\geq} \frac{1}{|\cM_n^1|^2|\cM_n^2|}
    \left(\begin{array}{c}
        |\cM_n^1| \\
        2
    \end{array}\right) \cdot |\cM_n^2|
    = \frac{(|\cM_n^1| - 1)}{2\cdot |\cM_n^1|}
    \overset{c}{\geq} \frac{1}{4},
\end{align*}
where $b$ follows from the fact that $\left(\begin{array}{c}
        |\cM_n^1| \\
        2
    \end{array}\right)$ pairs $(\bar{m}_1 , m_1)$
    satisfy $\bar{m}_1  \neq m_1  $, which yields
    \eqref{2err_lb_xy_sym}.
    $c$ follows from the fact that $|\cM_n^1| \geq 2$ since the cardinality of the smallest message set would be $2$. In the same way as the $\cX\cY-$symmetrizable case, we have
    \begin{equation}
        \sup_{s^n \in \cS^n}\bar{e}(\cC,s^{n}) \geq \frac{1}{4}.\label{BNI}
    \end{equation}
Further, when 
no collection of distribution $\{U_1(. | x)\}_{x \in \cX}$ over $\cS$ achieves $0$ in \eqref{2x_sym},
we can show \eqref{BNI} in the same way as subsubsection \ref{VB86}.

Using a similar discussion, we can derive 
     the property $\mathbf{Int}\{\cR_{d}\} = \varnothing$. 
When a CQ-AVMAC is $\cY-$symmetrizable, 
using the second term $e_2(m_1,m_2,\cC,s^n) $,
we can derive the same property.
This completes the proof of sufficiency.

\subsubsection{$\cY-$symmetrizable case}    
When a CQ-AVMAC is $\cY-$symmetrizable,
we can show \eqref{BNI} 
in the same way as subsubsection \ref{MF4}
by using $e_2(m_1,m_2,\cC,s^n) $ instead of $e_1(m_1,m_2,\cC,s^n) $.

\subsection{Proof of necessity in Lemma \ref{symm_avmac_cq}}\label{S7-B2} 
\subsubsection{Classical channel} 
We will use a contrapositive argument, i.e., we will show that if a CQ-AVMAC is non-symmetrizable (i.e. none of the \cref{x_sym,y_sym,xy_sym} satisfies), then $\mathbf{Int}\{\cR_{d}\} \neq \varnothing$. 
For this proof, we employ the results in \cite{Ahlswede1999} for the classical case, which composed 
their Lemma 1, Lemma 2, and Theorem 1.
Although the proof of Theorem 1 in \cite{Ahlswede1999}
uses the finiteness of $\cS$,
the proof of Lemma  1 in \cite{Ahlswede1999}
does not use the finiteness of $\cS$.
Lemma  1 in \cite{Ahlswede1999} is rewritten as follows.

We consider a classical arbitrary varying MAC (C-AVMAC)
$\{W_{x,y,s}\}_{x\in \cX,y\in \cY,s \in \cS}\subset \cP(\cZ)$
with $|\cX|,|\cY|,|\cZ|<\infty$.
We denote the set of possible empirical distributions on $\Omega$
with $n$ outcomes by $\cP(n,\Omega)$.
We denote the typical subspace of $P \in \cP(n,\Omega)$
by $\cT_n(P)$.

\if0
We define the 
\begin{align}
\theta := \inf_{U_3(.|y)}
\max_{x,x'\in \cX,y,y'\in \cY, z \in \cZ}
|    \int_{\cS}U_3(ds|x',y')\rho_{x,y,s}^{\cB} - \int_{\cS} U_3(ds|x,y)\rho_{x',y',s}^{\cB}| \\
\end{align}
\fi

\begin{definition}[Collection of subsets 
$\{  D_{u,v|\kappa}\}_{(u,v)\in \cU\times \cV}$]
We fix $P_X\in \cP(\cX)$, $P_Y\in \cP(\cY)$,
Given small real numbers $\xi,\zeta,\zeta_1,\zeta_2>0$
and $\cU\subset \cT_n(P_X),\cV\subset \cT_n(P_Y)$,
we simplify $\kappa:=(\xi,\zeta,\zeta_1,\zeta_2)$ and
define the subsets $\{  D_{u,v|\kappa}\subset \cZ^n
\}_{(u,v)\in \cU\times \cV}$ as follows.
The subset $D_{u,v|\kappa}$ is defined as the set of elements $z^n \in \cZ^n$ that has a pair 
$(P_{XYSZ},s^n)\in 
\cP(n,\cX\times \cY\times \cS\times \cZ)
\times \cS^n$ 
satisfying the following four conditions.
\begin{description}
\item[C0)]
$(u,v,s^n,z^n)\in \cT_n(P_{XYSZ})$ and 
$D(P_{XYSZ}\|P_X\times P_Y\times P_S\times P_Z )\le \xi$.
\item[C1)]
When a joint distribution $P_{XX'YY'SS'Z}$
and $u'(\neq u)\in \cU,v'(\neq v)\in \cV,s'\in \cS$
satisfies the conditions
\begin{align}
(u,u',v,v',s,s',z^n)\in \cT_n(P_{XX'YY'SS'Z}) \\
D(P_{X'Y'S'Z}\|P_{X'}\times P_{Y'}\times P_{S'}\times P_Z )\le \xi,
\end{align}
we have
\begin{align}
I(XYZ;X'Y'|S)< \zeta.
\end{align}
\item[C2)]
When a joint distribution $P_{XX'YY'SS'Z}$
and $u'(\neq u)\in \cU,s'\in \cS$
satisfies the conditions
\begin{align}
(u,u',v,s,s',z^n)\in \cT_n(P_{XX'YSS'Z}) \\
D(P_{X'YS'Z}\|P_{X'}\times P_{Y}\times P_{S'}\times P_Z )\le \xi,
\end{align}
we have
\begin{align}
I(XYZ;X'Y|S)< \zeta_1.
\end{align}
\item[C3)]
When a joint distribution $P_{XYY'SS'Z}$
and $v'(\neq v)\in \cV,s'\in \cS$
satisfies the conditions
\begin{align}
(u,v,v',s,s',z^n)\in \cT_n(P_{XYY'SS'Z}) \\
D(P_{XY'S'Z}\|P_{X}\times P_{Y'}\times P_{S'}\times P_Z )\le \xi,
\end{align}
we have
\begin{align}
I(XYZ;Y'|S)< \zeta_2.
\end{align}
\end{description}
\end{definition}

Then Reference \cite{Ahlswede1999} showed the following.
\begin{proposition}[\protect{\cite[Lemma 1]{Ahlswede1999}}]\label{OO2}
Assume that an AVMAC
$\{W_{x,y,s}\}_{x\in \cX,y\in \cY,s \in \cS}$ with $|\cX|,|\cY|,|\cZ|<\infty$
is non-symmetrizable.
We fix $\alpha,\beta>0$.
We choose positive numbers $\kappa=(\xi,\zeta,\zeta_1,\zeta_2)$ 
to satisfy the following conditions\footnote{
Lemma 1 of \cite{Ahlswede1999} gives the concrete choice of 
$\kappa=(\xi,\zeta,\zeta_1,\zeta_2)$ gives in its proof.
In their choice of $\kappa$,
a constant $c$ is used to express
the relation between the total variation distance and Kullback–Leibler divergence in the application of Pinsker's inequality.
Since this coefficient is known to $\sqrt{2}$, we used their choice with 
$c=\sqrt{2}$ in \eqref{NMT7}.}
\begin{align}
\xi+\zeta_1< \frac{1}{8}\alpha^2\beta^2 \theta_1^2,\quad
\xi+\zeta_2< \frac{1}{8}\alpha^2\beta^2 \theta_2^2,\quad
\xi+\zeta < \frac{1}{64}|\cX|^{-2}|\cY|^{-2}|\cZ|^{-2}\alpha^4\beta^4 \theta^2,\label{NMT7}
\end{align}
where $\theta_1$, $\theta_2,$ $\theta$ are defined in
Definition \ref{2defsym}.
When $P_X \in \cP(n,\cX)$, $P_Y \in \cP(n,\cY)$ 
satisfy 
$\min_{x \in \cX}P_X(x) \ge \alpha$
and $\min_{y \in \cY}P_Y(y) \ge \beta$,
for any two subsets $\cU\subset \cT_n(P_X),\cV\subset \cT_n(P_Y)$,
the collection of subsets 
$\{  D_{u,v|\kappa}\}_{(u,v)\in \cU\times \cV}$ are disjoint.
\end{proposition}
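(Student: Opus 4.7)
\subsection*{Proof proposal for Proposition \ref{OO2}}

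The plan is a proof by contradiction: suppose an element $z^n\in \cZ^n$ lies in both $D_{u_1,v_1|\kappa}$ and $D_{u_2,v_2|\kappa}$ for some distinct pairs $(u_1,v_1)\neq (u_2,v_2)$ in $\cU\times \cV$, and derive a symmetrizer that violates one of $\theta_1>0$, $\theta_2>0$, $\theta>0$. We split the analysis into three cases according to which coordinates of the pair differ: (i) $u_1\neq u_2$, $v_1=v_2$; (ii) $u_1=u_2$, $v_1\neq v_2$; (iii) $u_1\neq u_2$ and $v_1\neq v_2$. In each case the two memberships simultaneously give us two joint empirical types $P^{(1)}_{XX'YY'SS'Z}$ and $P^{(2)}_{X'XY'YS'SZ}$ (with the roles of primed/unprimed exchanged), both consistent with the observed $(u_1,u_2,v_1,v_2,z^n)$, and satisfying the respective small mutual information bounds from C1/C2/C3.

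First I would handle case (i), which is the cleanest. Applying C2 to membership in $D_{u_1,v|\kappa}$ with $u'=u_2$ yields a joint type with $I(XYZ;X'Y|S)<\zeta_1$ and a state marginal $P_{X'YS'Z}$ that is $\xi$-close in divergence to $P_{X'}\times P_Y\times P_{S'}\times P_Z$; applying C2 to membership in $D_{u_2,v|\kappa}$ with the roles swapped gives the analogous bound with primed and unprimed $X$ exchanged. I would then extract from the conditional type $P_{S|X}$ (coming from the empirical joint of $(u_1,s^n)$) a candidate $\cX$-symmetrizer $U_1(\cdot|x)$. Using Pinsker's inequality $\|P-Q\|_{TV}\le \sqrt{(\ln 2/2) D(P\|Q)}\le \sqrt{D(P\|Q)/2}$ (so $\|P-Q\|_1\le \sqrt{2 D(P\|Q)}$), the C0 and C2 bounds translate the mutual-information-plus-divergence bound $\xi+\zeta_1$ into an $L^1$-bound of the form
\begin{equation}
\max_{x,x'\in \cX,\,y\in\cY,\,z\in\cZ}\Bigl|\sum_{s\in\cS}U_1(s|x')W_{x,y,s}(z)-\sum_{s\in\cS}U_1(s|x)W_{x',y,s}(z)\Bigr|\;<\; \tfrac{1}{\alpha\beta}\,\sqrt{8(\xi+\zeta_1)},
\end{equation}
where the factor $1/(\alpha\beta)$ arises because $P_X(x),P_Y(y)\ge \alpha\beta$ appear in denominators when passing from the joint empirical to conditional empiricals. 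The hypothesis $\xi+\zeta_1<\tfrac18\alpha^2\beta^2\theta_1^2$ makes this RHS strictly less than $\theta_1$, contradicting the definition of $\theta_1$ in \eqref{2x_sym}. Case (ii) is symmetric, using C3, the $\cY$-marginal constraints, and $\theta_2$.

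Case (iii) is the main obstacle and requires more care, because we must simultaneously control both message coordinates. Here I would use C1 from both memberships to obtain a joint type on $(X,X',Y,Y',S,S',Z)$ satisfying $I(XYZ;X'Y'|S)<\zeta$ together with the double divergence bound. From the empirical conditional $P_{S|XY}$ I would extract a candidate $\cX\cY$-symmetrizer $U_3(\cdot|x,y)$. Applying Pinsker's inequality again, the stronger product-structure bound translates into
\begin{equation}
\max_{x,x'\in\cX,\,y,y'\in\cY,\,z\in\cZ}\Bigl|\sum_{s}U_3(s|x',y')W_{x,y,s}(z)-\sum_{s}U_3(s|x,y)W_{x',y',s}(z)\Bigr|\;<\;\tfrac{1}{\alpha^2\beta^2}\cdot C|\cX||\cY||\cZ|\sqrt{\xi+\zeta},
\end{equation}
where the extra $\alpha^2\beta^2$ and the factors $|\cX||\cY||\cZ|$ reflect the need to sum over the full joint rather than a marginal, accounting for the squared input probabilities and the passage from joint to conditional via the four input indices. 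The bound $\xi+\zeta<\tfrac{1}{64}|\cX|^{-2}|\cY|^{-2}|\cZ|^{-2}\alpha^4\beta^4\theta^2$ in \eqref{NMT7} is then calibrated so that the RHS is strictly less than $\theta$, contradicting \eqref{2xy_sym}. In all three cases the contradiction forces $D_{u_1,v_1|\kappa}\cap D_{u_2,v_2|\kappa}=\emptyset$, as claimed.

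The routine but delicate step throughout is the bookkeeping of the Pinsker-converted $L^1$ distances and the combinatorial factors $1/\alpha,1/\beta,|\cX|,|\cY|,|\cZ|$ that appear when passing from empirical joints to conditional symmetrizers; this is exactly why \cite{Ahlswede1999} chose the particular constants in \eqref{NMT7}. Since we are working in the classical setting with $|\cS|$ possibly infinite but $|\cX|,|\cY|,|\cZ|<\infty$, the empirical types still lie in finite-dimensional simplices on $\cX,\cY,\cZ$, and the symmetrizer $U_i$ is only required to be a probability measure on $\cS$ (obtained by averaging Dirac measures at the $n$ observed symbols $s_1,\ldots,s_n$), so the arguments of \cite{Ahlswede1999} go through verbatim; the finiteness of $\cS$ is not used in Lemma~1 of \cite{Ahlswede1999}.
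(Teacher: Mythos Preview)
Your proposal is correct and matches the contrapositive argument the paper attributes to \cite{Ahlswede1999}: the paper itself does not reprove this proposition but simply records that Lemma~1 of \cite{Ahlswede1999} shows ``when the collection of subsets $\{D_{u,v|\kappa}\}$ are not disjoint, one of the relations in \eqref{NMT7} does not hold,'' which is exactly the structure you outline. Your three-case split according to which coordinate of $(u,v)$ differs, the use of C1/C2/C3 from both memberships to harvest the two state sequences, the extraction of a conditional empirical $P_{S|X}$ (resp.\ $P_{S|Y}$, $P_{S|XY}$) as an approximate symmetrizer, and the Pinsker conversion with the $1/\alpha,1/\beta$ and $|\cX||\cY||\cZ|$ bookkeeping are precisely the ingredients of the Ahlswede--Cai argument; your closing remark that finiteness of $\cS$ is not used also agrees with the paper's explicit observation to that effect.
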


Lemma 1 of \cite{Ahlswede1999} showed 
the contraposition of the above proposition.
That is, it showed the following; 
When the collection of subsets 
$\{  D_{u,v|\kappa}\}_{(u,v)\in \cU\times \cV}$ are not disjoint,
one of the relation in \eqref{NMT7} does not hold.

Hence, when $\kappa$ satisfies the condition of Proposition \ref{OO2}
and two subsets $\cU,\cV$ are given,
we have the tuple $\cU,\cV,\{  D_{u,v|\kappa}\}_{(u,v)\in \cU\times \cV}$ forms a code. 
This code is written as
$\cC[\kappa, \cU,\cV]$.
We also define
\begin{align}
 \gamma(n,|\cX|,|\cY|,|\cS|,|\cZ|)
:=(n+1)^{|\cX|^2|\cY|^2|\cS||\cZ|}.
\end{align}
Then, 
the combination of Lemma 2 and Theorem 1 of \cite{Ahlswede1999}
is rewritten as follows.

\begin{proposition}[\protect{\cite[Lemma 2 \& Theorem 1]{Ahlswede1999}}]
Assume that an AVMAC
$\{W_{x,y,s}\}_{x\in \cX,y\in \cY,s \in \cS}$ with $|\cX|,|\cY|,|\cZ|<\infty$
is non-symmetrizable.
Given $\alpha,\beta>0$,
we choose positive numbers $\kappa=(\xi,\zeta,\zeta_1,\zeta_2)$ 
according to Proposition \ref{OO2}.
We choose $\varepsilon',\delta,r>0$ as
\begin{align}
0<\varepsilon'<\frac{\xi}{2}, 
0<\varepsilon'<\delta<r<\frac{\zeta^*}{11},\label{BP1}
\end{align}
where $\zeta^*:= \min(\zeta,\zeta_1,\zeta_2)$.
We choose a finite subset $\cS_{f} \subset \cS$ and two distributions 
$P_X \in \cP(n,\cX)$ and $P_Y \in \cP(n,\cY)$ as
\begin{align}
\min_{x \in \cX}P_X(x) \ge \alpha,\quad
\min_{y \in \cY}P_Y(y) \ge \beta.\label{BNO}
\end{align}
Then, there exist 
two subsets $\cU\subset \cT_n(P_X),\cV\subset \cT_n(P_Y)$
such that
$|\cU|=|\cV|=M_r:=e^{nr}$ and
    \begin{align}
&\frac{1}{M_r^2}\sum_{m_1=1}^{M_r}\sum_{m_2=1}^{M_r}
e(m_1,m_2,\cC[\kappa, \cU,\cV],s^n) \nn\\
\le & 4 \gamma(n,|\cX|,|\cY|,|\cS_{f}|,|\cZ|) \exp(n 
\max(-\xi/2,8r+2\eps'-\zeta,4r+2\eps'-\zeta_1,4r+2\eps'-\zeta_2))
    \end{align}
    for $s^n \in \cS_{f}^n$.
\end{proposition}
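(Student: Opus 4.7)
The plan is to prove this via random coding with type-class restriction and expurgation, following the classical method-of-types framework. First, I would independently draw $N := 2 M_r$ candidate codewords $U_1, \ldots, U_N$ uniformly from the type class $\cT_n(P_X)$ and $V_1, \ldots, V_N$ uniformly from $\cT_n(P_Y)$, where the lower bounds in \eqref{BNO} guarantee that these type classes are non-empty and large enough. For a candidate pair $(U_i,V_j)$ the decoder is defined by the set $D_{U_i,V_j|\kappa}$. The key observation is that Proposition \ref{OO2}, which is the non-symmetrizability step, ensures that for \emph{any} two subsets $\cU' \subseteq \{U_1,\ldots,U_N\}$, $\cV' \subseteq \{V_1,\ldots,V_N\}$ the collection $\{D_{u,v|\kappa}\}_{(u,v)\in \cU'\times \cV'}$ is disjoint, so it defines a valid deterministic decoder. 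The state-independence of the decoder is also essential: $D_{u,v|\kappa}$ is specified purely by joint types and information-theoretic inequalities, so the same decoder works uniformly against every $s^n \in \cS_f^n$.

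Next, I would fix an arbitrary $s^n \in \cS_f^n$ and analyze the expected error probability over the random codebook. The error at message $(m_1,m_2)$ decomposes according to which of conditions C1--C3 is violated by the erroneous pair $(m_1',m_2')$. By the method of types, the number of relevant joint types of $(u,u',v,v',s,s',z^n)$ is polynomial and is absorbed into the prefactor $\gamma(n,|\cX|,|\cY|,|\cS_f|,|\cZ|)$. Within each type, the probability that an independently drawn fake codeword $U_{m_1'}$ (resp.\ $V_{m_2'}$, resp.\ the pair) falls into the slot allowed by the constraints in C1--C3 is controlled by $\exp(-n I(\cdot;\cdot|\cdot))$ with the corresponding mutual information. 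The constraints C1--C3 set these mutual informations above $\zeta$, $\zeta_1$, $\zeta_2$ respectively, while the number of wrong messages of each type contributes factors $e^{2nr}$, $e^{nr}$, $e^{nr}$. The term $\xi/2$ accounts for the event that the received $z^n$ has an atypical joint type with $(u,v,s^n)$, i.e.\ that C0 fails. Combining these gives, for each fixed $s^n$, the expected average error probability bounded by
\begin{equation*}
\gamma(n,|\cX|,|\cY|,|\cS_f|,|\cZ|)\exp\bigl(n\max(-\xi/2,\,8r+2\eps'-\zeta,\,4r+2\eps'-\zeta_1,\,4r+2\eps'-\zeta_2)\bigr),
\end{equation*}
which is exponentially small by \eqref{BP1}.

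Finally, I would apply an expurgation argument uniform in $s^n$. Since $|\cS_f|$ is finite, there are at most $(n+1)^{|\cS_f|}$ empirical types of $s^n$, and the decoder and codebook are state-blind, so it suffices to control the expected error on one representative $s^n$ per type. A union bound over these polynomially many types, followed by Markov's inequality on the random codebook, yields a realization $(U_1,\ldots,U_N;V_1,\ldots,V_N)$ whose average error against \emph{every} $s^n \in \cS_f^n$ is at most four times the bound above. A standard two-step expurgation then extracts subsets $\cU \subseteq \{U_1,\ldots,U_N\}$ and $\cV \subseteq \{V_1,\ldots,V_N\}$ of size exactly $M_r$ each, each codeword having small conditional error. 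The main obstacle in this plan is the interlocking of the four conditions C0--C3 in the definition of $D_{u,v|\kappa}$: one must show that the decoder defined this way simultaneously achieves disjointness (via Proposition \ref{OO2}, hence via non-symmetrizability) and low expected error (via random coding and method of types), without any single condition being too restrictive to leave room for the claimed rate $r$; this is precisely why the quantitative inequalities \eqref{NMT7} and \eqref{BP1} are needed, and verifying that there exist $(\xi,\zeta,\zeta_1,\zeta_2,\eps',\delta,r)$ satisfying both simultaneously is the technical heart of the proof.
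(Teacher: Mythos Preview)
The paper does not supply its own proof of this proposition: it is stated purely as a citation of \cite[Lemma~2 \& Theorem~1]{Ahlswede1999}, and the surrounding text only rewrites the conclusion in the present notation before combining it with the $\varepsilon$-net property \eqref{epsilon_net_state_density_eq} to obtain Lemma~\ref{BFT}. So there is no in-paper argument to compare against.

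Your proposal is a faithful high-level sketch of the Ahlswede--Cai argument. One structural point worth noting: in \cite{Ahlswede1999} the result is proved in two distinct pieces. Their Lemma~2 is a \emph{codebook lemma} in the Csisz\'ar--Narayan style: it shows (by random selection from type classes plus large-deviation bounds) the existence of $\cU,\cV$ satisfying a list of combinatorial packing properties --- for every joint type, the number of codeword pairs/triples falling into various conditional type sets is controlled by $\exp(n[\cdot])$ expressions with the parameters $r,\varepsilon',\delta$. Their Theorem~1 then takes \emph{any} codebook with those packing properties, combines it with Lemma~1 (your Proposition~\ref{OO2}) for disjointness, and bounds the error deterministically for every $s^n$. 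Your outline merges these two steps into a single ``random coding $+$ expurgation'' pass; this is conceptually equivalent, but it obscures where the specific exponents $8r$ and $4r$ (rather than the naive $2r$ and $r$ from counting competing messages) arise --- they come from the intermediate packing bounds of Lemma~2, not directly from the union over wrong messages. Apart from this bookkeeping point, your plan is correct and matches the cited source.
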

 
When we choose $\eps $ such that 
\begin{align}
\log(1+\eps) < -\max(-\xi/2,8r+2\eps'-\zeta,4r+2\eps'-\zeta_1,4r+2\eps'-\zeta_2),\label{BP2}
\end{align}
using the property \eqref{epsilon_net_state_density_eq}, we can rewrite 
the above proposition as follows.

\begin{lemma}\label{BFT}
Assume that an AVMAC
$\{W_{x,y,s}\}_{x\in \cX,y\in \cY,s \in \cS}$ with $|\cX|,|\cY|,|\cZ|<\infty$
is non-symmetrizable and satisfie Assumption \ref{NM9}.
Given $\alpha,\beta>0$,
we choose positive numbers $\kappa=(\xi,\zeta,\zeta_1,\zeta_2)$ 
according to Proposition \ref{OO2}.
We choose 
$\varepsilon',\eps,\delta,r>0$, 
a subset $\cS_\eps\subset \cS$,
and two distributions 
$P_X \in \cP(n,\cX)$ and $P_Y \in \cP(n,\cY)$ 
as \eqref{BP1} and \eqref{BP2}, \eqref{epsilon_net_state_density_eq}, and
\eqref{BNO}, respectively.
Then, there exist 
two subsets $\cU\subset \cT_n(P_X),\cV\subset \cT_n(P_Y)$
such that
$|\cU|=|\cV|=m_r:=e^{nr}$ and
    \begin{align}
&\frac{1}{M_r^2}\sum_{m_1=1}^{M_r}\sum_{m_2=1}^{M_r}
e(m_1,m_2,\cC[\kappa, \cU,\cV],s^n) \nn\\
\le & 4 \gamma(n,|\cX|,|\cY|,|\cS_\eps|,|\cZ|)(1+\eps)^n
 \exp(n 
\max(-\xi/2,8r+2\eps'-\zeta,4r+2\eps'-\zeta_1,4r+2\eps'-\zeta_2))
\label{BF8}
    \end{align}
    for $s^n \in \cS^n$.
\end{lemma}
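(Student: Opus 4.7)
The plan is to reduce Lemma \ref{BFT} to the preceding classical proposition (the combination of Lemma~2 and Theorem~1 of \cite{Ahlswede1999}) by discretizing the parameter set $\cS$ via the finite $\eps$-net $\cS_\eps$ supplied by Assumption \ref{NM9}, and then lifting the error bound from $s^n \in \cS_\eps^n$ to arbitrary $s^n \in \cS^n$ using the pointwise domination \eqref{epsilon_net_state_density_eq}. This mirrors the role played by \eqref{continuous_eps_covering_density_operators_eq} in the achievability proof for Theorem \ref{lemma_rand_cap_cqavc} and is the natural classical analogue of the generalization performed in Subsection \ref{S4-B}.

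First, I apply the preceding classical proposition with parameter set $\cS_\eps$ playing the role of the finite set $\cS_f$. The parameters $\kappa,\varepsilon',\delta,r$ are chosen relative to the symmetrizability gaps $\theta_1,\theta_2,\theta$ of the \emph{full} AVMAC $\{W_{x,y,s}\}_{s \in \cS}$; since any witness distribution supported on $\cS_\eps$ is also a witness on $\cS$, the corresponding gaps of the restricted AVMAC $\{W_{x,y,s}\}_{s \in \cS_\eps}$ are at least $\theta_1,\theta_2,\theta$, so the hypothesis \eqref{NMT7} of Proposition \ref{OO2} is inherited. Hence the classical proposition furnishes subsets $\cU \subset \cT_n(P_X)$, $\cV \subset \cT_n(P_Y)$ with $|\cU|=|\cV|=M_r$ and a code $\cC[\kappa,\cU,\cV]$ satisfying the stated average-error bound uniformly over $\bar{s}^n \in \cS_\eps^n$.

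Second, I extend this bound to arbitrary $s^n \in \cS^n$. For each coordinate $s_i \in \cS$, Assumption \ref{NM9}, specialized to the classical case in which $W_{x,y,s}$ is viewed as a diagonal density operator on $\cZ$, produces an element $\bar{s}_i \in \cS_\eps$ with $W_{x,y,s_i}(z) \le (1+\eps)\,W_{x,y,\bar{s}_i}(z)$ for every $x,y,z$. Tensorizing over the $n$ coordinates yields $\bar{s}^n \in \cS_\eps^n$ with
\begin{equation*}
W_{x^n,y^n,s^n}(z^n) \le (1+\eps)^n\, W_{x^n,y^n,\bar{s}^n}(z^n)
\end{equation*}
for all $(x^n,y^n,z^n)$. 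Since $e(m_1,m_2,\cC,s^n)$ is a nonnegative linear functional of the output distribution $W_{\cdot,\cdot,s^n}$, the same factor $(1+\eps)^n$ propagates to the average error, and combining with the bound from the first step gives \eqref{BF8}.

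The only point that requires care is to ensure that the inflation factor $(1+\eps)^n$ does not destroy the exponential decay of the error: this is precisely what condition \eqref{BP2} guarantees, since $\log(1+\eps)$ is chosen strictly smaller than the absolute value of the exponent delivered by the classical proposition, so that the product exponent remains strictly negative. All other changes relative to the classical statement are cosmetic: the polynomial prefactor reads $\gamma(n,|\cX|,|\cY|,|\cS_\eps|,|\cZ|)$ in place of $\gamma(n,|\cX|,|\cY|,|\cS_f|,|\cZ|)$, and the universal quantifier on $s^n$ now ranges over $\cS^n$ rather than $\cS_f^n$. Beyond these two reductions the proof is pure bookkeeping, and I do not anticipate any deeper obstruction.
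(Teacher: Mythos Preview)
Your proposal is correct and follows essentially the same route as the paper: apply the preceding proposition with $\cS_f=\cS_\eps$, then use the $\eps$-net domination \eqref{epsilon_net_state_density_eq} tensorized over $n$ coordinates to push the bound from $\cS_\eps^n$ to $\cS^n$ at the cost of the factor $(1+\eps)^n$, with \eqref{BP2} ensuring the overall exponent stays negative. Your remark that the symmetrizability gaps for the restricted family $\{W_{x,y,s}\}_{s\in\cS_\eps}$ are at least those of the full family (since any witness on $\cS_\eps$ is a witness on $\cS$) is a useful detail the paper leaves implicit.
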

Since \eqref{BF8} goes to zero exponentially,
the above lemma is simplified to the following lemma.

\begin{lemma}\label{NMT}
Assume that an AVMAC
$\{W_{x,y,s}\}_{x\in \cX,y\in \cY,s \in \cS}$
with $|\cX|,|\cY|,|\cZ|<\infty$ satisfies 
Assumption \ref{NM9} and
is non-symmetrizable.
Then, the relation 
$\mathbf{Int}\{\cR_{d}\} \neq \varnothing$ holds for
the AVMAC $\{W_{x,y,s}\}_{x\in \cX,y\in \cY,s \in \cS}$.
\end{lemma}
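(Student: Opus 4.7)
The plan is to deduce Lemma \ref{NMT} as essentially a direct corollary of Lemma \ref{BFT}, by exhibiting a positive rate pair that is deterministically achievable with vanishing error. The downward-monotone nature of the achievable region then forces $\mathbf{Int}\{\cR_{d}\}$ to contain an open box in the positive quadrant. Concretely, I would first fix input types $P_X \in \cP(n,\cX), P_Y \in \cP(n,\cY)$ satisfying \eqref{BNO} with a fixed $\alpha := \frac{1}{2|\cX|}, \beta := \frac{1}{2|\cY|}$; for $n$ large these types exist. By the non-symmetrizability hypothesis together with Definition \ref{2defsym}, all three quantities $\theta_1, \theta_2, \theta$ are strictly positive, so Proposition \ref{OO2} supplies an admissible tuple $\kappa = (\xi, \zeta, \zeta_1, \zeta_2)$ with every entry strictly positive; in particular $\zeta^{\ast} := \min(\zeta, \zeta_1, \zeta_2) > 0$.

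The next task is to select $\varepsilon', \delta, r, \eps > 0$ meeting both \eqref{BP1} and \eqref{BP2} simultaneously. The constraints in \eqref{BP1} are linear inequalities among $\xi, \zeta^{\ast}$ and can be met by taking, e.g., $r := \zeta^{\ast}/12$, $\delta := \zeta^{\ast}/13$, and $\varepsilon'$ sufficiently small that $\varepsilon' < \min(\xi/2, \delta)$. With $r = \zeta^{\ast}/12$ we obtain $8r - \zeta \le -\zeta/3 < 0$, $4r - \zeta_1 < 0$ and $4r - \zeta_2 < 0$, and since $-\xi/2 < 0$ by construction, the quantity $\max(-\xi/2, 8r+2\varepsilon' - \zeta, 4r+2\varepsilon' - \zeta_1, 4r+2\varepsilon' - \zeta_2)$ is strictly negative and bounded away from zero once $\varepsilon'$ is small. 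Therefore we can pick $\eps > 0$ small enough so that $\log(1+\eps)$ is strictly less than its absolute value, which is exactly \eqref{BP2}. Finally, Assumption \ref{NM9} supplies the finite set $\cS_\eps \subset \cS$ needed in the hypothesis of Lemma \ref{BFT} (although the lemma is stated for classical AVMAC, the same assumption is invoked to handle continuous $\cS$ via \eqref{epsilon_net_state_density_eq}).

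With these choices in hand, Lemma \ref{BFT} furnishes subsets $\cU \subset \cT_n(P_X), \cV \subset \cT_n(P_Y)$ of sizes $|\cU| = |\cV| = e^{nr}$ whose induced code $\cC[\kappa, \cU, \cV]$ satisfies \eqref{BF8}. Because $\gamma(n,|\cX|,|\cY|,|\cS_\eps|,|\cZ|)$ is polynomial in $n$, while the product $(1+\eps)^n \exp(n \cdot \max(\cdots))$ decays exponentially under our parameter choice, the right-hand side of \eqref{BF8} tends to zero uniformly over $s^n \in \cS^n$. Hence the rate pair $(R_1, R_2) := (r/\ln 2, r/\ln 2)$ (converting nats to bits) is achievable in the sense of Definition \ref{def2_avmac}. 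Since any $(R_1', R_2')$ with $0 < R_1' < R_1$, $0 < R_2' < R_2$ is trivially also achievable (by sub-indexing the codebooks), the open box $(0, R_1) \times (0, R_2)$ lies inside $\cR_d$, which proves $\mathbf{Int}\{\cR_d\} \neq \varnothing$.

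The main obstacle is verifying the simultaneous consistency of \eqref{BP1} and \eqref{BP2} with $r > 0$; this rests entirely on the implication non-symmetrizability $\Rightarrow \theta_1, \theta_2, \theta > 0 \Rightarrow \zeta^{\ast} > 0$, which is delivered by Proposition \ref{OO2}. Once the parameter feasibility is secured, the remaining step is purely arithmetic, and no further information-theoretic construction beyond Lemma \ref{BFT} is required.
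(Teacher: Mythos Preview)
Your proposal is correct and follows essentially the same approach as the paper: Lemma~\ref{NMT} is stated there as an immediate simplification of Lemma~\ref{BFT}, with the single observation that the right-hand side of \eqref{BF8} decays to zero exponentially. You have simply made explicit the parameter choices (feasibility of \eqref{BP1} and \eqref{BP2} with $r>0$, which indeed hinges on $\theta_1,\theta_2,\theta>0$ via non-symmetrizability) and the standard monotonicity step, all of which the paper leaves implicit.
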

 
\if0
As a special case of the necessity, the following proposition is known.
 \begin{proposition}[\protect{\cite[Theorem 1]{Ahlswede1999},
 }]
 \label{p-classical}
Assume that all density matrices 
$\left\{\rho_{x,y,s}^{\cB} : x \in \cX, y \in \cY, s \in \cS\right\}  \subset \cD(\cB)$
are diagonal matrices.
When CQ-AVMAC is non-symmetrizable,
the relation 
$\mathbf{Int}\{\cR_{d}\} \neq \varnothing$ holds.
\end{proposition}
\fi

\subsubsection{CQ-channel} 
Now, we start with the proof of necessity part of Lemma \ref{symm_avmac_cq}.
Similar to \cite[Lemma 2]{Ahlswede07},
we reduce the problem of Lemma \ref{symm_avmac_cq}
to the classical case.
We choose a POVM 
$\left\{\cO_z^{\cB}\right\}_{z=1}^{\abs{\cB}^2}$
such that the operators $\cO_1^{\cB}, \ldots, \cO_{\abs{\cB}^2}^{\cB}$
are linearly independent, i.e., they linearly span 
the set of Hermitian matrices on $\cH_{\cB}$.
Then, we define the channel 
$P_{x,y,s}(z):= \Tr \cO_z^{\cB}\rho_{x,y,s}^{\cB}$.
Since the linear map mapping a Hermitian matrix $X$ to 
the vector $ (\Tr \cO_z^{\cB} X)_{z=1}^{\abs{\cB}^2}$ is invertible,
the non-symmetrizable property of 
the CQ-AVMAC $\left\{\rho_{x,y,s}^{\cB}\right\}$
is equivalent to 
the non-symmetrizable property of C-AVMAC $\{P_{x,y,s}(z)\}$.
Therefore, 
$P_{x,y,s}(z)$ is non-symmetrizable.
Also, since 
the CQ-AVMAC $\left\{\rho_{x,y,s}^{\cB}\right\}$ satisfies 
Assumption \ref{NM9},
the C-AVMAC $\{P_{x,y,s}(z)\}$ also 
satisfies Assumption \ref{NM9}.
Applying Lemma \ref{NMT} to 
the C-AVMAC $\{P_{x,y,s}(z)\}$,
we can achieve the positive rate pair.
Since the capacity region of 
the CQ-AVMAC $\left\{\rho_{x,y,s}^{\cB}\right\}$
contains 
the capacity region of 
the C-AVMAC $\{P_{x,y,s}(z)\}$,
this proves Lemma \ref{symm_avmac_cq}.

\if0
\begin{figure}
    \centering
    \includegraphics[keepaspectratio=true,width=140mm]{Capacity_region.png}
    \caption{The above figure depicts that the deterministic code capacity region $\cR^{c}_{d}$ (outlined in \textcolor[HTML]{22B14C}{green}) for a classical AVMAC is a subset of deterministic code capacity region $\cR_{d}$ (outlined in \textcolor[HTML]{A349A4}{purple}) for a CQ-AVMAC
    under non-symmetrizability conditions.}
    \label{fig:capacity_region}
\end{figure}
\fi

\if0     
We will first show there exists some non-symmetrizable classical AVMAC (see \cite[Definitions 3.1-3.3]{gubner1990deterministic} and \cite{Ahlswede1999} for the definition of non-symmetrizability conditions of a classical AVMAC) for any non-symmetrizable CQ-AVMAC and argue that $\cR_{d}$ contains the deterministic code capacity region $\cR_{d}^{c}$ (See \cite[Definition $2.4$ and Theorem $2.5$]{Ahlswede1999}
    for the characterization of $\cR_{d}^{c}$) of the AVMAC (classical). Now since the AVMAC is non-symmetrizable, from \cite{gubner1990deterministic,Ahlswede1999} it holds that $\mathbf{Int}\{\cR_{d}^{c}\} \neq \varnothing$ making $\mathbf{Int}\{\cR_{d}\} \neq \varnothing$. We will now start the proof by showing the existence of a non-symmetrizable AVMAC for any non-symmetrizable CQ-AVMAC.
    
    If a CQ-AVMAC is not $\cX$-symmetrizable, then for each set of probability distributions $\{\{U_1(\cdot|x)\}_{x \in \cX}$ over $\cS$, there exists $x,x' \in \cX$ for which the following holds:
    \begin{equation}
    \sum_{s \in \cS}U_1(s|x')\rho_{x,y,s}^{\cB} \neq \sum_{s \in \cS}U_1(s|x)\rho_{x',y,s}^{\cB}.\label{x_nsym}
    \end{equation}
    If a CQ-AVMAC is not $\cY$-symmetrizable, then for each set of probability distributions $\{U_2(\cdot|y)\}_{y \in \cY}$ over $\cS$, there exists $y,y' \in \cY$ for which the following holds:
    \begin{equation}
    \sum_{s \in \cS}U_2(s|y')\rho_{x,y,s}^{\cB} \neq \sum_{s \in \cS}U_2(s|y)\rho_{x,y',s}^{\cB}.\label{y_nsym}
\end{equation}
Similarly, if a CQ-AVMAC is not $\cX\cY$-symmetrizable, then for each set of probability distributions $\{U_3(\cdot|x, y)\}_{x \in \cX, y \in \cY}$ over $\cS$, there exists $x,x' \in \cX$ and $y,y' \in \cY$ for which the following holds:
    \begin{equation}
    \sum_{s \in \cS}U_3(s|x',y')\rho_{x,y,s}^{\cB} \neq \sum_{s \in \cS}U_3(s|x,y)\rho_{x',y',s}^{\cB}.\label{xy_nsym}
\end{equation}

Now we choose a set of ${\abs{\cB}}^2$ linearly independent operators $\left\{\cO_z^{\cB}\right\}_{z=1}^{\abs{\cB}^2}$ over $\cB$ from the cone of non-negative operators in $\cB$ such that,
\begin{equation*}
    \sum_{z=1}^{\abs{\cB}^2} \cO_z^{\cB} \leq \bbI^{\cB}.
\end{equation*}
Then, $\left\{\cO_z^{\cB}\right\}_{z=1}^{\abs{\cB}^2} \cup \left\{{\cO_{\abs{\cB}^2 +1} :=\bbI^{\cB} - \sum_{z=1}^{\abs{\cB}^2} \cO_z^{\cB}}\right\}$ is a resolution of Identity in $\cB$. Thus, if the condition given in \eqref{x_nsym} is true then for each set of probability distributions in $\{U_1(\cdot|x)\}_{x \in \cX}$ over $\cS$, there exists $x,x' \in \cX, a \in [\abs{\cB}^2 +1]$ such that,
\begin{equation}
    \sum_{s \in \cS}U_1(s|x')\tr\left[\cO_{a}^{\cB}\rho_{x,y,s}^{\cB}\right] \neq \sum_{s \in \cS}U_1(s|x)\tr\left[\cO_{a}^{\cB}\rho_{x',y,s}^{\cB}\right].\label{x_c_nsym} 
\end{equation}
Similarly, if the condition given in \eqref{y_nsym} is true then for each set of probability distributions in $\{U_2(\cdot|y)\}_{y \in \cY}$ over $\cS$, there exists $y,y' \in \cY, b \in [\abs{\cB}^2 +1]$ such that,
\begin{equation}
    \sum_{s \in \cS}U_2(s|y')\tr\left[\cO_{b}^{\cB}\rho_{x,y,s}^{\cB}\right] \neq \sum_{s \in \cS}U_2(s|x)\tr\left[\cO_{b}^{\cB}\rho_{x,y',s}^{\cB}\right].\label{y_c_nsym}  
\end{equation}
Further, if the condition given in \eqref{y_nsym} is true then for each set of probability distributions in $\{U_3(\cdot|x,y)\}_{x \in \cX \\ y \in \cY}$ over $\cS$, there exists $x,x' \in \cX, y,y' \in \cY, c \in [\abs{\cB}^2 +1]$ such that,
\begin{equation}
    \sum_{s \in \cS}U_3(s|x',y')\tr\left[\cO_{c}^{\cB}\rho_{x,y,s}^{\cB}\right] \neq \sum_{s \in \cS}U_3(s|x,y)\tr\left[\cO_{c}^{\cB}\rho_{x',y',s}^{\cB}\right], \label{xy_c_nsym} 
\end{equation}
\begin{figure}
    \centering
    \includegraphics[keepaspectratio=true,width=140mm]{Capacity_region.png}
    \caption{The above figure depicts that the deterministic code capacity region $\cR^{c}_{d}$ (outlined in \textcolor[HTML]{22B14C}{green}) for a classical AVMAC is a subset of deterministic code capacity region $\cR_{d}$ (outlined in \textcolor[HTML]{A349A4}{purple}) for a CQ-AVMAC }
    \label{fig:capacity_region}
\end{figure}

Assuming \cref{x_c_nsym,y_c_nsym,xy_c_nsym} to be true, we consider an AVMAC with input alphabet sets $\cX,\cY$, output alphabet $\cZ :=[\abs{\cB}^2 + 1]$ and the state $\cS$ (same as $\cS$ mentioned in Definition \ref{CQ_AVMAC}) and transition probabilities $\{\forall x \in \cX, y \in \cY, s \in \cS, z  \in \cZ : W_{Z|XYS}(z|x,y,s) := \tr\left[\cO_{z}^{\cB}\rho_{x,y,s}^{\cB}\right]\}$.
From \cite{gubner1990deterministic,Ahlswede1999}, we see that the AVMAC is also non-symmetrizable and it also follows that $\cW_{c.avmac}$ has a non-empty capacity region i.e. $\mathbf{Int}\{\cR^{c}_{d}\} \neq \varnothing$.

From the properties of $\left\{\cO_{z}\right\}_{z  \in \cZ}$ and the data processing inequality \cite[Thereom 4.3.3]{gallager-68-book} we have,
\begin{align*}
     I[X;\cB|Y]_{P_{X},P_{Y},Q_1} &\geq  I[X;Z|Y]_{P_{X},P_{Y},Q_1},\\
     I[Y;\cB|X]_{P_{X},P_{Y},Q_2} &\geq  I[Y;Z|X]_{P_{X},P_{Y},Q_2},\\
     I[XY;\cB]_{P_{X},P_{Y},Q_3} &\geq  I[XY;Z]_{P_{X},P_{Y},Q_3},
\end{align*}
where 
\begin{align}
    I[X;Z|Y]_{P_{X},P_{Y},Q_1} &:= D\left(\bbE_{S\sim Q_1}[P_{XYZ|S}] \Big{\|} \bbE_{S\sim Q_1}[P_{YZ|S}] \times P_{X} \right),\label{cipq1}\\
    I[Y;Z|X]_{P_{X},P_{Y},Q_2} &:= D\left(\bbE_{S\sim Q_2}[P_{XYZ|S}] \Big{\|} \bbE_{S\sim Q_2}[P_{XZ|S}] \times P_{Y} \right),\label{cipq2}\\
    I[XY;Z]_{P_{X},P_{Y},Q_3} &:= D\left(\bbE_{S\sim Q_3}[P_{XYZ|S}] \Big{\|} \bbE_{S\sim Q_3}[P_{Z|S}] \times P_{X}\times P_{Y} \right),\label{cipq3}
\end{align}
where $P_{XYZ|S} := W_{Z|XYS}\times P_{X} \times P_{Y}$,  and the other distributions mentioned in \cref{cipq1,cipq2,cipq3} are the respective marginals of $P_{XYZ|S}$.

Thus, it directly holds that $\cR_{d} \supseteq \cR^{c}_{d}$ and hence, $\mathbf{Int}\{\cR_{d}\} \neq \varnothing$ 
(For more details see Fig \ref{fig:capacity_region}). This proves Lemma \ref{symm_avmac_cq}.
\fi
\endproof

\subsection{Proof of Lemma \ref{lemma_derand}}\label{S7-C}
From Lemma \ref{symm_avmac_cq}, it follows that $\mathbf{Int}\{\cR_{d}\} \neq \varnothing$ is equivalent to CQ-AVMAC being non-symmetrizable (none of the conditions mentioned in \cref{x_sym,y_sym,xy_sym} are satisfied). We will use the achievability result when the code is randomized (as discussed in Theorem \ref{lemma_rand_capacity_avmac}) and we will require to show that for any random code achieving a rate-pair $(R_1,R_2)\in \mathbf{Int}\{\cR_{d}\}$, 
there always exists a fixed code $\cC$ which achieves the rate-pair $(R_1,R_2)$ as well i.e., 
$\cR_{d} \supseteq \mathbf{Int}\{\cR_{d}\}$, which implies
the relation $\cR_{d} \supseteq \cR_{r}$. 
We will accomplish this using the technique of elimination of randomness which is a standard tool used in computer science and has also been used in \cite{Ahlswede1978,Ahlswede1980,Jahn1981,Ericcon1985,gubner1990deterministic,Ahlswede1999,Ahlswede07}. 

\subsubsection{Structure of our proof}
In the following proof, we fix a rate-pair $(R_1,R_2)\in \mathbf{Int}\{\cR_{d}\}$, and will show the existence of a fixed code $\cC$ which achieves the rate-pair $(R_1,R_2)$.
For this aim, we concatenate two codes by setting the real number $t$ as $t > 1/2$.
The first one has $n^{2t}$ codewords and length $v(n):\omega(\log n) = o(n)$ (where $v(n) \overset{n \to \infty}{\longrightarrow} \infty$, $o(\cdot)$ and $\omega(\cdot)$ are defined in Definition \ref{smallo} and \ref{smallomega} respectively).
The second one has $2^{n(R_1+R_2)}$ codewords and 
length $n$. 
Then, concatenating them, we
construct a code $\hat{C}_{\text{concat}}$ with $n^{2t}\cdot 2^{n(R_1+R_2)}$ codewords of length $(v(n) + n)$, which will be shown to achieve the rate-pair $(R_1,R_2)$.

\subsubsection{First code construction}
Since $\mathbf{Int}\left\{\cR_{d}\right\}\neq \varnothing$, 
for an arbitrary small real number $\mu >0$, 
there exists a deterministic code with the form;
\begin{align}
\mathcal{C}':\left\{\left((\bar{x}^{v(n)}(\hat{m}_1),\bar{y}^{v(n)}(\hat{m}_2)), \hat{D}_{\hat{m}_1,\hat{m}_2}\right) : (\hat{m}_1,\hat{m}_2) \in [n^t] \times [n^t]\right\}
\end{align} such that 
the probability of decoding error satisfies 
\begin{equation}
    \max_{s^{v(n)} \in \cS^{v(n)}} \bar{e}(\mathcal{C}',s^{v(n)}) {\leq} \mu.\label{errvfiix} 
\end{equation}
The existence of the above code can be shown as follows.
We choose an element $(R_1',R_2') \in \mathbf{Int}\left\{\cR_{d}\right\}$. 
Since $\lim_{n \to \infty} \frac{\log n}{v(n)} = 0$, 
the rates of the code $\mathcal{C}'$ for the first and second messages
are smaller than $R_1'$ and $R_2'$.
Hence, the above code exists.

\subsubsection{Second code construction}
From the achievability of Theorem \ref{lemma_rand_capacity_avmac}, 
there exists a random code  $\cC^{\gamma_{1},\gamma_{2}}$ 
characterized by two random variables $\gamma_{1},\gamma_{1}$
taking values in finite non-empty support sets 
$\Gamma_{1},\Gamma_{2}$ that satisfies the following conditions.
\begin{description}
\item[(1)] 
The two random variables $\gamma_{1},\gamma_{1}$ are subject to two probability distributions 
$G_1 \in \cP(\Gamma_{1}),G_2 \in \cP(\Gamma_{2})$
over the support sets $\Gamma_{1},\Gamma_{2}$, respectively. 
\item[(2)] 
The random code $\cC^{\gamma_{1},\gamma_{2}}$ has the following form;
    \begin{equation}
        \cC^{\gamma_{1},\gamma_{2}}= \left\{\left((x^{n}_{\gamma_{1}}(m_1),y^{n}_{\gamma_{2}}(m_2)), D_{m_1,m_2}^{\gamma_{1},\gamma_{2}}\right):(m_1,m_2) \in [2^{nR_1}] \times [2^{nR_2}]\right\}.\label{NNM2}
    \end{equation}  
\item[(3)] 
The code $\cC^{\gamma_{1},\gamma_{2}}$ satisfies 
the condition
    \begin{equation}
    \max_{s^n \in \cS^n}\bbE_{\substack{\gamma_{1} \sim G_{1},\\ \gamma_{2}\sim G_{2}}}\left[e(\cC^{\gamma_{1},\gamma_{2}},s^n)\right] < 
    \frac{e^{\mu} - 1}{e^{2}}
.\label{NNM}
    \end{equation}  
\end{description}
When we choose the two random variables $\gamma_{1}$ and $\gamma_{2}$
to be
$\{X^n(m_1) \in \cX^n : m_1 \in [2^{nR_1}]\}$
and $\{Y^n(m_2) \in \cY^n : m_2 \in [2^{nR_2}]\}$,
our proof of Theorem \ref{lemma_rand_capacity_avmac}
guarantees the existence of the above code.
The choice of the two random variables $\gamma_{1}$ and $\gamma_{2}$ depends on $n$, but this dependency causes no problem in the latter discussion.

We now consider $n^{2t}$ i.i.d random variable pairs $\left\{Z_{i,j}\right\}_{\substack{i \in [n^t], j \in [n^t]}}$, 
which take values in $\Gamma_{1}\times\Gamma_{2}$ such that 
for any $(i,j) \in [n^t]\times[n^t],$ 
$\Pr\{Z_{i,j}=(\hat{\gamma}_1,\hat{\gamma}_2)\} = G_{1}(\hat{\gamma}_1) \cdot G_{2}(\hat{\gamma}_2).$ 
For any particular $s^n \in \cS^n$, 
since the random variables $\{e(\cC^{Z_{i,j}},s^n)\}$ 
are independent and identically distributed,
due to \eqref{NNM}, any $(i,j) \in [n^t]\times[n^t]$ satisfies
    \begin{equation}
        \bbE_{\substack{Z_{i,j}}}\left[e(\cC^{Z_{i,j}},s^n)\right] <     \frac{e^{\mu} - 1}{e^{2}}. \label{XMNO}
    \end{equation}
Since $e(\cC^{Z_{i,j}},s^n) \in (0,1)$, we have:
\begin{align*}
    \bbE\left[e^{2 e(\cC^{Z_{i,j}},s^n)
    }\right] 
    &= 1 + \sum_{k=1}^\infty\frac{2^k}{k!}\bbE_{\substack{Z_{i,j}}}\left[e(\cC^{Z_{i,j}},s^n)^k \right] \\
    &\leq 1 + \sum_{k=1}^\infty\frac{2^k}{k!}\bbE_{\substack{Z_{i,j}}}\left[e(\cC^{Z_{i,j}},s^n) \right] \\
    &\overset{a}{\leq} 1 + \frac{e^{\mu} - 1}{e^{2}}\sum_{k=0}^\infty\frac{2^k}{k!}= 1 + \frac{e^{\mu} - 1}{e^{2}} e^{2}=e^\mu,
\end{align*}
where $a$ follows from \eqref{XMNO}.
Therefore, we have
\begin{align}
    \text{err}(n,s^n) &:= \Pr_{G_1\times G_2}
    \left\{\sum_{\substack{i \in [n^t] \\ j \in [n^t]}}e(\cC^{Z_{i,j}},s^n) > n^{2t} \mu\right\}\label{errorlemma9}\\
    &\overset{b}{\leq} e^{-2 \mu n^{2t}}\cdot
    \prod_{\substack{i \in [n^t] \\ j \in [n^t]}}\bbE\left[e^{2 \cdot
    e(\cC^{Z_{i,j}},s^n)}\right] \le e^{-2\mu n^{2t}}e^{\mu n^{2t}} 
    = e^{-\mu n^{2t}},
\end{align}
where $b$ follows from Fact \ref{chernoff} where $\alpha > 0$ is a constant. 
Thus, we have
\begin{align}
\Pr_{G_1\times G_2}\left\{\sum_{\substack{i \in [n^t] \\ j \in [n^t]}}e(\cC^{Z_{i,j}},s^n) > n^{2t} \mu, ~\exists s^n \in \cS^n
\right\}
\le \sum_{s^n \in \cS^n} \text{err}(n,s^n)
\le \sum_{s^n \in \cS^n} e^{-\mu n^{2t}}
=|\cS|^n e^{-\mu n^{2t}} \to 0.\label{BZO}
\end{align}
Hence, there exists a subset $\left\{(\gamma_{1,i},\gamma_{2,j})
\right\}_{(i,j) \in [n^t]\times[n^t]}\subset \Gamma_{1}\times\Gamma_{2}$ such that 
\begin{equation}
    \frac{1}{n^{2t}}\sum_{(i,j) \in [n^t]\times[n^t]}
    e(\cC^{(\gamma_{1,i},\gamma_{2,j})},s^n)
 \leq \mu, ~\forall s^n \in \cS^n .\label{err_n4}
\end{equation}

\subsubsection{Concatenated code}
Concatenating the above two codes,
we now construct a new code $\hat{C}_{\text{concat}}$ with 
$n^{2t}\cdot 2^{n(R_1+R_2)}$ codewords of length $(v(n) + n)$ given as follows
\begin{equation*}
    \hat{C}_{\text{concat}} := \left\{\left((\bar{x}^{v(n)}(\hat{m}_1), x^{n}_{\gamma_{1,\hat{m}_1}}(m_1), \bar{y}^{v(n)}(\hat{m}_2), y^{n}_{\gamma_{2,\hat{m}_2}}(m_2)), \hat{D}_{\hat{m}_1,\hat{m}_2} \otimes D_{m_1,m_2}^{\gamma_{1,\hat{m}_1},\gamma_{2,\hat{m}_2}}\right) : \substack{(\hat{m}_1,\hat{m}_2) \in [n]\times[n], \\ (m_1,m_2)\in [2^{nR_1}]\times[2^{nR_2}]}\right\}.
\end{equation*}
Here, the symbol $\bar{x}^{v(n)}(\hat{m}_1), x^{n}_{\gamma_{1,\hat{m}_1}}(m_1)$ expresses the 
concatenation of
$\bar{x}^{v(n)}(\hat{m}_1)$ and $x^{n}_{\gamma_{1,\hat{m}_1}}(m_1)$.

For each $s^{v(n)+n} := s_1^{v(n)}\cdot s_2^n  \in \cS^{v(n) + n}$ we have
\begin{align*}
    &\frac{1}{n^{2t}\cdot 2^{n(R_1+R_2)}}
    \sum_{\substack{\hat{m}_1 \in [n^t] \\ \hat{m}_1 \in [n^t] \\ m_1 \in [2^{nR_1}] \\ m_2 \in [2^{nR_1}]}}
    \tr\left[\left(\hat{D}_{\hat{m}_1,\hat{m}_2} \otimes D_{m_1,m_2}^{\gamma_{1,\hat{m}_1},\gamma_{2,\hat{m}_2}}\right)
    \left(\rho_{\bar{x}^{v(n)}(\hat{m}_1),\bar{y}^{v(n)}(\hat{m}_2),s_1^{v(n)}} \otimes \rho_{x^{n}_{\gamma_{1,\hat{m}_1}}(m_1),y^{n}_{\gamma_{2,\hat{m}_2}}(m_2),s_2^n}\right)\right]\\
    &= \frac{1}{n^{2t}\cdot 2^{n(R_1+R_2)}}
    \sum_{\substack{\hat{m}_1 \in [n^t] \\ \hat{m}_1 \in [n^t] \\ m_1 \in [2^{nR_1}] \\ m_2 \in [2^{nR_1}]}}
    \tr\left[\hat{D}_{\hat{m}_1,\hat{m}_2} \cdot\rho_{\bar{x}^{v(n)}(\hat{m}_1),\bar{y}^{v(n)}(\hat{m}_2),s_1^{v(n)}}\right] 
    \tr\left[D_{m_1,m_2}^{\gamma_{1,\hat{m}_1},\gamma_{2,\hat{m}_2}}\cdot\rho_{x^{n}_{\gamma_{1,\hat{m}_1}}(m_1),y^{n}_{\gamma_{2,\hat{m}_2}}(m_2),s_2^n}\right].
\end{align*}

Now, for any $(\hat{m}_1,\hat{m}_2) \in [n^t]\times[n^t]$, if we consider,
\begin{align*}
    \alpha_{\hat{m}_1,\hat{m}_2} &:= \tr\left[\hat{D}_{\hat{m}_1,\hat{m}_2} \cdot\rho_{\bar{x}^{v(n)}(\hat{m}_1),\bar{y}^{v(n)}(\hat{m}_2),s_1^{v(n)}}\right],\\
    \beta_{\hat{m}_1,\hat{m}_2} &:= \frac{1}{2^{n(R_1+R_2)}}\sum_{\substack{m_1 \in [2^{nR_1}] \\ m_2 \in [2^{nR_1}]}}\tr\left[D_{m_1,m_2}^{\gamma_{1,\hat{m}_1},\gamma_{2,\hat{m}_2}}\cdot\rho_{x^{n}_{\gamma_{1,\hat{m}_1}}(m_1),y^{n}_{\gamma_{2,\hat{m}_2}}(m_2),s_2^n}\right],
\end{align*}
then, from Fact \ref{fact_robust}, the following holds,
\begin{align*}
    &\frac{1}{n^{2t}\cdot 2^{n(R_1+R_2)}}\sum_{\substack{\hat{m}_1 \in [n^t] \\ \hat{m}_1 \in [n^t] \\ m_1 \in [2^{nR_1}] \\ m_2 \in [2^{nR_1}]}}\tr\left[\left(\hat{D}_{\hat{m}_1,\hat{m}_2} 
    \otimes D_{m_1,m_2}^{\gamma_{1,\hat{m}_1},\gamma_{2,\hat{m}_2}}\right)
    \left(\rho_{\bar{x}^{v(n)}(\hat{m}_1),\bar{y}^{v(n)}(\hat{m}_2),s_1^{v(n)}} \otimes \rho_{x^{n}_{\gamma_{1,\hat{m}_1}}(m_1),y^{n}_{\gamma_{2,\hat{m}_2}}(m_2),s_2^n}\right)\right]\\
    &= \frac{1}{n^{2t}}\sum_{\substack{\hat{m}_1 \in [n^t] \\ 
    \hat{m}_1 \in [n^t]}}\alpha_{\hat{m}_1,\hat{m}_2}\beta_{\hat{m}_1,\hat{m}_2}
    \overset{c}{\geq} 1 - 2\mu,
\end{align*}
where $c$ follows from the following series of inequalities,
\begin{align*}
    \frac{1}{n^{2t}}\sum_{\substack{\hat{m}_1 \in [n^t] \\ \hat{m}_1 \in [n^t]}}\alpha_{\hat{m}_1,\hat{m}_2} \overset{d}{\geq} 1 - \mu,\quad
    \frac{1}{n^{2t}}\sum_{\substack{\hat{m}_1 \in [n^t] \\ \hat{m}_1 \in [n^t]}}\beta_{\hat{m}_1,\hat{m}_2} \overset{e}{\geq} 1 - \mu,
\end{align*}
where $d$ and $e$ follows from \cref{err_n4,errvfiix} respectively. Thus, for the concatenated code $\hat{C}_{\text{concat}}$ we have,
\begin{equation}
    \max_{s^{v(n)+n} \in \cS^{v(n) + n}}\bar{e}(\hat{C}_{\text{concat}}, s^{v(n)+n}) < 2\mu. \label{ZXU}
\end{equation}
Since $\mu$ is an arbitrarily small number and
\begin{align}
\lim_{n\to \infty} \frac{\log (2^{nR_1}n)}{v(n)+n} = R_1,\quad
\lim_{n\to \infty} \frac{\log (2^{nR_2}n)}{v(n)+n} = R_2,
\end{align}
the rate pair $(R_1,R_2)$ is achievable.
This completes the proof of Lemma \ref{lemma_derand}. 
\endproof

\begin{remark}
In the single sender case,
we can show the lemma corresponding to Lemma \ref{lemma_derand}
by modifying the proof of Lemma \ref{lemma_derand}
so that 
the size of Alice sends in the first code is $n^{2t}$
and the size of Bob sends in the first code is $0$.
\end{remark}

\subsection{Proof of Lemma \ref{lemma_derand2}}\label{S7D}
\subsubsection{Use of Proof of Lemma \ref{lemma_derand}}
We choose the set 
$\cS_{n}:= \cS \cap \left\{\left(\frac{k_1}{n},\ldots, \frac{k_d}{n}\right):
k_j \in \mathbb{Z}\right\}$.
Since $\cS$ is a compact subset,
$|\cS_{n}|=O(n^d)$.
Given an arbitrary small number $\mu>0$,
substituting $\cS_n$ into $\cS$,
we apply the same discussion as the proof of Lemma \ref{lemma_derand} given in Subsection \ref{S7-C}.
In this case, only \eqref{BZO} requires a different discussion.
The upper bound given in \eqref{BZO} is calculated as
\begin{align}
|\cS_{n}|^n e^{-\mu n^{2t}} =O(n^{nd}) e^{-\mu n^{2t}}=O(e^{ nd \log n -\mu n^{2t}})\to 0.
\end{align}
Therefore, due to \eqref{ZXU}, we have
\begin{equation}
 \max_{s^{v(n)+n} \in \cS_n^{v(n) + n}}
\bar{e}(\hat{C}_{\text{concat}}, s^{v(n)+n}) < 2\mu \label{ZXU2}
\end{equation}
for sufficiently large $n$.
\subsubsection{Use of Assumption \ref{ASS2}}
We employ 
Fidelity $F(\rho,\sigma):=\Tr |\sqrt{\rho}\sqrt{\sigma}|$,
Bure's distance 
$d_B(\rho,\sigma):= \sqrt{2(1-F(\rho,\sigma))}$,
and sandwich Renyi divergence of order $1/2$
$D_{1/2}(\rho\|\sigma):= - \log F(\rho,\sigma)^2$.
We have
\begin{align}
\frac{1}{2} \|\rho-\sigma\|_1 \le \sqrt{1-F(\rho,\sigma)^2}
=\sqrt{1-e^{-D_{1/2}(\rho\|\sigma)}}.\label{VBU}
\end{align}
Since $d_B(\rho_{x,y,s}^\cB,\rho_{x,y,\bar{s}}^\cB)^2$ is approximated to be 
Fisher information times $\|s-\bar{s}\|^2$ \cite[Eq.(6.33)]{H2017QIT},
Assumption \ref{ASS2} guarantees 
\begin{align}
\max_{s\in \cS}\min_{\bar{s}\in \cS_n} 
\max_{x\in \cX,y\in \cY}
d_B(\rho_{x,y,s}^\cB,\rho_{x,y,\bar{s}}^\cB)^2=O(\frac{1}{n^2}).
\end{align}
Remember that Assumption \ref{ASS2} contains 
the compactness of $\cS$, which implies the existence of the maximum.
Thus,
\begin{align}
\max_{s\in \cS}\min_{\bar{s}\in \cS_n} 
\max_{x\in \cX,y\in \cY}
D_{1/2}(\rho_{x,y,s}^\cB\|\rho_{x,y,\bar{s}}^\cB)=O(\frac{1}{n^2}).
\end{align}
Hence,
\begin{align}
\max_{s^n\in \cS^n}\min_{{\bar{s}}^n\in \cS_n^n} 
\max_{x^n \in \cX^n,y^n\in \cY^n}
D_{1/2}(\rho_{x^n,y^n,s^n}^{\cB^n}\|\rho_{x^n,y^n,\bar{s}^n}^{\cB^n})
=O(\frac{1}{n}).\label{VBU2}
\end{align}
The combination of \eqref{VBU} and \eqref{VBU2} implies 
\begin{align}
\max_{s^n\in \cS^n}\min_{{\bar{s}}^n\in \cS_n^n} 
\max_{x^n \in \cX^n,y^n\in \cY^n}
\|\rho_{x^n,y^n,s^n}^{\cB^n}-\rho_{x^n,y^n,\bar{s}^n}^{\cB^n}\|_1
\le O(\frac{1}{\sqrt{n}}).\label{VBU3}
\end{align}
The combination of \eqref{ZXU2} and \eqref{VBU3} implies 
\begin{equation}
 \max_{s^{v(n)+n} \in \cS^{v(n) + n}}
\bar{e}(\hat{C}_{\text{concat}}, s^{v(n)+n}) < 2\mu+ O(\frac{1}{\sqrt{n}})\label{ZXU3}
\end{equation}
for sufficiently large $n$.
Since $\mu$ is an arbitrarily small number,
the rate pair $(R_1,R_2)$ is achievable.
This completes the proof of Lemma \ref{lemma_derand2}. 
\endproof

\section{Conclusion} 

We have studied the problem of independence testing by building on the techniques developed in \cite{Hayashi2009,HT}. 
The aim here is to test whether the quantum system is correlated with the classical subsystem or is independent of it. 
We have solved this problem by obtaining a universal test that does not depend on the form of the classical-quantum state. 
Further, our test has the property that it is permutation invariance. 
This property plays an important role in studying this problem.
Using this test, we have studied the problem of reliable communication over a classical-quantum arbitrarily varying point-to-point channel (CQ-AVC) and 
have obtained an optimal universal decoder for this problem.

Next, we have generalized the problem of ``independence testing'' where we have more than $2$ hypothesis based on different types of independence between the quantum subsystem and classical subsystems. 
We have studied this problem by designing two different types of hypothesis testing techniques. 
In the first technique, we have designed a collection of non-simultaneous universal tests building on the techniques discussed in \cite{Hayashi2009,HT,HC}, where each test classifies different types of independence between the quantum subsystem and classical subsystems. 
In the second technique, we have constructed a simultaneous universal test building on the techniques discussed in \cite{HT} and \cite{Sen2021}. 
The simultaneous tests obtained in the second technique works like a union of the non-simultaneous universal tests obtained in the first technique. 
However, both these techniques has their advantages and disadvantages as summarized later.

We have studied the problem of reliable communication over classical-quantum multiple-access-arbitrarily varying channels (CQ-AVMAC). 
We have given a proof of achievability for the capacity region of CQ-AVMAC under randomization of codes by designing a simultaneous decoding POVM. 
We have constructed two kinds of simultaneous decoders with the help of the universal tests (both simultaneous and non-simultaneous tests) 
while studying the problem of generalized independence testing. 
We have observed that both of these methods can be extended to the case of more than $2$ senders. 
We have established the optimality of the capacity region (under randomization) by proving the converse. 
While the capacity region of CQ-AVMAC for fixed codes may be empty sometimes,
we have shown that it will always be non-empty under a certain necessary and sufficient condition, which we denote as ``non-symmetrizability'' condition. 
We have then demonstrated a derandomization technique to derandomize any random code under the non-symmetrizability condition and finally establish a capacity region if a CQ-AVMAC satisfies this condition.

\begin{table}[ht]
    \centering
    \caption{Comparision between Technique $1$ and Technique $2$ with respect to the  decoder for CQ-AVMAC}
    \label{tab:table1}
    \begin{tabular}{ || m{3cm} || m{7cm} || m{7cm} ||}
    \hline
    \hline
   Comparison & Technique $1$ & Technique $2$\\
   [0.5ex]
    \hline
         Extension for $T$ ($T>2$) senders. & It can be extended for $T$ senders. & It can be extended for $T$ senders using similar techniques required to prove \cite[Corollary 4]{Sen2021}. \\
         \hline
         Enlargement of the Hilbert spaces of subsystems 
         & Enlargement of the Hilbert spaces of subsystems is not required. & Requires enlargement of the dimension of the Hilbert spaces of subsystems. This enlarged dimension grows exponentially with the number of senders. \\
         \hline
         Sandwiching of Projectors in the Decoding POVM  & Sandwiching of $O(2^{T-1})$ projectors required to construct the decoding POVM for the case of $T$ senders. & The construction of decoder POVM does not require any sandwiching of projectors.\\
         \hline
          Tools involved &  This technique is based on Schur-Weyl duality and the method of types. & This technique is based on Schur-Weyl duality, the method of types and geometric ideas developed in \cite{Sen2021}, \\
          \hline
         Hypothesis Testing  & The decoding POVM is obtained from multiple non-simultaneous tests. & The decoding POVM is obtained from a simultaneous test.\\
         \hline Universal & It is universal and therefore does not depend on the channel. & It is universal and therefore does not depend on the channel.\\
         \hline
         Error analysis & The error analysis is simple. & The error analysis is arguably mathematically tedious.\\
         \hline
         Time sharing & This is a simultaneous decoder and hence, does not require time sharing. & This is a simultaneous decoder and hence, does not require time sharing.\\
         
          \hline
          \hline
    \end{tabular}
\end{table} 

In table \ref{tab:table1}, we have compared the two different techniques used to prove the achievability of Theorem \ref{lemma_rand_capacity_avmac}. In the table below, we use the term ``Technique $1$'' to denote the technique of non-simultaneous hypothesis test mentioned in Theorem \ref{lemma_rand_capacity_avmac} and the term ``Technique $2$'' to denote the technique of simultaneous hypothesis test mentioned in Appendix \ref{alternative_proof_lemma_rand_capacity_avmac}. 

Finally, we remark on the relation with the compound channel.
In the single-sender case, 
the randomized code setting in a compound channel
can be done in the same way as Section \ref{sec:section_CQAVC}
by restricting distributions $Q$ to delta distributions.
However, in the deterministic code case, 
we need derandomization.
Since symmetrizable case does not yield zero transmission rate
in a compound channel, 
The discussion in Section \ref{sec:determinsitic_cap_section} does not work for 
in a compound channel.
Since the channel has a permutation invariant form 
in a compound channel,
the papers \cite{Hayashi2009} showed that
the random permutation of a constant composition code 
simulates a randomized code.
In the multiple-sender case, 
even under randomized codes,
the capacity region of compound channels 
cannot be obtained by 
the capacity region of arbitrarily varying channel
with restricting distributions $Q$ to delta distributions \cite{HC}.
Hence, a simple modification of our approach in Section \ref{section_CQAVMAC}
does not yield
the capacity region of a compound channel in this case.


\bibliographystyle{IEEEtran}
\bibliography{master}
\appendices

\section{Converse part of Theorem \ref{theorem_generalised_independence_test_arbitrary_varying}
}\label{NMA8}
\if0
First, we show the converse part of Corollary \ref{cor1-1}.
Since
    \begin{align}
\max_{\sigma^{\cB^n} \in \cD(\cH_\cB^{\otimes n})}{\tr\left[
\bbT_{n}
    \left({\rho^{X}_{P}}^{\otimes n} \otimes \sigma^{\cB^n}\right)
    \right]}
    \ge
    \max_{\sigma^{\cB} \in \cD(\cH_\cB)}
    {\tr\left[
\bbT_{n}
    \left({\rho^{X}_{P}}^{\otimes n} \otimes     
(\sigma^{\cB})^{\otimes n}
\right)
    \right]},\label{NZIO}
    \end{align}
we have
  \begin{align}
\tilde{\beta}_{n}(\eps) \ge 
\tilde{\beta}_{n}'(\eps) :=
\min_{\substack{0 \preceq \bbT_{n} \preceq \bbI}} \left\{\max_{\sigma^{\cB}
 \in \cD(\cH_\cB)}{\tr\left[\bbT_{n}
    \left({\rho^{X}_{P}}^{\otimes n} \otimes 
    (\sigma^{\cB})^{\otimes n}\right)
    \right]} \quad \bigg| \quad \sup_{s \in \cS} \tr[(\bbI^{X^n\cB^n} - \bbT_{n})\rho^{X^n\cB^n}_{P,f_n(s)}] \leq \eps\right\}.
 \end{align}
Simple application of the strong converse of Stein's lemma implies
\begin{align}
\limsup_{n \to \infty}-\frac{1}{n}\log
\tilde{\beta}_{n}(\eps) 
\le 
\limsup_{n \to \infty}-\frac{1}{n}\log
\tilde{\beta}_{n}'(\eps) 
\le \inf_{\sigma^{\cB}
 \in \cD(\cH_\cB) ,s \in \cS}
D(\rho^{X\cB}_{P,s}\| \rho^{X}_{P}\otimes \sigma^{\cB})
=\inf_{s\in \cS} I[X;\cB]_{P,\delta_{s}}. \label{Stein7}
 \end{align}
\fi
We will show the converse part of Theorem \ref{theorem_generalised_independence_test_arbitrary_varying}.
We show the inequality $\le$ in \eqref{Stein} by contradiction.
We assume that
    \begin{equation}
        \limsup_{n \to \infty}-\frac{1}{n}\log\beta_{n}(\eps) > 
        \inf_{Q} I[X;\cB]_{P,Q}. \label{Stein5}
    \end{equation}
Using 
    \begin{align}
\max_{\sigma^{\cB^n} \in \cD(\cH_\cB^{\otimes n})}{\tr\left[
\bbT_{n}
    \left({\rho^{X}_{P}}^{\otimes n} \otimes \sigma^{\cB^n}\right)
    \right]}
&    \ge
    \max_{\sigma^{\cB} \in \cD(\cH_\cB)}
    {\tr\left[
\bbT_{n}
    \left({\rho^{X}_{P}}^{\otimes n} \otimes     
(\sigma^{\cB})^{\otimes n}
\right)
    \right]},\label{NZIO} \\
\sup_{s^n \in \cS^n} \tr[(\bbI^{X^n\cB^n} - \bbT_{n})\rho^{X^n\cB^n}_{P,s^n}]
&=\sup_{Q_n \in \cP(\cS^n)} \tr[(\bbI^{X^n\cB^n} - \bbT_{n})\rho^{X^n\cB^n}_{P,Q_n}]\notag\\
&\ge 
\sup_{Q\in \cP(\cS)} \tr[(\bbI^{X^n\cB^n} - \bbT_{n})\rho^{X^n\cB^n}_{P,Q^n}]  ,
    \end{align}
    we have     
    \begin{equation}
\beta_{n}(\eps)\ge
\beta_{n}'(\eps) := 
\min_{\substack{0 \preceq \bbT_{n} \preceq \bbI}} \left\{\max_{\sigma^{\cB} \in \cD(\cH_\cB)}{\tr\left[
\bbT_{n}
    \left({\rho^{X}_{P}}^{\otimes n} \otimes (\sigma^{\cB})^{\otimes n}
\right)
    \right]} \quad \bigg| \quad 
    \sup_{Q} \tr[(\bbI^{X^n\cB^n} - \bbT_{n})\rho^{X^n\cB^n}_{P,Q^n}] \leq \eps\right\}.
    \end{equation}    
As
\begin{align}
\frac{1}{n}
D_{\alpha} \Big(\rho^{X^n\cB^n}_{P,Q^n} \Big\| \Big({\rho^{X}_{P}}^{\otimes n} \otimes (\sigma^{\cB})^{\otimes n}\Big)\Big) 
=
D_{\alpha} \Big(\rho^{X\cB}_{P,Q} \Big\| 
\Big({\rho^{X}_{P}} \otimes \sigma^{\cB}\Big)\Big) ,
\end{align}

Fact 4 guarantees that
\begin{align}
\inf_{Q} I[X;\cB]_{P,Q}
=\lim_{\alpha \to 1+0} 
\inf_{
(\sigma^{\cB},Q) \in \cD(\cH_\cB)\times \cP(\cS)}
D_{\alpha} \Big(\rho^{X\cB}_{P,Q} \Big\| 
\Big({\rho^{X}_{P}} \otimes \sigma^{\cB}\Big)\Big) .
\end{align}
Hence, we can choose $\alpha>1$ such that
\begin{align}
r:=\limsup_{n \to \infty}-\frac{1}{n}\log\beta_{n}'(\eps)
\ge \limsup_{n \to \infty}-\frac{1}{n}\log\beta_{n}(\eps)
> r_{\alpha}:=
\inf_{
(\sigma^{\cB},Q) \in \cD(\cH_\cB)\times \cP(\cS)}
D_{\alpha} \Big(\rho^{X\cB}_{P,Q} \Big\| 
\Big({\rho^{X}_{P}} \otimes \sigma^{\cB}\Big)\Big) .
\label{BNS}
\end{align}
We choose $(\sigma^{\cB}_*,Q_*) \in \cD(\cH_\cB)\times \cP(\cS)$
such that
\begin{align}
\limsup_{n \to \infty}-\frac{1}{n}\log\beta_{n}(\eps) > r_{\alpha}':=
D_{\alpha} \Big(\rho^{X\cB}_{P,Q_*} \Big\| 
\Big({\rho^{X}_{P}} \otimes \sigma^{\cB}_*\Big)\Big) .
\end{align}
Assume that a test $\bbT_{n}$ satisfies
\begin{align}
\tr[(\bbI^{X^n\cB^n} - \bbT_{n})\rho^{X^n\cB^n}_{P,Q_*^n}] \leq \eps.\label{BMCV}
\end{align}
Using the information processing inequality of Petz Quantum 
R\'{e}nyi divergence in a way as Eq. (3.137) of \cite{H2017QIT},
we have
\begin{align}
(\alpha-1) n r_{\alpha} &=
(\alpha-1) n 
D_{\alpha} \Big(\rho^{X \cB}_{P,Q_*} \Big\| \Big({\rho^{X}_{P}}^{\otimes n} \otimes \sigma_*^{\cB}\Big)\Big)
=
(\alpha-1) D_{\alpha} \Big(\rho^{X^n\cB^n}_{P,Q_*^n} \Big\| \Big({\rho^{X}_{P}}^{\otimes n} \otimes (\sigma_*^{\cB})^{\otimes n}\Big)\Big) \notag\\
& \ge 
\alpha \log \Tr \Big[\bbT_{n} \rho^{X^n\cB^n}_{P,Q_*^n}\Big]
-(\alpha-1)\log \Tr \Big[\bbT_{n} \Big({\rho^{X}_{P}}^{\otimes n} \otimes (\sigma_*^{\cB})^{\otimes n}\Big)\Big].
\end{align}
Hence,
\begin{align}
- \frac{1}{n}\log \Tr \Big[\bbT_{n} \rho^{X^n\cB^n}_{P,Q_*^n}\Big]
\ge \frac{ -(\alpha-1) r_{\alpha}'
+\frac{(\alpha-1)}{n}\log 
\Tr \Big[\bbT_{n} \Big({\rho^{X}_{P}}^{\otimes n} \otimes (\sigma_*^{\cB})^{\otimes n}\Big)\Big]
}{\alpha}.
\end{align}
Since 
the condition \eqref{BMCV} and the definition of $r$ in \eqref{BNS}
imply 
$\limsup_{n\to \infty}
\frac{-1}{n}\log \Tr \Big[\bbT_{n} \Big({\rho^{X}_{P}}^{\otimes n} \otimes (\sigma_*^{\cB})^{\otimes n}\Big)\Big]
\ge r$, 
we have
\begin{align}
\limsup_{n\to \infty}
- \frac{1}{n}\log 
\Tr \Big[\bbT_{n} \rho^{X^n\cB^n}_{P,Q_*^n}\Big]
\ge \frac{ (\alpha-1) (r-r_{\alpha}')
}{\alpha},
\end{align}
which implies
\begin{align}
\liminf_{n\to \infty}
\Tr \Big[\bbT_{n} \rho^{X^n\cB^n}_{P,Q_*^n}\Big]
=0 \label{XADF}.
\end{align}
Eq. \eqref{XADF} contradicts \eqref{BMCV}.

\section{Proof of Lemma \ref{claim_f}:}\label{proof_claim_f}
Given an $x^n := (x_1,x_2,\cdots,x_n) \in \cX^n$, there always exists an an $\bar{x}^n := (\bar{x}_1,\bar{x}_2,\cdots,\bar{x}_n)\in \cX^n$, which has a form $\bar{x}^n := \left(\underbrace{1,\cdots,1}_{m_1},\underbrace{2,\cdots,2}_{m_2},\cdots,\underbrace{\abs{\cX},\cdots,\abs{\cX}}_{m_{\abs{\cX}}}\right)$, where $\forall i \in [\abs{\cX}], m_i \geq 0$ and $\sum_{i \in [\abs{\cX}]}m_i = n$, such that  $x^n := \pi(\bar{x}^n)$, for some $\pi \in S_n$. Now for $\bar{x}^n$, we consider the following state:
\begin{align}
\left(\bigotimes_{i=1}^{n}\rho^{\cB}_{Q_{s^n},\bar{x}_i}\right) &= \left(\rho^{\cB}_{Q_{s^n},1}\right)^{\otimes m_1} \otimes \left(\rho^{\cB}_{Q_{s^n},2}\right)^{\otimes m_2} \otimes \cdots \otimes \left(\rho^{\cB}_{Q_{s^n},\abs{\cX}}\right)^{\otimes m_{\abs{\cX}}}\nn\\
&\overset{a}{\leq} \prod_{i=1}^{\abs{\cX}} m_{i}^{\frac{\abs{\cB}(\abs{\cB}-1)}{2}}\abs{\Lambda^{m_i}_{\abs{\cB}}} \bigotimes_{i=1}^{\abs{\cX}} \rho^{\cB^{m_i}}_{U,m_i}\nn\\
&\overset{b}{\leq} n^{\frac{\abs{\cX}(\abs{\cB}-1)\abs{\cB}}{2}}\abs{\Lambda^{n}_{\abs\cB}}^{\abs{\cX}}  \widehat{\rho}^{\cB^n}_{\bar{x}^n},\label{claim_f_proof_eq1}
\end{align}
where $a$ follows from Lemma \ref{lemma_perm_Inv_universal}, $b$ follows from the fact $\forall i \in [\abs{\cX}] , m_i \leq n ,  \abs{\Lambda^{m_i}_{\abs{\cB}}} \leq \abs{\Lambda^{n}_{\abs\cB}}^{\abs{\cX}} $.
Now from \eqref{perm_op_eq} of Definition \ref{fact_perm_op}, it directly follows that
\begin{align*}
    \left(\bigotimes_{i=1}^{n}\rho^{\cB}_{Q_{s^n},x_i}\right) &= V^{\cB^n}(\pi) \left(\bigotimes_{i=1}^{n}\rho^{\cB}_{Q_{s^n},\bar{x}_i}\right){V^{\cB^n}}^{\dagger}(\pi)\\
    &\overset{a}{\leq} n^{\frac{\abs{\cX}(\abs{\cB}-1)\abs{\cB}}{2}}\abs{\Lambda^{n}_{\abs\cB}}^{\abs{\cX}}V^{\cB^n}(\pi) \left(\widehat{\rho}^{\cB^n}_{\bar{x}^n}\right){V^{\cB^n}}^{\dagger}(\pi)\\
    &\overset{b}{=}  n^{\frac{\abs{\cX}(\abs{\cB}-1)\abs{\cB}}{2}}\abs{\Lambda^{n}_{\abs\cB}}^{\abs{\cX}}\widehat{\rho}^{\cB^n}_{x^n},
\end{align*}
where $a$ follows from \eqref{claim_f_proof_eq1} and $b$ follows from \eqref{rhohatxn}.
Now from the above inequality for any $t \in (0,1)$, we can write the following:
\begin{align*}
    \left(\bigotimes_{i=1}^{n}\rho^{\cB}_{Q_{s^n},x_i}\right)^{t} &\leq n^{\frac{t\abs{\cX}(\abs{\cB}-1)\abs{\cB}}{2}}\abs{\Lambda^{n}_{\abs\cB}}^{t\abs{\cX}}\left(\widehat{\rho}^{\cB^n}_{x^n}\right)^{t}\\
     \Rightarrow\left(\bigotimes_{i=1}^{n}\rho^{\cB}_{Q_{s^n},x_i}\right)^{t}\left(\widehat{\rho}^{\cB^n}_{x^n}\right)^{-t} &\overset{a}{\leq} n^{\frac{t\abs{\cX}(\abs{\cB}-1)\abs{\cB}}{2}}\abs{\Lambda^{n}_{\abs\cB}}^{t\abs{\cX}},
\end{align*}
where $a$ follows from the fact that for any $t \in (0,1)$, $\left(\widehat{\rho}^{\cB^n}_{x^n}\right)^{t}$ is invertible since $\widehat{\rho}^{\cB^n}_{x^n}$ is a positive operator. This completes the proof of Lemma \ref{claim_f}.\hfill$\blacksquare$

\section{Proof of Lemma \ref{theorem_sen_MAC_generalised_indep}}\label{proof_theorem_sen_MAC_generalised_indep}
\subsection{Organization of our proof}
From Lemma \ref{lemma_cq_mac_avht_independent}, we have three projectors (tests) $\hat{\bbT}_{n,X},\hat{\bbT}_{n,Y}$ and $\hat{\bbT}_{n}$, all accepting $H_0$ with very high probabilities and accepting $H_{1,Y},H_{1,X}$ and $H_1$ respectively with arbitrarily low probabilities. We design a projector (test) ${\bbT}^{\star}_{n}$, which satisfies the properties of $\hat{\bbT}_{n,X},\hat{\bbT}_{n,Y}$ and $\hat{\bbT}_{n}$ i.e. $\forall s^n \in \cS^n$, for all $\{\sigma^{\cB^n}_{x^n}\}, \{\sigma^{\cB^n}_{y^n}\}\subset \cD(\cH_\cB^{\otimes n})$ and $\sigma^{\cB^n} \in \cD(\cH_\cB^{\otimes n})$, for any $(\widehat{R}_1,\widehat{R}_2)$ satisfying \cref{hatR_1c,hatR_2c,hatSUMc}, for large enough $n$ and arbitrary $\delta \in (0,1)$,   ${\bbT}^{\star}_{n}$ satisfies the following:
\begin{align}
     &\tr\left[ {\bbT}^{\star}_{n}\left(\rho^{X^nY^n\cB^n}_{P_{X},P_{Y},s^n}\right)\right] \geq 1 - Poly(\delta),\label{Tstarprop1}\\
    &\tr\left[{\bbT}^{\star}_{n}\left({\rho^{\cH_Y^{\otimes n}}_{P_Y}} \otimes \sigma^{X^n\cB^n}_{P_{X}, \{\sigma^{\cB^n}_{x^n}\}}\right)\right]
\leq g_1(n,\abs{\cX},\abs{\cB})2^{-n\widehat{R}_2},
\label{Tstarprop2}\\
    &\tr\left[{\bbT}^{\star}_{n}\left({\rho^{\cH_X^{\otimes n}}_{P_X}} \otimes\sigma^{Y^n\cB^n}_{P_{Y}, \{\sigma^{\cB^n}_{y^n}\}}\right)\right]
\leq g_2(n,\abs{\cY},\abs{\cB})2^{-n\widehat{R}_1}, \label{Tstarprop3}\\
    &\tr\left[{\bbT}^{\star}_{n}\left({\rho^{\cH_X^{\otimes n}}_{P_X}} \otimes{\rho^{\cH_Y^{\otimes n}}_{P_Y}} \otimes \sigma^{\cB^n}\right)\right]
\leq g_3(n,\abs{\cB})2^{-n(\widehat{R}_1 + \widehat{R}_2)}.\label{Tstarprop4}
\end{align}

Our proof for $\bbT^{\star}_{n}$ is motivated by the work of Sen \cite{Sen2021}. Before giving this proof we first discuss some observations made in \cite{Sen2021}.
 One possible construction for ${\bbT}^{\star}_{n}$ can be
${\bbT}^{\star}_{n}:= \hat{\bbT}_{n,X}\hat{\bbT}_{n,Y}\hat{\bbT}_{n}.$ However, for any $(x^n,y^n) \in (\cX^n \times \cY^n)$, the quantum states $\widehat{\rho}^{\cB^n}_{x^n},\widehat{\rho}^{\cB^n}_{y^n}$ may not commute with each other and therefore, $\hat{\bbT}_{n,X}$ and $\hat{\bbT}_{n,Y}$ may not commute with each other. Hence, this construction will not work.

Another possible construction for ${\bbT}^{\star}_{n}$ one may think of, is to first construct a projector over the span of the complements of the corresponding support spaces of the projectors $\hat{\bbT}_{n,X}, \hat{\bbT}_{n,Y}$, $\hat{\bbT}_{n}$ and then, set ${\bbT}^{\star}_{n}$ to be the complement of that projector. This idea is motivated by finding the intersection of two sets by finding the complement of the union of the complements of corresponding sets. Although this is a well-established concept in classical scenarios, in the quantum setting, the span (analog of ``union'' operation in classical setting) of the above-mentioned complements can be the whole Hilbert space and therefore ${\bbT}^{\star}_{n}$ may not even exist. We discuss below, how to get around with this issue.

Suppose, we have three subspaces $\cW_1, \cW_2, \cW_3$ in $\cH$, whose span is the whole Hilbert space $\cH$ they are residing in. We keep $\cW_3$ as it is, but tilt $\cW_1$ and $\cW_2$ very minutely in two mutually orthogonal directions which are also orthogonal to the parent Hilbert space $\cH$. 
Here by 'tilt'-ing towards an orthogonal direction, we mean adding a small component of this particular direction with the original vector and then normalizing the resultant vector to keep the length of the resultant vector the same as the original vector. This phenomenon can be thought of as tilting the vector minutely toward that particular direction. In the upcoming discussion, we will formally explain this idea of tilting. 
Let us now denote the tilted version of subspaces $\cW_1$ and $\cW_2$ as $\cW_1'$ and $\cW_2'$ respectively. 
This increases the dimension of the Hilbert space in which three subspaces 
($\cW_1'$, $\cW_2'$, and $\cW_3$) reside and now if we calculate the span it does not cover the enlarged Hilbert space. 
It is important to note that since we have minutely tilted $\cW_1$ and $\cW_2$, for any vector $\ket{h}$, the individual projection lengths of $\ket{h}$ on $\cW_1'$ and $\cW_2'$ do not vary much from the individual projection lengths on $\cW_1$ and $\cW_2$ respectively. 

In our proof, tilting maps play a key role.
A tilting map is nothing but a type of isometry which Sen termed in \cite{Sen2021}, 
it projects any statevector into a bigger Hilbert space while keeping its length unchanged and changing the direction of the vector minutely (defined in terms of $\delta$ in the following discussion) towards some pre-defined direction. For example, tilting a vector in 2D (say having $X$ and $Y$ plane) towards $Z$ plane (which is orthogonal to $X$ and $Y$) in a very small angle.

\subsection{Isometric tilting maps}
We now proceed with the approach of Sen \cite{Sen2021}, for the construction of ${\bbT}^{\star}_{n}$ which behaves like an ``intersection'' of $\hat{\bbT}_{n,X},\hat{\bbT}_{n,Y}$ and $\hat{\bbT}_{n}$ and we will show that this ${\bbT}^{\star}_{n}$ satisfies \cref{Tstarprop1,Tstarprop2,Tstarprop3,Tstarprop4} \textit{simultaneously}. 

    Using Sen's approach in \cite[Section $4$]{Sen2021} while defining an intersection projector of a collection of projectors under a classical-quantum setting, we first consider a Hilbert space $\cH_{\cA}^{\otimes 2}$ with dimension $\abs{\cA}^2$ which can be decomposed as follows:
        \begin{equation*}
        \cH_{\cA}^{\otimes 2} := \cH_{{\cA}^{X}} \otimes \cH_{{\cA}^{Y}},
    \end{equation*}
   where $\cH_{{\cA}^{X}},\cH_{{\cA}^{Y}}$ are two mutually orthogonal finite-dimensional isomorphic Hilbert spaces with dimension $\abs{\cA}$ (where $\abs{\cA} < \infty$), and $\left\{\ket{a_{r}}\right\}_{r \in [\abs{\cA}]}$ and $\left\{\ket{b_{t}}\right\}_{t \in [\abs{\cA}]}$ are orthonormal basis for $\cH_{{\cA}^{X}}$ and $\cH_{{\cA}^{Y}}$ respectively.

    For each $r \in [\abs{\cA}], t \in [\abs{\cA}]$ and $\delta \in (0,1)$, we define two isometric maps $\cM_{X,a_{r},\delta} : \cH_\cB^{\otimes n} \to \cH_\cB^{\otimes n}  \oplus \cH_\cB^{\otimes n}  \otimes \cH_{\cA^{X}}$ and $\cM_{Y,b_{t},\delta} : \cH_\cB^{\otimes n}  \to \cH_\cB^{\otimes n}  \oplus (\cH_\cB^{\otimes n}  \otimes \cH_{\cA^{Y}})$ as follows
    \begin{align*}
      \cM_{X,a_{r},\delta} &: \ket{h} \mapsto \frac{1}{\sqrt{1 + \delta^2}} (\ket{h} \oplus \delta\ket{h}\ket{a_{r}}),\\
      \cM_{Y,b_{t},\delta} &: \ket{h} \mapsto \frac{1}{\sqrt{1 + \delta^2}} (\ket{h} \oplus \delta\ket{h}\ket{b_{t}})
    \end{align*}
    for $\ket{h}$ in $\cH_\cB^{\otimes n} $.
    The two maps
    $\cM_{X,a_{r},\delta}$ and $\cM_{Y,b_{t},\delta}$ 
        are two ''tilting'' maps which 
       tilt any vector $\ket{h}$ in $\cH_\cB^{\otimes n} $ minutely 
       (parametrized by $\delta$) towards the directions of $a_{r}$ and $b_{t}$ respectively in significantly larger Hilbert spaces $\cH_\cB^{\otimes n}  \oplus \cH_\cB^{\otimes n}  \otimes \cH_{\cA^{X}}$ and $\cH_\cB^{\otimes n}  \oplus (\cH_\cB^{\otimes n}  \otimes \cH_{\cA^{Y}})$, respectively. 
     
       For each $(r,t) \in [\abs{\cA}]^2$ and for $\delta \in (0,1)$, we define another tilting map $\cM_{XY,a_{r},b_{t},\delta} : \cH_\cB^{\otimes n}  \to \cH_\cB^{\otimes n}  \oplus (\cH_\cB^{\otimes n}  \otimes \cH_{\cA^{X}}) \oplus (\cH_\cB^{\otimes n}  \otimes \cH_{\cA^{Y}})$ as
    \begin{align*}
        \cM_{XY,a_{r},b_{t},\delta} &: \ket{h} \mapsto \frac{1}{\sqrt{1 + 2\delta^2}} (\ket{h} \oplus \delta\ket{h}\ket{a_{r}} \oplus \delta\ket{h}\ket{b_{t}})
        \hbox{ for } \forall \ket{h} \in \cH_\cB^{\otimes n}.
        \end{align*}
    The above map tilts any vector $\ket{h}$ in $\cH_\cB^{\otimes n} $ minutely (parametrized by $\delta$) towards both directions of $a_{r}$ and $b_{t}$ in a significantly larger Hilbert space 
    $\cH_{\cB'} := \cH_\cB^{\otimes n}  \oplus \cH_\cB^{\otimes n}  \otimes \cH_{\cA^{X}} \oplus \cH_\cB^{\otimes n}  \otimes \cH_{\cA^{Y}}$. 
    
The following lemma evaluates a small perturbation of quantum states under the tilting map $\cM_{XY,a_{r},b_{t},\delta}$.

\begin{lemma}\label{claim_small_perturbation}
    For any state $\rho \in \cD(\cH_{\cB}^{\otimes n})$ and any $(r,t) \in [\abs{\cA}]^2$, 
we choose two states   
    $\widehat{\rho}_{a_{r},b_{t},\delta}^{\cB'}:=\cM_{XY,a_{r},b_{t},\delta}\left(\rho\right)$
    and
    \begin{equation}
        \Tilde{\rho}^{\cB'} := \left[\begin{array}{cc}
      \left[\rho\right]_{\abs{\cB}^n \times \abs{\cB}}   &  \left[\mathbf{0}\right]_{\abs{\cB}^n \times 2\abs{\cA}\abs{\cB}^n}\\
      \left[\mathbf{0}\right]_{ 2\abs{\cA}\abs{\cB}^n \times \abs{\cB}^n}   & \left[\mathbf{0}\right]_{ 2\abs{\cA}\abs{\cB}^n \times  2\abs{\cA}\abs{\cB}^n}
    \end{array}\right]_{\abs{\cB'} \times \abs{\cB'}}, \label{proof_theo_gen_identity_1}
    \end{equation}
     where $\delta \in (0,1)$,  $\left[\mathbf{0}\right]_{A \times B}$ is a $0$-matrix of order $A \times B$ with $0$ in each entry. Note that we get $\Tilde{\rho}^{\cB'}$ by padding $0$-matrices with $\rho$.
Then, the following relation holds;
    \begin{equation*}
        \norm{\widehat{\rho}_{a_{r},b_{t},\delta}^{\cB'} - \Tilde{\rho}^{\cB'}}{1} \leq 8\delta.
    \end{equation*}
\end{lemma}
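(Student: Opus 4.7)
The plan is to reduce the statement to a direct fidelity calculation between two pure states in the enlarged Hilbert space $\cH_{\cB'}$, avoiding any messy block-matrix manipulation of the trace norm of the $9$-block operator obtained by expanding $\cM_{XY,a_r,b_t,\delta}\,\rho\,\cM_{XY,a_r,b_t,\delta}^{\dagger}$.

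First I would introduce the canonical inclusion isometry $W:\cH_\cB^{\otimes n}\hookrightarrow \cH_{\cB'}$ into the first direct summand of $\cH_{\cB'}$, so that the padded matrix $\Tilde{\rho}^{\cB'}$ in \eqref{proof_theo_gen_identity_1} is exactly $W\rho W^{\dagger}$. The tilting isometry $V:=\cM_{XY,a_r,b_t,\delta}$ can then be written as $V = V_0+V_1+V_2$, where $V_0=\tfrac{1}{\sqrt{1+2\delta^2}}W$, and where $V_1,V_2$ map into the second and third summands via $\ket{h}\mapsto \tfrac{\delta}{\sqrt{1+2\delta^2}}\ket{h}\ket{a_r}$ and $\ket{h}\mapsto \tfrac{\delta}{\sqrt{1+2\delta^2}}\ket{h}\ket{b_t}$ respectively. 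Orthogonality of the three summands gives $W^{\dagger}V_1=W^{\dagger}V_2=0$, and together with $W^{\dagger}V_0=\tfrac{1}{\sqrt{1+2\delta^2}}I$ this yields the single clean identity $W^{\dagger}V=\tfrac{1}{\sqrt{1+2\delta^2}}I$.

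Next I would reduce to pure states by convexity. Because both maps $\rho\mapsto V\rho V^{\dagger}$ and $\rho\mapsto W\rho W^{\dagger}$ are linear, a spectral decomposition $\rho=\sum_i p_i\ketbra{\psi_i}{\psi_i}$ together with the triangle inequality for $\|\cdot\|_1$ shows that it suffices to establish the bound for any pure state $\ket{\psi}\in\cH_\cB^{\otimes n}$. For pure states both $V\ketbra{\psi}{\psi}V^{\dagger}$ and $W\ketbra{\psi}{\psi}W^{\dagger}$ are rank-one projectors onto unit vectors in $\cH_{\cB'}$, and the standard identity for the trace distance between pure states gives
\[
\|V\ketbra{\psi}{\psi}V^{\dagger}-W\ketbra{\psi}{\psi}W^{\dagger}\|_1
= 2\sqrt{1-|\langle W\psi\mid V\psi\rangle|^2}.
\]
The overlap $\langle W\psi\mid V\psi\rangle = \langle \psi|\,W^{\dagger}V\,|\psi\rangle = 1/\sqrt{1+2\delta^2}$ follows from the identity established in the previous step, so the right-hand side equals $2\sqrt{2\delta^2/(1+2\delta^2)}\le 2\sqrt{2}\,\delta < 8\delta$ for every $\delta\in(0,1)$, which completes the argument.

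I do not expect any real obstacle; the claim is essentially a continuity-of-isometry statement, and the factor $8$ in the stated bound is slack (one actually obtains the sharper $2\sqrt{2}\,\delta$). The only care needed is bookkeeping the direct-sum block decomposition of $\cH_{\cB'}$ so that the identification $\Tilde{\rho}^{\cB'}=W\rho W^{\dagger}$ is unambiguous. The analogous estimates for the single-tilt isometries $\cM_{X,a_r,\delta}$ and $\cM_{Y,b_t,\delta}$ follow by the same argument with $V_2$ dropped and $1+2\delta^2$ replaced by $1+\delta^2$, giving a correspondingly smaller constant.
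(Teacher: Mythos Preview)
Your argument is correct and in fact sharper than the paper's. Both you and the paper reduce to pure states via the spectral decomposition and the triangle inequality, so that step is shared. The divergence is in how the pure-state case is handled: the paper expands $\cM_{XY,a_r,b_t,\delta}(\ketbra{h_j})$ as a $3\times 3$ block operator, subtracts the zero-padded $\ketbra{h_j}$, and bounds the trace norm block-by-block via the triangle inequality, arriving at $4\delta+4\delta^2\le 8\delta$. You instead recognise that both operators are rank-one projectors $V\ketbra{\psi}V^{\dagger}$ and $W\ketbra{\psi}W^{\dagger}$ for isometries $V,W$, compute the single scalar overlap $\langle W\psi\mid V\psi\rangle=(1+2\delta^2)^{-1/2}$ from the orthogonality of the direct summands, and invoke the closed-form pure-state trace-distance identity to get $2\sqrt{2\delta^2/(1+2\delta^2)}\le 2\sqrt2\,\delta$. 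Your route avoids the block-matrix bookkeeping entirely, yields a tighter constant, and makes the extension to the single-tilt maps $\cM_{X,a_r,\delta}$, $\cM_{Y,b_t,\delta}$ immediate by the same one-line overlap computation; the paper's approach is more hands-on but requires tracking nine blocks and is correspondingly more error-prone.
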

This lemma will be shown in Subsection \ref{subsec_small_perturbation}.

\subsection{Tilting and Augmenting the classical-quantum states}\label{tilt_augment}
For each $s^n \in \cS^n$, for a particular $(x^n,y^n)\in \cX^n$ and $(r,t) \in [\abs{\cA}]^2$, 
we tilt the state $\rho^{\cB^n}_{x^n,y^n,s^n}$ using the tilting map $\cM_{XY,a_{r},b_{t},\delta}$ and we obtain the following state:
\begin{equation}
    \rho^{\cB'}_{(x^n,a_{r}),(y^n,b_{t}),s^n,\delta} := \cM_{XY,a_{r},b_{t},\delta}(\rho^{\cB^n}_{x^n,y^n,s^n}).\label{tilted_rho_inner}
\end{equation}
Note  that for each $s^n \in \cS^n$, for a particular $(x^n,y^n)\in \cX^n\times \cY^n$, we get a collection of tilted states $\left\{ \rho^{\cB'}_{(x^n,a_{r}),(y^n,b_{t}),s^n,\delta}\right\}_{(r,t) \in [\abs{\cA}]^2}$. For each $(x^n,y^n)\in \cX^n\times \cY^n$, to incorporate the collection of states $\left\{ \rho^{\cB'}_{(x^n,a_{r}),(y^n,b_{t}),s^n,\delta}\right\}_{(r,t) \in [\abs{\cA}]^2}$ into a single CQ state, we need to augment (or, enlarge) the classical subsystems 
$\cH_{X^n}, \cH_{Y^n}$  in the following way:
        \begin{equation}
        \rho^{X'Y'\cB'}_{P_{X},P_{Y},s^n} := \frac{1}{\abs{\cA}^2} \sum_{\substack{x^n\in \cX^n ,y^n \in \cY^n\\ r \in [\abs{\cA}], t \in [\abs{\cA}]}}P^{n}_{X}(x^n)\ketbrasys{x^n,a_{r}}{X'}\otimes P^{n}_{Y}(y^n)\ketbrasys{y^n,b_{t}}{Y'}\otimes \rho^{\cB'}_{(x^n,a_{r}),(y^n,b_{t}),s^n,\delta}, \label{tilted_augmented_omega_1}
    \end{equation}
    where $\cH_{X'} := \cH_{X^n} \otimes \cH_{\cA^{X}}, 
    \cH_{Y'} :=\cH_{Y^n} \otimes \cH_{\cA^{Y}}$.
    Similarly, for $\{\sigma^{\cB^n}_{x^n,y^n}\} \subset \cD(\cH_\cB^{\otimes n})$, 
    we define the following tilted and augmented CQ state:
    \begin{equation}
       \sigma^{X'Y'\cB'}_{P_{X},P_{Y}, \{\sigma^{\cB^n}_{x^n,y^n}\}} := \frac{1}{\abs{\cA}^2}\sum_{\substack{x^n\in \cX^n , y^n \in \cY^n \\ r \in [\abs{\cA}], t \in [\abs{\cA}]}} P^{n}_{X}(x^n) \ketbrasys{x^n,a_{r}}{X'}\otimes P^{n}_{Y}(y^n)\ketbrasys{y^n,b_{t}}{Y'} \otimes \sigma^{\cB'}_{(x^n,a_{r}),(y^n,b_{t}),\delta},  \label{tilted_augmented_omega_1B}
    \end{equation}
    where $\sigma^{\cB'}_{(x^n,a_{r}),(y^n,b_{t}),\delta} := \cM_{XY,a_{r},b_{t},\delta}\left( \sigma^{\cB^n}_{x^n,y^n}\right)$. 
    
    Lemma \ref{claim_small_perturbation} shown in Subsection \ref{subsec_small_perturbation} guarantees 
that the above-mentioned method of tilting does not perturb the quantum state much. 
That is, Lemma \ref{claim_small_perturbation} guarantees the following relations
\begin{align}
    \norm{\rho^{\cB'}_{(x^n,a_{r}),(y^n,b_{t}),s^n,\delta} - \rho^{\cB^n}_{x^n,y^n,s^n}}{1} &\leq 8\delta, \forall s^n \in \cS^n, \label{perturbed_tilted_state_eq1}\\
    \norm{ \sigma^{\cB'}_{(x^n,a_{r}),(y^n,b_{t}),\delta} - \sigma^{\cB^n}_{x^n,y^n}}{1} &\leq 8\delta.\label{perturbed_tilted_state_eq3}
\end{align}
for each $(x^n,y^n) \in \cX^n \times \cY^n, (r,t)  \in [\abs{\cA}]^2$.
    Note that, in the LHS of the \cref{perturbed_tilted_state_eq1,perturbed_tilted_state_eq3}, the dimensions of the systems $\cB^n$ and $\cB'$ do not match and $\cB'$ contains $\cB^n$. Here, we pad the operators (that reside in $\cB^n$) with $0$ matrices in the similar way mentioned in \eqref{proof_theo_gen_identity_1}.
    Thus, it directly follows that
    \begin{align}
        &\hspace{10pt}\norm{\rho^{X'Y'\cB'}_{P_{X},P_{Y},s^n} - \rho^{X^nY^n\cB^n}_{P_{X},P_{Y},s^n} \otimes \frac{\bbI^{\cA^{\otimes 2}}}{\abs{\cA^2}}}{1}\nn \\
        &= \norm{\frac{1}{\abs{\cA}^2} \sum_{\substack{x^n\in \cX^n ,y^n \in \cY^n\\ r \in [\abs{\cA}], t \in [\abs{\cA}]}}P^{n}_{X}(x^n)\ketbrasys{x^n,a_{r}}{X'}\otimes P^{n}_{Y}(y^n)\ketbrasys{y^n,b_{t}}{Y'}\otimes \left(\rho^{\cB'}_{(x^n,a_{r}),(y^n,b_{t}),s^n,\delta} - \rho^{\cB^n}_{x^n,y^n,s^n}\right)}{1}\nn\\
        &\overset{a}{=} \frac{1}{\abs{\cA}^2} \sum_{\substack{x^n\in \cX^n ,y^n \in \cY^n\\ r \in [\abs{\cA}], t \in [\abs{\cA}]}}P^{n}_{X}(x^n)\ketbrasys{x^n,a_{r}}{X'}\otimes P^{n}_{Y}(y^n)\ketbrasys{y^n,b_{t}}{Y'}\otimes \norm{\rho^{\cB'}_{(x^n,a_{r}),(y^n,b_{t}),s^n,\delta} - \rho^{\cB^n}_{x^n,y^n,s^n}}{1}
        \overset{b}{\leq} 8\delta,\label{perturbed_tilted_state_eq2}
     \end{align}
where $a$ follows from direct sum property \cite[eq. ($2.2.97$)]{khatri2024principlesquantumcommunicationtheory} of Schatten-$l_1$ norm and $b$ follows from \eqref{perturbed_tilted_state_eq1}. Similarly using \eqref{perturbed_tilted_state_eq3}, we have the following relation;
        \begin{align}
\norm{\sigma^{X'Y'\cB'}_{P_{X},P_{Y}, \{\sigma^{\cB^n}_{x^n,y^n}\}} - \sigma^{X^nY^n\cB^n}_{P_{X},P_{Y}, \{\sigma^{\cB^n}_{x^n,y^n}\}}\otimes\frac{\bbI^{\cA^{\otimes 2}}}{\abs{\cA^2}}}{1} &\leq 8\delta\label{perturbed_tilted_state_eq4}.
    \end{align}
    \subsection{Construction of the ``Intersection Projector''}
    It follows from \cref{universal_test_X,universal_test_Y,universal_test_None} that $\hat{\bbT}_{n,X},\hat{\bbT}_{n,Y} $ and $\hat{\bbT}_{n}$ have the following form:
    \begin{align*}
        \hat{\bbT}_{n,X} &:= \sum_{\substack{x^n \in \cX^n \\ y^n \in \cY^n}} \ketbrasys{x^n}{X^n} \otimes \ketbrasys{y^n}{Y^n} \otimes \hat{\bbT}_{n,X, x^n,y^n},\\
        \hat{\bbT}_{n,Y} &:= \sum_{\substack{x^n \in \cX^n \\ y^n \in \cY^n}} \ketbrasys{x^n}{X^n} \otimes \ketbrasys{y^n}{Y^n} \otimes \hat{\bbT}_{n,Y, x^n,y^n},\\
        \hat{\bbT}_{n} &:= \sum_{\substack{x^n \in \cX^n \\ y^n \in \cY^n}} \ketbrasys{x^n}{X^n} \otimes \ketbrasys{y^n}{Y^n} \otimes \hat{\bbT}_{n, x^n,y^n}.
    \end{align*}
    
    For each $(x^n,y^n) \in \cX^n \times \cY^n$, we denote $\cW_{X, x^n,y^n}, \cW_{Y, x^n,y^n}$ and $\cW_{x^n,y^n}$ to be the support spaces of orthogonal complements of  $\hat{\bbT}_{n,X,x^n,y^n},\hat{\bbT}_{n,Y,x^n,y^n} $ and $\hat{\bbT}_{n,x^n,y^n}$ respectively. 
    For each $(x^n,y^n) \in \cX^n \times \cY^n$, our objective is to construct a measurement that behaves like an intersection of $\hat{\bbT}_{n,X,x^n,y^n},\hat{\bbT}_{n,Y,x^n,y^n} $ and $\hat{\bbT}_{n,x^n,y^n}$. The sole idea is to find the complement of the span of $\cW_{X, x^n,y^n}, \cW_{Y, x^n,y^n}$ and $\cW_{x^n,y^n}$. 
    However, there is a high possibility that the span of $\cW_{X, x^n,y^n}, \cW_{Y, x^n,y^n}$ and $\cW_{x^n,y^n}$ can cover the whole Hilbert space $\cH_\cB^{\otimes n}$. 
    Thus, to resolve this issue, not only do we need to enlarge the Hilbert space but also tilt at least two of them in all possible pairs of orthogonal directions i.e. each from the collection $\{a_{r}, b_{t}\}_{(r,t) \in [\abs{\cA}]^2}$. For each $(x^n,y^n) \in \cX^n \times \cY^n$ and $(r,t) \in [\abs{\cA}]^2$, we define the following tilted subspaces:
    \begin{align}
        \cW'_{X, (x^n,a_{r}), y^n, \delta} &:= \cM_{X,a_{r},\delta}(\cW_{X, x^n,y^n}),\label{tilted_span_hayashi_X}\\
         \cW'_{Y, x^n, (y^n,b_{t}), \delta} &:= \cM_{Y,b_{t},\delta}(\cW_{Y, x^n,y^n}),\label{tilted_span_hayashi_Y}
    \end{align}
        where $\delta \in(0,1)$ is a free parameter, which can be finetuned. 
Note that for each $x^n \in \cX^n,y^n \in \cY^n, (r,t) \in [\abs{\cA}]^2$, $\cW'_{X, (x^n,a_{r}), y^n\delta}$, 
$\cW'_{Y, x^n, (y^n,b_{t}), \delta}$ and $\cW_{x^n,y^n}$ reside inside 
$\cH_\cB^{\otimes n}  \oplus \cH_\cB^{\otimes n}  \otimes\cH_{\cA^{X}}$,  $\cH_\cB^{\otimes n}  \oplus \cH_\cB^{\otimes n}  \otimes\cH_{\cA^{Y}}$ and 
$\cH_\cB^{\otimes n}$ respectively and all three of them reside inside $\cH_{\cB'}$. For each $x^n \in \cX^n,y^n \in \cY^n, (r,t) \in [\abs{\cA}]^2$, we define the following span of subspaces:
    \begin{equation}
        \cW'_{(x^n,a_{r}),(y^n,b_{t}), \delta} := \cW'_{X, (x^n,a_{r}), y^n\delta} \bigplus \cW'_{Y, x^n, (y^n,b_{t}), \delta} \bigplus \cW_{x^n,y^n},\label{tilted_span_hayashi}
    \end{equation}
which is considered to be the span of $\cW'_{X, (x^n,a_{r}), y^n\delta}, \cW'_{Y, x^n, (y^n,b_{t}), \delta} $ and 
$ \cW_{x^n,y^n}$ (we denote $\bigplus$ as the span of subspaces). 
We consider $\Pi_{\cW'_{(x^n,a_{X_r}),(y^n,a_{Y_t}), \delta}}$ to be the orthogonal projection in $\cB'$ onto the spanned subspace 
$ \cW'_{(x^n,a_{X.r}),(y^n,a_{Y.t}), \delta}$. Thus, for each possible $(x^n,y^n) \in  \cX^n \times \cY^n, (r,t) \in [\abs{\cA}]^2$, we define the following POVM element as follows
    \begin{equation}
        \bbT^{\star}_{n,(x^n,a_{r}),(y^n,b_{t}), \delta} := \left(\bbI^{\cB'} - \Pi_{\cW'_{(x^n,a_{r}),(y^n,b_{t}), \delta}}\right) \mathbf{P}_{\cH_\cB^{\otimes n}}
        \left(\bbI^{\cB'} - \Pi_{\cW'_{(x^n,a_{r}),(y^n,b_{t}), \delta}}\right),\label{intersection_inner_povms}
    \end{equation}
    where $\mathbf{P}_{\cH_\cB^{\otimes n}}$ 
    is a orthogonal projector in $\cH_{\cB'}$ onto $\cH_\cB^{\otimes n}$. 
    We have to incorporate the collection of the POVMs $\left\{\bbT^{\star}_{n,(x^n,a_{r}),(y^n,b_{t}), \delta}\right\}$ into a single POVM in the following manner:
    \begin{equation}
        {\bbT}^{\star}_{n} := \sum_{\substack{x^n\in \cX^n, y^n \in \cY^n\\
        r \in [\abs{\cA}] ,t \in [\abs{\cA}]}}\ketbrasys{x^n,a_{r}}{X'}\otimes \ketbrasys{y^n,b_{t}}{Y'}\otimes \bbT^{\star}_{n,(x^n,a_{r}),(y^n,b_{t}),\delta}.\label{tilted_augmented_intersection_test}
    \end{equation}
\subsection{Smoothness of the tilting maps}
    For an arbitrary collection $\{\sigma^{\cB^n}_{x^n,y^n}\}_{(x^n,y^n) \in  \cX^n \times \cY^n} \subset \cD(\cH_\cB^{\otimes n})$, we consider the following marginals of the state $\sigma^{X'Y'\cB'}_{P_{X},P_{Y}, \{\sigma^{\cB^n}_{x^n,y^n}\}}$:
\begin{align}
    \sigma^{X'\cB'}_{P_{X}, \{\sigma^{\cB^n}_{x^n}\}} &:= \tr_{Y'} \left[\sigma^{X'Y'\cB'}_{P_{X},P_{Y}, \{\sigma^{\cB^n}_{x^n,y^n}\}}\right] = \frac{1}{\abs{\cA}}\sum_{\substack{x^n\in \cX^n , r \in [\abs{\cA}]}} P^{n}_{X}(x^n) \ketbrasys{x^n,a_{r}}{X'}\otimes   \sigma^{\cB'}_{(x^n,a_{r}),\delta},\label{traced_tilted_sigma_X}\\
    \sigma^{Y'\cB'}_{P_{Y}, \{\sigma^{\cB^n}_{y^n}\}} &:= \tr_{X'} \left[\sigma^{X'Y'\cB'}_{P_{X},P_{Y}, \{\sigma^{\cB^n}_{x^n,y^n}\}}\right] = \frac{1}{\abs{\cA}}\sum_{\substack{y^n\in \cY^n ,t \in [\abs{\cA}]}} P^{n}_{Y}(y^n) \ketbrasys{y^n,b_{t}}{Y'}\otimes  \sigma^{\cB'}_{(y^n,b_{t}),\delta},\label{traced_tilted_sigma_Y}\\
\sigma^{\cB'} &:= \tr_{X'Y'} \left[\sigma^{X'Y'\cB'}_{P_{X},P_{Y}, \{\sigma^{\cB^n}_{x^n,y^n}\}}\right] = \frac{1}{\abs{\cA}^2}\sum_{\substack{x^n \in \cX^n ,y^n\in \cY^n \\ r \in [\abs{\cA}], t \in [\abs{\cA}]}} P^{n}_{X}(x^n)P^{n}_{Y}(y^n) \sigma^{\cB'}_{(x^n,a_{r}),(y^n,b_{t}),\delta}\label{traced_tilted_sigma_None},
\end{align}
where for each $x^n \in \cX^n, r \in [\abs{\cA}]$ and $y^n \in \cY^n, t \in [\abs{\cA}]$ we have
\begin{align*}
    \sigma^{\cB'}_{(x^n,a_{r}),\delta} &:= \frac{1}{\abs{\cA}}\sum_{y^n, t \in [\abs{\cA}]}P^{n}_{Y}(y^n) \sigma^{\cB'}_{(x^n,a_{r}),(y^n,b_{t}),\delta}, \\
    \sigma^{\cB'}_{(y^n,b_{t}),\delta} &:= \frac{1}{\abs{\cA}}\sum_{x^n, r \in [\abs{\cA}]}P^{n}_{X}(x^n) \sigma^{\cB'}_{(x^n,a_{r}),(y^n,b_{t}),\delta},
\end{align*}
and we define the following marginals of $\rho^{X'Y'\cB'}_{P_{X},P_{Y},s^n}$:
\begin{align*}
 \rho^{X'}_{P_{X}} &:=  \tr_{Y'\cB'}\left[\rho^{X'Y'\cB'}_{P_{X},P_{Y},s^n}\right] = \frac{1}{\abs{\cA}} \sum_{\substack{x^n \in \cX^n , a_{X}}} P^{n}_{X}(x^n) \ketbrasys{x^n, a_{X}}{X'} = \rho^{\cH_X^{\otimes n}}_{P_{X}} \otimes \frac{\bbI^{\cA^{X}}}{\abs{\cA}},\\
 \rho^{Y'}_{P_{Y}} &:= \tr_{X'\cB'}\left[\rho^{X'Y'\cB'}_{P_{X},P_{Y},s^n}\right] = \frac{1}{\abs{\cA}} \sum_{\substack{y^n \in \cY^n , a_{Y}}} P^{n}_{Y}(y^n) \ketbrasys{y^n, a_{Y}}{Y'} = \rho^{\cH_Y^{\otimes n}}_{P_{Y}} \otimes \frac{\bbI^{\cA^{Y}}}{\abs{\cA}}.
\end{align*}

Now, for a statevector $\ket{h}$ in 
 $\cH_{\cB}^{\otimes n}$ and for any $(r,t) \in [\abs{\cA}]^2$, we apply the tilting map $\cM_{XY,a_{X_r},a_{Y_t},\delta}$. Then, we have the following series of equalities:

\begin{align}
    &\hspace{10pt}\frac{1}{\abs{\cA}} \sum_{t \in [\abs{\cA}]} \cM_{XY,a_{X_r},a_{Y_t},\delta}(\ketbra{h})\nn \\
    & = \frac{1}{\abs{\cA}(1 + 2\delta^2)} \sum_{t \in [\abs{\cA}]}\Big((\ket{h} \oplus \delta\ket{h}\ket{a_{r}})(\bra{h} \oplus \delta\bra{h}\bra{a_{r}}) + \delta (\ket{h} \oplus \delta\ket{h}\ket{a_{r}}) \bra{h}\bra{b_{t}} \oplus \delta\ket{h}\ket{b_{t}}(\bra{h} \oplus \delta\bra{h}\bra{a_{r}}) \nn\\
    &\hspace{100pt}\oplus\delta^2\ketbra{h}\ketbra{b_{t}} \Big)\nn\\
    &= \frac{1 + \delta^2}{1 + 2\delta^2}\cM_{X,a_{r},\delta}(\ketbra{h}) \oplus \cN_{X,a_{r},\delta}(\ketbra{h}),\label{Smooth_X}
\end{align}
where for each $r \in [\abs{\cA}]$, we define
\begin{equation}
    \cN_{X,a_{r},\delta}(\ketbra{h}) := \frac{1}{\abs{\cA}(1 + 2\delta^2)}\sum_{t \in [\abs{\cA}]}\left(\delta (\ket{h} \oplus \delta\ket{h}\ket{a_{r}}) \bra{h}\bra{b_{t}} \oplus \delta\ket{h}\ket{b_{t}}(\bra{h} \oplus \delta\bra{h}\bra{a_{r}}) \oplus \delta^2\ketbra{h}\ketbra{b_{t}} \right).\label{N_Tilt_X}
\end{equation}

Similarly, for each $t \in [\abs{\cA}]$, we define
\begin{align}
    &\cN_{Y,b_{t},\delta}(\ketbra{h}) := \frac{1}{\abs{\cA}(1 + 2\delta^2)}\sum_{r \in [\abs{\cA}]}\left(\delta (\ket{h} \oplus \delta\ket{h}\ket{b_{t}}) \bra{h}\bra{a_{r}} \oplus \delta\ket{h}\ket{a_{r}}(\bra{h} \oplus \delta\bra{h}\bra{b_{t}}) \oplus \delta^2\ketbra{h}\ketbra{a_{r}} \right),\label{N_Tilt_Y}
    \end{align}
        and
  \begin{align}
    &\cN_{\delta}(\ketbra{h}) := \frac{1}{\abs{\cA}^2(1 + 2\delta^2)}\sum_{r \in [\abs{\cA}], t \in [\abs{\cA}]}\big(\delta\ket{h}(\bra{h}\bra{a_{r}} \oplus \bra{h}\bra{b_{t}}) \oplus \delta(\ket{h}\ket{a_{r}} \oplus \ket{h}\ket{b_{t}})\bra{h}\nn\\
    &\hspace{170pt}\oplus \delta^2\ketbra{h}(\ket{a_{r}} \oplus \ket{b_{t}})(\bra{a_{r}} \oplus \bra{b_{t}})\big).\label{N_no_Tilt}
\end{align}

For each $(r,t) \in [\abs{\cA}]^2, \delta \in (0,1)$ and any $\ket{h}$ in $\cH_{\cB}^{\otimes n}$, we have
\begin{align}
    \norm{\cN_{X,a_{r},\delta}(\ketbra{h})}{\infty} &\leq \frac{3\delta}{\sqrt{\abs{\cA}}}\label{small_N_X},\\
    \norm{\cN_{Y,b_{t},\delta}(\ketbra{h})}{\infty} &\leq \frac{3\delta}{\sqrt{\abs{\cA}}}\label{small_N_Y},\\
    \norm{\cN_{\delta}(\ketbra{h})}{\infty} &\leq
\frac{7\delta}{\sqrt{A}},\label{small_N_no_Tilt}
\end{align}
where we give a proof for \cref{small_N_X,small_N_no_Tilt} in Appendices \ref{proof_small_N_X} and \ref{proof_small_N_no_tilt} respectively. The proof of \eqref{small_N_Y} follows directly from the proof of \eqref{small_N_X}. Now, from \cref{Smooth_X,N_Tilt_X,N_Tilt_Y,N_no_Tilt}, for each $r \in [\abs{\cA}]$ and $t \in [\abs{\cA}]$, and for any collection $\{\sigma^{\cB^n}_{x^n,y^n}\} \subset \cD(\cH_\cB^{\otimes n})$, we  have
\begin{align}
    \sigma^{\cB'}_{(x^n,a_{r}),\delta} &= \frac{1 + \delta^2}{1 + 2\delta^2}\cM_{X,a_{r},\delta}(\sum_{y^n \in \cY^n}P^{n}_{Y}(y^n) \sigma^{\cB^n}_{x^n,y^n}) \oplus \cN_{X,a_{r},\delta}(\sum_{y^n \in \cY^n}P^{n}_{Y}(y^n) \sigma^{\cB^n}_{x^n,y^n}),\label{Theo_Gen_smooth_X}\\
    \sigma^{\cB'}_{(y^n,b_{t}),\delta} &= \frac{1 + \delta^2}{1 + 2\delta^2}\cM_{Y,b_{t},\delta}(\sum_{x^n \in \cX^n}P^{n}_{X}(x^n) \sigma^{\cB^n}_{x^n,y^n}) \oplus \cN_{Y,b_{t},\delta}(\sum_{x^n \in \cX^n}P^{n}_{X}(x^n) \sigma^{\cB^n}_{x^n,y^n}),\label{Theo_Gen_smooth_Y}\\
    \sigma^{\cB'} &= \frac{1}{1 + 2\delta^2}(\sum_{\substack{x^n \in \cX^n\\ y^n \in \cY^n}}P^{n}_{X}(x^n)P^{n}_{Y}(y^n) \sigma^{\cB^n}_{x^n,y^n}) \oplus \cN_{\delta}(\sum_{\substack{x^n \in \cX^n\\ y^n \in \cY^n}}P^{n}_{X}(x^n)P^{n}_{Y}(y^n) \sigma^{\cB^n}_{x^n,y^n}).\label{Theo_Gen_smooth_none}
\end{align}

    Hence, from \cref{Theo_Gen_smooth_X,Theo_Gen_smooth_Y,Theo_Gen_smooth_none} and \cref{small_N_X,small_N_Y,small_N_no_Tilt}, we conclude that a quantum state $\rho \in \cD(\cB)$ can be tilted along $2$ mutually orthogonal directions and has $\abs{\cA}^2$ such choice of directions each of $\{(a_{r}, b_{t})\}_{(r,t) \in [\abs{\cA}]^2}$. Now consider
$\{a_{r}\}_{r \in [\abs{\cA}]}$ and tilt the state $\rho$ towards each elements of $\{a_{r},b_{\bar{t}}\}_{r \in [\abs{\cA}]}$ for a fixed $\bar{t} \in [\abs{\cA}]$. 
Then, if we take a uniform mixture of these tilted states, 
the resultant state will only be primarily tilted towards the direction of $b_{\bar{t}}$ in $\cH_{\cA^{Y}}$. The above phenomenon  holds for $\{b_{t}\}_{t \in [\abs{\cA}]}$ and both $\{(a_{r}, b_{t})\}_{(r,t) \in [\abs{\cA}]^2}$ as well. 
In the following section, we prove \cref{reject_Omega1_optimal_theo,accept_Omega2X_optimal_theo,accept_Omega2Y_optimal_theo,accept_Omega2XY_optimal_theo}, for the intersection projector $\bbT^{\star}_{n}$ with respect to the collections of tilted states 
$\left\{\rho^{X'Y'\cB'}_{P_{X},P_{Y},s^n}\right\}_{s^n \in \cS^n}$, $\{\rho^{Y'}_{P_{Y}} \otimes \sigma^{X'\cB'}_{P_{X}, \{\sigma^{\cB^n}_{x^n}\}}\}_{\{\sigma^{\cB^n}_{x^n}\} \subset \cD(\cH_\cB^{\otimes n})}$, $\{\rho^{X'}_{P_{X}} \otimes \sigma^{Y'\cB'}_{P_{Y}, \{\sigma^{\cB^n}_{y^n}\}}\}_{\{\sigma^{\cB^n}_{y^n}\} \subset \cD(\cH_\cB^{\otimes n})}$ and $\{\rho^{X'}_{P_{X}} \otimes\rho^{Y'}_{P_{Y}} \otimes \sigma^{\cB'}\}_{\sigma^{\cB^n} \in \cD(\cH_\cB^{\otimes n})}$, 
which are basically the tilted and augmented versions of the collections $H_0, H_{1,Y},H_{1,X},H_{1}$ mentioned in section \cref{GIT_hypotheses,Omega2states_arbit_X,Omega2states_arbit_Y,Omega2states_arbit_None}.

    \subsection{Performance analysis of ${\bbT}^{\star}_{n}$}\label{subsub4theo_gen_indep}
    Now, for any $s^n \in \cS^n$, we directly upper-bound the error probability $\tr\big[\left(\bbI^{X'Y'\cB'} - {\bbT}^{\star}_{n}\right)$ $\rho^{X'Y'\cB'}_{P_{X},P_{Y},s^n}\big]$ as follows
    \begin{align}
        &\hspace{10pt}\tr\left[\left( \bbI^{X'Y'\cB'} - {\bbT}^{\star}_{n}\right)\rho^{X'Y'\cB'}_{P_{X},P_{Y},s^n}\right]\nn\\
        &= \frac{1}{\abs{\cA}^2} \sum_{\substack{x^n\in \cX^n , y^n \in \cY^n \\ r \in [\abs{\cA}], t \in [\abs{\cA}]}}P^{n}_{X}(x^n) P^{n}_{Y}(y^n)\left(\tr\left[(\bbI^{\cB'} - \bbT^{\star}_{n,(x^n,a_{r}),(y^n,b_{t}), \delta})\rho^{\cB'}_{(x^n,a_{r}),(y^n,b_{t}),s^n,\delta}\right]\right)\nn\\
        &\overset{a}{\leq} \frac{1}{\abs{\cA}^2} \sum_{\substack{x^n\in \cX^n , y^n \in \cY^n \\ r \in [\abs{\cA}], t \in [\abs{\cA}]}}P^{n}_{X}(x^n) P^{n}_{Y}(y^n)\left(\tr\left[(\bbI^{\cB'} - \bbT^{\star}_{n,(x^n,a_{r}),(y^n,b_{t}), \delta})\rho^{\cB^n}_{x^n,y^n,s^n}\right]\right) + 4\delta\nn\\
        &\overset{b}{\leq} \frac{4}{\abs{\cA}^2} \sum_{\substack{x^n\in \cX^n , y^n \in \cY^n \\ r \in [\abs{\cA}], t \in [\abs{\cA}]}}P^{n}_{X}(x^n) P^{n}_{Y}(y^n)\left(\tr\left[(\bbI^{\cB'} - \mathbf{P}_{\cH_\cB^{\otimes n}})\rho^{\cB^n}_{x^n,y^n,s^n}\right] +\tr\left[\Pi_{W'_{(x^n,a_{r}),(y^n,b_{t}), \delta}}\rho^{\cB^n}_{x^n,y^n,s^n}\right]\right)+ 4\delta\nn\\
        &= 4\delta + \frac{4}{\abs{\cA}^2} \sum_{\substack{x^n\in \cX^n , y^n \in \cY^n \\ r \in [\abs{\cA}], t \in [\abs{\cA}]}}P^{n}_{X}(x^n) P^{n}_{Y}(y^n)\left(\tr\left[\Pi_{W'_{(x^n,a_{r}),(y^n,b_{t}), \delta}}\rho^{\cB^n}_{x^n,y^n,s^n}\right]\right)\nn\\
        &\overset{c}{=} 4\delta + \frac{4}{\abs{\cA}^2} \sum_{\substack{x^n\in \cX^n , y^n \in \cY^n \\ r \in [\abs{\cA}], t \in [\abs{\cA}]}}P^{n}_{X}(x^n) P^{n}_{Y}(y^n)\left(\tr\left[\Pi_{W'_{(x^n,a_{r}),(y^n,b_{t}), \delta}}\sum_{j=1}^{\abs{\cB}^n}\lambda_{x^n,y^n,s^n,j}\ketbrasys{h_{{x^n,y^n,s^n,j}}}{\cB^n}\right]\right)\nn\\
        &= 4\delta + \frac{4}{\abs{\cA}^2} \sum_{\substack{x^n\in \cX^n , y^n \in \cY^n \\ r \in [\abs{\cA}], t \in [\abs{\cA}]}}P^{n}_{X}(x^n) P^{n}_{Y}(y^n)\left(\sum_{j=1}^{\abs{\cB}^n}\lambda_{x^n,y^n,s^n,j}\norm{\Pi_{W'_{(x^n,a_{r}),(y^n,b_{t}), \delta}}\ket{h_{{x^n,y^n,s^n,j}}}}{2}^{2}\right)\nn\\
        &\overset{d}{\leq} 4\delta +\frac{24}{\abs{\cA}^2}\frac{1}{\delta^2}\sum_{\substack{x^n\in \cX^n , y^n \in \cY^n \\ r \in [\abs{\cA}], t \in [\abs{\cA}]}}P^{n}_{X}(x^n) P^{n}_{Y}(y^n)\sum_{j=1}^{\abs{\cB}^n}\lambda_{x^n,y^n,s^n,j}\Bigg(\norm{\Pi_{W'_{X,x^n,y^n}}\ket{h_{{x^n,y^n,s^n,j}}}}{2}^{2} + \norm{\Pi_{W'_{Y,x^n,y^n}}\ket{h_{{x^n,y^n,s^n,j}}}}{2}^{2}\nn\\
        &\hspace{80pt}+ \norm{\Pi_{W'_
        {x^n,y^n}}\ket{h_{{x^n,y^n,s^n,j}}}}{2}^{2}\Bigg)\nn\\
        &= 4\delta +\frac{24}{\abs{\cA}^2}\frac{1}{\delta^2}\sum_{\substack{x^n\in \cX^n , y^n \in \cY^n \\ r \in [\abs{\cA}], t \in [\abs{\cA}]}}P^{n}_{X}(x^n) P^{n}_{Y}(y^n)\left(\tr\left[\Pi_{W'_{X,x^n,y^n}}\sum_{j=1}^{\abs{\cB}^n}\lambda_{x^n,y^n,s^n,j}\ketbrasys{h_{{x^n,y^n,s^n,j}}}{\cB^n}\right] \right.\nn\\
        &+\left. \tr\left[\Pi_{W'_{Y,x^n,y^n}}\sum_{j=1}^{\abs{\cB}^n}\lambda_{x^n,y^n,s^n,j}\ketbrasys{h_{{x^n,y^n,s^n,j}}}{\cB^n}\right] + \tr\left[\Pi_{W'_
        {x^n,y^n}}\lambda_{x^n,y^n,s^n,j}\sum_{j=1}^{\abs{\cB}^n}\lambda_{x^n,y^n,s^n,j}\ketbrasys{h_{{x^n,y^n,s^n,j}}}{\cB^n}\right]\right)\nn\\
        &= 4\delta + \frac{24}{\abs{\cA}^2}\frac{1}{\delta^2}\sum_{\substack{x^n\in \cX^n , y^n \in \cY^n \\ r \in [\abs{\cA}], t \in [\abs{\cA}]}}P^{n}_{X}(x^n) P^{n}_{Y}(y^n)\Bigg(\left( 1 - \tr[\hat{\bbT}_{n,X,x^n,y^n}\rho^{\cB^n}_{x^n,y^n,s^n}]\right) + \left( 1 - \tr[\hat{\bbT}_{n,Y,x^n,y^n}\rho^{\cB^n}_{x^n,y^n,s^n}]\right) \nn\\
        &\hspace{60pt}+ \left( 1 - \tr[\hat{\bbT}_{n,x^n,y^n}\rho^{\cB^n}_{x^n,y^n,s^n}]\right)\Bigg)\nn\\
        &\leq 4\delta + \frac{24}{\delta^2} \Bigg(\tr\left[\left( \bbI^{X^nY^n\cB^n} - \hat{\bbT}_{n,X}\right)\rho^{X^nY^n\cB^n}_{P_{X},P_{Y},s^n}\right] + \tr\left[\left( \bbI^{X^nY^n\cB^n} - \hat{\bbT}_{n,Y}\right)\rho^{X^nY^n\cB^n}_{P_{X},P_{Y},s^n}\right] + \tr\left[\left( \bbI^{X^nY^n\cB^n} - \hat{\bbT}_{n}\right)\rho^{X^nY^n\cB^n}_{P_{X},P_{Y},s^n}\right]\Bigg)\nn\\
        &\overset{e}{\leq} 4\delta + \frac{72\gamma_n}{\delta^2},\label{VIM}
    \end{align}
    where $a$ follows from \eqref{perturbed_tilted_state_eq1} and Fact \ref{trace_norm2}, $b$ follows from Fact \ref{Gao}, in equality $c$, we assume $\rho^{\cB^n}_{x^n,y^n,s^n} := \sum_{j=1}^{\abs{\cB}^n}\lambda_{x^n,y^n,s^n,j}$ $\ketbrasys{h_{x^n,y^n,s^n,j}}{\cB^n}$.
    In inequality $e$, we define 
    
    \begin{align}
        &\gamma_n := f(n,\abs{\cX},\abs{\cY}, \abs{\cB}, \abs{\cS},t)\max\Big\{2^{nt\left(\widehat{R}_2 - \min _{Q \in \cP(\cS)} I_{1-t}[Y;\cB|X]_{P_{X},P_{Y},Q}\right)}, 2^{nt\left(\widehat{R}_1 - \min _{Q \in \cP(\cS)} I_{1-t}[X;\cB|Y]_{P_{X},P_{Y},Q}\right)},\nn\\
        &\hspace{120pt}2^{nt\left( \widehat{R}_1 + \widehat{R}_2 - \min _{Q \in \cP(\cS)} I_{1-t}[XY;\cB]_{P_{X},P_{Y},Q}\right)}
        \Big\},\label{gamma_theo_sen_mac}
    \end{align}

    and $e$ follows from \cref{lemma_achievability_mac_independent_eq1,lemma_achievability_mac_independent_eq3,lemma_achievability_mac_independent_eq5}. Inequality $d$ follows from the following discussion. Eqs.  \eqref{tilted_span_hayashi_X} to \eqref{tilted_span_hayashi} imply
    $$\cW'_{(x^n,a_{r}),(y^n,b_{t}), \delta} = \cM_{X,a_{r},\delta}(\cW_{X, x^n,y^n}) \bigplus \cM_{Y,b_{t},\delta}(\cW_{Y, x^n,y^n}) \bigplus \cW_{x^n,y^n}.$$
Thus, for any statevector $\ket{h}$ over $\cH_{\cB}^{\otimes n}$, fixing $l=2$ and $\alpha = \delta$ in \eqref{corol_sen2_eq1} of Fact \ref{corol_sen2}, we have 
\begin{equation}
    \norm{\Pi_{W'_{(x^n,a_{r}),(y^n,b_{t}), \delta}}\ket{h}}{2}^{2} \leq \frac{6}{\delta^2} \left(\norm{\Pi_{W'_{X,x^n,y^n}}\ket{h}}{2}^{2} + \norm{\Pi_{W'_{Y,x^n,y^n}}\ket{h}}{2}^{2} + \norm{\Pi_{W'_
        {x^n,y^n}}\ket{h}}{2}^{2}\right),\label{NBL}
\end{equation}
where $\Pi_{W'_{(x^n,a_{r}),(y^n,b_{t}), \delta}}, \Pi_{\cW'_{X,x^n,y^n}},\Pi_{\cW'_{Y,x^n,y^n}}$ and $\Pi_{\cW'_{x^n,y^n}}$ 
    are the orthogonal projectors on subspaces $\cW'_{(x^n,a_{r}),(y^n,b_{t}), \delta}, \cW'_{X,x^n,y^n},$ $\cW'_{Y,x^n,y^n}$, and $\cW'_{x^n,y^n}$, respectively. 
Eq. \eqref{NBL} implies Step $d$.
        
Now, for any arbitrary $\delta \in (0,1)$, for ($\widehat{R}_1,\widehat{R}_2$) satisfying \cref{hatR_1c,hatR_2c,hatSUMc}, we choose $n$ large enough  such that $\gamma_n \leq \delta^3$. 
Using \eqref{VIM}, we have
\begin{equation}
   \tr\left[ {\bbT}^{\star}_{n}\left(\rho^{X'Y'\cB'}_{P_{X},P_{Y},s^n}\right)\right] \geq 1 - 76\delta.\label{reject_Omega1_optimal}
\end{equation}
For each $(r,t) \in [\abs{\cA}]^2$ and for any collection $\{\sigma^{\cB^n}_{x^n,y^n}\} \subset \cD(\cH_\cB^{\otimes n})$, we upper-bound the quantity $\tr\left[\bbT^{\star}_{n,(x^n,a_{r}),(y^n,b_{t}), \delta}\right.$ $\left.\big(\sigma^{\cB'}_{(x^n,a_{X,r
}),\delta}\big)\right]$ as follows
\begin{align}
    &\hspace{10pt}\tr\left[\bbT^{\star}_{n,(x^n,a_{r}),(y^n,b_{t}), \delta}\big(\sigma^{\cB'}_{(x^n,a_{r}),\delta}\big)\right] \nn\\
    &\overset{a}{=} \tr\left[\bbT^{\star}_{n,(x^n,a_{r}),(y^n,b_{t}), \delta}\left(\frac{1 + \delta^2}{1 + 2\delta^2}\cM_{X,a_{r},\delta}(\sum_{y^n \in \cY^n}P^{n}_{Y}(y^n) \sigma^{\cB^n}_{x^n,y^n}) \oplus \cN_{X,a_{r},\delta}(\sum_{y^n \in \cY^n}P^{n}_{Y}(y^n) \sigma^{\cB^n}_{x^n,y^n})\right)\right]\nn\\
    &\overset{b}{\leq} \tr\left[\bbT^{\star}_{n,(x^n,a_{r}),(y^n,b_{t}), \delta}\left(\cM_{X,a_{r},\delta}(\sum_{y^n \in \cY^n}P^{n}_{Y}(y^n) \sigma^{\cB^n}_{x^n,y^n})\right)\right]\nn\\
    &\hspace{55pt}+ \norm{\bbT^{\star}_{n,(x^n,a_{r}),(y^n,b_{t}), \delta}}{1}.\norm{\cN_{X,a_{r},\delta}(\sum_{y^n \in \cY^n}P^{n}_{Y}(y^n) \sigma^{\cB^n}_{x^n,y^n})}{\infty} \nn\\
     &\overset{c}{\leq} \tr\left[\bbT^{\star}_{n,(x^n,a_{r}),(y^n,b_{t}), \delta}\left(\cM_{X,a_{r},\delta}(\sum_{y^n \in \cY^n}P^{n}_{Y}(y^n) \sigma^{\cB^n}_{x^n,y^n})\right)\right] + \abs{\cB}^n \norm{\cN_{X,a_{r},\delta}(\sum_{y^n \in \cY^n}P^{n}_{Y}(y^n) \sigma^{\cB^n}_{x^n,y^n})}{\infty} \nn\\
    &\overset{d}{\leq}\tr\left[\left(\bbI^{\cB'} - \Pi_{\cW'_{(x^n,a_{r}),(y^n,b_{t}), \delta}}\right)\left(\cM_{X,a_{r},\delta}(\sum_{y^n \in \cY^n}P^{n}_{Y}(y^n) \sigma^{\cB^n}_{x^n,y^n})\right)\right] + \frac{3\delta\abs{\cB}^n}{\sqrt{\abs{\cA}}}\nn\\
    &\overset{e}{\leq} \tr\left[\left(\bbI^{\cB'} - \bbT^{\cB'}_{\cW'_{X,(x^n,a_{r}), y^n,\delta}}\right)\left(\cM_{X,a_{r},\delta}(\sum_{y^n \in \cY^n}P^{n}_{Y}(y^n) \sigma^{\cB^n}_{x^n,y^n})\right)\right] + \frac{3\delta\abs{\cB}^n}{\sqrt{\abs{\cA}}}\nn\\
    &\overset{f}{=} \tr\left[\left(\bbI^{\cB'} - \bbT^{\cB'}_{\cW'_{X,(x^n,a_{r}), y^n,\delta}}\right)\mathbf{P}_{\cM_{X,a_{r},\delta}(\cH_{\cB}^{\otimes n})}\left(\cM_{X,a_{r},\delta}(\sum_{y^n \in \cY^n}P^{n}_{Y}(y^n) \sigma^{\cB^n}_{x^n,y^n})\right)\right] + \frac{3\delta\abs{\cB}^n}{\sqrt{\abs{\cA}}}\nn\\
    &\overset{g}{=} \tr\left[\left(\bbI^{\cM_{X,a_{r},\delta}(\cB')} - \bbT^{\cM_{X,a_{r},\delta}(\cB^n)}_{\cM_{X,a_{r},\delta}(\cW_{X, x^n,y^n})}\right)\left(\cM_{X,a_{r},\delta}(\sum_{y^n \in \cY^n}P^{n}_{Y}(y^n) \sigma^{\cB^n}_{x^n,y^n})\right)\right] + \frac{3\delta\abs{\cB}^n}{\sqrt{\abs{\cA}}}\nn\\
    &\overset{h}{=} \tr\left[\left(\bbI^{\cB^n} - \bbT^{\cB^n}_{\cW_{X, x^n,y^n}}\right)\left(\sum_{y^n \in \cY^n}P^{n}_{Y}(y^n) \sigma^{\cB^n}_{x^n,y^n}\right)\right] + \frac{3\delta\abs{\cB}^n}{\sqrt{\abs{\cA}}}\nn\\
    &= \tr\left[\hat{\bbT}_{n,X,x^n,y^n}\left(\sum_{y^n \in \cY^n}P^{n}_{Y}(y^n) \sigma^{\cB^n}_{x^n,y^n}\right)\right] + \frac{3\delta\abs{\cB}^n}{\sqrt{\abs{\cA}}}\label{accept_Omega2X_inner},
\end{align}
where $a$ follows from \eqref{Theo_Gen_smooth_X}, 
$b$ follows from Fact \ref{trace_ineq}, 
$c$ follows from the fact that 
$$\norm{\bbT^{\star}_{n,(x^n,a_{r}),(y^n,b_{t}), \delta}}{1} \leq  
\norm{\left(\bbI^{\cB'} - \Pi_{\cW'_{(x^n,a_{r}),(y^n,b_{t}), \delta}}\right)}{\infty}^{2} \norm{\mathbf{P}_{\cH_\cB^{\otimes n}}}{1} \leq \abs{\cB}^n,$$
where the first inequality follows from Fact \ref{trace_ineq}.
 Inequality $d$ follows from \eqref{small_N_X}, 
 $e$ follows from the fact that 
 $\cW'_{X,(x^n,a_{r}), y^n,\delta}$ resides inside 
 $\cW'_{(x^n,a_{r}),(y^n,b_{t}), \delta}$ which follows from \eqref{tilted_span_hayashi}, 
 in equality $f$, $\mathbf{P}_{\cM_{X,a_{r},\delta}(\cH_{\cB}^{\otimes n})}$ is the orthogonal projector in $\cH_{\cB'}$ onto $\cM_{X,a_{r},\delta}(\cH_\cB^{\otimes n})$, 
 $g$ follows from $\cW'_{X,(x^n,a_{r}), y^n,\delta} = \cM_{X,a_{r},\delta}(\cW_{X,x^n,y^n})$ $ \preceq \cM_{X,a_{r},\delta}(\cH_\cB^{\otimes n})$, 
 and $h$ follows from the fact that $\cM_{X,a_{r},\delta}$ is an isometry.

Now, from \eqref{accept_Omega2X_inner}, for any collection $\{\sigma^{\cB^n}_{x^n}\} \subset \cD(\cH_\cB^{\otimes n})$,  we upper-bound the quantity $\tr\left[{\bbT}^{\star}_{n}\left(\rho^{Y'}_{P_{Y}} \otimes \sigma^{X'\cB'}_{P_{X}, \{\sigma^{\cB^n}_{x^n}\}} \right)\right]$ as follows
\begin{align}
    \hspace{10pt}\tr\left[{\bbT}^{\star}_{n}\left(\rho^{Y'}_{P_{Y}} \otimes \sigma^{X'\cB'}_{P_{X}, \{\sigma^{\cB^n}_{x^n}\}} \right)\right]
    &= \frac{1}{\abs{\cA}^2}\sum_{\substack{x^n \in \cX^n , y^n\in \cY^n \\ r \in [\abs{\cA}] , t \in  [\abs{\cA}]}}P^{n}_{X}(x^n)P^{n}_{Y}(y^n)\tr\left[\bbT^{\star}_{n,(x^n,a_{r}),(y^n,b_{t}), \delta}\big(\sigma^{\cB'}_{(x^n,a_{r}),\delta}\big)\right]\nn\\
    &\leq  \sum_{\substack{x^n \in \cX^n \\ y^n\in \cY^n }}P^{n}_{X}(x^n)P^{n}_{Y}(y^n)\tr\left[\hat{\bbT}_{n,X,x^n,y^n}\left(\sum_{y^n \in \cY^n}P^{n}_{Y}(y^n) \sigma^{\cB^n}_{x^n,y^n}\right)\right] + \frac{3\delta\abs{\cB}^n}{\sqrt{\abs{\cA}}}\nn\\
    &= \tr\left[\hat{\bbT}_{n,X}\left({\rho^{\cH_Y^{\otimes n}}_{P_Y}} \otimes \sigma^{X^n\cB^n}_{P_{X}, \{\sigma^{\cB^n}_{x^n}\}}\right)\right] + \frac{3\delta\abs{\cB}^n}{\sqrt{\abs{\cA}}}\nn\\
&\overset{a}{\leq} g_1(n,\abs{\cX},\abs{\cB})2^{-n\widehat{R}_2}+\frac{3\delta\abs{\cB}^n}{\sqrt{\abs{\cA}}},\label{accept_Omega2X}
\end{align}
where $a$ follows from \eqref{lemma_achievability_mac_independent_eq2}. Similar to \eqref{accept_Omega2X}, we get the following:
\begin{align}
\tr\left[{\bbT}^{\star}_{n}\left(\rho^{X'}_{P_{X}} \otimes \sigma^{Y'\cB'}_{P_{Y}, \{\sigma^{\cB^n}_{y^n}\}} \right)\right] \leq g_2(n,\abs{\cY},\abs{\cB})2^{-n\widehat{R}_1}+\frac{3\delta\abs{\cB}^n}{\sqrt{\abs{\cA}}},\label{accept_Omega2Y}\\
\tr\left[{\bbT}^{\star}_{n}\left(\rho^{X'}_{P_{X}} \otimes\rho^{Y'}_{P_{Y}} \otimes \sigma^{\cB'} \right)\right] \leq g_3(n,\abs{\cB})2^{-n(\widehat{R}_1+\widehat{R}_2)}+\frac{7\delta\abs{\cB}^n}{\sqrt{\abs{\cA}}}.\label{accept_Omega2XY}
\end{align}

We now choose $\abs{\cA}$ large enough such that 
\begin{equation*}
    \frac{7\delta\abs{\cB}^n}{\sqrt{\abs{\cA}}} \leq \min\big\{g_1(n,\abs{\cX},\abs{\cB})2^{-n\widehat{R}_2}, g_2(n,\abs{\cY},\abs{\cB})2^{-n\widehat{R}_1}, g_3(n,\abs{\cB})2^{-n(\widehat{R}_1+\widehat{R}_2)}\big\}.
\end{equation*}
Then, we have the following inequalities:
\begin{align}
    \tr\left[{\bbT}^{\star}_{n}\left(\rho^{Y'}_{P_{Y}} \otimes \sigma^{X'\cB'}_{P_{X}, \{\sigma^{\cB^n}_{x^n}\}} \right)\right] \leq g_1(n,\abs{\cX},\abs{\cB})2^{-n\widehat{R}_2+1},\label{accept_Omega2X_optimal}\\
    \tr\left[{\bbT}^{\star}_{n}\left(\rho^{X'}_{P_{X}} \otimes \sigma^{Y'\cB'}_{P_{Y}, \{\sigma^{\cB^n}_{y^n}\}} \right)\right] \leq g_2(n,\abs{\cY},\abs{\cB})2^{-n\widehat{R}_1+1},\label{accept_Omega2Y_optimal}\\
\tr\left[{\bbT}^{\star}_{n}\left(\rho^{X'}_{P_{X}} \otimes\rho^{Y'}_{P_{Y}} \otimes \sigma^{\cB'} \right)\right] \leq g_3(n,\abs{\cB})2^{-n(\widehat{R}_1+\widehat{R}_2) +1}.\label{accept_Omega2XY_optimal}
\end{align}
Hence, inequality \cref{reject_Omega1_optimal,accept_Omega2X_optimal,accept_Omega2Y_optimal,accept_Omega2XY_optimal} prove \cref{reject_Omega1_optimal_theo,accept_Omega2X_optimal_theo,accept_Omega2Y_optimal_theo,accept_Omega2XY_optimal_theo} of Lemma \ref{theorem_sen_MAC_generalised_indep}.

\subsection{Proof of Lemma \ref{claim_small_perturbation}}\label{subsec_small_perturbation}
    Let us consider $\rho := \sum_{j \in [\abs{\cB}^n]} \lambda_j \ketbra{h_j}$, where $\{\ket{h_j}\}_{j\in\abs{\cB}^n}$ is an orthonormal eigen basis of $\cH_{\cB}^{\otimes n}$ and $\{ \lambda_j\}_{j\in\abs{\cB}^n}$ is a distribution. 
Applying the map $\cM_{XY,a_{r},b_{t},\delta}$ on $\rho$ with any $(r,t) \in [\abs{\cA}]^2$, we get the following operator:
    \begin{align*}
        \cM_{XY,a_{r},b_{t},\delta}\left(\rho\right) &= \frac{1}{1 + 2\delta^2}\sum_{j \in [\abs{\cB}^n]}\lambda_j \left(\ket{h_j} \oplus \delta\ket{h_j}\ket{a_{r}} \oplus \delta\ket{h_j}\ket{b_{t}}\right)(\left\bra{h_j} \oplus \delta\bra{h_j}\bra{a_{r}} \oplus \delta\bra{h_j}\bra{b_{t}}\right)\\
        &= \frac{1}{1 + 2\delta^2}\sum_{j \in [\abs{\cB}^n]}\lambda_j\left[\begin{array}{c c c }
        \ketbra{h_j} & \delta\ket{h_j}\bra{h_j}\bra{a_{r}} &  \delta\ket{h_j}\bra{h_j}\bra{b_{t}}\\
        \delta\ket{h_j}\ket{a_{r}}\bra{h_j} &  \delta^2\ket{h_j}\ket{a_{r}}\bra{h_j}\bra{a_{r}} & \delta^2\ket{h_j}\ket{a_{r}}\bra{h_j}\bra{b_{t}}\\ 
        \delta\ket{h_j}\ket{b_{t}}\bra{h_j} & \delta^2\ket{h_j}\ket{b_{t}}\bra{h_j}\bra{a_{r}}  & \delta^2\ket{h_j}\ket{b_{t}}\bra{h_j}\bra{b_{t}}
    \end{array}\right].
    \end{align*} 
    
    Now for each $j\in [\abs{\cB}^n]$, we have    
    \begin{align*}
        &\hspace{12pt}\norm{\cM_{XY,a_{r},b_{t},\delta}\left(\ketbra{h_j}\right) - \left[\begin{array}{cc}
      \ketbra{h_j}   &  \left[\mathbf{0}\right]_{\abs{\cB}^n \times 2\abs{\cA}\abs{\cB}^n}\\
      \left[\mathbf{0}\right]_{ 2\abs{\cA}\abs{\cB}^n \times \abs{\cB}^n}   & \left[\mathbf{0}\right]_{ 2\abs{\cA}\abs{\cB}^n \times  2\abs{\cA}\abs{\cB}^n}
    \end{array}\right]_{\abs{\cB'} \times \abs{\cB'}}}{1}\\
    &= \norm{\left[\begin{array}{c c c }
        \left[ \mathbf{0} \right]_{\abs{\cB} \times \abs{\cB}} & \delta\ket{h_j}\bra{h_j}\bra{a_{r}} &  \delta\ket{h_j}\bra{h_j}\bra{b_{t}}\\
        \delta\ket{h_j}\ket{a_{r}}\bra{h_j} &  \delta^2\ket{h_j}\ket{a_{r}}\bra{h_j}\bra{a_{r}} & \delta^2\ket{h_j}\ket{a_{r}}\bra{h_j}\bra{b_{t}}\\ 
        \delta\ket{h_j}\ket{b_{t}}\bra{h_j} & \delta^2\ket{h_j}\ket{b_{t}}\bra{h_j}\bra{a_{r}}  & \delta^2\ket{h_j}\ket{b_{t}}\bra{h_j}\bra{b_{t}}
    \end{array}\right]}{1}\\
    &\leq 2\delta\norm{\ket{h_j}\bra{h_j}\bra{a_{r}}} 
 {1} +  2\delta\norm{\ket{h_j}\bra{h_j}\bra{b_{t}}}{1} + \delta^2(\norm{\ket{h_j}\ket{a_{r}}\bra{h_j}\bra{a_{r}}}{1} + \norm{\ket{h_j}\ket{b_{t}}\bra{h_j}\bra{b_{t}}}{1} + 2\norm{\ket{h_j}\ket{b_{t}}\bra{h_j}\bra{a_{r}}}{1}) \\
    &= 4\delta + 4\delta^2
    \leq 8\delta.
    \end{align*}
    Thus, we have    
    \begin{align*}
    \norm{\widehat{\rho}_{a_{r},b_{t},\delta}^{\cB'} - \Tilde{\rho}^{\cB'}}{1} &\leq \sum_{j=1}^{\abs{\cB}^n} \lambda_{j} \norm{\cM_{XY,a_{r},b_{t},\delta}\left(\ketbra{h_j}\right) - \left[\begin{array}{cc}
      \ketbra{h_j}   &  \left[\mathbf{0}\right]_{\abs{\cB}^n \times 2\abs{\cA}\abs{\cB}^n}\\
      \left[\mathbf{0}\right]_{ 2\abs{\cA}\abs{\cB}^n \times \abs{\cB}^n}   & \left[\mathbf{0}\right]_{ 2\abs{\cA}\abs{\cB}^n \times  2\abs{\cA}\abs{\cB}^n}
    \end{array}\right]_{\abs{\cB'} \times \abs{\cB'}}}{1}\\
    &\leq 8\delta.
    \end{align*}
\endproof

\section{Proof of Lemma \ref{lemma_cq_mac_avht_independent}}\label{proof_lemma_cq_mac_avht_independent}
For an $\left(\bar{x}^n,\bar{y}^n\right) := ((\bar{x}_1,\bar{y}_1),(\bar{x}_2, \bar{y}_2),\cdots,(\bar{x}_n,\bar{y}_n))\in \cX^n\times\cY^n$ which has a form $\left(\bar{x}^n,\bar{y}^n\right) := \Big(\underbrace{(1,1),\cdots,(1,1)}_{m_{1,1}},\underbrace{(1,2),\cdots,(1,2)}_{m_{1,2}},$\\$\cdots,\underbrace{(\abs{\cX},\abs{\cY}),\cdots,(\abs{\cX},\abs{\cY})}_{m_{\abs{\cX},\abs{\cY}}}\Big)$, where $\forall i \in [\abs{\cX}], j \in [\abs{\cY}], m_{i,j} \geq 0$ and $\sum_{\substack{i \in [\abs{\cX}] \\ j \in[ \abs{\cY}]}}m_{i,j} = n$. Given $\left(\bar{x}^n,\bar{y}^n\right)$, we consider the following state:
\begin{equation}
    \widehat{\rho}^{\cB^n}_{\bar{x}^n,\bar{y}^n} := \rho^{\cB^{m_{1,1}}}_{U,m_{1,1}} \otimes \rho^{\cB^{m_{1,2}}}_{U,m_{1,2}} \otimes \cdots \otimes \rho^{\cB^{m_{\abs{\cX},\abs{\cY}}}}
    _{U,m_{\abs{\cX},\abs{\cY}}},\label{rhohatxbarnybarn}
\end{equation}
where $\forall i \in [\abs{\cX}], j \in [\abs{\cY}], \rho^{\cB^{m_{i,j}}}
    _{U,m_{i,j}}$ is defined in \eqref{uniform_n_state}. Now for any general pair $(x^n,y^n) \in \cX^n\times\cY^n$, which is a permutation of $(\bar{x}^n,\bar{y}^n)$ i.e. $x^n  = \pi(\bar{x}^n) = (\bar{x}_{\pi^{-1}(1)},\bar{x}_{\pi^{-1}(2)},\cdots,\bar{x}_{\pi^{-1}(n)})$ and $y^n  = \pi(\bar{y}^n) = (\bar{y}_{\pi^{-1}(1)},\bar{y}_{\pi^{-1}(2)},\cdots,\bar{y}_{\pi^{-1}(n)})$, for some $\pi \in S_n$, we define $\hat{\rho}^{\cB^n}_{x^n,y^n}$ as follows
\begin{equation}
    \widehat{\rho}^{\cB^n}_{x^n,y^n} := V^{\cB^n}(\pi) \left(\widehat{\rho}^{\cB^n}_{\bar{x}^n,\bar{y}^n}\right){V^{\cB^n}}^{\dagger}(\pi).\label{rhohatxnyn}
\end{equation}

We define $\widehat{\rho}^{\cB^n}_{y^n}$ for any $y^n \in \cY^n$ in an analogous way to how  $\widehat{\rho}^{\cB^n}_{x^n}$ is defined for any $x^n \in \cX^n$ in \eqref{rhohatxn}. We now define the following states:
\begin{align*}
    \widehat{\rho}^{X^nY^n\cB^n}_{U,P_X,P_Y} &:= \sum_{\substack{x^n \in \cX^n \\ y^n \in \cY^n}}P^{n}_X(x^n) \ketbrasys{x^n}{X^n} \otimes P^{n}_Y(y^n) \ketbrasys{y^n}{Y^n} \otimes \widehat{\rho}^{\cB^n}_{x^n,y^n},\\
    \widehat{\rho}^{X^n\cB^n}_{U,P_X} &:= \sum_{\substack{x^n \in \cX^n}}P^{n}_X(x^n) \ketbrasys{x^n}{X^n} \otimes \widehat{\rho}^{\cB^n}_{x^n},\\
    \widehat{\rho}^{Y^n\cB^n}_{U,P_Y} &:= \sum_{\substack{y^n \in \cY^n}}P^{n}_Y(y^n) \ketbrasys{y^n}{Y^n} \otimes \widehat{\rho}^{\cB^n}_{y^n}.
\end{align*}
For $\widehat{R}_1,\widehat{R}_2>0$, we define three projective measurement-based tests $\hat{\bbT}_{n,X},\hat{\bbT}_{n,Y}$ and $\hat{\bbT}_{n}$ as follows
\begin{align}
    \hat{\bbT}_{n,X} &:= \sum_{\substack{x^n \in \cX^n \\ y^n \in \cY^n}} \ketbrasys{x^n}{X^n} \otimes \ketbrasys{y^n}{Y^n} \otimes \left\{\widehat{\rho}^{\cB^n}_{x^n,y^n} \succeq 2^{n\widehat{R}_2}\widehat{\rho}^{\cB^n}_{x^n}\right\} \triangleq \left\{\widehat{\rho}^{X^nY^n\cB^n}_{U,P_X,P_Y} \succeq 2^{n\widehat{R}_2} \left({\rho^{\cH_Y^{\otimes n}}_{P_Y}} \otimes \rho^{X^n\cB^n}_{U,P_X}\right)\right\}\label{universal_test_X},\\
    \hat{\bbT}_{n,Y} &:= \sum_{\substack{x^n \in \cX^n \\ y^n \in \cY^n}} \ketbrasys{x^n}{X^n} \otimes \ketbrasys{y^n}{Y^n} \otimes \left\{\widehat{\rho}^{\cB^n}_{x^n,y^n} \succeq 2^{n\widehat{R}_1}\widehat{\rho}^{\cB^n}_{y^n}\right\} \triangleq \left\{\widehat{\rho}^{X^nY^n\cB^n}_{U,P_X,P_Y} \succeq 2^{n\widehat{R}_1} \left({\rho^{\cH_X^{\otimes n}}_{P_X}} \otimes \rho^{Y^n\cB^n}_{U,P_Y}\right)\right\}\label{universal_test_Y},\\
    \hat{\bbT}_{n} &:= \sum_{\substack{x^n \in \cX^n \\ y^n \in \cY^n}} \ketbrasys{x^n}{X^n} \otimes \ketbrasys{y^n}{Y^n} \otimes \left\{\widehat{\rho}^{\cB^n}_{x^n,y^n} \succeq 2^{n(\widehat{R}_1+\widehat{R}_2)}\rho^{\cB^n}_{U,n}\right\} \triangleq \left\{\widehat{\rho}^{X^nY^n\cB^n}_{U,P_X,P_Y} \succeq 2^{n(\widehat{R}_1+\widehat{R}_2)} \left({\rho^{\cH_X^{\otimes n}}_{P_X}} \otimes {\rho^{\cH_Y^{\otimes n}}_{P_Y}} \otimes \rho^{\cB^n}_{U,n}\right)\right\}\label{universal_test_None}.
\end{align}

    Since the states $\widehat{\rho}^{X^nY^n\cB^n}_{U,P_X,P_Y}, {\rho^{\cH_Y^{\otimes n}}_{P_Y}} \otimes \rho^{X^n\cB^n}_{U,P_X}, {\rho^{\cH_X^{\otimes n}}_{P_X}} \otimes \rho^{Y^n\cB^n}_{U,P_Y}$, 
    and $\rho^{\cH_X^{\otimes n}}_{P_X} \otimes \rho^{\cH_Y^{\otimes n}}_{P_Y} \otimes \rho^{\cB^n}_{U,n}$ are permutation invariant, 
    the matrices $\hat{\bbT}_{n,X}, \hat{\bbT}_{n,Y}$, and $\hat{\bbT}_{n}$ are permutation invariant. It is also important to note that like $\bbT_{n}$ defined in \eqref{universal_test}, the expressions of $\hat{\bbT}_{n,X}, \hat{\bbT}_{n,Y}$ and $\hat{\bbT}_{n}$ are independent of the distribution $P_X,P_Y$. Now, for any $s^n \in \cS^n$, we upper-bound the quantity $ \tr\left[\left( \bbI^{X^nY^n\cB^n} - \hat{\bbT}_{n,X}\right)\rho^{X^nY^n\cB^n}_{P_X,P_Y,s^n}\right]$ as follows
\begin{align}
     &\tr\left[\left( \bbI^{X^nY^n\cB^n} - \hat{\bbT}_{n,X}\right)\rho^{X^nY^n\cB^n}_{P_X,P_Y,s^n}\right]\nn\\ &\overset{a}{=}  \tr\left[\left( \bbI^{X^nY^n\cB^n} - \hat{\bbT}_{n,X}\right)\frac{1}{\abs{T_{Q_{s^n}}}} \sum_{\pi \in S_n}\rho^{X^nY^n\cB^n}_{P_X,P_Y,\pi(s^n)}\right]\nn\\
     &= \tr\left[\left( \bbI^{X^nY^n\cB^n} - \hat{\bbT}_{n,X}\right)\frac{1}{\abs{T_{Q_{s^n}}}}\sum_{\widehat{s}^{n} \in T_{Q_{s^n}}} \rho^{X^nY^n\cB^n}_{P_X,P_Y,\widehat{s}^n}\right]\nn\\
     &\overset{b}{\leq} (n+1)^{\abs{\cS}-1}\tr\left[\left( \bbI^{X^nY^n\cB^n} - \hat{\bbT}_{n,X}\right)\sum_{\widehat{s}^{n} \in T_{Q_{s^n}}} Q^{n}_{s^n}(\widehat{s}^n)\rho^{X^nY^n\cB^n}_{P_X,P_Y,\widehat{s}^n}\right]\nn\\
     &\leq (n+1)^{\abs{\cS}-1}\tr\left[\left( \bbI^{X^nY^n\cB^n} - \hat{\bbT}_{n,X}\right)\sum_{\widehat{s}^{n} \in \cS^n} Q^{n}_{s^n}(\widehat{s}^n)\rho^{X^nY^n\cB^n}_{P_X,P_Y,\widehat{s}^n}\right]\nn\\
     &\overset{c}{=} (n+1)^{\abs{\cS}-1}\tr\left[\left( \bbI^{X^nY^n\cB^n} - \hat{\bbT}_{n,X}\right)\left(\rho^{{XY\cB}}_{P_X,P_Y,Q_{s^n}}\right)^{\otimes n}
     \right]\nn\\
     &= (n+1)^{\abs{\cS}-1}\tr\left[\left( \bbI^{X^nY^n\cB^n} - \hat{\bbT}_{n,X}\right)\sum_{\substack{x^n \in \cX^n \\ y^n \in \cY^n}}P^{n}_X(x^n)\ketbrasys{x^n}{X^n} \otimes P^{n}_Y(y^n)\ketbrasys{y^n}{Y^n} \otimes \left(\bigotimes_{i=1}^{n}\rho^{\cB}_{Q_{s^n},x_i,y_i}\right)
     \right]\nn\\
     &\overset{d}{=}(n+1)^{\abs{\cS}-1}\sum_{x^n \in \cX^n}P^{n}_X(x^n)\tr\left[ \sum_{y^n \in \cY^n}P^{n}_Y(y^n)\left(\bigotimes_{i=1}^{n}\rho^{\cB}_{Q_{s^n},x_i,y_i}\right)\left\{\widehat{\rho}^{\cB^n}_{x^n,y^n} \prec 2^{n\widehat{R}_2}\widehat{\rho}^{\cB^n}_{x^n}\right\}
     \right]\nn\\
     &\overset{e}{\leq}(n+1)^{\abs{\cS}-1}2^{nt \widehat{R}_2 }\sum_{x^n \in \cX^n}P^{n}_X(x^n)\tr\left[ \sum_{y^n \in \cY^n}P^{n}_Y(y^n)\left(\bigotimes_{i=1}^{n}\rho^{\cB}_{Q_{s^n},x_i,y_i}\right)(\widehat{\rho}^{\cB^n}_{x^n,y^n})^{-t}(\widehat{\rho}^{\cB^n}_{x^n})^{t}
     \right]\nn\\
     &\overset{f}{\leq}(n+1)^{\abs{\cS}-1}2^{nt \widehat{R}_2}n^{\frac{t\abs{\cX}\abs{\cY}(\abs{\cB}-1)\abs{\cB}}{2}}\abs{\Lambda^{n}_{\abs\cB}}^{t \abs{\cX}\abs{\cY}}\sum_{x^n \in \cX^n}P^{n}_X(x^n)\tr\left[ \sum_{y^n \in \cY^n}P^{n}_Y(y^n) \left(\bigotimes_{i=1}^{n}\rho^{\cB}_{Q_{s^n},x_i,y_i}\right)^{1-t}(\widehat{\rho}^{\cB^n}_{x^n})^{t}
     \right]\nn\\
     &\overset{g}{\leq} (n+1)^{\abs{\cS}-1 + \frac{t\abs{\cX}\abs{\cY}(\abs{\cB}-1)\abs{\cB}}{2} + (\abs{\cB} - 1)t \abs{\cX}\abs{\cY}}2^{nt \widehat{R}_2}\sum_{x^n \in \cX^n}P^{n}_X(x^n)\tr\left[ \bigotimes_{i=1}^{n}\left(\sum_{y \in \cY}P_Y(y)\left(\rho^{\cB}_{Q_{s^n},x_i,y}\right)^{1-t}\right)(\widehat{\rho}^{\cB^n}_{x^n})^{t}
     \right]\nn\\
     &\overset{h}{\leq} (n+1)^{\abs{\cS}-1 + \frac{t\abs{\cX}\abs{\cY}(\abs{\cB}-1)\abs{\cB}}{2} + (\abs{\cB} - 1)t \abs{\cX}\abs{\cY}}2^{nt \widehat{R}_2}\sum_{x^n \in \cX^n}P^{n}_X(x^n)\left(\tr\left[ \left(\bigotimes_{i=1}^{n}\left(\sum_{y \in \cY}P_Y(y)\left(\rho^{\cB}_{Q_{s^n},x_i,y}\right)^{1-t}\right)\right)^{\frac{1}{1 - t}}
     \right]\right)^{1-t}\nn\\
     &=(n+1)^{\abs{\cS}-1 + \frac{t\abs{\cX}\abs{\cY}(\abs{\cB}-1)\abs{\cB}}{2} + (\abs{\cB} - 1)t \abs{\cX}\abs{\cY}}2^{nt \widehat{R}_2}\sum_{x^n \in \cX^n}P^{n}_X(x^n)\prod_{i=1}^{n}\left(\tr\left[ \left(\sum_{y \in \cY}P_Y(y)\left(\rho^{\cB}_{Q_{s^n},x_i,y}\right)^{1-t}\right)^{\frac{1}{1 - t}}
     \right]\right)^{1-t}\nn\\
     &= (n+1)^{\abs{\cS}-1 + \frac{t\abs{\cX}\abs{\cY}(\abs{\cB}-1)\abs{\cB}}{2} + (\abs{\cB} - 1)t \abs{\cX}\abs{\cY}}2^{nt \widehat{R}_2}\left(\sum_{x \in \cX}P_X(x)\left(\tr\left[ \left(\sum_{y \in \cY}P_Y(y)\left(\rho^{\cB}_{Q_{s^n},x,y}\right)^{1-t}\right)^{\frac{1}{1 - t}}
     \right]\right)^{1-t}\right)^{n}\nn\\
     &\overset{i}{=} (n+1)^{\abs{\cS}-1 + \frac{t\abs{\cX}\abs{\cY}(\abs{\cB}-1)\abs{\cB}}{2} + (\abs{\cB} - 1)t \abs{\cX}\abs{\cY}}2^{nt( \widehat{R}_2 - I_{1-t}[Y;\cB|X]_{P_{X},P_{Y},Q_{s^n}})}\nn\\
     &{\leq}(n+1)^{\abs{\cS}-1 + \frac{t\abs{\cX}\abs{\cY}(\abs{\cB}-1)\abs{\cB}}{2} + (\abs{\cB} - 1)t \abs{\cX}\abs{\cY}}2^{nt( \widehat{R}_2 - \min_{Q \in \cP(\cS)}I_{1-t}[Y;\cB|X]_{P_{X},P_{Y},Q})}\nn\\
     &{\leq}f(n,\abs{\cX},\abs{\cY}, \abs{\cB}, \abs{\cS},t)2^{nt( \widehat{R}_2 - \min_{Q \in \cP(\cS)}I_{1-t}[Y;\cB|X]_{P_{X},P_{Y},Q})},\label{CABT2}
\end{align}
where $a$ follows from the fact that $\hat{\bbT}_{n,X}$ is permutation invariant, $b$ follows from \eqref{inv_type_set_size_ub}, $c$ follows from \eqref{rhoPXPYQ_1}, $d$ follows from the definition of $\hat{\bbT}_{n,X}$ in \eqref{universal_test_X}, 
$f$ follows from arguments similar to prove Lemma \ref{claim_f}, 
$g$ follows from Fact \ref{fact_type_size_ub}, $h$ follows from \eqref{Holder_corllary_eq} of Fact \ref{Holder_corllary}, and $i$ follows from \eqref{sibsonqmiY;XB_1}.
Here, the derivation of $e$ is given as follows.
Since Lemma \ref{lemma_commutativity} guarantees that 
$ \widehat{\rho}^{\cB^n}_{x^n,y^n}$ commutes with $\widehat{\rho}^{\cB^n}_{x^n}$, 
we have $\left\{\widehat{\rho}^{\cB^n}_{x^n,y^n} - 2^{n\widehat{R}_2}\widehat{\rho}^{\cB^n}_{x^n} \prec 0\right\} \leq 2^{nt\widehat{R}_2}\left(\widehat{\rho}^{\cB^n}_{x^n,y^n}\right)^{-t}\left(\widehat{\rho}^{\cB^n}_{x^n}\right)^{t}$ for all $t \in (0,1)$, 
which implies $e$. 
Inequality \eqref{CABT2} proves \eqref{lemma_achievability_mac_independent_eq1}.

For any collection $\left\{\sigma_{x^n}^{\cB^n}\right\}_{\substack{x^n \in \cX^n}} \subset \cD(\cH_\cB^{\otimes n})$, we now upper-bound the quantity $\tr\left[\hat{\bbT}_{n,X}\left({\rho^{\cH_Y^{\otimes n}}_{P_Y}} \otimes \sigma^{X^n\cB^n}_{P_{X}, \{\sigma^{\cB^n}_{x^n}\}}\right)\right]$ in the following discussions.
 We define the subgroup $S_{n,x^n} := \left\{\pi_n \in S_n : \pi(x^n) = x^n\right\}$. Since $\hat{\bbT}_{n,X}$ is permutation invariant, any element $\pi \in S_{n,x^n}$ satisfies 
 the relation 
 \begin{align}
    &\hspace{13pt}\tr\left[\hat{\bbT}_{n,X}\left(\rho^{\cH_Y^{\otimes n}}_{P_Y}\otimes \left(\sum_{x^n \in \cX^n} P^{n}_X(x^n) \ketbrasys{x^n}{X^n}\otimes  \sigma^{\cB^n}_{x^n}\right)\right)\right]\nn\\ 
    &= \tr\left[\hat{\bbT}_{n,X}\left(V^{X^nY^n\cB^n}(\pi)\left(\rho^{\cH_Y^{\otimes n}}_{P_Y}\otimes \left(\sum_{x^n \in \cX^n} P^{n}_X(x^n) \ketbrasys{x^n}{X^n}\otimes  \sigma^{\cB^n}_{x^n}\right)\right){V^{X^nY^n\cB^n}}^{\dagger}(\pi)\right)\right]\nn\\ 
    &\overset{a}{=}\tr\left[\hat{\bbT}_{n,X}\left(\rho^{\cH_Y^{\otimes n}}_{P_Y}\otimes \left(\sum_{x^n \in \cX^n} P^{n}_X(x^n) \ketbrasys{\pi(x^n)}{X^n}\otimes  V^{\cB^n}(\pi)\left(\sigma^{\cB^n}_{x^n}\right){V^{\cB^n}}^{\dagger}(\pi)\right)\right)\right]\nn\\ 
    &\overset{b}{=}\tr\left[\hat{\bbT}_{n,X}\left(\rho^{\cH_Y^{\otimes n}}_{P_Y}\otimes\left(\sum_{x^n \in \cX^n} P^{n}_X(x^n) \ketbrasys{x^n}{X^n}\otimes V^{\cB^n}(\pi)\left(\sigma^{\cB^n}_{x^n}\right){V^{\cB^n}}^{\dagger}(\pi)\right)\right)\right],\label{prop_perm_inv_subgroup}
\end{align}
where $a$ follows from the fact that $\rho^{\cH_Y^{\otimes n}}_{P_Y}$ is permutation invariant and $b$ follows from the property of $S_{n,x^n}$. Thus, we have the following
\begin{align}
    &\tr\left[\hat{\bbT}_{n,X}\left({\rho^{\cH_Y^{\otimes n}}_{P_Y}} \otimes \sigma^{X^n\cB^n}_{P_{X}, \{\sigma^{\cB^n}_{x^n}\}}\right)\right] \notag\\
    &\overset{a}{=} \frac{1}{\abs{S_{n,x^n}}}\sum_{\pi \in S_{n,x^n}}\tr\left[\hat{\bbT}_{n,X}\left(\rho^{\cH_Y^{\otimes n}}_{P_Y}\otimes\left(\sum_{x^n \in \cX^n} P^{n}_X(x^n) \ketbrasys{x^n}{X^n}\otimes  V^{\cB^n}(\pi)\left(\sigma^{\cB^n}_{x^n}\right){V^{\cB^n}}^{\dagger}(\pi)\right)\right)\right] \notag\\
    &=\tr\left[\hat{\bbT}_{n,X}\left(\rho^{\cH_Y^{\otimes n}}_{P_Y}\otimes\left(\sum_{x^n \in \cX^n} P^{n}_X(x^n) \ketbrasys{x^n}{X^n}\otimes\frac{1}{\abs{S_{n,x^n}}}\sum_{\pi \in S_{n,x^n}}  V^{\cB^n}(\pi)\left(\sigma^{\cB^n}_{x^n}\right){V^{\cB^n}}^{\dagger}(\pi)\right)\right)\right] \notag\\
    &\overset{b}{\leq} n^{\frac{\abs{\cX}\abs{\cB}(\abs{\cB}-1)}{2}}\abs{\Lambda^{n}_{\abs{\cB}}}^{\abs{\cX}}\tr\left[\hat{\bbT}_{n,X}\left(\rho^{\cH_Y^{\otimes n}}_{P_Y}\otimes\left(\sum_{x^n \in \cX^n} P^{n}_X(x^n) \ketbrasys{x^n}{X^n} \otimes \widehat{\rho}^{\cB^n}_{x^n}\right)\right)\right] \notag\\
    &\overset{c}{\leq} n^{\frac{\abs{\cX}\abs{\cB}(\abs{\cB}-1)}{2}}\abs{\Lambda^{n}_{\abs{\cB}}}^{\abs{\cX}}2^{-n\widehat{R}_2}\tr\left[\hat{\bbT}_{n,X}\widehat{\rho}^{X^nY^n\cB^n}_{U,P_X,P_Y}\right]
    \leq n^{\frac{\abs{\cX}\abs{\cB}(\abs{\cB}-1)}{2}}\abs{\Lambda^{n}_{\abs{\cB}}}^{\abs{\cX}}2^{-n\widehat{R}_2} = g_1(n,\abs{\cX},\abs{\cB})2^{-n\widehat{R}_2},\label{CABT}
\end{align}
where $a$ follows from \eqref{prop_perm_inv_subgroup}, $b$ follows from Lemma \ref{lemma_perm_Inv_universal}, and $c$ follows from the definition of $\hat{\bbT}_{n,X}$, given in \eqref{universal_test_X}. 
Inequality \eqref{CABT} proves \eqref{lemma_achievability_mac_independent_eq2}. 

For $\hat{\bbT}_{n,Y}$, using approaches similar to the approaches discussed above to prove \cref{lemma_achievability_mac_independent_eq1,lemma_achievability_mac_independent_eq2}, one can prove \cref{lemma_achievability_mac_independent_eq3,lemma_achievability_mac_independent_eq4}. Further, for $\hat{\bbT}_{n}$, the proofs of \cref{lemma_achievability_mac_independent_eq5,lemma_achievability_mac_independent_eq6} follow directly from the proofs of \cref{lemma_achievability_theor_gen_eq1,lemma_achievability_theor_gen_eq2} in Section \ref{sec:IT}. 
\endproof

\section{Proof of Corollary \ref{theo_gen_indep_mac_corollary}}\label{proof_theo_gen_indep_mac_corollary}
We choose the collection $\{\sigma^{\cB^n}_{x^n,y^n}\}_{\substack{x^n \in \cX^n \\ y^n \in \cY^n}}$ in $\sigma^{X^nY^n\cB^n}_{P_{X},P_{Y}, \{\sigma^{\cB^n}_{x^n,y^n}\}}$ (defined in \eqref{Omega2states_arbit}) is to be $\{\rho^{\cB^n}_{x^n,y^n,s^n}\}$ for an element $s^n \in \cS^n$.
Using the discussion of the proof of Lemma \ref{theorem_sen_MAC_generalised_indep}, 
we have the following identities:
\begin{align*}
    \sigma^{X^nY^n\cB^n}_{P_{X},P_{Y}, \{\sigma^{\cB^n}_{x^n,y^n}\}} &\triangleq \rho^{X^nY^n\cB^n}_{P_{X},P_{Y},s^n},\\
    \sigma^{X'\cB'}_{P_{X}, \{\sigma^{\cB^n}_{x^n}\}} &\triangleq \rho^{X'\cB'}_{P_{X}, s^n},\\
    \sigma^{Y'\cB'}_{P_{Y}, \{\sigma^{\cB^n}_{y^n}\}} &\triangleq  \rho^{Y'\cB'}_{P_{Y}, s^n},\\
\sigma^{\cB'} &\triangleq \rho^{\cB'}_{s^n},
\end{align*}
where $\sigma^{X'\cB'}_{P_{X}, \{\sigma^{\cB^n}_{x^n}\}},\sigma^{Y'\cB'}_{P_{Y}, \{\sigma^{\cB^n}_{y^n}\}}$ and $\sigma^{\cB'}$ are defined in \cref{traced_tilted_sigma_X,traced_tilted_sigma_Y,traced_tilted_sigma_None}. Now for any arbitrary $\delta \in (0,1)$, 
\cref{reject_Omega1_optimal,accept_Omega2X_optimal,accept_Omega2Y_optimal,accept_Omega2XY_optimal} imply that
\begin{align*}
    \tr\left[\left( \bbI^{X'Y'\cB'} - {\bbT}^{\star}_{n}\right)\rho^{X'Y'\cB'}_{P_{X},P_{Y},s^n}\right] &\leq 76\delta,\\
    \tr\left[{\bbT}^{\star}_{n}\left(\rho^{Y'}_{P_{Y}} \otimes \rho^{X'\cB'}_{P_{X},s^n} \right)\right] &\leq g_1(n,\abs{\cX},\abs{\cB})2^{-n\widehat{R}_2+1},\\
    \tr\left[{\bbT}^{\star}_{n}\left(\rho^{X'}_{P_{X}} \otimes \rho^{Y'\cB'}_{P_{Y},s^n} \right)\right] &\leq g_2(n,\abs{\cY},\abs{\cB})2^{-n\widehat{R}_1+1},\\
\tr\left[{\bbT}^{\star}_{n}\left(\rho^{X'}_{P_{X}} \otimes\rho^{Y'}_{P_{Y}} \otimes \rho^{\cB'}_{s^n} \right)\right] &\leq g_3(n,\abs{\cB})2^{-n(\widehat{R}_1+\widehat{R}_2) +1}.
\end{align*}
The above calculations hold for every $s^n \in \cS^n$. This completes the proof of Corollary \ref{theo_gen_indep_mac_corollary}.
\endproof

\section{Alternative Proof for Achievability of Theorem \ref{lemma_rand_capacity_avmac} (Random Code Capacity Region $\cR_{r}$ of a CQ-AVMAC)}\label{alternative_proof_lemma_rand_capacity_avmac}

(\textit{This proof of achievability is established by designing a joint decoding strategy based on the simultaneous test designed in  Lemma \ref{theorem_sen_MAC_generalised_indep} in subsection 
 \ref{cq_mac_avht_null_statement_sim}.})
    We here show that for every rate-pair $(R_1,R_2) \in \cR^{\star}$, there exists an $(n,2^{nR_1},2^{nR_2},\eps_n)$ random code $\cC$ for which the average error probability, 
    averaged over the choice of code $\cC$ satisfies the following: 
    $$\max_{s^n \in \cS^n} \bbE_{\cC}\left[\bar{e}(\cC,s^n)\right] < \eps_n,$$
    where $\lim_{n \to \infty} \eps_n = 0$. We now start with the construction of $\cC$.
    \subsection{Randomized Encoder Construction}
    Alice and Bob randomly generate $2^{nR_1}$ and $2^{nR_2}$ sequences $\{X^n(m_1) \in \cX^n : m_1 \in [2^{nR_1}]\}$ 
and $\{Y^n(m_2) \in \cY^n : m_2 \in [2^{nR_2}]\}$ according to the following distributions;
\begin{align}
\Pr\left\{X^n(m_1)\right\} &:= \prod_{i}P_{X}(X_i(m_1)) \hbox{ with } X^n(m_1):=\left(X_1(m_1),\cdots,X_n(m_1)\right),
\\
\Pr\left\{Y^n(m_2)\right\} &:= \prod_{i}P_{Y}(Y_i(m_2))\hbox{ with }
Y^n(m_2):=\bigl(Y_1(m_2),\cdots,Y_n(m_2)\bigr).
\end{align}
        For the code $\cC$, we define
\begin{equation*}
    \Pr\{\cC\} := \prod_{m_1=1}^{\cM^{1}_{n}}\Pr\left\{X^n(m_1)\right\}\cdot\prod_{m_2=1}^{\cM^{2}_{n}}\Pr\left\{Y^n(m_2)\right\}.
\end{equation*}
    To send a message $m_1 \in \cM^{1}_{n}$, Alice encodes it to input sequence $X^n(m_1)$ and transmits it over the channel. Similarly, to end a message $m_2 \in \cM^{2}_{n}$, Bob encodes it to input sequence $Y^n(m_2)$ and transmits it over the channel.
    \subsection{Decoding Strategy}
    To construct our decoder, we 
   employ Lemma \ref{theorem_sen_MAC_generalised_indep}, several relations appearing 
   in its proof, and Corollary \ref{theo_gen_indep_mac_corollary}, which depend on the choice of  
$\widehat{R}_1,\widehat{R}_2 > 0$ and $t \in (0,1)$.
For this aim, we set the pair $( \widehat{R}_1,\widehat{R}_2)$ as
\begin{align}
R_1<    \widehat{R}_1 &< \min _{Q \in \cP(\cS)} I[X;\cB|Y]_{P_{X},P_{Y},Q},\label{BHI1}\\
R_2<    \widehat{R}_2 &< \min _{Q \in \cP(\cS)} I[Y;\cB|X]_{P_{X},P_{Y},Q},\label{BHI2}\\
 R_1+R_2<   \widehat{R}_1 + \widehat{R}_2 &< \min _{Q \in \cP(\cS)} I[XY;\cB]_{P_{X},P_{Y},Q}
 \label{BHI3}.
\end{align}
Then, we choose a sufficiently small real number $t \in (0,1)$ such that
    \begin{align}
   \widehat{R}_1 &< \min _{Q \in \cP(\cS)} I_{1-t}[X;\cB|Y]_{P_{X},P_{Y},Q},\label{theo_gen_sen_macR_1}\\
    \widehat{R}_2 &< \min _{Q \in \cP(\cS)} I_{1-t}[Y;\cB|X]_{P_{X},P_{Y},Q},\label{theo_gen_sen_macR_2}\\
   \widehat{R}_1 + \widehat{R}_2 &< \min _{Q \in \cP(\cS)} I_{1-t}[XY;\cB]_{P_{X},P_{Y},Q}\label{theo_gen_sen_macR_1R_2}.
\end{align}

    Upon receiving the state $\rho_{\substack{X^n,Y^n, s^n}}^{\cB^n}$, Charlie first embeds the state $\rho^{\cB^n}_{X^n,Y^n,s^n}$ into $\cB'$ using the way mentioned in \eqref{proof_theo_gen_identity_1}.
    We will represent this embedded state as $\rho^{\cB'}_{X^n,Y^n,s^n}$. Charlie then generates two collections of i.i.d random variables $\Tilde{R} := \{R(i)\}_{i \in \cM^{1}_n}$ and $\Tilde{T} := \{T(j)\}_{j \in \cM^{2}_n}$ over the support set $[\abs{\cA}]$, where 
    $R(m_1)$ and $T(m_2)$ follow a uniform distribution over $[\abs{\cA}]$
    for each $(m_1,m_2) \in \cM^{1}_{n} \times \cM^{2}_{n}$.  
Using the notations $(X^n,a_{R})(\hat{m}_1) := (X^n(\hat{m}_1),a_{R(\hat{m}_1)})$ and $(Y^n,b_{T})(\hat{m}_2) := (Y^n(\hat{m}_2),b_{T(\hat{m}_2)})$ 
with $(\hat{m}_1,\hat{m}_2) \in \cM^{1}_n \times \cM^{2}_n$, we choose 
the matrix $\bbT^{\star}_{n,\substack{(X^n,a_{R})(m_1),(Y^n,b_{T})(m_2),\delta}}$ as defined in \eqref{intersection_inner_povms}. 
Then, we define the POVM operator $\Lambda_{m_1,m_2}^{\Tilde{R},\Tilde{T}}$ as
{\footnotesize\begin{equation*}
    \Lambda_{m_1,m_2}^{\Tilde{R},\Tilde{T}}:= \left(\sum_{\substack{\hat{m}_1 \in \cM^{1}_n \\ \hat{m}_2 \in \cM^{2}_n}}\bbT^{\star}_{n,\substack{(X^n,a_{R})(\hat{m}_1),(Y^n,b_{T})(\hat{m}_2),\delta}}\right)^{-\frac{1}{2}}\bbT^{\star}_{n,\substack{(X^n,a_{R})(m_1),(Y^n,b_{T})(m_2),\delta}}\left(\sum_{\substack{\hat{m}_1 \in \cM^{1}_n \\ \hat{m}_2 \in \cM^{2}_n}}\bbT^{\star}_{n,\substack{(X^n,a_{R})(\hat{m}_1),(Y^n,b_{T})(\hat{m}_2),\delta}}\right)^{-\frac{1}{2}}.
\end{equation*}}
Charlie then uses the POVM $\left\{\Lambda_{m_1,m_2}^{\Tilde{R},\Tilde{T}}\right\}$ to decode the message.
    
    \subsection{Error Analysis}
    We now do the error analysis as follows. For every $s^n \in \cS^n$, the following holds.
    \begin{align}
        &\hspace{10pt}\bbE_{\Tilde{R},\Tilde{T}}\bbE_{\cC}\left[\bar{e}(\cC,s^n)\right] = \frac{1}{2^{n(R_1+R_2)}}\sum_{\substack{m_1\in \cM^{1}_n \\ m_2\in \cM^{2}_n}}\bbE_{\Tilde{R},\Tilde{T}}\bbE_{\cC}\left[e(m_1,m_2,\cC,s^n)\right]\nn\\
        &= \frac{1}{2^{n(R_1+R_2)}}\sum_{\substack{m_1\in \cM^{1}_n \\ m_2\in \cM^{2}_n}}\bbE_{\Tilde{R},\Tilde{T}}\bbE_{\cC}\left[e(m_1,m_2,\cC,s^n)\right]\nn\\
        &= \frac{1}{2^{n(R_1+R_2)}}\sum_{\substack{m_1\in \cM^{1}_n \\ m_2\in \cM^{2}_n}}\bbE_{\Tilde{R},\Tilde{T}}\bbE_{\cC}\left[ \tr\left[\left(\bbI^{\cB'}-\Lambda_{m_1,m_2}^{\Tilde{R},\Tilde{T}}\right)\rho_{\substack{X^n(m_1),Y^n(m_2), s^n}}^{\cB'}\right]\right]\nn\\
        &\overset{a}{\leq }\frac{1}{2^{n(R_1+R_2)}}\sum_{\substack{m_1\in \cM^{1}_n \\ m_2\in \cM^{2}_n}}\bbE_{\Tilde{R},\Tilde{T}}\bbE_{\cC}\left[ \tr\left[\left(\bbI^{\cB'}-\Lambda_{m_1,m_2}^{\Tilde{R},\Tilde{T}}\right)\rho_{\substack{(X^n,a_{R})(m_1),(Y^n,b_{T})(m_2), s^n,\delta}}^{\cB'}\right]\right] + 4\delta\nn\\
        &\overset{b}{\leq}\frac{1}{2^{n(R_1+R_2)}}\sum_{\substack{m_1\in \cM^{1}_n \\ m_2\in \cM^{2}_n}}\Bigg( 2\bbE_{\Tilde{R},\Tilde{T}}\bbE_{\cC}\left[\tr\left[\left(\bbI^{\cB'} - \bbT^{\star}_{n,\substack{(X^n,a_{R})(m_1),(Y^n,b_{T})(m_2),\delta}}\right)\rho_{\substack{(X^n,a_{R})(m_1),(Y^n,b_{T})(m_2), s^n,\delta}}^{\cB'}\right]\right]\nn\\
        &\hspace{10pt}+ 4\cdot \sum_{\substack{\hat{m}_1 \in \cM^{1}_n:\\\hat{m}_1\neq m_1}}\bbE_{\Tilde{R},\Tilde{T}}\bbE_{\cC}\left[\tr\left[\bbT^{\star}_{n,\substack{(X^n,a_{R})(\hat{m}_1),(Y^n,b_{T})(m_2),\delta}}.\rho_{\substack{(X^n,a_{R})(m_1),(Y^n,b_{T})(m_2), s^n,\delta}}^{\cB'}\right]\right]\nn\\
        &\hspace{10pt}+ 4\cdot \sum_{\substack{\hat{m}_2 \in \cM^{2}_n:\\\hat{m}_2\neq m_2}}\bbE_{\Tilde{R},\Tilde{T}}\bbE_{\cC}\left[\tr\left[\bbT^{\star}_{n,\substack{(X^n,a_{R})(m_1),(Y^n,b_{T})(\hat{m}_2),\delta}}\cdot\rho_{\substack{(X^n,a_{R})(m_1),(Y^n,b_{T})(m_2), s^n,\delta}}^{\cB'}\right]\right]\nn\\
        &\hspace{10pt}+ 4\cdot \sum_{\substack{(\hat{m}_1,\hat{m}_2) \in \cM^{1}_n \times \cM^{2}_n:\\\hat{m}_1\neq m_1, \hat{m}_2\neq m_2}}\bbE_{\Tilde{R},\Tilde{T}}\bbE_{\cC}\left[\tr\left[\bbT^{\star}_{n,\substack{(X^n,a_{R})(\hat{m}_1),(Y^n,b_{T})(\hat{m}_2),\delta}}\cdot\rho_{\substack{(X^n,a_{R})(m_1),(Y^n,b_{T})(m_2), s^n,\delta}}^{\cB'}\right]\right]\Bigg) + 4\delta\nn\\
        &=\frac{1}{2^{n(R_1+R_2)}}\sum_{\substack{m_1\in \cM^{1}_n \\ m_2\in \cM^{2}_n}}\Bigg( 2\bbE_{\substack{(X^n,a_{R})(m_1)\\(Y^n,b_{T})(m_2)}}\left[\tr\left[\left(\bbI^{\cB'} - \bbT^{\star}_{n,\substack{(X^n,a_{R})(m_1),(Y^n,b_{T})(m_2),\delta}}\right)\rho_{\substack{(X^n,a_{R})(m_1),(Y^n,b_{T})(m_2), s^n,\delta}}^{\cB'}\right]\right]\nn\\
        &\hspace{10pt}+ 4\cdot \sum_{\substack{\hat{m}_1 \in \cM^{1}_n:\\\hat{m}_1\neq m_1}}\bbE_{\substack{(X^n,a_{R})(\hat{m}_1)\\(Y^n,b_{T})(m_2)}}\left[\tr\left[\bbT^{\star}_{n,\substack{(X^n,a_{R})(\hat{m}_1),(Y^n,b_{T})(m_2),\delta}}\cdot\rho_{\substack{(Y^n,b_{T})(m_2), s^n,\delta}}^{\cB'}\right]\right]\nn\\
        &\hspace{10pt}+ 4\cdot \sum_{\substack{\hat{m}_2 \in \cM^{2}_n:\\\hat{m}_2\neq m_2}}\bbE_{\substack{(X^n,a_{R})(m_1)\\(Y^n,b_{T})(\hat{m}_2)}}\left[\tr\left[\bbT^{\star}_{n,\substack{(X^n,a_{R})(m_1),(Y^n,b_{T})(\hat{m}_2),\delta}}\cdot\rho_{\substack{(X^n,a_{R})(m_1), s^n,\delta}}^{\cB'}\right]\right]\nn\\
        &\hspace{10pt}+ 4\cdot \sum_{\substack{(\hat{m}_1,\hat{m}_2) \in \cM^{1}_n \times \cM^{2}_n:\\\hat{m}_1\neq m_1, \hat{m}_2\neq m_2}}\bbE_{\substack{(X^n,a_{R})(\hat{m}_1)\\(Y^n,b_{T})(\hat{m}_2)}}\left[\tr\left[\bbT^{\star}_{n,\substack{(X^n,a_{R})(\hat{m}_1),(Y^n,b_{T})(\hat{m}_2),\delta}}\cdot\rho_{\substack{s^n,\delta}}^{\cB'}\right]\right]\Bigg) + 4\delta \nn\\
        &= 4\delta + \frac{1}{2^{n(R_1+R_2)}}\sum_{\substack{m_1\in \cM^{1}_n \\ m_2\in \cM^{2}_n}}\Bigl(2\tr\left[\left(\bbI^{X'Y'\cB'} - {\bbT}^{\star}_{n}\right)\rho^{X'Y'\cB'}_{P_{X},P_{Y},s^n}\right] 
        + 4\cdot 2^{nR_1} \tr\left[{\bbT}^{\star}_{n}\left(\rho^{X'}_{P_{X}} \otimes \rho^{Y'\cB'}_{P_{Y},s^n} \right)\right]\nn\\
        &\hspace{95pt}+ 4\cdot 2^{nR_2} \tr\left[{\bbT}^{\star}_{n}\left(\rho^{Y'}_{P_{Y}} \otimes \rho^{X'\cB'}_{P_{X},s^n} \right)\right] + 4\cdot 2^{n(R_1 + R_2)} \tr\left[{\bbT}^{\star}_{n}\left(\rho^{X'}_{P_{X}} \otimes \rho^{Y'\cB'}_{P_{Y},s^n} \right)\right]\Bigr)\nn\\
        &\overset{c}{\leq} 156\delta  + 4 \cdot 2^{-n \left\{\widehat{R}_1-R_1 -\frac{1}{n}\right\}}g_2(n,\abs{\cY},\abs{\cB}) + 4\cdot 2^{-n \left\{\widehat{R}_2 - R_2 - \frac{1}{n}\right\}}g_1(n,\abs{\cX},\abs{\cB})\nn\\
        &\hspace{10pt} + 4\cdot 2^{-n \left\{(\widehat{R}_1 + \widehat{R}_2) - (R_1+R_2) - \frac{1}{n}\right\}}g_3(n,\abs{\cB}) \nn\\
        \label{error_bound_alter}
        & \overset{d} \leq 157 \delta,
    \end{align}
where 
$a$ follows from 
Fact \ref{trace_norm2}, and \eqref{perturbed_tilted_state_eq1}, $b$ follows from Fact \ref{Hayashi_nagaoka}, and $c$ follows from Corollary \ref{theo_gen_indep_mac_corollary} and $d$ follows because we have chosen $n$ large enough such that $4 \cdot 2^{-n \left\{\widehat{R}_1-R_1 -\frac{1}{n}\right\}}g_2(n,\abs{\cY},\abs{\cB}) + 4\cdot 2^{-n \left\{\widehat{R}_2 - R_2 - \frac{1}{n}\right\}}g_1(n,\abs{\cX},\abs{\cB}) + 4\cdot 2^{-n \left\{(\widehat{R}_1 + \widehat{R}_2) - (R_1+R_2) - \frac{1}{n}\right\}}g_3(n,\abs{\cB}) \leq \delta.$

Since $\delta \in (0,1)$ is arbitrary, therefore,
\begin{align}
\lim_{n\to \infty}\max_{s^n \in \cS^n}\bbE_{\Tilde{R},\Tilde{T}}\bbE_{\cC}\left[\bar{e}(\cC,s^n)\right]
=0.
\end{align}
This completes the proof for achievability of Theorem \ref{lemma_rand_capacity_avmac} using simultaneous test mentioned in Lemma \ref{theorem_sen_MAC_generalised_indep}.
\endproof

\section{Proof of \eqref{error_term_X_hayashi}}\label{proof_error_term_X_hayashi}
The relation \eqref{error_term_X_hayashi} can be shown as follows;
    \begin{align}
        &\bbE_{\cC}\left[\bar{e}^{X}(\cC,s^n)\right]\nn\\
        &= \frac{1}{2^{n(R_1+R_2)}}\sum_{\substack{m_1\in \cM^{1}_n \\ m_2\in \cM^{2}_n}}\bbE_{\cC}\left[ \tr\left[\left(\bbI^{\cB}- \left(\Lambda_{m_1}^{X}\right)\right)\rho_{\substack{X^n(m_1),Y^n(m_2), s^n}}^{\cB}\right]\right]\nn\\
        &\overset{a}{\leq} \frac{1}{2^{n(R_1+R_2)}}\sum_{\substack{m_1\in \cM^{1}_n \\ m_2\in \cM^{2}_n}}\left(2\bbE_{\cC}\left[\tr\left[\left(\bbI^{\cB}- \Pi^{X}_{m_1}\right)\rho_{\substack{X^n(m_1),Y^n(m_2), s^n}}^{\cB}\right]\right] + 4\bbE_{\cC}\left[\tr\left[\left(\sum_{\substack{\hat{m}_1 \in \cM^{1}_n : \\ \hat{m}_1 \neq m_1}} \Pi^{X}_{\hat{m}_1}\right)\rho_{\substack{X^n(m_1),Y^n(m_2), s^n}}^{\cB}\right]\right]\right)\nn\\
        &= \frac{1}{2^{n(R_1+R_2)}}\sum_{\substack{m_1\in \cM^{1}_n \\ m_2\in \cM^{2}_n}}\left(2\bbE_{\cC}\left[\tr\left[\left(\bbI^{\cB}- \sum_{\hat{m}_2 \in  \cM^{2}_n} \hat{\bbT}_{n,X^n(m_1),Y^n(\hat{m}_2)}^{X}\right)\rho_{\substack{X^n(m_1),Y^n(m_2), s^n}}^{\cB}\right]\right]\right.\nn\\
        &\hspace{70pt} \left.+ 4\bbE_{\cC}\left[\tr\left[\left(\sum_{\substack{\hat{m}_1 \in \cM^{1}_n : \\ \hat{m}_1 \neq m_1}} \sum_{\hat{m}_2 \in  \cM^{2}_n} \hat{\bbT}_{n,X^n(\hat{m}_1),Y^n(\hat{m}_2)}^{X}\right)\rho_{\substack{X^n(m_1),Y^n(m_2), s^n}}^{\cB}\right]\right]\right)\nn\\
        &\overset{b}{\leq} \frac{1}{2^{n(R_1+R_2)}}\sum_{\substack{m_1\in \cM^{1}_n \\ m_2\in \cM^{2}_n}}\left(2\bbE_{\cC}\left[\tr\left[\left(\bbI^{\cB}-  \hat{\bbT}_{n,X^n(m_1),Y^n(m_2)}^{X}\right)\rho_{\substack{X^n(m_1),Y^n(m_2), s^n}}^{\cB}\right]\right] \right.\nn\\
        & \hspace{70pt}\left.+ 4\sum_{\substack{\hat{m}_1 \in \cM^{1}_n : \\ \hat{m}_1 \neq m_1}} \bbE_{\cC}\left[\tr\left[\hat{\bbT}_{n,X^n(\hat{m}_1),Y^n(m_2)}^{x}\rho_{\substack{X^n(m_1),Y^n(m_2), s^n}}^{\cB}\right]\right]\right.\nn\\
        &\hspace{70pt} \left.+ 4\sum_{\substack{(\hat{m}_1,\hat{m}_2) \in \cM^{1}_n \times \cM^{2}_n \\ :\hat{m}_1 \neq m_1, \hat{m}_2 \neq m_2 }}\bbE_{\cC}\left[\tr\left[ \hat{\bbT}_{n,X^n(\hat{m}_1),Y^n(\hat{m}_2)}^{X}\rho_{\substack{X^n(m_1),Y^n(m_2), s^n}}^{\cB}\right]\right]\right)\nn\\
        &= \frac{1}{2^{n(R_1+R_2)}}\sum_{\substack{m_1\in \cM^{1}_n \\ m_2\in \cM^{2}_n}}\left(2\bbE_{X^n(m_1),Y^n(m_2)}\left[\tr\left[\left(\bbI^{\cB}-  \hat{\bbT}_{n,X^n(m_1),Y^n(m_2)}^{X}\right)\rho_{\substack{X^n(m_1),Y^n(m_2), s^n}}^{\cB}\right]\right] \right.\nn\\
        & \hspace{70pt}\left.+ 4\sum_{\substack{\hat{m}_1 \in \cM^{1}_n : \\ \hat{m}_1 \neq m_1}} \bbE_{X^n(\hat{m}_1),Y^n(m_2)}\left[\tr\left[\hat{\bbT}_{n,X^n(\hat{m}_1),Y^n(m_2)}^{X}\rho_{\substack{Y^n(m_2), s^n}}^{\cB}\right]\right]\right.\nn\\
        &\hspace{70pt} \left.+ 4\sum_{\substack{(\hat{m}_1,\hat{m}_2) \in \cM^{1}_n \times \cM^{2}_n \\ : \hat{m}_1 \neq m_1, \hat{m}_2 \neq m_2}}\bbE_{X^n(\hat{m}_1),Y^n(\hat{m}_2)}\left[\tr\left[ \hat{\bbT}_{n,X^n(\hat{m}_1),Y^n(\hat{m}_2)}^{X}\rho_{\substack{ s^n}}^{\cB}\right]\right]\right)\nn\\
        &{\leq}2\tr\left[\left( \bbI^{X^nY^n\cB^n} - \hat{\bbT}_{n}^{X}\right)\rho^{X^nY^n\cB^n}_{P_{X},P_{Y},s^n}\right] + 4\cdot2^{nR_1}\tr\left[\hat{\bbT}_{n}^{X}\left(\sum_{\substack{ x^n\in \cX^n\\y^n \in \cY^n}}P^{n}_{X}(x^n)\ketbrasys{x^n}{X^n}\otimes P^{n}_{Y}(y^n)\ketbrasys{y^n}{Y^n} \otimes \rho^{\cB^n}_{y^n,s^n}\right)\right]\nn\\
        &\hspace{10pt}+ 4\cdot2^{n(R_1 + R_2)}\tr\left[\hat{\bbT}_{n}^{X}\left(\sum_{\substack{ x^n\in \cX^n\\y^n \in \cY^n}}P^{n}_{X}(x^n)\ketbrasys{x^n}{X^n}\otimes P^{n}_{Y}(y^n)\ketbrasys{y^n}{Y^n} \otimes \rho^{\cB^n}_{s^n}\right)\right]\nn\\
        &\overset{c}{\leq} 2f(n,\abs{\cX},\abs{\cY}, \abs{\cB}, \abs{\cS_{\eps}},t)(1 +\eps)^{n}\left(2^{nt\left(\widehat{R}_1 - \min _{Q \in \cP(\cS_\eps)} I_{1-t}[X;\cB|Y]_{P_{X},P_{Y},Q}\right)} + 2^{nt\left(\widehat{R}_1 + \widehat{R}_2 -  \min _{Q \in \cP(\cS_\eps)} I_{1-t}[XY;\cB]_{P_{X},P_{Y},Q}\right)}\right)\nn \\
        &\hspace{10pt}+ 4\cdot g_2(n,\abs{\cY},\abs{\cB})2^{n(R_1 - \widehat{R}_1)} + 4\cdot g_3(n,\abs{\cB})2^{-n(\widehat{R}_1 + \widehat{R}_2 - R_1 - R_2)},\nn
    \end{align}
    where $a$ follows from Fact \ref{Hayashi_nagaoka}, $b$ follows from the fact that $\sum_{\hat{m}_2\in  \cM^{2}_n} \hat{\bbT}_{n,X^n(m_1),Y^n(\hat{m}_2)}^{X} \geq \hat{\bbT}_{n,X^n(m_1),Y^n(m_2)}^{X}$ and $c$ follows from \cref{corllary_achievability_mac_independent_eq1_TS,corllary_achievability_mac_independent_eq3_TS,corllary_achievability_mac_independent_eq4_TS}.
\section{Proof of Lemma \ref{lemma_commutativity}}\label{proof_lemma_commutativity}
Consider the following series of equalities:
\begin{align}
    &\rho^{\cB^n}_{U,n}  \widehat{\rho}^{\cB^n}_{x^n}
    = \rho^{\cB^n}_{U,n} V^{\cB^n}(\pi) \left(\widehat{\rho}^{\cB^n}_{\bar{x}^n}\right){V^{\cB^n}}^{\dagger}(\pi)\nn\\
    &\overset{a}{=} V^{\cB^n}(\pi) \rho^{\cB^n}_{U,n}{V^{\cB^n}}^{\dagger}(\pi) V^{\cB^n}(\pi) \left(\rho^{\cB^{m_1}}_{U,m_1} \otimes \rho^{\cB^{m_2}}_{U,m_2} \otimes \cdots \otimes \rho^{\cB^{m_{\abs{\cX}}}}
    _{U,m_{\abs{\cX}}}\right){V^{\cB^n}}^{\dagger}(\pi)\nn\\
    &\overset{b}{=} V^{\cB^n}(\pi) \rho^{\cB^n}_{U,n}\left(\rho^{\cB^{m_1}}_{U,m_1} \otimes \rho^{\cB^{m_2}}_{U,m_2} \otimes \cdots \otimes \rho^{\cB^{m_{\abs{\cX}}}}
    _{U,m_{\abs{\cX}}}\right){V^{\cB^n}}^{\dagger}(\pi)\nn\\
    &\overset{c}{=} V^{\cB^n}(\pi) \left(\sum_{\lambda \in \Lambda^{n}_{\abs{\cB}}}\frac{1}{\abs{\Lambda^{n}_{\abs{\cB}}}}\frac{1}{\abs{\cU_{\lambda} \otimes \cV_{\lambda}}} \Pi^{\cB^n}_{\lambda}\right)\left(\sum_{\substack{(\lambda_1,\cdots,\lambda_{\abs{\cX}}) \\ \in (\times_{i \in [\abs{\cX}]}\Lambda^{m_i}_{\abs{\cB}})}}\frac{1}{\prod_{i \in [\abs{\cX}]}\abs{\Lambda^{m_i}_{\abs{\cB}}}}\frac{1}{\prod_{i\in [\abs{\cX}]}\abs{\cU_{\lambda_i} \otimes \cV_{\lambda_i}}} (\otimes_{i \in [\abs{\cA}]}\Pi^{\cB^{m_i}}_{\lambda_i})\right){V^{\cB^n}}^{\dagger}(\pi)\nn\\
    &= V^{\cB^n}(\pi) \left(\sum_{\lambda \in \Lambda^{n}_{\abs{\cB}}}\sum_{\substack{(\lambda_1,\cdots,\lambda_{\abs{\cX}}) \\ \in (\times_{i \in [\abs{\cX}]}\Lambda^{m_i}_{\abs{\cB}})}}\frac{1}{\abs{\Lambda^{n}_{\abs{\cB}}}}\frac{1}{\abs{\cU_{\lambda} \otimes \cV_{\lambda}}} \frac{1}{\prod_{i \in [\abs{\cX}]}\abs{\Lambda^{m_i}_{\abs{\cB}}}}\frac{1}{\prod_{i\in [\abs{\cX}]}\abs{\cU_{\lambda_i} \otimes \cV_{\lambda_i}}}\Pi^{\cB^n}_{\lambda} (\otimes_{i \in [\abs{\cA}]}\Pi^{\cB^{m_i}}_{\lambda_i})\right){V^{\cB^n}}^{\dagger}(\pi)\nn\\
    &\overset{d}{=} V^{\cB^n}(\pi) \left(\sum_{\lambda \in \Lambda^{n}_{\abs{\cB}}}\sum_{\substack{(\lambda_1,\cdots,\lambda_{\abs{\cX}}) \\ \in (\times_{i \in [\abs{\cX}]}\Lambda^{m_i}_{\abs{\cB}})}}\frac{1}{\abs{\Lambda^{n}_{\abs{\cB}}}}\frac{1}{\abs{\cU_{\lambda} \otimes \cV_{\lambda}}} \frac{1}{\prod_{i \in [\abs{\cX}]}\abs{\Lambda^{m_i}_{\abs{\cB}}}}\frac{1}{\prod_{i\in [\abs{\cX}]}\abs{\cU_{\lambda_i} \otimes \cV_{\lambda_i}}}(\otimes_{i \in [\abs{\cA}]}\Pi^{\cB^{m_i}}_{\lambda_i})\Pi^{\cB^n}_{\lambda} \right){V^{\cB^n}}^{\dagger}(\pi)\nn\\
    &= V^{\cB^n}(\pi) \left(\rho^{\cB^{m_1}}_{U,m_1} \otimes \rho^{\cB^{m_2}}_{U,m_2} \otimes \cdots \otimes \rho^{\cB^{m_{\abs{\cX}}}}
    _{U,m_{\abs{\cX}}}\right)\rho^{\cB^n}_{U,n}{V^{\cB^n}}^{\dagger}(\pi)\nn\\
    &= V^{\cB^n}(\pi) \left(\rho^{\cB^{m_1}}_{U,m_1} \otimes \rho^{\cB^{m_2}}_{U,m_2} \otimes \cdots \otimes \rho^{\cB^{m_{\abs{\cX}}}}
    _{U,m_{\abs{\cX}}}\right){V^{\cB^n}}^{\dagger}(\pi) V^{\cB^n}(\pi) \rho^{\cB^n}_{U,n}{V^{\cB^n}}^{\dagger}(\pi)
    =  \widehat{\rho}^{\cB^n}_{x^n} \rho^{\cB^n}_{U,n},\nn 
\end{align}
where $a$ follows from the fact that $\rho^{\cB^n}_{U,n}$ is permutation invariant, $b$ follows from. \eqref{uniform_n_state}, $c$ follows from \eqref{perm_op_prop_1} of Definition \ref{fact_perm_op}.

Equality $d$ follows from the following discussions. 
We have two cases for the relation between
the elements $(\lambda_1,\cdots,\lambda_{\abs{\cX}}) \in (\times_{i \in [\abs{\cX}]}\Lambda^{m_i}_{\abs{\cB}})$ 
and the young tableux $\lambda \in \Lambda^{n}_{\abs{\cB}}$.
In the first case, 
the row-wise concatenation of $(\lambda_1,\cdots,\lambda_{\abs{\cX}})$
forms the young tableux $\lambda$.
In the second case, 
the row-wise concatenation of $(\lambda_1,\cdots,\lambda_{\abs{\cX}})$
forms another young tableux $\lambda'(\neq \lambda)$.
In the first case, 
the support of $ (\otimes_{i \in [\abs{\cA}]}\Pi^{\cB^{m_i}}_{\lambda_i}) $ is a subspace of support of $ \Pi^{\cB^n}_{\lambda}$ i.e. $ (\otimes_{i \in [\abs{\cA}]}\Pi^{\cB^{m_i}}_{\lambda_i}) \preceq \Pi^{\cB^n}_{\lambda}$.   
Hence, we have $\Pi^{\cB^n}_{\lambda}(\otimes_{i \in [\abs{\cA}]}\Pi^{\cB^n}_{\lambda_i}) =(\otimes_{i \in [\abs{\cA}]}\Pi^{\cB^n}_{\lambda_i})\Pi^{\cB^n}_{\lambda}=(\otimes_{i \in [\abs{\cA}]}\Pi^{\cB^{m_i}}_{\lambda_i})$.
In the second case, 
the support of $ (\otimes_{i \in [\abs{\cA}]}\Pi^{\cB^{m_i}}_{\lambda_i}) $ is orthogonal to the support of $ \Pi^{\cB^n}_{\lambda}$, 
because the support of $\Pi^{\cB^n}_{\lambda}$ is orthogonal to $\Pi^{\cB^n}_{\lambda'}$. 
Hence, we have $\Pi^{\cB^n}_{\lambda}(\otimes_{i \in [\abs{\cA}]}\Pi^{\cB^n}_{\lambda_i}) = 0$.
For both cases, the equality holds for the inside of the summations before and after $d$, which implies $d$.
\endproof
 
\section{Proof of \eqref{small_N_X}}\label{proof_small_N_X}
Consider the following series of inequalities:
\begin{align}
&\hspace{10pt}\norm{\cN_{X,a_{r},\delta}(\ketbra{h})}{\infty}\nn\\
    &\leq \frac{1}{\abs{\cA}(1 + 2\delta^2)}\left( \norm{\sum_{t \in [\abs{\cA}]}\delta (\ket{h} \oplus \delta\ket{h}\ket{a_{r}}) \bra{h}\bra{b_{t}}}{\infty} + \norm{\sum_{t \in [\abs{\cA}]}\delta\ket{h}\ket{b_{t}}(\bra{h} \oplus \delta\bra{h}\bra{a_{r}}) }{\infty} + \norm{\sum_{t \in [\abs{\cA}]}\delta^2\ketbra{h}\ketbra{b_{t}} }{\infty}\right)\nn\\
    &\overset{a}{\leq} \frac{1}{\abs{\cA}(1 + 2\delta^2)}\left( \norm{\sum_{t \in [\abs{\cA}]}\delta (\ket{h} \oplus \delta\ket{h}\ket{a_{r}}) \bra{h}\bra{b_{t}}}{1} + \norm{\sum_{t \in [\abs{\cA}]}\delta\ket{h}\ket{b_{t}}(\bra{h} \oplus \delta\bra{h}\bra{a_{r}}) }{1} + \norm{\sum_{t \in [\abs{\cA}]}\delta^2\ketbra{h}\ketbra{b_{t}} }{\infty}\right)\nn\\
    &= \frac{1}{\abs{\cA}(1 + 2\delta^2)}\left( 2\norm{\sum_{t \in [\abs{\cA}]}\delta (\ket{h} \oplus \delta\ket{h}\ket{a_{r}}) \bra{h}\bra{b_{t}}}{1} + \norm{\sum_{t \in [\abs{\cA}]}\delta^2\ketbra{h}\ketbra{b_{t}} }{\infty}\right)\nn\\
    &= \frac{1}{\abs{\cA}(1 + 2\delta^2)} \left(2\delta \tr\left[\sqrt{(\ket{h} \oplus \delta\ket{h}\ket{a_{r}})(\bra{h} \oplus \delta\bra{h}\bra{a_{r}})}\right] \sqrt{\sum_{t \in [\abs{\cA}]}\norm{\ket{b_{t}}}{2}^{2}} + \delta^2\norm{\sum_{t \in [\abs{\cA}]}\ketbra{b_{t}}}{\infty}\right)\nn\\
    &= \frac{1}{\abs{\cA}(1 + 2\delta^2)}\left(2 \delta \norm{\ket{h} \oplus \delta\ket{h}\ket{a_{r}}}{2}^{2}\sqrt{\sum_{t \in [\abs{\cA}]}\norm{\ket{b_{t}}}{2}^{2}} + \delta^2\right) \nn\\
    &\leq \frac{1}{\abs{\cA}(1 + 2\delta^2)}\left(3\delta \norm{\ket{h} \oplus \delta\ket{h}\ket{a_{r}}}{2}^{2}\sqrt{\sum_{t \in [\abs{\cA}]}\norm{{\ket{b_{t}}}}{2}^{2}}\right)\nn\\
    &= \frac{3\delta(1 + \delta^2)\sqrt{\abs{\cA}}}{(1 + 2\delta^2\abs{\cA}}
    \leq \frac{3\delta\sqrt{\abs{\cA}}}{\abs{\cA}}
    = \frac{3\delta}{\sqrt{\abs{\cA}}},\nn
\end{align}
where $a$ follows from the inequality 
$\norm{\cO}{\infty} \leq \norm{\cO}{1}$ for 
an operator $\cO$.
\endproof
\section{Proof of \eqref{small_N_no_Tilt}}\label{proof_small_N_no_tilt}
Consider the following series of inequalities: 

\begin{align}
    &\hspace{10pt}\norm{\cN_{\delta}(\ketbra{h})}{\infty}\nn\\
    &\leq \frac{1}{\abs{\cA}^2(1 + 2\delta^2)}\left(\norm{\sum_{r \in [\abs{\cA}], t \in [\abs{\cA}]}\delta\ket{h}(\bra{h}\bra{a_{r}} \oplus \bra{h}\bra{b_{t}})}{\infty} + \norm{\sum_{r \in [\abs{\cA}], t \in [\abs{\cA}]}\delta(\ket{h}\ket{a_{r}} \oplus \ket{h}\ket{b_{t}})\bra{h}}{\infty}\right.\nn\\
    & \hspace{25pt}\left. + \norm{\sum_{r \in [\abs{\cA}], t \in [\abs{\cA}]}\delta^2\ketbra{h}(\ket{a_{r}} \oplus \ket{b_{t}})(\bra{a_{r}} \oplus \bra{b_{t}})}{\infty}\right)\nn\\
    &= \frac{1}{\abs{\cA}^2(1 + 2\delta^2)}\left(2\norm{\sum_{r \in [\abs{\cA}], t \in [\abs{\cA}]}\delta\ket{h}(\bra{h}\bra{a_{r}} \oplus \bra{h}\bra{b_{t}})}{\infty} + \norm{\sum_{r \in [\abs{\cA}], t \in [\abs{\cA}]}\delta^2\ketbra{h}(\ket{a_{r}} \oplus \ket{b_{t}})(\bra{a_{r}} \oplus \bra{b_{t}})}{\infty}\right)\nn\\
    &\overset{a}{\leq} \frac{1}{\abs{\cA}^2(1 + 2\delta^2)}\left(2\norm{\sum_{r \in [\abs{\cA}], t \in [\abs{\cA}]}\delta\ket{h}(\bra{h}\bra{a_{r}} \oplus \bra{h}\bra{b_{t}})}{1} + \norm{\sum_{r \in [\abs{\cA}], t \in [\abs{\cA}]}\delta^2\ketbra{h}(\ket{a_{r}} \oplus \ket{b_{t}})(\bra{a_{r}} \oplus \bra{b_{t}})}{\infty}\right)\nn\\
    &= \frac{1}{\abs{\cA}^2(1 + 2\delta^2)}\left(2\delta\tr\left[\sqrt{\ketbra{h}}\right]\sqrt{\sum_{\substack{r \in [\abs{\cA}], t \in [\abs{\cA}] \\ r' \in [\abs{\cA}], t' \in [\abs{\cA}]}}(\bra{h}\bra{a_{r}} \oplus \bra{h}\bra{b_{t}})(\ket{h}\ket{a_{r'}} \oplus \ket{h}\ket{b_{t'}})} \right.\nn\\
    &\hspace{25pt}\left.+ \delta^2\norm{\sum_{r \in [\abs{\cA}], t \in [\abs{\cA}]}(\ket{a_{r}} \oplus \ket{b_{t}})(\bra{a_{r}} \oplus \bra{b_{t}})}{\infty}\right)\nn\\
    &\leq \frac{1}{\abs{\cA}^2(1 + 2\delta^2)}\left(2\delta\sqrt{2\abs{\cA}^3} + \delta^2\left(\norm{\abs{\cA}\sum_{r \in [\abs{\cA}]}\ketbra{a_{r}}}{\infty} + \norm{\abs{\cA}\sum_{t \in [\abs{\cA}]}\ketbra{b_{t}}}{\infty} + \norm{\sum_{r \in [\abs{\cA}], t \in [\abs{\cA}]}\ket{a_{r}}\bra{b_{t}}}{\infty} \right.\right.\nn\\
    &\hspace{30pt}\left.\left.+\norm{\sum_{r \in [\abs{\cA}], t \in [\abs{\cA}]}\ket{b_{t}}\bra{a_{r}}}{\infty}\right)\right)\nn\\
    &= \frac{1}{\abs{\cA}^2(1 + 2\delta^2)}\left(2\sqrt{2}\delta\abs{\cA}\sqrt{\abs{\cA}} +  4\delta^2\abs{\cA} \right)
    \leq \frac{7\delta}{\sqrt{\cA}}\nn,
\end{align}
where $a$ follows from the relation $\norm{\cO}{\infty} \leq \norm{\cO}{1}$ for any operator $\cO$.
\endproof
\end{document}